\documentclass[10pt,A4paper]{article}
\usepackage{amssymb}
\usepackage{amsfonts}
\usepackage{amsmath}
\usepackage{amsthm}
\usepackage{latexsym}
\usepackage[dvips]{epsfig}
\usepackage{enumerate}
\usepackage{mathrsfs}
\usepackage{eufrak}
\usepackage{bm}
\usepackage{tikz}
%\usetikzlibrary{calc,through,backgrounds}
\usepackage{authblk}
\usepackage{cancel}

\usepackage{color}

\usepackage{stmaryrd}

\usepackage[T1]{fontenc}
\usepackage[utf8]{inputenc}

\theoremstyle{plain}
\newtheorem{proposition}{Proposition}

\newtheorem{theorem}{Theorem}

\newtheorem{remark}{Remark}

\setlength{\textwidth}{148mm}           % Width of text on page- max 148
\setlength{\textheight}{220mm}          % height of text on page-max 235
\setlength{\topmargin}{-5mm}            % Margin at top ofpage- max -5
\setlength{\oddsidemargin}{5mm}         % Odd page sidemargin max 15
\setlength{\evensidemargin}{5mm}

% Underlined lowcase latin letters

% Boldface mathmode lowcase latin letters
\def\bma{{\bm a}}
\def\bmb{{\bm b}}
\def\bmc{{\bm c}}

\def\bml{{\bm l}}
\def\bmn{{\bm n}}
\def\bmm{{\bm m}}

% Boldface mathmode numbers

% Boldface mathmode uppercase latin letters
\def\bmA{{\bm A}}
\def\bmB{{\bm B}}
\def\bmC{{\bm C}}

% Mathbf letters

% Reduced D'Alambertian operator

% Boldface mathmode lowcase greek letters

\def\bmzeta{{\bm\zeta}}

\def\bmphi{{\bm \phi}}

% Boldface mathmode uppercase greek letters
\def\bmGamma{{\bm \Gamma}}

\def\bmDelta{{\bm \Delta}}
\def\bmPsi{{\bm \Psi}}
\def\bmphi{{\bm \phi}}
\def\bmzeta{{\bm \zeta}}

% Boldface operators
\def\bmpartial{{\bm \partial}}

\def\nablasl{/\kern-0.58em\nabla}
\def\Deltasl{/\kern-0.7em\Delta}
\def\Dsl{/\kern-0.7em D}

\def\sigmasl{/\kern-0.58em\sigma}
%%%%%%%%% Mike's macros

\allowdisplaybreaks

% Tilde notations
\def\TiPsi{{\tilde \Psi}}

\def\Timu{{\tilde \mu}}

\def\Tizeta{{\tilde \zeta}}

% Underline quantities
\def\ulchi{{\underline{\chi}}}
\def\ulomega{{\underline{\omega}}}

\def\ulvarrho{{\underline{\varrho}}}

% dagger

% Define thorn and eth characters in italics
\newcommand{\mthorn}{\textit{þ}} % italic lower thorn
 % italic upper Thorn
\newcommand{\meth}{\textit{ð}} % italic lower eth
 % italic upper Eth

%Counter variable for the margin notes
\newcounter{mnotecount}%[section]

% This code generates the margin notes
\newcommand{\mnotex}[1]%{}
{\protect{\stepcounter{mnotecount}}$^{\mbox{\footnotesize $\bullet$\themnotecount}}$
\marginpar{%\color{red}%
\raggedright\tiny\em
$\!\!\!\!\!\!\,\bullet$\themnotecount: #1} }

\newcommand*{\rom}[1]{\expandafter\@slowromancap\romannumeral #1@}

\usepackage{hyperref}

\begin{document}
%%%%%%%%%%%%%%%%%%%%%%%%%%%%%%%%%%%%%%%%%
\title{\textbf{Trapped surface formation for the Einstein-Weyl spinor system}}

\author[,1]{Peng
  Zhao \footnote{E-mail address:{\tt p.zhao@bnu.edu.cn}}}
\author[,2,3]{Xiaoning Wu \footnote{E-mail address:{\tt
      wuxn@amss.ac.cn}}}
\affil[1]{Faculty of Arts and Sciences, Beijing Normal University, Zhuhai, 519087, China.}        
\affil[2]{Institute of Mathematics, Academy of Mathematics and Systems Science 
and State Key Laboratory of Mathematical Sciences, Chinese Academy of Sciences, Beijing 100190, China.}
\affil[3]{School of Mathematical Sciences, University of Chinese Academy of Sciences, 
Beijing 100049, China.}

\maketitle

%%%%%%%%%%%%%%%%%%%%%%%%%%%%%%%%%%%%%%%
\begin{abstract}
We prove trapped‐surface formation for the Einstein-Weyl spinor system (gravity coupled to a massless left-handed two-spinor) without any symmetry assumption. 
To this end we establish a semi-global solution under double null foliation and show that 
the focusing of the gravitational waves and the Weyl spinor flux leads to the formation of a trapped surface.
\end{abstract}

\section{Introduction}

Following Penrose's singularity theorem, the presence of a trapped surface-a closed 
spacelike 2-sphere on which both future-directed null expansions are negative-leads 
to the incompleteness of future causal geodesics. 
In practice this is taken as rigorous evidence for a black hole. 
Consequently, the dynamical formation of a black hole can be framed-mathematically 
and conceptually-as the problem of producing a trapped surface from physically meaningful initial data.
There is by now a substantial body of work on trapped-surface formation 
in vacuum, including foundational results of Christodoulou \cite{Chr00}, and continued in 
\cite{KlaRod09, An2012, KlainermanLukRodnianski14, AnLuk2017, An2022}
and more recent criteria in matter models (e.g. electromagnetic, scalar, and Yang-Mills fields) in 
\cite{PYu2011, LiLiu2017, An202209, AthanasiouPuskarYau25, HilValZha23, DaweiJinbo25}. 
For recent nonlinear numerical simulations of black-hole formation and related dynamics, see, e.g., 
\cite{ZhangChenLiuLuoTianWang22,NingChenTianWuZhang24,FiguerasFranca20,BantilanFiguerasRossi21}.

In this work, we study the Einstein equations minimally coupled to a left-handed Weyl spinor; 
see Sec. \ref{IntroEinsteinWeyl}.
Weyl spinors model massless, chiral spin-$1/2$ degrees of freedom that appear ubiquitously in the Standard Model. 
Understanding their dynamics under strong gravity provides a baseline 
for gravitational collapse of fermionic, e.g. neutron, proton, electron and so on. 
On the small-data, far-from-collapse side, Chen \cite{Chen-SpinorStability} 
has established the global nonlinear stability of Minkowski space for 
the massless Einstein-Dirac system in generalized wave gauge via the vector-field method, 
providing a robust baseline for the diffusion dynamics of spinor fields coupled to gravity.
On the static/spherically symmetric side, 
Finster-Smoller-Yau \cite{FSY-EDM} proved a no-hair type result for the spherically symmetric, 
static Einstein-Dirac-Maxwell system: under mild regularity of the horizon, 
the only black-hole solutions have vanishing spinors in the exterior region and equivalently, 
reduce to Reissner-Nordstr\"om. 
Beyond Finster-Smoller-Yau, stationary nonexistence results now cover charged, rotating (A)dS geometries: 
Wang-Zhang \cite{YaohuaXiao18} rule out time-periodic Dirac fields in non-extreme Kerr-Newman-AdS, 
while Zhang-Zhang \cite{HequnXiao24} exclude differentiable time-periodic Majorana fermions 
in Kerr-Newman[-(A)dS] with nonzero charge.
Numerically, nonlinear simulations by 
Ventrella-Choptuik \cite{VC-DiracCritical} for the massless Einstein-Dirac system 
display Type~II critical collapse with a characteristic mass-scaling exponent, 
indicating that spinor flux can trigger black-hole formation at threshold. 
To the best of our knowledge, however, a symmetry-free, characteristic-data theory 
that proves trapped-surface formation for Dirac/Weyl spinors-on 
par with vacuum and classical matter cases—has not yet been established.
This gap motivates our present analysis of trapped-surface formation in the Einstein-Weyl spinor system.

\paragraph{Conventions.}
In this article, lowercase Latin $a,b,c,\dots$ denote abstract tensor indices, 
while $\bma,\bmb,\bmc,\dots$ are tetrad indices taking values $0,\dots,3$; 
uppercase Latin $A,B,C,\dots$ are abstract two–spinor indices, 
and $\bmA,\bmB,\bmC,\dots$ are spin–frame indices taking values $0,1$. 
The alternating spinor $\epsilon_{AB}$ gives the pairing 
$\llbracket\xi,\eta\rrbracket:=\epsilon_{AB}\,\xi^{A}\eta^{B}$; 
indices are moved with $\epsilon^{AB}$ and $\epsilon_{AB}$, e.g. 
$\xi_{B}=\epsilon_{AB}\,\xi^{A}$. For a normalized spin dyad $\{o^{A},\iota^{A}\}$ with $o_{A}\iota^{A}=1$, 
we have $\epsilon_{AB}=o_{A}\iota_{B}-o_{B}\iota_{A}$, and we set $\epsilon_{\bm{0}}{}^{A}=o^{A}$, 
$\epsilon_{\bm{1}}{}^{A}=\iota^{A}$. We choose an orthonormal tetrad $e_{\bma}$ 
with dual $\omega^{\bma}$ so that $g_{\bma\bmb}=\eta_{\bma\bmb}$ with signature $(+,-,-,-)$. 
The Infeld–van der Waerden symbols $\sigma^{\bma}{}_{\bmA\bmA'}$ 
relate vectors to bi-spinors via 
$\epsilon_{\bmA\bmB}\,\epsilon_{\bmA'\bmB'}=\eta_{\bma\bmb}\,\sigma^{\bma}{}_{\bmA\bmA'}\,\sigma^{\bmb}{}_{\bmB\bmB'}$; 
here $\sqrt{2}\,\sigma^{\bma}{}_{\bmA\bmA'}$ corresponds to $\{\mathbf{1},\sigma_{1},\sigma_{2},\sigma_{3}\}$ 
where $\sigma_{i}$ is the Pauli matries, 
and $\sigma_{\bma}{}^{\bmA\bmA'}$ denotes the inverse 
with $\sigma^{\bma}{}_{\bmA\bmA'}\sigma_{\bma}{}^{\bmB\bmB'}=\epsilon_{\bmA}{}^{\bmB}\epsilon_{\bmA'}{}^{\bmB'}$. 
For a tensor $T_{a}{}^{b}$ we define its spinorial representative 
by $T_{\bmA\bmA'}{}^{\ \ \bmB\bmB'}:=T_{\bma}{}^{\bmb}\,\sigma^{\bma}{}_{\bmA\bmA'}\,\sigma_{\bmb}{}^{\bmB\bmB'}$, 
giving the standard correspondence between $T_{a}{}^{b}$ and $T_{A}{}^{B}$; 
we also use $g_{AA'BB'}=\epsilon_{AB}\epsilon_{A'B'}$. 
Our curvature sign convention is $\nabla_{a}\nabla_{b}\omega_{c}-\nabla_{b}\nabla_{a}\omega_{c}=-R_{abc}{}^{d}\,\omega_{d}$. 
Spinor conventions follow \cite{PenRin86,Ste91,CFEBook}.

\subsection{Einstein-Weyl System}
\label{IntroEinsteinWeyl}
In the massless limit a Dirac spinor decomposes into two decoupled Weyl spinors of opposite chirality. 
The present work retains only the left-handed component, 
i.e. a single-chirality subsector of the massless Einstein–Dirac system. 
For the massive Dirac field, we refer the reader to \cite{PengXiaoning2501}.
In the following, we denote by $\phi_A$ the left-handed Weyl spinor field. 
The Einstein-Weyl spinor system is
\begin{align}
R_{ab}-\frac{1}{2}Rg_{ab}=&T_{ab}, \label{EinsteinFE}\\
\nabla_{AA'}\phi^A=&0,
\end{align}
where $\nabla_{AA'}$ is the spinorial counterpart of the covariant derivative, 
$T_{ab}$ is the energy-momentum tensor and its spinorial counterpart $T_{ABA'B'}$ is defined by
\begin{align}
T_{ABA'B'}\equiv-2\mathrm{i}\left(\phi_A\nabla_{BB'}\phi_{A'}-\phi_{A'}\nabla_{BB'}\phi_{A}+\phi_B\nabla_{AA'}\phi_{B'}-\phi_{B'}\nabla_{AA'}\phi_{B} \right). \label{EMtensor}
\end{align}
In order to analyse the derivative term in the energy-momentum tensor and close the bootstrap argument, 
we follow the strategy in our previous paper \cite{PengXiaoning2501}, 
i.e. make the irreducible decomposition for $\nabla_{AA'}\phi_B$:
\begin{align*}
\nabla_{AA'}\phi_B=\nabla_{(A|A'|}\phi_{B)}+\frac{1}{2}\epsilon_{AB}\nabla_{CA'}\phi^C,
\end{align*}
then the equation of motion of the Weyl spinor leads to 0 for the second term on the right hand. 
Hence one defines the symmetric part be a new variable 
\begin{align}
\zeta_{ABA'}\equiv\nabla_{(A|A'|}\phi_{B)}. \label{Definitionzeta}
\end{align}
Its equation can be obtained by first commutating the covariant derivative to $\phi_A$:
\begin{align*}
\nabla_{AA'}\nabla_{BB'}\phi_C-\nabla_{BB'}\nabla_{AA'}\phi_C=
\epsilon_{A'B'}\Box_{AB}\phi_C+\epsilon_{AB}\Box_{A'B'}\phi_C
\end{align*}
where 
\begin{align*}
\Box_{AB}\equiv\nabla_{Q'(A}\nabla_{B)}^{\phantom{B)}Q'}, \quad
\Box_{A'B'}\equiv\nabla_{Q(A'}\nabla_{B')}^{\phantom{B')}Q}
\end{align*}
and 
\begin{align*}
\Box_{AB}\phi_C=\Psi_{ABCD}\phi^D-2\Lambda\phi_{(A}\epsilon_{B)C}=\Psi_{ABCD}\phi^D, \quad
\Box_{A'B'}\phi_C=\Phi_{CDA'B'}\phi^D,
\end{align*}
here $\Lambda=-\frac{R}{24}$, which vanishes in this model. 
$\Psi_{ABCD}$ is the spinorial counterpart of the Weyl curvature, 
$\Phi_{CDA'B'}$ is the spinorial counterpart of the trace-free Ricci tensor.
One then concludes that $\zeta_{ABA'}$ satisfies the following
\begin{align}
\nabla_{AA'}\zeta_{BCB'}-\nabla_{BB'}\zeta_{ACA'}=
\Psi_{DCAB}\epsilon_{A'B'}\phi^D+\Phi_{DCA'B'}\epsilon_{AB}\phi^D \label{EoMzeta}
\end{align}
Then with the definition of $\zeta_{ABA'}$ as well as the equation of motion for $\phi_A$, 
the energy-momentum tensor takes the form
\begin{align}
T_{ABA'B'}\equiv-2\mathrm{i}\left(\phi_A\bar\zeta_{B'A'B}-\phi_{A'}\zeta_{BAB'}+\phi_B\bar\zeta_{A'B'A}-\phi_{B'}\zeta_{ABA'} \right) \label{EMtensorAlt}
\end{align}
where $\bar\zeta_{A'B'A}$ is the conjugate of $\zeta_{ABA'}$.
The Einstein field equations \eqref{EinsteinFE} 
can be expressed in the spinorial form
\begin{align}
\Phi_{ABB'A'}
=\mathrm{i}\left(\phi_A\zeta_{B'A'B}-\phi_{A'}\zeta_{BAB'}+\phi_B\zeta_{A'B'A}-\phi_{B'}\zeta_{ABA'} \right).
\label{EFEspinor}
\end{align} 

\subsection{Main theorem}

In this section we show the main results of this paper:
\begin{theorem}
[\textbf{\em Existence result}]
Given a positive number~$\mathcal{I}$, there exists a sufficiently
large~$a_0=a_0(\mathcal{I})$ such that for~$a\geq a_0\geq0$ and
initial data 
\begin{align*}
\mathcal{I}_0\equiv\sum_{j=0}^1\sum_{i=0}^{15}\frac{1}{a^{\frac{1}{2}}}
||\mthorn^j(|u_{\infty}|\mathcal{D})^i(\sigma,\phi_0)||_{L^{2}(\mathcal{S}_{u_{\infty},v})}\leq\mathcal{I}
\end{align*}
along the outgoing initial null hypersurface~$u=u_{\infty}$, and
Minkowskian initial data along ingoing initial null
hypersurface~$v=0$. 
Then the Einstein-Weyl spinor system admits a unique solution in
\begin{align*}
\mathbb{D}=\{(u,v)|u_{\infty}\leq u\leq -a/4, \ \ 0\leq v\leq 1\}.
\end{align*}
\end{theorem}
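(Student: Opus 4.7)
The plan is to implement a semi-global existence theorem by a double null foliation of $\mathbb{D}$, following Christodoulou's short-pulse framework as adapted in \cite{KlaRod09,KlainermanLukRodnianski14} and, for the spinor coupling, in our previous paper \cite{PengXiaoning2501}. Set up double null coordinates $(u,v,\theta^A)$ adapted to two optical functions, choose a null tetrad $\{l^{a},n^{a},m^{a},\bar m^{a}\}$ with associated spin dyad $\{o^{A},\iota^{A}\}$, and decompose $\phi_{A}=\phi_{0}\,o_{A}+\phi_{1}\,\iota_{A}$, together with analogous dyad components of $\zeta_{ABA'}$ and $\Psi_{ABCD}$. The first step is to design a short-pulse hierarchy: assign each Ricci coefficient, curvature scalar and spinor component a signature $s$, with the expectation that it saturates at size $a^{s/2}$. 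The weights must (i) reproduce the initial-data norm $\mathcal{I}_{0}$ prescribed in the theorem, (ii) be consistent with the matter-curvature coupling \eqref{EFEspinor}, so that the bilinear $\phi\cdot\zeta$ carries the same signature as $\Phi_{ABA'B'}$, and (iii) make every quadratic error term in the transport and energy equations have non-negative net signature, yielding an $a^{-1/2}$ smallness gain.

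Under bootstrap assumptions encoding these expected sizes, the second step treats the Ricci coefficients via transport estimates. Each coefficient satisfies a $\mthorn$ or $\mthorn'$ equation whose right-hand side consists of terms that are either lower in the hierarchy, or gain $a^{-1/2}$, or are matter bilinears entering through \eqref{EFEspinor}. Integrating along the outgoing or ingoing null generators, with norms on the transverse sphere $\mathcal{S}_{u,v}$ as in the statement, gives improved control of all Ricci coefficients in terms of the initial data together with the bootstrap norms of curvature and spinor. Elliptic estimates on $\mathcal{S}_{u,v}$ recover top-order angular derivatives of the Ricci coefficients from the Codazzi and Gauss equations, exactly as in the vacuum scheme, with the matter contribution absorbed through its signature.

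The third and central step is the simultaneous closure of curvature and spinor energies on the truncated cones $C_{u}\cap\mathbb{D}$ and $\underline{C}_{v}\cap\mathbb{D}$. For the Weyl curvature I contract the Bianchi pairs $(\psiz,\xiz)$-type identities with the corresponding dyad components, integrate by parts so that the principal part generates a pair of non-negative fluxes, and handle the nonlinear errors by Cauchy-Schwarz plus Gronwall using the Ricci bounds from the previous step. The crucial structural point is that in the Bianchi identity the matter source appears through $\nabla\Phi_{ABA'B'}$, which naively contains $\phi\cdot\nabla\zeta$: this apparent derivative loss is precisely cured by the decomposition \eqref{Definitionzeta} and the equation \eqref{EoMzeta}, which replaces $\nabla\zeta$ by $\Psi\,\phi+\Phi\,\phi$, i.e.\ a term of the same differential order as the curvature itself. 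The spinor energy comes from the transport identities for $|\phi|^{2}$ and $|\zeta|^{2}$ derived from $\nabla^{AA'}\phi_{A}=0$ and \eqref{EoMzeta}; the $\Psi\cdot\zeta\cdot\phi$ and $\Phi\cdot\zeta\cdot\phi$ terms are then merged with the curvature flux via the coupling \eqref{EFEspinor}, yielding a combined energy functional that closes.

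Once all bootstrap constants are improved by a factor depending only on $\mathcal{I}$ for $a\geq a_{0}(\mathcal{I})$, continuity in $u$ propagates the solution from $u=u_{\infty}$ up to $u=-a/4$, and uniqueness follows from a standard energy estimate on differences. The main obstacle, in my view, is finding the signature assignment for the spinor components and designing the coupled energy so that the matter contributions both on the Bianchi right-hand sides and in the Weyl equation conspire to be of the same order as the vacuum Weyl curvature terms; once this bookkeeping is arranged, and the potential derivative loss is removed by \eqref{EoMzeta}, the remainder of the argument runs parallel to the Christodoulou-Klainerman-Rodnianski short-pulse scheme.
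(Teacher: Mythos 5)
Your outline captures the overall architecture correctly: scale-invariant/signature bookkeeping, bootstrap, transport estimates for Ricci coefficients, elliptic recovery of top-order angular derivatives, and a pair-wise energy estimate for the Bianchi and spinor equations. You also correctly identify that the derivative appearing in $\nabla_a T_{bc}$ would naively cost a derivative of the spinor, and that the decomposition $\zeta_{ABA'}=\nabla_{(A|A'|}\phi_{B)}$ together with \eqref{EoMzeta} is what keeps the system at the right differential order. That part matches the paper's strategy.

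However, there is a genuine gap: you do not address the \emph{borderline} behavior of the pair $(\zeta_4,\zeta_5)$ and the renormalization that the paper introduces to fix it. Once one works out the signatures consistently with the prescribed data sizes, the ingoing components $\zeta_4,\zeta_5$ turn out to satisfy a weaker energy identity than the other pairs; concretely, the transport operator comes with a larger $\mu$-weight, so that $\|\{\zeta_4,\zeta_5\}\|_{L^\infty_{sc}}\sim |u|/a^{1/2}$ rather than being $O(1)$. This is not a quadratic error that gains $a^{-1/2}$ as you claim in step (iii) of your signature design: after commuting angular derivatives, the Bianchi identities for $(\Psi_2,\TiPsi_3)$ and $(\TiPsi_3,\Psi_4)$ contain source terms of the schematic form $\zeta_4\bar\zeta_4$ and $\zeta_5\bar\phi_0\mu$, and plugging the borderline bounds on $\zeta_4,\zeta_5$ into the double null integral produces a logarithmically divergent integral in $u$, which destroys the bootstrap closure. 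The paper's fix is to pass to the renormalized variables
\begin{align*}
\Tizeta_4\equiv\zeta_4+\mu\phi_0,\qquad \Tizeta_5\equiv\zeta_5+2\mu\phi_1,
\end{align*}
whose evolution equations \eqref{thornprimeTizeta4}--\eqref{thornTizeta5} have the structure $\mthorn'\Tizeta_4-\meth\Tizeta_5=-3\mu\Tizeta_4+\dots$ compatible with the scale-invariant hierarchy, so that $\Tizeta_4,\Tizeta_5$ become regular (non-borderline) quantities and the problematic couplings are absorbed. This is closed at next-to-top derivative order because the renormalization introduces curvature and higher-order connection derivatives into the equations. Without something playing the role of $(\Tizeta_4,\Tizeta_5)$ (and, more routinely, of the renormalized curvature pair $\TiPsi_1,\TiPsi_3$ needed to formulate a Hodge system), the energy estimate in your step three would fail to close. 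Your proposal as written, which assumes that all nonlinear errors have nonnegative net signature and gain $a^{-1/2}$, is therefore incomplete on exactly the point that the paper flags as the main difficulty.
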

Based on the existence theorem we have the following theorem on trapped-surface formation:
\begin{theorem}
[\textbf{\em Trapped surface formation}]
Given~$\mathcal{I}$, there exists a sufficiently large
$a=a(\mathcal{I})$ such that for~$a\geq a_0\geq0$ an initial data set
with
\begin{align*}
\mathcal{I}_0\equiv\sum_{j=0}^1\sum_{i=0}^{15}\frac{1}{a^{\frac{1}{2}}}
  ||\mthorn^j(|u_{\infty}|\mathcal{D})^i(\sigma,\phi_0)||_{L^{2}(\mathcal{S}_{u_{\infty},v})}
  \leq\mathcal{I}
\end{align*}
along the outgoing initial null hypersurface~$u=u_{\infty}$ and
Minkowskian initial data along the ingoing initial null hypersurface
$v=0$ such that
\begin{align*}
  \int_0^1\left(
  |u_{\infty}|^2(\sigma\bar\sigma
  +2\mathrm{i}(\phi_0\mthorn\bar\phi_0-\bar\phi_0\mthorn\phi_0))|\right)(u_{\infty},v') \mathrm{d}v'\geq a
\end{align*}
holds uniformly for any point on the initial outgoing null
hypersurface $u=u_{\infty}$, it follows that~$\mathcal{S}_{-a/4,1}$ is
a trapped surface.
\end{theorem}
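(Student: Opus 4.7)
Given the semi-global existence result above, the task reduces to checking that at the sphere $\mathcal{S}_{-a/4,1}$ both null expansions $\mathrm{tr}\chi$ (outgoing) and $\mathrm{tr}\ulchi$ (incoming) are negative. My plan follows the Christodoulou short-pulse strategy, adapted as in An--Luk and the authors' companion treatment of the massive Einstein--Dirac system \cite{PengXiaoning2501}: integrate the Raychaudhuri equation for the outgoing convergence along each cone $u=\mathrm{const}$ from the Minkowskian slice $v=0$ to $v=1$, and transport the assumed flux lower bound at $u=u_{\infty}$ forward to the final slice $u=-a/4$ using the bootstrap bounds of the existence theorem.

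\paragraph{Step 1 --- Raychaudhuri with the Weyl-spinor source.}
In GHP form the outgoing spin coefficient $\rho$ obeys $\mthorn\rho=\rho^{2}+\sigma\bar\sigma+\Phi_{00}$. The spinorial Einstein equation \eqref{EFEspinor}, combined with the identification $\zeta_{000'}=\mthorn\phi_{0}$ read off from \eqref{Definitionzeta}, gives
\begin{align*}
\Phi_{00}=2\mathrm{i}(\phi_{0}\mthorn\bar\phi_{0}-\bar\phi_{0}\mthorn\phi_{0}),
\end{align*}
so the integrand in the hypothesised flux is exactly $\sigma\bar\sigma+\Phi_{00}$ at $u_{\infty}$. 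Integrating the Raychaudhuri equation in $v$ on the cone $u=-a/4$ with the Minkowskian value $\rho(-a/4,0)=-4/a$, the negativity of $\mathrm{tr}\chi$ at the final sphere (equivalently, $\rho>0$) reduces to the quantitative lower bound
\begin{align*}
\int_{0}^{1}(\sigma\bar\sigma+\Phi_{00})(-a/4,v')\,\mathrm{d}v'>\frac{4}{a}+(\text{lower order}).
\end{align*}

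\paragraph{Step 2 --- Flux propagation $u_{\infty}\leadsto-a/4$ and incoming expansion.}
The natural $|u|$-independent focusing flux
\begin{align*}
\mathcal{F}(u):=\int_{0}^{1}|u|^{2}(\sigma\bar\sigma+\Phi_{00})(u,v')\,\mathrm{d}v'
\end{align*}
is propagated in $u$ via the $\mthorn$-transport equation for $\sigma$ (namely $\mthorn\sigma=2\rho\sigma+\psiz$) together with the equation for $\mthorn\phi_{0}$ extracted from $\nabla_{AA'}\phi^{A}=0$ and \eqref{EoMzeta}. In the short-pulse normalisation of $\mathcal{I}_{0}$, the bootstrap bounds of the existence theorem give $|\mathcal{F}(-a/4)-\mathcal{F}(u_{\infty})|=O(a^{1/2})$, hence $\mathcal{F}(-a/4)\geq a-Ca^{1/2}$. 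Dividing by $|u|^{2}=(a/4)^{2}$ yields
\begin{align*}
\int_{0}^{1}(\sigma\bar\sigma+\Phi_{00})(-a/4,v')\,\mathrm{d}v'\geq\frac{16}{a}-O(a^{-3/2}),
\end{align*}
which for $a_{0}$ large dominates the right-hand side in Step~1, giving $\mathrm{tr}\chi(-a/4,1)<0$. For the incoming expansion, the Minkowskian value $\mathrm{tr}\ulchi(-a/4,0)=-8/a$ is already strictly negative, and its $\mthorn$-evolution (whose right-hand side is of size $O(a^{-1})$ by the existence bounds) perturbs it by at most $O(\mathcal{I}a^{-1})$, so $\mathrm{tr}\ulchi(-a/4,1)<0$ persists.

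\paragraph{Main obstacle.}
The delicate step is the flux propagation of Step~2: the transport equation $\mthorn\sigma=2\rho\sigma+\psiz$ couples the shear to $\psiz$, itself a borderline short-pulse quantity, while \eqref{EoMzeta} couples $\mthorn\phi_{0}$ through $\zeta$ to both $\psiz$ and $\Phi_{ABA'B'}$, whose $(00)$-component is the very matter source we are trying to propagate. Thus the gravitational shear flux, the Weyl curvature flux, and the spinor flux have to be closed together in a single bootstrap scheme compatible with the $a^{1/2}$-normalisation of $\mathcal{I}_{0}$; the balancing of $a$-powers and $|u|$-weights in this coupled system---so that no loss obstructs the comparison $\mathcal{F}(-a/4)\approx\mathcal{F}(u_{\infty})$---is the technical heart of the argument.
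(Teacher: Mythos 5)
Your overall strategy coincides with the paper's: integrate the Raychaudhuri equation $\mthorn\rho=\rho^2+\sigma\bar\sigma+\Phi_{00}$ from $v=0$ to $v=1$ on the cone $u=-a/4$, and show that an appropriately $|u|^2$-weighted focusing flux is approximately conserved as $u$ decreases from $u_\infty$ to $-a/4$, so that the hypothesised lower bound forces $\rho(-a/4,1)>0$; the ingoing expansion stays negative since $\Timu=\mu+1/u$ is uniformly small. Your identification $\Phi_{00}=2\mathrm{i}(\phi_0\mthorn\bar\phi_0-\bar\phi_0\mthorn\phi_0)=2\mathrm{i}(\bar\zeta_0\phi_0-\zeta_0\bar\phi_0)$, via $\zeta_0=\mthorn\phi_0$, is also correct and matches the paper.

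However, Step~2 as written contains a genuine error in the transport direction. You propose to propagate $\mathcal{F}(u)$ ``in $u$'' using $\mthorn\sigma=2\rho\sigma+\Psi_0$ and an equation for $\mthorn\phi_0$. But $\mthorn$ is the derivative along $\bml=Q\bmpartial_v$, i.e.\ it transports in $v$, not in $u$; those equations cannot control $\partial_u\mathcal{F}$, and this step would fail if executed literally. The correct input is the family of $\mthorn'$-transport equations ($\mthorn'\sigma$, $\mthorn'\phi_0$, $\mthorn'\zeta_0$, $\mthorn'Q$), since $\partial_u=\mthorn'-C^{\mathcal{A}}\bmpartial_{\mathcal{A}}$. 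A related omission: your flux lacks the $Q^{-1}$ factor that the paper builds into $F\equiv|u|^2Q^{-1}(\sigma\bar\sigma+2\mathrm{i}\bar\zeta_0\phi_0-2\mathrm{i}\zeta_0\bar\phi_0)$; this factor is needed because in Step~1 one integrates $\partial_v\rho=Q^{-1}\mthorn\rho$ rather than $\mthorn\rho$ itself, and its own $u$-evolution is governed by $\mthorn'Q$. With these two corrections your argument reproduces the paper's: one finds $|\partial_u F|\lesssim a^{3/2}/|u|^2$, hence $F$ drops by at most $O(a^{1/2})$ from $u_\infty$ to $-a/4$, hence $\rho(-a/4,1)\geq-4/a+8/a>0$.
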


\begin{figure}[t]
\centering
\includegraphics[width=0.6\textwidth]{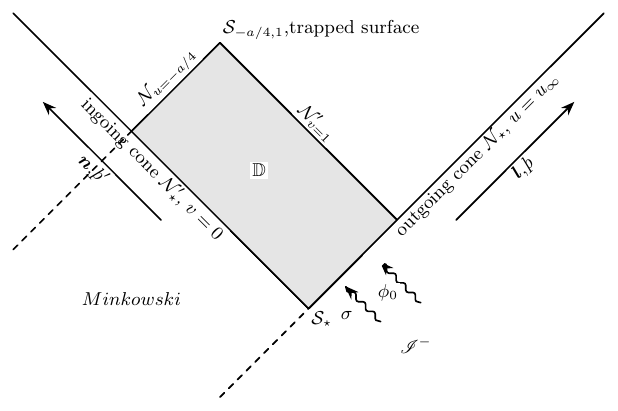}
\caption{Existence domain~$\mathbb{D}$ for the main theorem. 
We pose a characteristic problem with Minkowskian data on the incoming null hypersurface~$v=0$
and suitable outgoing data on~$u=u_\infty$ in the asymptotic region. For an appropriate choice of $a$, 
the top-vertex 2-surface~$\mathcal{S}_{-a/4,1}$ is a trapped surface.}
\label{Fig:ExistenceDomain}
\end{figure}

\begin{remark}
We work under double null foliation $(u,v)$ in the T-weight formalism; see Sec. \ref{GeometricSetup}. 
We make use of the Newman-Penrose frame~$\{\bml,\bmn,\bm{m},\bar{\bm{m}}\}$, 
$\sigma$ denotes the shear of outgoing light rays 
\begin{align*}
\Pi_{\mathcal{S}}(\widehat{\nabla_al_b})&\equiv2|\sigma| 
\end{align*}
where $\Pi_{\mathcal{S}}$ is the projection operator, 
$\widehat{\nabla_al_b}$ is the traceless part of $\nabla_al_b$. 
$\phi_0$ is the $o$-component of spinor field $\phi_A$, 
$\mthorn\phi_0$ is the derivative along $\bml$. 
The term $2\mathrm{i}(\phi_0\mthorn\bar\phi_0-\bar\phi_0\mthorn\phi_0)$ is $\Phi_{00}$ 
and represent the energy flux of spinor field.

\end{remark}

\subsection{Main difficulty}

Our argument is energy-based in a double-null foliation using T-weight framework. 
The point of the present analysis is to highlight that the spinor 
$\phi_A$ and its symmetric first derivative $\zeta_{ABA'}$ 
admit evolution/constraint equations whose signature is exactly compatible with 
the scale-invariant norms $L^2_{sc}$. 
This norm, whose definitions can be found in Sec. \ref{Preliminaries}, 
is designed to capture the peeling property by assigning each quantity a signature $s_2$. 
The equations for $\zeta_{ABA'}$ provide sharp, 
scale-critical control of $\phi_A$ in both the angular and null directions 
and remain aligned with the curvature hierarchy, 
allowing the Bianchi--matter couplings to be quantified and absorbed at the same critical scale.
This compatibility allows us track the $u$-weight and obtain a scale-critical result by the bootstrap, 
yielding closure without derivative loss for a quasilinear geometric system 
whose stress--energy contains derivative--field products.

In the Bianchi sector coupled to the Weyl spinor,
large borderline contributions involving $\zeta_4$ and $\zeta_5$ (see Eq. \ref{Definitionzeta}) 
arise at every derivative order once one commutes with angular derivatives. 
That is because pair $(\zeta_4,\zeta_5)$ satisfies
\begin{align*}
\mthorn'\zeta_4-\meth\zeta_5=&-2\mu\zeta_4+..., \\
\mthorn\zeta_4-\meth'\zeta_4=&-\mu\zeta_1+...
\end{align*}
and leads to the following energy estimate:
\begin{align*}
&\int_0^v\frac{1}{|u|^2}||\zeta_4||^2_{L^2_{sc}(\mathcal{S}_{u,v'})}
+\int_{u_{\infty}}^u\frac{a}{|u'|^4}||\zeta_5||^2_{L^2_{sc}(\mathcal{S}_{u',v})}
\lesssim\int_0^v\frac{1}{|u_{\infty}|^2}||\zeta_4||^2_{L^2_{sc}(\mathcal{S}_{u_{\infty},v'})}
+\int_{u_{\infty}}^u\frac{a}{|u'|^4}||\zeta_5||^2_{L^2_{sc}(\mathcal{S}_{u',0})} \nonumber\\
&+2\int_0^v\int_{u_{\infty}}^u\frac{a}{|u'|^4}
||\zeta_4||_{L^2_{sc}(\mathcal{S}_{u',v'})}
||\zeta_4\mu||_{L^2_{sc}(\mathcal{S}_{u',v'})} 
+2\int_0^v\int_{u_{\infty}}^u\frac{a}{|u'|^4}
||\zeta_5||_{L^2_{sc}(\mathcal{S}_{u',v'})}
||\zeta_1\mu||_{L^2_{sc}(\mathcal{S}_{u',v'})}+...
\end{align*}
in the scale-invariant norm. This means that $\zeta_4$ and $\zeta_5$ decay slower 
and borderline in the scale-invariant sense, i.e. we have 
\begin{align*}
|\zeta_4|\lesssim\frac{a^{\frac{1}{2}}}{|u|^2}, \quad
|\zeta_5|\lesssim\frac{a}{|u|^3}, \quad \Rightarrow \quad
||\{\zeta_4,\zeta_5\}||_{L^{\infty}_{sc}(\mathcal{S}_{u,v})}\sim
\frac{|u|}{a^{\frac{1}{2}}}.
\end{align*}
More discussions can be found in Prop. \ref{EnergyEstimatezeta45}. 
Hence terms $\zeta_4\bar\zeta_4$ and $\zeta_5\bar\phi_0\mu$ in the Bianchi identities for pair 
$(\Psi_2,\TiPsi_3)$ and $(\TiPsi_3,\Psi_4)$ produce logarithmic growth 
and obstruct scale-invariant closure under the bootstrap. 

To suppress the borderline accumulation at each order, 
we introduce the renormalized spinorial derivatives $\Tizeta_4$ and $\Tizeta_5$:
\begin{align*}
\Tizeta_4\equiv\zeta_4+\mu\phi_0, \quad
\Tizeta_5\equiv\zeta_5+2\mu\phi_1.
\end{align*}
On the one hand, from their equations 
\begin{align*}
\mthorn' \,\tilde\zeta_{4}- \meth\,\tilde\zeta_{5}
&= - 3\,\tilde\zeta_{4}\,\mu 
   - 2\,\phi_{1}\,\bigl(\meth\mu\bigr)+... ,\\
\mthorn \,\tilde\zeta_{5}- \meth'\,\tilde\zeta_{4}
&= 2\Psi_{2}\,\phi_{1}
   + 2\,\phi_{1}\,\bigl(\meth\pi\bigr) 
   - \phi_{0}\,\bigl(\meth'\mu\bigr)\,+... 
\end{align*} 
one can keep the signature bookkeeping compatible with scale-invariant norms 
and obtain 
\begin{align*}
&\int_0^v||\Tizeta_4||^2_{L^2_{sc}(\mathcal{S}_{u,v'})}
+\int_{u_{\infty}}^u\frac{a}{|u'|^2}||\Tizeta_5||^2_{L^2_{sc}(\mathcal{S}_{u',v})}
\lesssim\int_0^v||\Tizeta_4||^2_{L^2_{sc}(\mathcal{S}_{u_{\infty},v'})}
+\int_{u_{\infty}}^u\frac{a}{|u'|^2}||\Tizeta_5||^2_{L^2_{sc}(\mathcal{S}_{u',0})} 
+...
\end{align*}
which show that $\Tizeta_4$ and $\Tizeta_5$ are normal terms in the scale-invariant norm 
and lead to a sharper control of $\zeta_4$ and $\zeta_5$; see Prop. \ref{EnergyEstimateTizeta45}.
On the other hand, these new variables absorb the problematic couplings in the Bianchi identities.

The evolution equations for $\Tizeta_4$ and $\Tizeta_5$ contain curvature 
and higher-order derivatives of connection coefficients, 
so we do not attempt full top-order closure for these variables. 
Instead, we close them at the next-to-top order, uniformly in the hierarchy. 
Terms $\phi_j\meth\zeta_4$ and $\phi_j\meth\zeta_5$ in the Bianchi analysis cause 
no logarithmic accumulation even if $\zeta_4$, $\zeta_5$ 
themselves are not controlled sharply at the very top order. 
Based on that observation one can then close the bootstrap without derivative loss.

\subsection{Outline of the article}
In Section \ref{GeometricSetup}, we introduce the geometric setting, coordinate choice 
and equations in T-weight formalism. 
In Section \ref{Preliminaries}, we present the strategy of the proof and the basic estimates 
for the frame components. 
In Section \ref{L2estimate}, we estimate the $L^2(\mathcal{S})$ norm for non-top-order connections, 
$\phi_A$, $\zeta_{ABA'}$ and curvatures.
The analysis of top-order derivative of connections are shown in Section \ref{Elliptic}. 
Section \ref{EnergyEstimate} discusses the energy estimates of $\phi_A$, $\zeta_{ABA'}$ and curvatures. 
Section \ref{TrappedSurface} discusses the formation of trapped surfaces.

\subsection{Acknowledgements}
The calculations described in this article have been carried out in the suite xAct for Mathematica; see \cite{Jose14}. 
Peng Zhao was supported by the National Natural Science Foundation of China (Grant No. 12505056) and 
the start-up fund of Beijing Normal University at Zhuhai (Grant No. 310432114).
Xiaoning Wu was supported by the National Natural Science Foundation of China (Grant No. 12275350) 
and the International (Regional) Collaboration and Exchange Project (Grant No. W2421033).

\section{Geometric setup and Equations in T-weight formalism}
\label{GeometricSetup}
%%%%%%%%%%%%%%%%%%%%%%%%%%%%%%%%%%%%%%%%%
\subsection{Basic geometric setting}
Let $(\mathcal{M},\bm{g})$ be a 4-dimensional manifold with boundary: 
outgoing null edge $\mathcal{N}_{\star}$ and ingoing null edge $\mathcal{N}'_{\star}$. 
The intersection is denoted by $\mathcal{S}_{\star}=\mathcal{N}_{\star}\cap\mathcal{N}'_{\star}$. 
We assume the existence of the double null foliation $(u,v)$ in the future 
of $\mathcal{N}_{\star}\cup\mathcal{N}'_{\star}$, 
where the level sets $u$-surfaces $\mathcal{N}_u$ are outgoing null hypersurfaces and 
$\mathcal{N}'_v$ represent the ingoing null hypersurfaces 
with$\mathcal{N}_0=\mathcal{N}_{\star}$ and 
$\mathcal{N}'_0=\mathcal{N}'_{\star}$.
Define the region $\mathcal{D}_{u,v}$ via
\begin{align}
\mathcal{D}_{u,v}\equiv\bigcup_{0\leq v'\leq v, u_{\infty}\leq u'\leq u}\mathcal{S}_{u',v'}.
\end{align}
Denote $\mathcal{S}_{u,v}=\mathcal{N}_{u}\cap\mathcal{N}'_{v}$ be the 
spacelike topological 2-sphere. 
We also denote $\mathcal{N}_u(v_1,v_2)$ be the part of the hypersurface~$\mathcal{N}_u$ 
with $v_1\leq v\leq v_2$. Likewise~$\mathcal{N}'_v(u_1,u_2)$ has a similar
definition. 
One then adopt the same coordinate choice in \cite{HilValZha23} and 
construct a Newman-Penrose frame~$\{\bml,\bmn,\bm{m},\bar{\bm{m}}\}$ of the form
\begin{align}
  \bml=Q\bmpartial_v,\qquad
  \bmn=\bmpartial_u+C^{\mathcal{A}}\bmpartial_{\mathcal{A}}, \qquad
  \bmm=P^{\mathcal{A}} \bmpartial_{\mathcal{A}}, \label{framem}
\end{align}
where~$C^{\mathcal{A}}=0$ on~$\mathcal{N}'_{\star}$, and~$Q=1$ on~$\mathcal{N}_{\star}$.
Then we have the following properties for connection coefficients
\begin{subequations}
\begin{align}
& \kappa=\nu=\epsilon=0, \label{spinconnection1}\\
& \rho=\bar{\rho},\ \ \mu=\bar{\mu}, \label{spinconnection2}\\
& \tau=\bar\alpha+\beta \label{spinconnection3}
\end{align}
\end{subequations}
in the neighbourhood of~$\mathcal{D}_{u,v}$ and, furthermore, with
\begin{align*}
\gamma-\bar{\gamma}=0\ \ \ \textrm{on}\ \ \
\mathcal{D}_{u,v}\cap\mathcal{N}'_{\star}. 
\end{align*}
Moreover ons has the following equations for the frame coefficient~$Q$, $P^{\mathcal{A}}$
    and~$C^{\mathcal{A}}$:
\begin{subequations}
\begin{eqnarray}
&&DC^{\mathcal{A}}=(\bar{\tau}+\pi)P^{\mathcal{A}}+(\tau+\bar{\pi})\bar{P}^{\mathcal{A}}, \label{framecoefficient1} \\
&&DP^{\mathcal{A}}=\bar\rho P^{\mathcal{A}}+\sigma\bar{P}^{\mathcal{A}}, \label{framecoefficient2} \\
&&\Delta P^{\mathcal{A}}-\delta C^{\mathcal{A}}=-(\mu-\gamma+\bar{\gamma})P^{\mathcal{A}}-\bar\lambda\bar{P}^{\mathcal{A}}, \label{framecoefficient3} \\
&&\Delta Q=(\gamma+\bar{\gamma})Q,  \label{framecoefficient4}\\
&&\bar{\delta}P^{\mathcal{A}}-\delta\bar{P}^{\mathcal{A}}=(\alpha-\bar{\beta})P^{\mathcal{A}}-(\bar{\alpha}-\beta)\bar{P}^{\mathcal{A}}, \label{framecoefficient5} \\
&&\delta Q=(\tau-\bar{\pi})Q. \label{framecoefficient6}
\end{eqnarray}
\end{subequations}

%%%%%%%%%%%%%%%%%%%%%%%%%%%%%%%%%%%%%

\subsection{T-weight formalism and equations}
In this section we introduce the equations in the T-weight formalism 
which are designed to fit the PDE analysis. 
Based on the GHP formalism, we assign each quantity $f$ a so called T-weight $s(f)$ 
and introduce four new differential operators~$\meth$,
$\meth'$, $\mthorn$ and~$\mthorn'$
\begin{align*}
\meth f&\equiv\delta f+s(\beta-\bar\alpha)f, \quad
\meth' f \equiv\bar\delta f-s(\bar\beta-\alpha)f, \quad
\mthorn f\equiv Df+s(\epsilon-\bar\epsilon)f, \quad
\mthorn' f\equiv\Delta f+s(\gamma-\bar\gamma)f,
\end{align*}
and obtain that the norm of such derivative of T-weight quantities 
is independent of the spherical coordinates choice; see~\cite{HilValZha23}. 
Such quantity $f$ is related to a tensor $T(f)$ on $\mathcal{S}$ and one has that    
\begin{align*}
|\mathcal{D}^kf|^2\equiv\sum_{\alpha}|\mathcal{D}^{k_i}f|^2=|\nablasl^kT(f)|^2,
\end{align*}
where ~$\mathcal{D}^{k_i}f$ is a string of order~$k$ of the
operators~$\meth$ and~$\meth'$, and the sum over~$\alpha$ denotes all
such strings. This leads to the definition of norm on $\mathcal{S}$
\begin{align}
||\mathcal{D}^kf||^p_{L^p(\mathcal{S})}\equiv\int_{\mathcal{S}}|\mathcal{D}^kf|^p,\qquad 
||\mathcal{D}^kf||_{L^{\infty}(\mathcal{S})}\equiv\sup_{\mathcal{S}}|\mathcal{D}^kf|.
\end{align}
More properties can be found in~\cite{HilValZha23}.   

\smallskip
\noindent
\textbf{(1) Weyl spinor equations in T-weight formalism}

Given a spin basis $\{o,\iota\}$, the respective NP frame is defined by 
\begin{align*}
l^{AA'}\equiv o^A\bar{o}^{A'}, \quad n^{AA'}\equiv\iota^A\bar{\iota}^{A'}, \quad
m^{AA'}\equiv o^A\bar{\iota}^{A'}, \quad \bar{m}^{AA'}\equiv \iota^A\bar{o}^{A'}
\end{align*}
and the connection coefficients are defined as
\begin{align*}
\kappa=&o^ADo_A, \quad \epsilon=\iota^ADo_A, \quad
\pi=\iota^AD\iota_A, \quad \tau=o^A\Delta o_A, \quad
\gamma=\iota^A\Delta o_A, \quad \nu=\iota^A\Delta\iota_A, \\
\beta=&\iota^A\delta o_A, \quad \sigma=o^A\delta o_A, \quad
\mu=\iota^A\delta\iota_A, \quad \alpha=\iota^A\bar\delta o_A, \quad 
\rho=o^A\bar\delta o_A, \quad \lambda=\iota^A\bar\delta\iota_A
\end{align*}
where the NP derivatives are defined by 
\begin{align*}
D=l^{AA'}\nabla_{AA'}, \quad \Delta=n^{AA'}\nabla_{AA'}, \quad 
\delta=m^{AA'}\nabla_{AA'}, \quad \bar\delta=\bar{m}^{AA'}\nabla_{AA'}.
\end{align*}

Define the components of $\phi^A$ with respect to the spin basis $\{o,\iota\}$ by 
\begin{align*}
\phi_0\equiv\phi_Ao^A, \quad \phi_1\equiv\phi_A\iota^A.
\end{align*}
Define the components of $\zeta_{ABA'}$ with respect to the spin basis $\{o,\iota\}$ by 
\begin{align}
\zeta_0&\equiv\zeta_{ABA'}o^Ao^B\bar{o}^{A'}, \quad
\zeta_1\equiv\zeta_{ABA'}o^A\iota^B\bar{o}^{A'}, \quad
\zeta_2\equiv\zeta_{ABA'}\iota^A\iota^B\bar{o}^{A'}, \nonumber\\
\zeta_3&\equiv\zeta_{ABA'}o^Ao^B\bar{\iota}^{A'}, \quad
\zeta_4\equiv\zeta_{ABA'}o^A\iota^B\bar{\iota}^{A'}, \quad
\zeta_5\equiv\zeta_{ABA'}\iota^A\iota^B\bar{\iota}^{A'}. \label{Definitionzeta}
\end{align}
The T-weight of such quantities are listed as follows:
\begin{align*}
&s=-\frac{3}{2}:\quad \zeta_3 \\
&s=-\frac{1}{2}:\quad \phi_0,\quad \zeta_0,\quad \zeta_4, \\
&s=\frac{1}{2}:\quad \phi_1, \quad \zeta_1, \quad \zeta_5, \\
&s=\frac{3}{2}:\quad \zeta_2,
\end{align*}

We obtain the T-weight equations for $\nabla_{AA'}\phi^A=0$:
\begin{subequations}
\begin{align}
\mthorn\,\phi_{1}
&= \phi_{0}\,\pi
  + \phi_{1}\,\rho
  - \frac{\phi_{0}\,\bar\tau}{2}
  + \meth'\,\phi_{0}, \label{thornphi1} \\
\mthorn'\,\phi_{0}
&= \frac{\omega\,\phi_{0}}{2}
  - \phi_{0}\,\mu
  - \frac{\phi_{1}\,\tau}{2}
  + \meth\,\phi_{1}. \label{thornprimephi0}
\end{align}
\end{subequations}
The equations of the definition of $\zeta_{ABA'}$ \eqref{Definitionzeta} are shown in appendix. \ref{DefinitionEqzeta}. 
The equations for $\zeta_{ABA'}$, i.e. \eqref{EoMzeta} are presented in appendix \ref{Equationzeta}. 
The schematic form are
\begin{align*}
&\{\mthorn,\mthorn'\}\zeta-\meth\zeta=\Psi\phi+\Gamma\zeta+\zeta\phi^2, \\
&\meth'\zeta-\meth\zeta=\Gamma\zeta+\zeta\phi^2+\Psi\phi,
\end{align*}
where $\zeta$ denote the components of $\zeta_{ABA'}$, $\phi$ denote the components of $\phi_A$, 
$\Gamma$ denote the connection coefficients and $\Psi$ denote the curvatures.
As we have already shown in \cite{PengXiaoning2501}, such equations can be 
divided into two groups, one group does not contain Weyl curvature \ref{EquationzetaNoCurv}, 
and another group does contain; see \ref{EquationzetaCurv}.

\smallskip
\noindent
\textbf{(2) The Einstein field equation} 

Expand the Einstein field equation \eqref{EFEspinor} with the fields $\zeta_{ABA'}$, 
one obtains the following

\begin{subequations}
\begin{align}
\Phi_{00} &= 2\mathrm{i}\bigl(\bar\zeta_0\phi_0-\zeta_0\bar\phi_0\bigr),\label{EDeq1}\\
\Phi_{01} &= \mathrm{i}\bigl(2\bar\zeta_1\phi_0-\zeta_3\bar\phi_0-\zeta_0\bar\phi_1\bigr),\label{EDeq2}\\
\Phi_{02} &= 2\mathrm{i}\bigl(\bar\zeta_2\phi_0-\zeta_3\bar\phi_1\bigr),\label{EDeq3}\\
\Phi_{11} &= \mathrm{i}\bigl(\bar\zeta_4\phi_0-\zeta_4\bar\phi_0+\bar\zeta_1\phi_1
-\zeta_1\bar\phi_1\bigr),\label{EDeq4}\\
\Phi_{12} &= \mathrm{i}\bigl(\bar\zeta_5\phi_0+\bar\zeta_2\phi_1-2\zeta_4\bar\phi_1\bigr),\label{EDeq5}\\
\Phi_{22} &= 2\mathrm{i}\bigl(\bar\zeta_5\phi_1-\zeta_5\bar\phi_1\bigr),\label{EDeq6}\\
\Lambda &= 0. \label{EDeq7}
\end{align}
\end{subequations}

\smallskip
\noindent
\textbf{(3) The structure equations, Bianchi identities and the renormalised Weyl curvature} 
The structure equations have the schematic form
\begin{align*}
\{\mthorn,\mthorn'\}\Gamma-\meth\Gamma=\zeta\phi+\Gamma\Gamma+\Psi.
\end{align*}
The fully explicit expressions can be found in the appendix \ref{StructureEq}.
For the Bianchi identities, we introduce the following two 
renormalised Weyl curvature components which are defined by
\begin{subequations}
\begin{align}
\tilde\Psi_1\equiv\Psi_1-\Phi_{01}, \quad
\tilde\Psi_3\equiv\Psi_3-\Phi_{21},
\end{align}
\end{subequations}
with which one can formulate a Hodge system.
The Bianchi Identities have the schematic form:
\begin{align*}
\{\mthorn,\mthorn'\}\Psi_i-\{\meth,\meth'\}\Psi_j=&
+\zeta\phi\Gamma+\phi\meth\zeta+\Psi\phi^2+\zeta\phi^3+\zeta^2+\Gamma\Psi_k. 
\end{align*}
The fully explicit equations are shown in \ref{BianchiIdentity}. 

\subsection{Further ancillary fields}
Note that we prescribe the Minkowski data on $\mathcal{N}'_{\star}$ on which the two null expansions are 
\begin{align*}
&\theta_{\bml}(u,v=0)=-\rho-\bar\rho=-2\rho=-2\cdot\frac{1}{u-v}=-\frac{2}{u}; \\
&\theta_{\bmn}(u,v=0)=\mu+\bar\mu=2\mu=2\cdot\frac{1}{u-v}=\frac{2}{u},
\end{align*}
we anticipate that, in $\mathcal{D}_{u,v}$, the ingoing expansion $\mu$ change slightly, 
hence it is convenient do define the following new quantity
\begin{align}
\tilde\mu\equiv\mu-\frac{1}{u}. \label{DefinitionTimu}
\end{align}
whose transport equations are 
\begin{subequations}
\begin{align}
&\mthorn\tilde\mu-\meth\pi-\rho\tilde\mu=-\rho\tilde\mu+\mu\rho+\Psi_2+\lambda\sigma+\pi\bar\pi 
\label{Thorntildemu},\\
&\mthorn'\tilde\mu+2\mu\tilde\mu=\tilde\mu^2-\ulomega\mu-\lambda\bar\lambda
-2\mathrm{i}\bigl(\bar\zeta_5\phi_1-\zeta_5\bar\phi_1\bigr) \label{Thornprimetildemu}.
\end{align}
\end{subequations}

Another quantity is related to the derivative of frame coefficient $Q$, we define 
\begin{align}
\ulvarrho\equiv\mthorn\log Q \label{Definitionulvarrho}
\end{align}
to track the property of $Q$ as we did in~\cite{HilValZha23}. 
The equation for $\ulvarrho$ is 
\begin{align}
\mthorn'\ulvarrho = \Psi_2 + \bar\Psi_2
+ \ulvarrho\,\ulomega
+ 2\,\mathrm{i}\,\bar\zeta_4\,\phi_0
- 2\,\mathrm{i}\,\zeta_4\,\bar\phi_0
+ 2\,\mathrm{i}\,\bar\zeta_1\,\phi_1
- 2\,\mathrm{i}\,\zeta_1\,\bar\phi_1 
+ 2\,\pi\,\bar\pi
+ 2\,\pi\,\tau
+ 2\,\bar\pi\,\bar\tau. \label{thornprimeulvarrho}
\end{align}

In order to eliminate the logarithmic contribution 
from terms containing $\zeta_4$ and $\zeta_5$ in the energy estimate 
for Bianchi identities, we define the following renormalized quantities to absorb such terms:
\begin{align}
\Tizeta_4\equiv\zeta_4+\mu\phi_0, \quad
\Tizeta_5\equiv\zeta_5+2\mu\phi_1.
\end{align}
Making use of the Equation of motion for $\phi_A$ and the structure equation of $\mu$, 
one can then obtain the equations for pair $(\Tizeta_4,\Tizeta_5)$:
\begin{subequations}
\begin{align}
\mthorn' \,\tilde\zeta_{4}- \meth\,\tilde\zeta_{5}
&= -\,\mathrm{i}\,\bar\zeta_{2}\,\phi_{1}^{2}
   - \frac{\ulomega\,\tilde\zeta_{4}}{2}
   + 2\,\mathrm{i}\,\phi_{1}\,\bar\phi_{1}\,\tilde\zeta_{4}
   - \mathrm{i}\,\phi_{0}\,\phi_{1}\,\bar{\tilde\zeta}_{5}
   - \zeta_{2}\,\bar\lambda
   - \phi_{0}\,\lambda\,\bar\lambda
   - 3\,\tilde\zeta_{4}\,\mu \nonumber\\
&\quad
 + \frac{\tilde\zeta_{5}\,\tau}{2}
   - \phi_{1}\,\mu\,\tau 
   - 2\,\phi_{1}\,\bigl(\meth\mu\bigr)\,, \label{thornprimeTizeta4}\\
\mthorn \,\tilde\zeta_{5}- \meth'\,\tilde\zeta_{4}
&= 2\Psi_{2}\,\phi_{1}
   + \mathrm{i}\,\bar\zeta_{1}\,\phi_{1}^{2}
   + \mathrm{i}\,\zeta_{2}\,\phi_{0}\,\bar\phi_{1}
   - \mathrm{i}\,\zeta_{1}\,\phi_{1}\,\bar\phi_{1}
   - \mathrm{i}\,\bar\phi_{0}\,\phi_{1}\,\tilde\zeta_{4}
   - \mathrm{i}\,\phi_{0}\,\phi_{1}\,\tilde\zeta_{4} \nonumber\\
&\quad
+ \mathrm{i}\,\phi_{0}\,\bar\phi_{0}\,\tilde\zeta_{5}
   - \zeta_{3}\,\lambda
   + 2\,\tilde\zeta_{4}\,\pi
   + \zeta_{2}\,\bar\pi
   + 2\,\phi_{1}\,\pi\,\bar\pi
   + \tilde\zeta_{5}\,\rho
   + \phi_{1}\,\mu\,\rho \nonumber \\
&\quad
+ 2\,\phi_{1}\,\lambda\,\sigma
   + \frac{\tilde\zeta_{4}\,\bar\tau}{2}
   - \phi_{0}\,\mu\,\tau
   + 2\,\phi_{1}\,\bigl(\meth\pi\bigr) 
   - \phi_{0}\,\bigl(\meth'\mu\bigr)\,. \label{thornTizeta5}
\end{align}
\end{subequations}

\section{Preliminaries}
\label{Preliminaries}
In this section we introduce the norms, strategy of proofs and essential mathematical tools.
\subsection{Signatures and scale-invariant norms}
Following the assumptions in \cite{An2022,An202209,HilValZha23}, 
we assume that on $\mathcal{N}_{\star}$, the connection
coefficients satisfy asymptotical behaviour 
\begin{align*}
&|\rho|\lesssim\frac{1}{|u_{\infty}|}, \quad
|\sigma|\lesssim\frac{a^{\frac{1}{2}}}{|u_{\infty}|}, \quad
|\mu|\lesssim\frac{1}{|u_{\infty}|}, \quad
|\Timu|=|\mu+\frac{1}{|u_{\infty}|}|\lesssim\frac{1}{|u_{\infty}|^2}, \quad
|\lambda|\lesssim\frac{a^{\frac{1}{2}}}{|u_{\infty}|^2}, \\
&|\ulomega|\lesssim\frac{a}{|u_{\infty}|^3}, \quad
|\tau|\lesssim\frac{a^{\frac{1}{2}}}{|u_{\infty}|^2},\quad
|\pi|\lesssim\frac{a^{\frac{1}{2}}}{|u_{\infty}|^2},
\end{align*}
and that for the components of the Weyl tensor we assume
\begin{align*}
|\Psi_0|\lesssim\frac{a^{\frac{1}{2}}}{|u_{\infty}|}, \quad
|\Psi_1|\lesssim\frac{a^{\frac{1}{2}}}{|u_{\infty}|^2}, \quad
|\Psi_2|\lesssim\frac{a}{|u_{\infty}|^3}, \quad
|\Psi_3|\lesssim\frac{a^{\frac{3}{2}}}{|u_{\infty}|^4}, \quad
|\Psi_4|\lesssim\frac{a^2}{|u_{\infty}|^5}.
\end{align*}

\medskip
For the spinor components we assume
\begin{align*}
|\phi_0|\lesssim\frac{a^{\frac{1}{2}}}{|u_{\infty}|}, \quad
|\phi_1|\lesssim\frac{a}{|u_{\infty}|^2},
\end{align*}
and for the components of $\zeta_{ABA'}$ we assume
\begin{align*}
|\zeta_0|\lesssim&\frac{a^{\frac{1}{2}}}{|u_{\infty}|}, \quad
|\zeta_1|\lesssim\frac{a^{\frac{1}{2}}}{|u_{\infty}|^2}, \quad
|\zeta_2|\lesssim\frac{a}{|u_{\infty}|^3}, \\ 
|\zeta_3|\lesssim&\frac{a^{\frac{1}{2}}}{|u_{\infty}|^2}, \quad
|\zeta_4|\lesssim\frac{a^{\frac{1}{2}}}{|u_{\infty}|^2}, \quad
|\zeta_5|\lesssim\frac{a}{|u_{\infty}|^3}, \\ 
\end{align*}
From the definition of the renormalised Weyl curvature $\TiPsi_1$ and $\TiPsi_3$, we then have 
\begin{align*}
|\TiPsi_1|=&|\Psi_1-2\mathrm{i}\bar\zeta_1\phi_0+\mathrm{i}\zeta_3\bar\phi_0+\mathrm{i}\zeta_0\bar\phi_1|
\lesssim\frac{a^{\frac{1}{2}}}{|u_{\infty}|^2}, \\
|\TiPsi_3|=&|\Psi_3-2\mathrm{i}\bar\zeta_4\phi_1+\mathrm{i}\zeta_5\bar\phi_0-\mathrm{i}\bar\zeta_0\phi_1|
\lesssim\frac{a^{\frac{3}{2}}}{|u_{\infty}|^4}
\end{align*}
which show that the renormalised Weyl curvature have the same behaviour as $\Psi_1$ and $\Psi_3$. 

For the renormalized $\Tizeta_4$ and $\Tizeta_5$ we assume
\begin{align*}
\Tizeta_4\equiv\zeta_4+\mu\phi_0, \quad
\Tizeta_5\equiv\zeta_5+2\mu\phi_1.
\end{align*}
We assume
\begin{align*}
|\Tizeta_4|\lesssim\frac{a}{|u_{\infty}|^3}, \quad
|\Tizeta_5|\lesssim\frac{a^{\frac{3}{2}}}{|u_{\infty}|^4}.
\end{align*}

%\paragraph{New insights: Near-round trapped surface and low-angular-momentum mechanism.}%
\begin{remark}
A single left–handed Weyl spinor admits no nontrivial spherical symmetry; 
nevertheless, our data produce a trapped $2$-sphere that is quantifiably close to round.
This can be obtained by the estimate of the Gaussian curvature $K$ at the Minkowskian level. 
In the existence theorem we show that $K\sim|u|^{-2}$ in $L^{\infty}_{sc}$ and $L^{2}_{sc}$, 
and obtain uniform bounds on $\mathcal{D}^iK$; see Prop. \ref{L2Weyl} and Remark \ref{AdvancedGauss}. 
Consequently, the terminal trapped sphere deviates from a round metric by at most $O(a^{-\frac{1}{4}})$ 
in the scale-invariant norms. 

This near-sphericity reflects the low-angular-momentum character of the initial flux: 
in Newman–Penrose variables, $\Phi_{00}=T_{ab}l^al^b$ measures outgoing null energy density, 
while $\Phi_{01}=T_{ab}l^am^b$ encodes tangential momentum/rotation flux. 
In our model these read
\begin{align*}
\Phi_{00} = 2\mathrm{i}\bigl(\bar\zeta_0\phi_0-\zeta_0\bar\phi_0\bigr),\quad
\Phi_{01} = \mathrm{i}\bigl(2\bar\zeta_1\phi_0-\zeta_3\bar\phi_0-\zeta_0\bar\phi_1\bigr),
\end{align*}
and the initial data satisfies $\|\Phi_{01}\|_{L^2(\mathcal{S})}\ll \|\Phi_{00}\|_{L^2(\mathcal{S})}$, 
our existence theorem shows that such relation is maintained in the working region 
and hence amounts to a low-angular momentum regime with predominantly radial spinor flux, 
see Remark \ref{low-angular momentum argument}. 
This bias drives focusing without appreciable twist, consistent with the Gaussian curvature control 
and the formation of an almost round trapped surface.
\end{remark}

Follow the basic design in An's work \cite{An2022}, 
we introduce the signature for decay rates to each quantity in the following

\begin{tabular}{|c|c|c|c|c|c|c|c|c|c|c|c|c|c|}
\hline
~& $\Psi_0$ & $\Psi_1$ & $\Psi_2$ & $\Psi_3$ & $\Psi_4$ 
& $\rho,\sigma$ & $\omega$ & $\vartheta$ &$\tau$ &$\pi$ 
&$ \mu,\lambda, \tilde\mu$ & $\ulomega$ & $\ulchi$ \\
\hline
$s_2$ &0&$1/2$ &1&$3/2$&2&0&0&$1/2$&$1/2$&$1/2$&1&1&0\\
\hline
\end{tabular}

\begin{tabular}{|c|c|c|c|c|c|c|c|c|c|c|c|c|c|c|c|c|c|}
\hline
~&$\phi_0,\zeta_{0}$ &$\phi_{1}$,$\zeta_1$,$\zeta_3$&$\zeta_{2}$,$\zeta_{4}$,$\Tizeta_4$&$\zeta_5$,$\Tizeta_5$\\
\hline
$s_2$&0&$1/2$&1& $3/2$\\
\hline
\end{tabular}

\bigskip
\noindent
The signature is required to satisfy
 \begin{align*}
&s_2(\mthorn^i\mthorn'^j\{\meth,\meth'\}^kf)=s_2(f)+0\times i+1\times j+\tfrac{1}{2}\times k , \\
&s_2(f_1\cdot f_2)=s_2(f_1)+s_2(f_2).
\end{align*}

\medskip
We define the following \emph{scale-invariant norms} for each quantity $f$ with its signature $s_2(f)$:
 \begin{align*}
&||f||_{L^{\infty}_{sc}(\mathcal{S}_{u,v})}\equiv a^{-s_2(f)}|u|^{2s_2(f)+1}||f||_{L^{\infty}(\mathcal{S}_{u,v})}, \\
&||f||_{L^2_{sc}(\mathcal{S}_{u,v})}\equiv a^{-s_2(f)}|u|^{2s_2(f)}||f||_{L^2(\mathcal{S}_{u,v})},\\
&||f||_{L^1_{sc}(\mathcal{S}_{u,v})}\equiv a^{-s_2(f)}|u|^{2s_2(f)-1}||f||_{L^1(\mathcal{S}_{u,v})}.
 \end{align*}
 The above norms satisfy the H\"older inequalities
 \begin{align}
&||f_1\cdot f_2||_{L^2_{sc}(\mathcal{S}_{u,v})}\leq\frac{1}{|u|}||f_1||_{L^{\infty}_{sc}(\mathcal{S}_{u,v})}||f_2||_{L^2_{sc}(\mathcal{S}_{u,v})},\label{L2holder}\\
&||f_1\cdot f_2||_{L^1_{sc}(\mathcal{S}_{u,v})}\leq\frac{1}{|u|}||f_1||_{L^{\infty}_{sc}(\mathcal{S}_{u,v})}||f_2||_{L^1_{sc}(\mathcal{S}_{u,v})},\label{L1holder}\\
&||f_1\cdot f_2||_{L^1_{sc}(\mathcal{S}_{u,v})}\leq\frac{1}{|u|}||f_1||_{L^2_{sc}(\mathcal{S}_{u,v})}||f_2||_{L^2_{sc}(\mathcal{S}_{u,v})}. \label{L1holderalt}
 \end{align}
Thus, it follows that making use of these norms that if all terms are
order $1$, then the nonlinear terms can be treated as lower order
terms if $|u|\gg1$. Finally, we define the following scale-invariant norms along the lightcones:
\begin{align*}
||f||^2_{L^2_{sc}(\mathcal{N}_u(0,v))}&
\equiv\int_0^v||f||^2_{L^2_{sc}(\mathcal{S}_{u,v'})}\mathrm{d}v',\\
||f||^2_{L^2_{sc}(\mathcal{N}'_v(u_{\infty},u))}&
\equiv\int_{u_{\infty}}^u\frac{a}{|u'|^2}||f||^2_{L^2_{sc}(\mathcal{S}_{u',v})}\mathrm{d}u'.
\end{align*}

\subsection{Bootstrap norms}
In this section we introduce the norms that will be used to set up our 
main bootstrap argument. In the following, let $\hat\Gamma\equiv \{\rho,\tau,\pi,\ulomega,\ulchi\}$, 
$\hat\Psi=\{\TiPsi_1,\Psi_2,\TiPsi_3,\Psi_4\}$.

(i) Norms at spheres of constant $u$, $v$.

 For $0\leq i\leq6$, we define
 \begin{align*}
\Gamma_{i,\infty}(u,v)\equiv&\frac{1}{a^{\frac{1}{2}}}||(a^{\frac{1}{2}}\mathcal{D})^i\sigma||_{L^{\infty}_{sc}(\mathcal{S}_{u,v})}
+||(a^{\frac{1}{2}}\mathcal{D})^i\hat\Gamma||_{L^{\infty}_{sc}(\mathcal{S}_{u,v})}
+\frac{a^{\frac{1}{2}}}{|u|}||(a^{\frac{1}{2}}\mathcal{D})^i\lambda||_{L^{\infty}_{sc}(\mathcal{S}_{u,v})} \\
&+\frac{a}{|u|^2}||(a^{\frac{1}{2}}\mathcal{D})^i\mu||_{L^{\infty}_{sc}(\mathcal{S}_{u,v})}
+\frac{a}{|u|}||(a^{\frac{1}{2}}\mathcal{D})^i\tilde\mu||_{L^{\infty}_{sc}(\mathcal{S}_{u,v})},
\end{align*}

 \begin{align*}
\Psi_{i,\infty}(u,v)\equiv\frac{1}{a^{\frac{1}{2}}}||(a^{\frac{1}{2}}\mathcal{D})^i\Psi_0||_{L^{\infty}_{sc}(\mathcal{S}_{u,v})}
+||(a^{\frac{1}{2}}\mathcal{D})^i\hat\Psi||_{L^{\infty}_{sc}(\mathcal{S}_{u,v})},
\end{align*}

 \begin{align*}
\phi_{i,\infty}(u,v)\equiv\frac{1}{a^{\frac{1}{2}}}||(a^{\frac{1}{2}}\mathcal{D})^i\phi_{0}||_{L^{\infty}_{sc}(\mathcal{S}_{u,v})}
+\frac{1}{a^{\frac{1}{2}}}||(a^{\frac{1}{2}}\mathcal{D})^i\phi_{1}||_{L^{\infty}_{sc}(\mathcal{S}_{u,v})},
 \end{align*}
 
 \begin{align*}
\zeta_{i,\infty}(u,v)\equiv&\frac{1}{a^{\frac{1}{2}}}||(a^{\frac{1}{2}}\mathcal{D})^i\zeta_0||_{L^{\infty}_{sc}(\mathcal{S}_{u,v})}
+||(a^{\frac{1}{2}}\mathcal{D})^i\{\zeta_1,\zeta_2,\zeta_3,\Tizeta_4,\Tizeta_5\}||_{L^{\infty}_{sc}(\mathcal{S}_{u,v})} \\
&+\frac{a^{\frac{1}{2}}}{|u|}||(a^{\frac{1}{2}}\mathcal{D})^i\{\zeta_4,\zeta_5\}||_{L^{\infty}_{sc}(\mathcal{S}_{u,v})}.
 \end{align*}
  
For $0\leq i\leq10$, we define
 \begin{align*}
\Gamma_{i,2}(u,v)\equiv&
\frac{1}{a^{\frac{1}{2}}}||(a^{\frac{1}{2}}\mathcal{D})^i\sigma||_{L^2_{sc}(\mathcal{S}_{u,v})}
+||(a^{\frac{1}{2}}\mathcal{D})^i\hat\Gamma||_{L^2_{sc}(\mathcal{S}_{u,v})}
+\frac{a^{\frac{1}{2}}}{|u|}||(a^{\frac{1}{2}}\mathcal{D})^i\lambda||_{L^2_{sc}(\mathcal{S}_{u,v})} \\
&+\frac{a}{|u|^2}||(a^{\frac{1}{2}}\mathcal{D})^i\mu||_{L^2_{sc}(\mathcal{S}_{u,v})}
+\frac{a}{|u|}||(a^{\frac{1}{2}}\mathcal{D})^i\tilde\mu||_{L^2_{sc}(\mathcal{S}_{u,v})},
\end{align*}

\begin{align*}
\phi_{i,2}(u,v)\equiv\frac{1}{a^{\frac{1}{2}}}||(a^{\frac{1}{2}}\mathcal{D})^i\phi_{0}||_{L^2_{sc}(\mathcal{S}_{u,v})}
+\frac{1}{a^{\frac{1}{2}}}||(a^{\frac{1}{2}}\mathcal{D})^i\phi_{1}||_{L^2_{sc}(\mathcal{S}_{u,v})},
\end{align*}

 \begin{align*}
\zeta_{i,2}(u,v)\equiv\frac{1}{a^{\frac{1}{2}}}||(a^{\frac{1}{2}}\mathcal{D})^i\zeta_0||_{L^{2}_{sc}(\mathcal{S}_{u,v})}
+||(a^{\frac{1}{2}}\mathcal{D})^i\{\zeta_1,\zeta_2,\zeta_3\}||_{L^{2}_{sc}(\mathcal{S}_{u,v})}
+\frac{a^{\frac{1}{2}}}{|u|}||(a^{\frac{1}{2}}\mathcal{D})^i\{\zeta_4,\zeta_5\}||_{L^{2}_{sc}(\mathcal{S}_{u,v})}.
 \end{align*}
 
For $0\leq i\leq9$, we define
\begin{align*}
\Psi_{i,2}(u,v)\equiv\frac{1}{a^{\frac{1}{2}}}||(a^{\frac{1}{2}}\mathcal{D})^i\Psi_0||_{L^2_{sc}(\mathcal{S}_{u,v})}
+||(a^{\frac{1}{2}}\mathcal{D})^i\hat\Psi||_{L^2_{sc}(\mathcal{S}_{u,v})},
\end{align*}
 \begin{align*}
\Tizeta_{i,2}(u,v)\equiv
||(a^{\frac{1}{2}}\mathcal{D})^i\{\Tizeta_4,\Tizeta_5\}||_{L^{2}_{sc}(\mathcal{S}_{u,v})}.
 \end{align*}
 
(ii) Norms at null hypersurfaces
 
For $0\leq i\leq10$, we define
\begin{align*}
\bmPsi_{i}(u,v)\equiv\frac{1}{a^{\frac{1}{2}}}||(a^{\frac{1}{2}}\mathcal{D})^i\Psi_0||_{L^2_{sc}(\mathcal{N}_u(0,v))}
+||(a^{\frac{1}{2}}\mathcal{D})^i\{\TiPsi_1,\Psi_2,\TiPsi_3\}||_{L^2_{sc}(\mathcal{N}_u(0,v))},
\end{align*}

\begin{align*}
\underline\bmPsi_{i}(u,v)\equiv\frac{1}{a^{\frac{1}{2}}}||(a^{\frac{1}{2}}\mathcal{D})^i\TiPsi_1||_{L^2_{sc}(\mathcal{N}'_v(u_{\infty},u))}
+||(a^{\frac{1}{2}}\mathcal{D})^i\{\Psi_2,\TiPsi_3,\Psi_4\}||_{L^2_{sc}(\mathcal{N}'_v(u_{\infty},u))}.
\end{align*}

For $0\leq i\leq11$, we define
\begin{align*}
\bmphi_{i}(u,v)\equiv\frac{1}{a^{\frac{1}{2}}}||(a^{\frac{1}{2}})^{i-1}\mathcal{D}^i\phi_{0}||_{L^2_{sc}(\mathcal{N}_u(0,v))}, \quad
\underline\bmphi_{i}(u,v)\equiv\frac{1}{a^{\frac{1}{2}}}||(a^{\frac{1}{2}})^{i-1}\mathcal{D}^i\phi_{1}||_{L^2_{sc}(\mathcal{N}'_v(u_{\infty},u))},
\end{align*}

\begin{align*}
\bmzeta_{i}(u,v)\equiv&\frac{1}{a^{\frac{1}{2}}}||(a^{\frac{1}{2}})^{i-1}\mathcal{D}^i\zeta_0||_{L^2_{sc}(\mathcal{N}_u(0,v))}
+||(a^{\frac{1}{2}})^{i-1}\mathcal{D}^i\{\zeta_1,\zeta_3\}||_{L^2_{sc}(\mathcal{N}_u(0,v))} \\
&+||\frac{1}{|u|}(a^{\frac{1}{2}})^{i}\mathcal{D}^i\zeta_4||_{L^2_{sc}(\mathcal{N}_u(0,v))},
\end{align*}
\begin{align*}
\underline\bmzeta_{i}(u,v)\equiv&\frac{1}{a^{\frac{1}{2}}}||(a^{\frac{1}{2}})^{i-1}\mathcal{D}^i\{\zeta_1,\zeta_3\}||_{L^2_{sc}(\mathcal{N}'_v(u_{\infty},u))}
+||(a^{\frac{1}{2}})^{i-1}\mathcal{D}^i\zeta_2||_{L^2_{sc}(\mathcal{N}'_v(u_{\infty},u))}\\
&+||\frac{1}{|u'|}(a^{\frac{1}{2}})^{i}\mathcal{D}^i\zeta_5||_{L^2_{sc}(\mathcal{N}'_v(u_{\infty},u))}.
\end{align*}

Here
\begin{align*}
||\frac{1}{|u|}(a^{\frac{1}{2}})^{i}\mathcal{D}^i\zeta_4||_{L^2_{sc}(\mathcal{N}_u(0,v))}\equiv
\frac{1}{|u|}\left(\int_0^v||(a^{\frac{1}{2}}\mathcal{D})^i\zeta_4||^2_{L^2_{sc}(\mathcal{S}_{u,v'})} \right)^{\frac{1}{2}},
\end{align*}

\begin{align*}
||\frac{1}{|u'|}(a^{\frac{1}{2}})^{i}\mathcal{D}^i\zeta_5||_{L^2_{sc}(\mathcal{N}'_v(u_{\infty},u))}
\equiv\left(\int_{u_{\infty}}^u\frac{a}{|u'|^4}
||(a^{\frac{1}{2}}\mathcal{D})^i\zeta_5||^2_{L^2_{sc}(\mathcal{S}_{u,v'})} \right)^{\frac{1}{2}}.
\end{align*}

Moreover for $0\leq i\leq10$, define
\begin{align*}
\tilde{\bmzeta}_{i}(u,v)\equiv&||(a^{\frac{1}{2}})^{i}\mathcal{D}^i
\Tizeta_4||_{L^2_{sc}(\mathcal{N}_u(0,v))}, \quad
\tilde{\underline\bmzeta}_{i}(u,v)\equiv||(a^{\frac{1}{2}})^{i}\mathcal{D}^i\Tizeta_5||_{L^2_{sc}(\mathcal{N}'_v(u_{\infty},u))}.
\end{align*}

\begin{remark}
We denote the supremum over $u$ and $v$ of the above defined bootstrap norms.
\begin{align*}
&\bmGamma\equiv\sum_{i\leq6}(\Gamma_{i,\infty}+\Psi_{i,\infty}+\phi_{i,\infty}+\zeta_{i,\infty})
+\sum_{i\leq10}(\Gamma_{i,2}+\phi_{i,2}+\zeta_{i,2})+\sum_{i\leq9}(\Psi_{i,2}+\tilde{\zeta}_{i,2}), \\
&\bm{\Psi}\equiv\sum_{i\leq10}(\Psi_{i}+\underline\Psi_{i}), \quad 
\bm{\phi}\equiv\sum_{i\leq11}(\bm\phi_{i}+\underline{\bm\phi}_{i}), \quad
\bm{\zeta}\equiv\sum_{i\leq11}(\bm\zeta_{i}+\underline{\bm\zeta}_{i})+
\sum_{i\leq10}(\bm\Tizeta_{i}+\underline{\bm\Tizeta}_{i}), 
\end{align*}
and let $\bmGamma_0$, $\bm\Psi_0$, $\bm\phi_0$ and $\bm\zeta_0$ denote the norm on the initial light cones.
We also make use of the \emph{initial data quantity}
\begin{align*}
\mathcal{I}_0\equiv\sup_{0\leq v\leq1}\mathcal{I}_0(v), 
\end{align*}
where
\begin{align*}
\mathcal{I}_0(v)\equiv\sum_{j=0}^1\sum_{i=0}^{15}\frac{1}{a^{\frac{1}{2}}}
||\mthorn^j(|u_{\infty}|\mathcal{D})^i(\sigma,\phi_0)||_{L^{2}(\mathcal{S}_{u_{\infty},v})}.
\end{align*}

We also use boldface to denote the supremum over $u$ and $v$ 
of the norms of a specific quantity on the light cone for convenience, 
for example we denote
\begin{align*}
\bm\phi[\phi_0]\equiv\sup_{u}\sum_{i=0}^{11}\frac{1}{a^{\frac{1}{2}}}
||(a^{\frac{1}{2}})^{i-1}\mathcal{D}^i\phi_{0}||_{L^2_{sc}(\mathcal{N}_u(0,v))}, \quad
\underline{\bm\zeta}[\zeta_5]\equiv\sup_{v}\sum_{i=0}^{11}||\frac{1}{|u'|}(a^{\frac{1}{2}})^{i}\mathcal{D}^i
\zeta_5||_{L^2_{sc}(\mathcal{N}'_v(u_{\infty},u))}.
\end{align*}
We make use of non-boldface to denote the supremum over $u$ 
and $v$ of the norm of a specific quantity on 2-sphere, 
for example we denote
\begin{align*}
\Gamma[\lambda]\equiv&\sup_{u,v}\left(\sum_{i=0}^{10}\frac{a^{\frac{1}{2}}}{|u|}
||(a^{\frac{1}{2}}\mathcal{D})^i\lambda||_{L^2_{sc}(\mathcal{S}_{u,v})}
+\sum_{i=0}^6\frac{a^{\frac{1}{2}}}{|u|}
||(a^{\frac{1}{2}}\mathcal{D})^i\lambda||_{L^{\infty}_{sc}(\mathcal{S}_{u,v})}\right), \\
\zeta[\zeta_0]\equiv&\sup_{u,v}\left(\sum_{i=0}^{10}\frac{1}{a^{\frac{1}{2}}}
||(a^{\frac{1}{2}}\mathcal{D})^i\zeta_0||_{L^{2}_{sc}(\mathcal{S}_{u,v})}
+\sum_{i=0}^6\frac{1}{a^{\frac{1}{2}}}
||(a^{\frac{1}{2}}\mathcal{D})^i\zeta_0||_{L^{\infty}_{sc}(\mathcal{S}_{u,v})}\right).
\end{align*}

In the rest of the article, for simplicity we use $A\lesssim B$ to denote that there exist a
constant $C>0$ which is independent of $a$, such that $A\leq CB$. We
do not distinguish between $\lesssim$ and $\leq$ when there is no
source of confusion. 

\end{remark}

\subsection{Strategy of the bootstrap argument}
We start with the bound 
\begin{align*}
\bmGamma_0+\bm\Psi_0+\bm\phi_0+\bm\zeta_0\lesssim\mathcal{I}_0,
\end{align*}
and want to prove that 
\begin{align}
\bmGamma(u,v)+\bm\Psi(u,v)+\bm\phi(u,v)+\bm\zeta(u,v)\lesssim
(\mathcal{I}_0)^2+\mathcal{I}_0+1 \label{Conclusion}
\end{align}
in the region
\begin{align*}
\mathbb{D}=\big\{(u,v)|u_{\infty}\leq u\leq -a/4, \ \ 0\leq v\leq 1\big\}.
\end{align*}
Then, by a last-slice argument, a solution to the Einstein–Weyl system 
exists in the domain $\mathscr{D}$, from this uniform bound.

To achieve this we make the following bootstrap assumptions 
\begin{align}
\bmGamma+\bm\Psi+\bm\phi+\bm\zeta\leq\mathcal{O} \label{Hypothesis}
\end{align}
for large $\mathcal{O}$ such that
\begin{align*}
\mathcal{I}_{0}\ll\mathcal{O},
\end{align*}
but also
\begin{align*}
\mathcal{O}^{20}\leq a^{\frac{1}{16}}.
\end{align*}
We need to show that the following set 
\begin{align*}
\bm{I}=\{u|u_{\infty}\leq u\leq -a/4,\ \eqref{Hypothesis} \ holds \ for \ every \ 0\leq v\leq 1\}
\end{align*}
is both open and closed. 

Firstly, from the local existence results of CIVP for Einstein-Weyl spinor system; see \cite{PengXiaoning2501}, 
there exists a small $\varepsilon$ such that for
$u_{\infty}\leq u\leq u_{\infty}+\varepsilon$ we have
\begin{align*}
\bmGamma+\bm\Psi+\bm\phi+\bm\zeta\lesssim2\mathcal{I}_0\ll\mathcal{O}.
\end{align*}
That means $[u_{\infty}\leq u\leq u_{\infty}+\varepsilon]\subseteq\bm{I}$, i.e. $\bm{I}$ is a closed set. 
Secondly, in sections \ref{L2estimate}, \ref{Elliptic} and \ref{EnergyEstimate} we show 
 \begin{align*}
\bmGamma(u,v)\lesssim&(\bm\Psi(u,v)+\bm\phi(u,v)+\bm\zeta(u,v))^2+\mathcal{I}_0+1, \\
\bm\phi(u,v)\lesssim&\mathcal{I}_0+1, \quad
\bm\zeta(u,v)\lesssim\mathcal{I}_0+1, \quad
\bm\Psi(u,v)\lesssim\mathcal{I}_0+1.
 \end{align*}
These results improve the upper bounds in the bootstrap assumption \eqref{Hypothesis}. 
Consequently, the continuity of solutions and local existence arguments lead to that 
$\bm{I}$ can be extended a bit forward along $u$. This implies that $\bm{I}$ is open. 
Together with the closedness, one concludes that $\bm{I}\equiv[u_{\infty}\leq u\leq -a/4]$ 
and the uniform bound in \eqref{Conclusion} holds.

\subsection{Estimates for the components of the frame}
We introduce the basic estimates on the components of the frame with the bootstrap assumption. 
As the bootstrap assumptions are the same as in paper \cite{An2022,HilValZha23}, 
the ensuing conclusions remain unchanged, 
therefore we provide only a brief outline and the resulting statements. 
For the component $Q$, apply the definition of $\ulvarrho$ one has 
\begin{align*}
|Q-1|\leq\frac{\mathcal{O}}{|u|}.
\end{align*}
Hence when $a$ is sufficiently large, $||Q,Q^{-1}||_{L^{\infty}(\mathcal{S}_{u,v'})}$ are close to $1$. 
Following the same strategy we have the following estimate for the area of $\mathcal{S}_{u,v}$:
\begin{align*}
|\mbox{Area}(S_{u,v})-2\pi u^2|\lesssim\frac{\mathcal{O}}{|u|}.
\end{align*}
under the bootstrap assumption of $\rho$.

\subsection{Important inequalities}
We mainly use the Gr\"onwall inequality and Sobolev embedding in the proof. 
We list these inequalities in scale-invariant form below:
\begin{proposition}
  \begin{subequations}
\begin{align}
  &||\phi||_{L_{sc}^2(\mathcal{S}_{u,v})}\lesssim 
  ||\phi||_{L_{sc}^2(\mathcal{S}_{u,0})}+\int_0^v
  ||\mthorn\phi||_{L_{sc}^2(\mathcal{S}_{u,v'})}\mathrm{d}v' , \label{SCldirectiongronwall}\\
  &||\phi||_{L_{sc}^2(\mathcal{S}_{u,v})}\lesssim
  ||\phi||_{L_{sc}^2(\mathcal{S}_{0,v})}+
  \int_0^u\frac{a}{|u'|^2}||\mthorn'\phi||_{L_{sc}^2(\mathcal{S}_{u',v})}\mathrm{d}u'.
\end{align}
\end{subequations}
\end{proposition}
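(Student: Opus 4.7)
\medskip
\noindent
\textbf{Proof proposal.}

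Both estimates are one--dimensional Grönwall inequalities along a null generator. The plan is to reduce each to an ordinary differential inequality for $N(\cdot):=\|\phi\|_{L^2_{sc}(\mathcal{S}_{\cdot})}$ by differentiating in the appropriate null direction, identifying the decay from the Minkowskian part of the null expansion with the $|u|$-weight in the scale--invariant norm, and controlling the remainder through the bootstrap bounds on $\rho$ and $\mu$.

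For (a), the coordinate $u$ is held fixed, so the prefactor $a^{-s_2(\phi)}|u|^{2s_2(\phi)}$ is constant and the claim reduces to its $L^{2}(\mathcal{S}_{u,v})$ analogue. I would compute $\partial_v\int_{\mathcal{S}_{u,v}}|\phi|^2$ by writing $\partial_v|\phi|^2=2Q^{-1}\mathrm{Re}(\bar\phi\,\mthorn\phi)$ through $\bml=Q\bmpartial_v$ (the GHP correction $s(\epsilon-\bar\epsilon)$ vanishes by \eqref{spinconnection1}) and combining it with the transport law $\partial_v\log\sqrt{q}=-2Q^{-1}\rho$ for the $2$--surface area element under the outgoing null flow. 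This produces
\[
\frac{d}{dv}\int_{\mathcal{S}_{u,v}}|\phi|^2=\int_{\mathcal{S}_{u,v}}2Q^{-1}\!\left(\mathrm{Re}(\bar\phi\,\mthorn\phi)-\rho\,|\phi|^2\right).
\]
Using $|Q-1|\ll 1$ and Cauchy--Schwarz one gets $\tfrac{d}{dv}\|\phi\|_{L^2(\mathcal{S}_{u,v})}\lesssim \|\mthorn\phi\|_{L^2(\mathcal{S}_{u,v})}+\|\rho\|_{L^\infty(\mathcal{S}_{u,v})}\|\phi\|_{L^2(\mathcal{S}_{u,v})}$. Since $\|\rho\|_{L^\infty}\lesssim |u|^{-1}$ and $v\in[0,1]$, the Grönwall integrating factor is uniformly bounded and integration from $0$ to $v$ yields (a).

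For (b), $v$ is held fixed and the prefactor $|u|^{2s_2(\phi)}$ is itself differentiated. Writing $N(u):=\|\phi\|_{L^2_{sc}(\mathcal{S}_{u,v})}$, using $\bmn=\bmpartial_u+C^{\mathcal{A}}\bmpartial_{\mathcal{A}}$ and the fact that the tangential piece integrates away on the closed $2$--surface, the area evolution $\partial_u\log\sqrt{q}=2\mu+(\text{div }C)$ combined with $-4s_2/|u|$ from differentiating $|u|^{4s_2}$ produces, to leading order,
\[
\frac{d N^2}{du}\leq \Bigl(2\mu-\frac{4s_2}{|u|}\Bigr)N^2+2 a^{-2s_2}|u|^{4s_2}\|\phi\|_{L^2}\|\mthorn'\phi\|_{L^2}.
\]
Writing $\mu=1/u+\tilde\mu$ with $|\tilde\mu|\lesssim \mathcal{O}/|u|^2$, the coefficient $2\mu-4s_2/|u|=-(4s_2+2)/|u|+2\tilde\mu\leq 2\tilde\mu\lesssim \mathcal{O}/|u|^2$, while converting the cross term with $\|\mthorn'\phi\|_{L^2}=a^{s_2+1}|u|^{-2s_2-2}\|\mthorn'\phi\|_{L^2_{sc}}$ (since $s_2(\mthorn'\phi)=s_2(\phi)+1$) produces precisely the weight $a/|u|^2$:
\[
\frac{d}{du}N\lesssim \frac{a}{|u|^2}\,\|\mthorn'\phi\|_{L^2_{sc}(\mathcal{S}_{u,v})}+\frac{\mathcal{O}}{|u|^2}\,N.
\]
Since $\int_{u_\infty}^u|u'|^{-2}\,\mathrm{d}u'\leq|u|^{-1}\lesssim a^{-1}\ll 1$, the Grönwall integrating factor is bounded, and integration yields (b); the surviving weight $a/|u'|^2$ in the integrand is exactly the one appearing in the $\mathcal{N}'_v$--norm definition.

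The main obstacle—and the only nontrivial point—is the combinatorial cancellation in (b): the Minkowskian $\mu=1/u$ contribution combines with the $|u|$-weight of the scale--invariant norm so that the net first--order-in-$|u|^{-1}$ coefficient is nonpositive and only a harmless $\mathcal{O}/|u|^2$-remainder survives. Without this cancellation, the naive Grönwall factor $\exp\!\int_{u_\infty}^u\!|u'|^{-1}\,\mathrm{d}u'$ would grow logarithmically and the estimate would fail. The remaining T--weight bookkeeping (identifying $\bmpartial_v$, $\bmpartial_u$ with $\mthorn$, $\mthorn'$) is controlled by the gauge \eqref{spinconnection1}--\eqref{spinconnection3} and by absorbing any $s(\gamma-\bar\gamma)$ corrections into lower--order terms already dominated by the bootstrap.
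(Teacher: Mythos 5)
Your proposal is correct and follows the same route as the proofs in the cited references \cite{An2022,HilValZha23}, to which the paper defers: differentiate the $L^2$ mass along the null generator, use the area-element evolution ($\theta_{\bml}=-2\rho$, $\theta_{\bmn}=2\mu$) together with $|Q-1|\ll 1$ and the tangential piece of $\bmn$ integrating to zero, and in the $\mthorn'$ direction exploit the exact cancellation between the Minkowskian part $\mu\approx 1/u$ and the derivative of the weight $|u|^{2s_2}$ so that only the $\tilde\mu$ remainder (of size $|u|^{-2}$) survives in the Grönwall coefficient. Your bookkeeping of the $a$- and $|u|$-powers (using $s_2(\mthorn'\phi)=s_2(\phi)+1$ to produce the $a/|u'|^2$ weight) and the observation that $\int_{u_\infty}^u |u'|^{-2}\,du'\lesssim a^{-1}$ keeps the integrating factor near $1$ are exactly the relevant points.
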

Moreover, for the ingoing equation we have 
\begin{proposition}
\label{SCweightedgronwall}
For  transport equations of the form 
\begin{align*}
\mthorn'\Gamma=\lambda_0\mu\Gamma+F
\end{align*}
one has that 
\begin{align*}
a^{s_2(\Gamma)}|u|^{\lambda_1-2s_2(\Gamma)}||\Gamma||_{L_{sc}^2(\mathcal{S}_{u,v})}\lesssim 
a^{s_2(\Gamma)}|u_{\infty}|^{\lambda_1-2s_2(\Gamma)}||\Gamma||_{L_{sc}^2(\mathcal{S}_{u_{\infty},v})}+
\int_{u_{\infty}}^ua^{s_2(F)}|u'|^{\lambda_1-2s_2(F)}||F||_{L_{sc}^2(\mathcal{S}_{u',v})}\mathrm{d}u',
\end{align*}
where $s_2(F)=s_2(\mthorn'\Gamma)=s_2(\Gamma)+1$.
\end{proposition}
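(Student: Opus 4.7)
The plan is to run a standard 2-sphere transport identity for $||\Gamma||^2_{L^2(\mathcal{S}_{u,v})}$, absorb the $\lambda_0\mu$ coefficient with an integrating factor of the form $|u|^{\lambda_1}$, and close by a scalar Gr\"onwall step. The heart of the estimate is that under the bootstrap $\tilde\mu=\mu-1/u$ defined in \eqref{DefinitionTimu} satisfies $||\tilde\mu||_{L^\infty(\mathcal{S}_{u,v})}\lesssim\mathcal{O}/|u|^2$, so its integral in $u$ over $[u_\infty,-a/4]$ is uniformly small and produces only a bounded Gr\"onwall factor.

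I would start by differentiating $||\Gamma||^2_{L^2(\mathcal{S}_{u,v})}$ in $u$ at fixed $v$. Since $\bmn=\partial_u+C^{\mathcal{A}}\partial_{\mathcal{A}}$, the variation of the induced area element along $\bmn$ has trace $\mu+\bar\mu=2\mu$, while the angular $C^{\mathcal{A}}\partial_{\mathcal{A}}$ contribution is a pure divergence that vanishes on the closed 2-surface. This gives the basic identity
\begin{align*}
\frac{d}{du}||\Gamma||^2_{L^2(\mathcal{S}_{u,v})}
=\int_{\mathcal{S}_{u,v}}\bigl(\mthorn'|\Gamma|^2+2\mu|\Gamma|^2\bigr).
\end{align*}
Substituting $\mthorn'\Gamma=\lambda_0\mu\Gamma+F$ and its conjugate, splitting $\mu=1/u+\tilde\mu$, applying Cauchy--Schwarz to the $F$-term, and dividing by $2||\Gamma||_{L^2(\mathcal{S}_{u,v})}$ (with the standard approximation at its zeros) yields the scalar differential inequality
\begin{align*}
\frac{d}{du}||\Gamma||_{L^2(\mathcal{S}_{u,v})}
\le \frac{\lambda_0+1}{u}||\Gamma||_{L^2}
+\frac{C\mathcal{O}}{|u|^2}||\Gamma||_{L^2}
+||F||_{L^2}.
\end{align*}

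Multiplying by $|u|^{\lambda_1}$ and using $\partial_u|u|^{\lambda_1}=-\lambda_1|u|^{\lambda_1-1}$ together with $1/u=-1/|u|$ for $u<0$, the two transport coefficients collapse to $-(\lambda_0+\lambda_1+1)|u|^{\lambda_1-1}||\Gamma||_{L^2}$, which is either discarded when its sign is favourable or absorbed into the implicit constant by a short Gr\"onwall loop whose coefficient $|\lambda_0+\lambda_1+1|/|u|$ is a pure number times $1/|u|$ and integrates to a bounded quantity on $[u_\infty,-a/4]$. The residual $\tilde\mu$-error has Gr\"onwall coefficient $\lesssim\mathcal{O}/|u|^2$, whose integral over the bootstrap region is $\lesssim\mathcal{O}/|u_\infty|\ll1$ and therefore contributes only a bounded multiplicative factor. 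A final scalar Gr\"onwall then gives
\begin{align*}
|u|^{\lambda_1}||\Gamma||_{L^2(\mathcal{S}_{u,v})}
\lesssim|u_\infty|^{\lambda_1}||\Gamma||_{L^2(\mathcal{S}_{u_\infty,v})}
+\int_{u_\infty}^u|u'|^{\lambda_1}||F||_{L^2(\mathcal{S}_{u',v})}\,du'.
\end{align*}
Converting plain $L^2$ norms to scale-invariant ones via $||\cdot||_{L^2_{sc}}=a^{-s_2}|u|^{2s_2}||\cdot||_{L^2}$, and invoking the stated signature identity $s_2(F)=s_2(\Gamma)+1$, makes both sides carry the common overall weight $|u|^{\lambda_1}$ and recovers the inequality in the precise scale-invariant form claimed.

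The main obstacle is bookkeeping. One must verify that the T-weight correction implicit in $\mthorn'$ together with the shift $C^{\mathcal{A}}\partial_{\mathcal{A}}$ contribute only closed-surface divergences in the initial transport identity, and that the borderline case $\lambda_0+\lambda_1+1\approx0$ does not produce an unbounded $\log(|u_\infty|/|u|)$ prefactor; both issues are resolved by the fact that the bootstrap region sits inside $|u|\ge a/4\gg1$ and that $\lambda_0,\lambda_1$ are fixed numerical weights chosen by the context of each application of the proposition. Once this is handled, the remaining content is a routine scalar Gr\"onwall argument in the single variable $u$.
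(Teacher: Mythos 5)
Your approach---differentiate $||\Gamma||^2_{L^2(\mathcal{S}_{u,v})}$ along $\bmn$ via the transport identity, absorb the $\lambda_0\mu$ term with the integrating factor $|u|^{\lambda_1}$, handle the $\tilde\mu$ remainder by Gr\"onwall, and convert between $L^2$ and $L^2_{sc}$ using the weight $a^{-s_2}|u|^{2s_2}$---is the standard one and is correctly set up. The transport identity, the split $\mu=1/u+\tilde\mu$, the observation that the $C^{\mathcal{A}}\partial_{\mathcal{A}}$ contribution is a closed-surface divergence, and the bookkeeping showing that both sides of the claimed inequality reduce to $|u|^{\lambda_1}||\cdot||_{L^2}$ in plain $L^2$ norms are all correct.

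The one genuine gap is your disposal of the residual coefficient $-(\lambda_0+\lambda_1+1)|u|^{\lambda_1-1}||\Gamma||_{L^2}$. You assert that when its sign is unfavourable the Gr\"onwall factor is still bounded because $|\lambda_0+\lambda_1+1|/|u|$ ``integrates to a bounded quantity on $[u_\infty,-a/4]$''; that is false, since $\int_{u_\infty}^{-a/4}|u'|^{-1}\,du'=\log(4|u_\infty|/a)$ and nothing in the paper constrains $|u_\infty|$ relative to $a$, so a Gr\"onwall loop with that coefficient would leave an unbounded prefactor $(|u_\infty|/|u|)^{|\lambda_0+\lambda_1+1|}$. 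The proposition therefore cannot hold for arbitrary $\lambda_1$; it needs $\lambda_0+\lambda_1+1\ge0$, and in every application in the paper one takes $\lambda_1=-\lambda_0-1$ so the coefficient vanishes identically (this is spelled out in the proof of Proposition~\ref{L2ulvarrho}, where $\lambda_0=-k$, $\lambda_1=k-1$). You gesture at this when you say $\lambda_0,\lambda_1$ are ``chosen by the context of each application'', but the constraint should be made explicit and the ``absorb into the implicit constant'' fallback deleted; the only surviving Gr\"onwall error is then the $\tilde\mu$-term, whose $O(\mathcal{O}/|u|^2)$ coefficient integrates to $O(\mathcal{O}/a)\ll1$, as you correctly note.
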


Making use of the estimate on the area $\mbox{Area}(\mathcal{S}_{u,v})\sim|u|^2$, 
we have the following Sobolev inequalities
\begin{proposition}
 Under the bootstrap assumptions, in terms of scale-invariant norms one has that
\begin{align}
||\phi||_{L_{sc}^{\infty}(\mathcal{S}_{u,v})}\lesssim
\sum_{i\leq2}||\left(a^{1/2}\mathcal{D}\right)^i\phi||_{L_{sc}^2(\mathcal{S}_{u,v})}.
\end{align}
\end{proposition}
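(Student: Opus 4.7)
The plan is to reduce the claimed inequality, after unwinding the scale-invariant weights, to the classical Sobolev embedding $H^{2}(\mathcal{S})\hookrightarrow L^\infty(\mathcal{S})$ on the spacelike 2-sphere $\mathcal{S}_{u,v}$, and then to invoke the uniform near-roundness of $\mathcal{S}_{u,v}$ to keep the embedding constant independent of $(u,v)$.

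First I would unfold the signature arithmetic. The constant $a^{1/2}$ carries no signature and each application of $\mathcal{D}$ raises the signature by $1/2$, so $(a^{1/2}\mathcal{D})^{i}\phi$ has signature $s_2(\phi)+i/2$. Substituting into the definition of the $L^{2}_{sc}$ norm, the powers of $a$ cancel precisely between the $(a^{1/2})^{i}$ inside and the $a^{-s_2(\phi)-i/2}$ outside, leaving
\begin{align*}
\|(a^{1/2}\mathcal{D})^{i}\phi\|_{L^{2}_{sc}(\mathcal{S}_{u,v})}=a^{-s_2(\phi)}|u|^{2s_2(\phi)+i}\|\mathcal{D}^{i}\phi\|_{L^{2}(\mathcal{S}_{u,v})}.
\end{align*}
Dividing the claim through by $a^{-s_2(\phi)}|u|^{2s_2(\phi)+1}$, it reduces to the purely geometric Sobolev inequality
\begin{align*}
\|\phi\|_{L^\infty(\mathcal{S}_{u,v})}\lesssim |u|^{-1}\|\phi\|_{L^{2}(\mathcal{S}_{u,v})}+\|\mathcal{D}\phi\|_{L^{2}(\mathcal{S}_{u,v})}+|u|\,\|\mathcal{D}^{2}\phi\|_{L^{2}(\mathcal{S}_{u,v})},
\end{align*}
which is exactly the standard $H^{2}\hookrightarrow L^\infty$ embedding on a 2-sphere of area $\sim |u|^{2}$.

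The second step is to establish this classical embedding with a constant independent of $(u,v)$. The area bound $|\mathrm{Area}(\mathcal{S}_{u,v})-2\pi u^{2}|\lesssim \mathcal{O}/|u|$, the bootstrap control on $\rho$ and $\sigma$, and the forthcoming bounds on the Gaussian curvature from Prop. \ref{L2Weyl} show that $\mathcal{S}_{u,v}$ is quantifiably close to a round sphere of radius $|u|$. I would then proceed by the standard two-step route: first a Gagliardo--Nirenberg type inequality of the form
\begin{align*}
\|\phi\|^{2}_{L^{4}(\mathcal{S}_{u,v})}\lesssim \|\phi\|_{L^{2}(\mathcal{S}_{u,v})}\|\mathcal{D}\phi\|_{L^{2}(\mathcal{S}_{u,v})}+|u|^{-1}\|\phi\|^{2}_{L^{2}(\mathcal{S}_{u,v})},
\end{align*}
and then the embedding $W^{1,4}(\mathcal{S}_{u,v})\hookrightarrow L^\infty(\mathcal{S}_{u,v})$; chaining the two with a standard absorption argument produces the displayed reduced inequality.

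The only genuine obstacle is the uniformity of the Sobolev constant across the entire foliation, which ultimately reduces to a uniform isoperimetric bound on every leaf $\mathcal{S}_{u,v}$. This uniformity is secured by the scale-invariant control of the shear $\sigma$, the expansions $\rho$ and $\mu$, and the Gaussian curvature supplied by the bootstrap, under which the geometric deviation from the round sphere of radius $|u|$ is $O(a^{-1/4})$. Thus the Sobolev constant can be chosen independently of $(u,v)$ once $a$ is taken sufficiently large, and the proposition follows.
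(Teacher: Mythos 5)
Your reduction is correct: unwinding the signature arithmetic does give $\|(a^{1/2}\mathcal{D})^i\phi\|_{L^2_{sc}} = a^{-s_2(\phi)}|u|^{2s_2(\phi)+i}\|\mathcal{D}^i\phi\|_{L^2}$, and dividing the claimed inequality through by $a^{-s_2(\phi)}|u|^{2s_2(\phi)+1}$ leaves exactly the $|u|$-weighted $H^2\hookrightarrow L^\infty$ embedding on a sphere of radius $\sim|u|$, which your Gagliardo--Nirenberg plus $W^{1,4}\hookrightarrow L^\infty$ chain establishes with the correct weights. This is essentially the argument in the cited references \cite{An2022,HilValZha23}, which the paper invokes in lieu of a proof.

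One small caution: you lean on ``the forthcoming bounds on the Gaussian curvature from Prop.~\ref{L2Weyl},'' but that proposition appears later in the paper's logical order and would make the argument circular. The cleaner input, and the one used in the references, is that the bootstrap assumptions directly control the Gaussian curvature through the Gauss equation $\mathcal{K}=\Lambda+\Phi_{11}-\Psi_2+\mu\rho-\lambda\sigma$ in $L^\infty$, together with the area estimate $|\mathrm{Area}(\mathcal{S}_{u,v})-2\pi u^2|\lesssim\mathcal{O}/|u|$ and the transport equation \eqref{framecoefficient2} for $P^{\mathcal{A}}$, which compare the induced metric on $\mathcal{S}_{u,v}$ with the round metric of radius $|u|$. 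With that substitution the isoperimetric constant is uniformly bounded and your argument closes.
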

For the details of the proof we refer reader to \cite{An2022,HilValZha23}.

\subsection{Commutators}
In the following let $H_0\equiv \mthorn f$ and
$H_k\equiv\mthorn\meth^kf$, we have that 
\begin{align*}
H_k=\sum_{i_1+i_2+i_3=k}\meth^{i_1}\Gamma(\pi,\tau)^{i_2}\meth^{i_3}H_0+
\sum_{i_1+i_2+i_3+i_4=k}\meth^{i_1}\Gamma(\tau,\pi)^{i_2}\meth^{i_3}\Gamma(\tau,\pi,\rho,\sigma)\meth^{i_4}f.
\end{align*}
Here we write $\meth^{i_1}\Gamma^{i_2}$ to
denote $\meth^{j_1}\Gamma\meth^{j_2}\Gamma...\meth^{j_{i_2}}\Gamma$
where $i_1\geq0$, $i_2\geq1$, $j_1$, $j_2$, ...,
$j_{i_2}\in\mathbb{N}$ and $j_1+j_2+...+j_{i_2}=i_1$.
Similarly, letting $G_0\equiv\mthorn'f$ and
$G_k\equiv\mthorn'\meth^kf$, we have that  
\begin{align*}
G_k=-k\mu\meth^kf+\meth^{k}G_0+
\sum_{i=1}^k\meth^i\mu\meth^{k-i}f+
\sum_{i=0}^k\meth^i\lambda\meth^{k-i}f.
\end{align*}
The above expressions for $H_k$ and $G_k$ neither show the exact
constants in each sum nor the T-weight of $f$. The expressions of $H_k$ and $G_k$ do 
not distinguish the $\meth$ and $\meth'$ and their complex
conjugates. The reason behind this can be traced back to the definition 
of the norms. Further discussion can be found in~\cite{HilValZha23}. 

%%%%%%%%%%%%%%%%%%%%%%%%%%%%%%%%%%%%%%%%%%%%%%%%%

\section{$L^2(\mathcal{S})$ estimates}
\label{L2estimate}
In this section, under the 
bootstrap assumptions 
\begin{align*}
\bmGamma+\bm\Psi+\bm\phi+\bm\zeta\leq\mathcal{O},
\end{align*} 
we estimate the $L^2_{sc}(\mathcal{S})$ norm of next-to-leading derivative of each quantity. 
We show that 
\begin{align*}
\bmGamma\lesssim(\bm\Psi+\bm\phi+\bm\zeta)^2+\mathcal{I}_0+1.
\end{align*}
The strategy is to make use of the transport equation and the Gr\"onwall inequalities. 
Technically, due to the property of H\"older inequality in the scale-invariant norm \eqref{L2holder}, 
products of nonlinear terms are small, 
Take $\Gamma\phi_j$ as an example, we have
\begin{align*}
&\sum_{i_1+...+i_4=k}||a^{\frac{k}{2}}\meth^{i_1}\Gamma(\tau,\pi)^{i_2}\meth^{i_3}\Gamma
\meth^{i_4}\phi_j||_{L^2_{sc}(\mathcal{S}_{u,v})}\\
\leq&\frac{(a^{\frac{1}{2}})^{i_2}}{|u|^{i_2+1}}||(a^{\frac{1}{2}}\mathcal{D})^{j_1}\Gamma||...
||(a^{\frac{1}{2}}\mathcal{D})^{j_{i_2}}\Gamma||\times
||(a^{\frac{1}{2}}\mathcal{D})^{i_3}\Gamma||\times||(a^{\frac{1}{2}}\mathcal{D})^{i_4}\bm{\phi}||\\
\leq&\frac{(a^{\frac{1}{2}})^{i_2}}{|u|^{i_2+1}}\mathcal{O}^{i_2}
||(a^{\frac{1}{2}}\mathcal{D})^{i_3}\Gamma||\times||(a^{\frac{1}{2}}\mathcal{D})^{i_4}\bm{\phi}||
\leq\frac{1}{|u|}||(a^{\frac{1}{2}}\mathcal{D})^{i}\Gamma||\times||(a^{\frac{1}{2}}\mathcal{D})^{k-i}\bm{\phi}||.
\end{align*} 
Hence in the following calculations we focus on the least multiplicative term 
at the same differential order in the absence of ambiguity. 
Moreover, because we make the same bootstrap assumption for connections and curvatures 
as in paper \cite{An2022,HilValZha23}, so we focus on the terms contain matter fields $\phi$ and $\zeta$. 

\subsection{$L^2(\mathcal{S})$ estimates of connection coefficients}
We start with looking at the estimates for the connection
coefficients. 

\begin{proposition}
\label{L2lambda}
For $0\leq k\leq10$, one has that 
\begin{align*}
\frac{a^{\frac{1}{2}}}{|u|}||(a^{\frac{1}{2}}\mathcal{D})^k\lambda||_{L^2_{sc}(\mathcal{S}_{u,v})}\lesssim&1, \quad
\frac{1}{a^{\frac{1}{2}}}||(a^{\frac{1}{2}}\mathcal{D})^{k}\sigma||_{L^2_{sc}(\mathcal{S}_{u,v})}\lesssim
\bm\Psi[\Psi_0]+1.
\end{align*}
Here 
\begin{align*}
\bm\Psi[\Psi_0]\equiv\frac{1}{a^{\frac{1}{2}}}\left(\int_0^v||(a^{\frac{1}{2}}\mathcal{D})^{k}\Psi_0||^2_{L^2_{sc}(\mathcal{S}_{u,v'})} \right)^{\frac{1}{2}}
=\frac{1}{a^{\frac{1}{2}}}||(a^{\frac{1}{2}}\mathcal{D})^i\Psi_0||_{L^2_{sc}(\mathcal{N}_u(0,v))}.
\end{align*}
\end{proposition}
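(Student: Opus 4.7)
The plan is to derive both bounds from the outgoing $\mthorn$-transport equations for $\sigma$ and $\lambda$, commuted with $(a^{1/2}\mathcal{D})^k$ and integrated from the Minkowskian ingoing hypersurface $v=0$, where both $\sigma$ and $\lambda$ vanish identically. The scale-invariant Gr\"onwall inequality \eqref{SCldirectiongronwall} converts pointwise-in-$v'$ estimates on the source into the desired $L^2_{sc}(\mathcal{S}_{u,v})$ bound, and the scale-invariant H\"older inequality \eqref{L2holder} handles the commutator and nonlinear error terms.

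For $\sigma$, the structure equation has the schematic form $\mthorn\sigma = 2\rho\sigma + \Psi_0$. Crucially there is no direct matter source on the right, since the Einstein field equations \eqref{EFEspinor} modify only the trace-free Ricci spinor $\Phi_{ABA'B'}$ and leave $\Psi_0$ untouched. After commuting with $(a^{1/2}\mathcal{D})^k$ via the $[\mthorn,\meth^k]$ formula recalled earlier, applying Gr\"onwall, and using Cauchy--Schwarz in $v$, the principal source integrates to $a^{1/2}\bm{\Psi}[\Psi_0]$ after the $1/a^{1/2}$ normalization. The $\rho\sigma$ self-feedback and the commutator remainders of schematic type $\meth^{i_1}\Gamma^{i_2}\meth^{i_3}\Gamma\meth^{i_4}\sigma$ are absorbed via \eqref{L2holder}: each extra $\Gamma$-factor contributes $\mathcal{O}/|u|$, which is small once $a$ is large, and they all collapse into the additive $+1$.

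For $\lambda$ the analogous $\mthorn$-equation has the schematic form $\mthorn\lambda - \meth'\pi = \bar\sigma\mu + \pi\bar\tau + \ldots + (\phi\,\zeta)$, where $(\phi\,\zeta)$ denotes the spinor bilinear source coming from the Einstein field equations. Notably no Weyl-curvature component appears on the right, which is precisely why the final bound is $\lesssim 1$ without a $\bm{\Psi}$ factor. Integrating from $v = 0$ where $\lambda$ vanishes, using the $\sigma$-estimate just proved together with the bootstrap on $\mu, \pi, \tau$ and on $\bm{\phi}, \bm{\zeta}$, each source is controlled at size $a^{1/2}/|u|$ in the appropriate scale-invariant norm, so after multiplication by the prefactor $a^{1/2}/|u|$ one obtains the stated $O(1)$ bound.

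The main technical obstacle I anticipate is the commutator bookkeeping at the top of the range $k = 10$: in every product, one factor must carry all $\mathcal{D}^k$'s in $L^2_{sc}$, while each remaining factor must sit at order at most $6$ so that Sobolev embedding provides an $L^\infty_{sc}$ bound; one must verify that this split is always feasible against the two-tier hierarchy (which supplies $\Gamma_{i,\infty}$ only for $i\le 6$ and $\Gamma_{i,2}$ for $i\le 10$). A secondary subtlety is the $\meth'\pi$ term in the $\lambda$-equation, which na\"ively demands one order of angular regularity on $\pi$ beyond the $L^2_{sc}(\mathcal{S})$ bootstrap; this should be handled either by rewriting it through the frame equations \eqref{framecoefficient1}--\eqref{framecoefficient6}, or by postponing that top-order piece to the Hodge-elliptic treatment of connection coefficients in Section \ref{Elliptic}.
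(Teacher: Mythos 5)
For $\sigma$ your route coincides with the paper's: commute $\mthorn\sigma=\Psi_{0}+2\rho\sigma$ \eqref{Thornsigma} with $(a^{1/2}\mathcal{D})^k$, integrate in $v$ from the Minkowskian slice, apply Cauchy--Schwarz in $v'$ to turn the $\Psi_{0}$ source into $a^{1/2}\bm\Psi[\Psi_{0}]$, and absorb the $\rho\sigma$ and commutator remainders via \eqref{L2holder}. That part is fine.

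For $\lambda$, however, you choose the \emph{outgoing} transport equation $\mthorn\lambda=\meth'\pi+\pi^{2}+\lambda\rho+\mu\sigma+(\phi\,\zeta)$ \eqref{Thornlambda}, whereas the paper uses the \emph{ingoing} one $\mthorn'\lambda=-\Psi_{4}-\ulomega\lambda-2\lambda\mu$ \eqref{Thornprimelambda}. This is a real difference, and your route hits exactly the obstacle you flagged but did not resolve: after commuting with $\meth^{k}$ at $k=10$, the source contains $\meth^{11}\pi$, which sits outside both $\Gamma_{i,2}$ ($i\le10$) and $\Gamma_{i,\infty}$ ($i\le6$) available from the Section~\ref{L2estimate} bootstrap. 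The frame equations \eqref{framecoefficient1}--\eqref{framecoefficient6} do not trade a derivative of $\pi$ for something already controlled, and deferring this piece to the elliptic treatment would make the $L^{2}(\mathcal{S})$ estimates depend on $\bm\Gamma_{11}$, a quantity only introduced (as its own bootstrap) in Section~\ref{Elliptic}; this dependence would require a nontrivial reorganization of the bootstrap hierarchy and is not how the paper proceeds. Choosing $\mthorn'\lambda$ dissolves the issue at the source: that equation carries no angular derivative of a connection coefficient.

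Your stated reason for the $\lesssim1$ bound on $\lambda$ --- ``no Weyl-curvature component appears on the right'' --- also misattributes the mechanism. The paper's transport equation for $\lambda$ does carry a $\Psi_{4}$ source. The bound is nonetheless $\lesssim 1$ (rather than $\lesssim\underline{\bm\Psi}[\Psi_{4}]+1$) because the ingoing weighted Gr\"onwall inequality (Prop.~\ref{SCweightedgronwall}) produces a $u'$-integral of $\Psi_{4}$ with an extra inverse power of $|u'|$ beyond the $\mathcal{N}'_{v}$-norm weight; Cauchy--Schwarz then yields a prefactor of order $a\,|u|^{-3/2}\lesssim a^{-1/2}$ (using $|u|\ge a/4$), so $\underline{\bm\Psi}[\Psi_{4}]\le\mathcal{O}$ is absorbed into the constant. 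That is a quantitative smallness coming from the weight structure, not from the absence of curvature.
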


\begin{proof}
Make use of \eqref{Thornprimelambda} and \eqref{Thornsigma} for $\lambda$ and $\sigma$ respectively. 
Pure gravitation effect and leads to the same analysis in the vacuum or scalar case, 
see~\cite{An2022,HilValZha23}.
\end{proof}

\begin{proposition}
\label{L2ulomega}
For $0\leq k\leq10$, one has that 
\begin{align*}
||(a^{\frac{1}{2}}\mathcal{D})^{k}\ulomega||_{L^2_{sc}(\mathcal{S}_{u,v})}\lesssim
\bm\Psi[\Psi_2]+1.
\end{align*}
\end{proposition}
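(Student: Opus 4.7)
The plan is to run the same transport/Grönwall scheme used for $\lambda$ and $\sigma$ in Proposition \ref{L2lambda}, but applied now to the ingoing transport equation for $\ulomega$. In the T-weight formalism, $\ulomega$ satisfies a structure equation of the schematic form
\begin{align*}
\mthorn' \ulomega = \Psi_{2}+\bar\Psi_{2} + \Phi_{11} + \Gamma\cdot\Gamma,
\end{align*}
where $\Gamma\cdot\Gamma$ collects quadratic connection products (e.g.\ $\pi\bar\pi$, $\pi\tau$, $\ulomega\mu$, $\lambda\bar\lambda$) analogous to what appears in \eqref{thornprimeulvarrho}, and $\Phi_{11}$ is given by \eqref{EDeq4} as a bilinear expression in $\phi_A$ and $\zeta_{ABA'}$. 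The curvature source on the right is $\Psi_2$ (and its conjugate), which sets the target $\bm\Psi[\Psi_2]$ on the right side of the bound.

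First I would commute $(a^{1/2}\mathcal{D})^k$ through the transport equation using the $G_k$ formula from the commutator section; this produces the main equation
\begin{align*}
\mthorn' (a^{1/2}\mathcal{D})^k\ulomega = -k\mu\,(a^{1/2}\mathcal{D})^k\ulomega + (a^{1/2}\mathcal{D})^k(\Psi_{2}+\bar\Psi_{2}) + (a^{1/2}\mathcal{D})^k(\Phi_{11}) + \text{(nonlinear)},
\end{align*}
together with cross terms involving $(a^{1/2}\mathcal{D})^i\mu$ and $(a^{1/2}\mathcal{D})^i\lambda$ multiplying $(a^{1/2}\mathcal{D})^{k-i}\ulomega$. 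Applying Proposition \ref{SCweightedgronwall} with the appropriate $\lambda_0$, $\lambda_1$ corresponding to the signature $s_2(\ulomega)=1$, the linear term $-k\mu\,(a^{1/2}\mathcal{D})^k\ulomega$ is absorbed into the integrating factor and the initial data from $u=u_\infty$ is Minkowskian (vanishing), leaving only the integrated contributions of the source terms.

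The source terms split as follows. The $\Psi_2$ contribution, after the weighted integration $\int_{u_\infty}^u a|u'|^{-2}\,\cdot\,\mathrm{d}u'$ built into the $L^2_{sc}(\mathcal{N}'_v)$ norm and a Cauchy--Schwarz step, produces exactly $\bm\Psi[\Psi_2]$, which explains the shape of the right-hand side. The matter contribution $(a^{1/2}\mathcal{D})^k\Phi_{11}\sim (a^{1/2}\mathcal{D})^k(\phi\bar\zeta)$ is estimated by distributing derivatives between $\phi_A$ and $\{\zeta_1,\zeta_4\}$ and invoking the scale-invariant Hölder inequalities \eqref{L2holder}--\eqref{L1holderalt} with the bootstrap bounds in $\bm\phi$ and $\bm\zeta$; the gain of $|u|^{-1}$ from Hölder, plus the smallness built into $\mathcal{O}/a^{1/16}\ll 1$, renders these terms subleading. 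The quadratic connection terms are estimated identically to \cite{An2022,HilValZha23}, which explains the $+1$ on the right.

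The point I expect to need the most care is the matter source at top order $k=10$: the factor $\zeta_4$ decays only borderline and has signature $s_2(\zeta_4)=1$, so one must distribute the $10$ angular derivatives so that whichever of $\phi_0$ or $\zeta_4$ carries the top derivative is controlled in $L^2_{sc}$ by the bootstrap (up to $i=10$), while the other factor is placed in $L^\infty_{sc}$ via the Sobolev embedding at level $i\le 6$. Since the bootstrap norms include $\zeta_{10,2}$ and $\phi_{10,2}$, and since we are asking only for a next-to-top-order bound $k\le 10$, no derivative loss occurs. All other terms in the commutator expansion, including $(a^{1/2}\mathcal{D})^i\mu$ factors, are controlled by Proposition \ref{L2lambda} and the companion bounds on $\mu$ and $\tilde\mu$ already built into $\bmGamma$, closing the estimate to yield the stated inequality.
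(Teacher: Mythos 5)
Your proposal runs the Gr\"onwall argument in the wrong null direction: it posits an ingoing transport equation
$\mthorn'\ulomega = \Psi_2+\bar\Psi_2+\Phi_{11}+\Gamma\cdot\Gamma$,
but no such first-order Ricci equation exists for $\ulomega=\gamma+\bar\gamma$ once $\epsilon=\kappa=\nu=0$ and the paper's appendix accordingly lists none. The quantity $\ulomega$ is transported \emph{outward}: the relevant structure equation is \eqref{Thornulomega}, namely $\mthorn\ulomega = \Psi_2+\bar\Psi_2+2\Phi_{11}+2\pi\tau+2\bar\pi\bar\tau+2\tau\bar\tau$, and it is this equation the paper integrates. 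You appear to have conflated $\ulomega$ with the auxiliary quantity $\ulvarrho\equiv\mthorn\log Q$, whose evolution \eqref{thornprimeulvarrho} is indeed ingoing and has exactly the schematic form you quote.

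Two further slips flow from this. First, you assert that the data at $u=u_\infty$ is Minkowskian and vanishing; in fact Minkowskian data sits on the ingoing surface $v=0$, while on $u=u_\infty$ one has $|\ulomega|\lesssim a/|u_\infty|^3\neq 0$. Second, the weighted $u$-integration $\int_{u_\infty}^{u}\frac{a}{|u'|^2}\,\mathrm{d}u'$ that you invoke belongs to the $L^2_{sc}(\mathcal{N}'_v)$ norm and, after Cauchy--Schwarz, yields the \emph{ingoing} flux $\underline{\bm\Psi}[\Psi_2]$, not the outgoing flux $\bm\Psi[\Psi_2]$ that the Proposition actually asserts — your own description of the integration contradicts the stated target.

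The correct argument integrates \eqref{Thornulomega} in $v$ from $v=0$ (where $\ulomega$ vanishes on Minkowski data), uses the $H_k$ commutator rather than $G_k$, and the resulting $\int_0^v$ integration gives $\bm\Psi[\Psi_2]$ directly. The matter source is first rewritten by substituting $\Tizeta_4=\zeta_4+\mu\phi_0$ (the real $\mu$ terms cancel, leaving $2\mathrm{i}(\bar\Tizeta_4\phi_0-\Tizeta_4\bar\phi_0+\bar\zeta_1\phi_1-\zeta_1\bar\phi_1)$) so as to absorb the borderline decay of $\zeta_4$ — a renormalization your sketch gestures at but does not actually deploy. With the correct transport direction, data surface, and renormalization, the estimate closes as stated.
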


\begin{proof}
Make use of \eqref{Thornulomega} and the definition of $\Tizeta_4$ we have 
\begin{align*}
\mthorn\ulomega=&2\tau\bar\tau+2\tau\pi+2\bar\tau\bar\pi
+\Psi_2+\bar\Psi_2+
2\mathrm{i}(\bar\Tizeta_4\phi_0-\Tizeta_4\bar\phi_0+\bar\zeta_1\phi_1-\zeta_1\bar\phi_1)
\end{align*}
and for $k\leq10$, commuting $\meth^{k}$ with $\mthorn$, we obtain 
\begin{align*}
\mthorn\meth^{k}\ulomega=&\meth^k\Psi_2
+\sum_{i_1+i_2+i_3=k,i_3<k}\meth^{i_1}\Gamma(\tau,\pi)^{i_2}
\meth^{i_3}\Psi_2
+\sum_{i_1+i_2+i_3+i_4=k}\meth^{i_1}\Gamma(\tau,\pi)^{i_2}
\meth^{i_3}\Gamma(\tau,\pi,\rho,\sigma)
\meth^{i_4}\Gamma(\tau,\pi,\ulomega) \\
&+\sum_{i_1+i_2+i_3+i_4=k}\meth^{i_1}\Gamma(\tau,\pi)^{i_2}
\meth^{i_3}\phi_i
\meth^{i_4}\zeta_j.
\end{align*}
Then we have
\begin{align*}
||(a^{\frac{1}{2}}\meth)^{k}\ulomega||_{L^2_{sc}(\mathcal{S}_{u,v})}\lesssim&
||(a^{\frac{1}{2}}\meth)^{k}\ulomega||_{L^2_{sc}(\mathcal{S}_{u,0})}+
\int_0^v||a^{\frac{k}{2}}\mthorn\meth^{k}\ulomega||_{L^2_{sc}(\mathcal{S}_{u,v'})} \\
\leq&Vac+\sum_{i_1+i_2+i_3+i_4=k}\int_0^v
||a^{\frac{k}{2}}\mathcal{D}^{i_1}\Gamma(\tau,\pi)^{i_2}
\mathcal{D}^{i_3}\phi_i\mathcal{D}^{i_4}\zeta_j||_{L^2_{sc}(\mathcal{S}_{u,v'})} \\
\leq&\bm\Psi[\Psi_2]+\frac{1}{|u|}a^{\frac{1}{2}}\mathcal{O}^2+1
\leq\bm\Psi[\Psi_2]+1.
\end{align*}
Here "Vac" means the vacuum case which contains curvatures and connections, analysis can be 
found in \cite{An2022,HilValZha23}. 
In what follows, we use "Vac" to denote the same object and present the results directly, 
omitting the details of the analysis.

\end{proof}

\begin{proposition}
\label{L2ulvarrho}
For $0\leq k\leq10$, one has that
\begin{align*}
||(a^{\frac{1}{2}}\mathcal{D})^k\ulvarrho||_{L^2_{sc}(\mathcal{S}_{u,v})}\lesssim
\underline{\bm\Psi}[\Psi_2]+\underline{\bm{\phi}}[\phi_1]^2+\underline{\bm{\phi}}[\phi_1]+1.
\end{align*}
\end{proposition}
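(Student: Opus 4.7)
The plan is to integrate the ingoing transport equation \eqref{thornprimeulvarrho} for $\ulvarrho$ using the weighted scale-invariant Gr\"onwall inequality of Proposition~\ref{SCweightedgronwall}, after first commuting $\meth^k$ through $\mthorn'$. Applying $\meth^k$ to \eqref{thornprimeulvarrho} and invoking the $G_k$ commutator formula recalled in the Commutators subsection, one obtains an equation whose right-hand side consists of: (i) the angular derivative of the curvature source $\meth^k(\Psi_2+\bar\Psi_2)$; (ii) terms of schematic type $\meth^{i_1}\Gamma(\tau,\pi)^{i_2}\meth^{i_3}\Gamma\meth^{i_4}\Gamma(\tau,\pi,\ulomega)$ coming from the original equation and the commutator; (iii) matter nonlinearities of the form $\meth^{i_1}\Gamma(\tau,\pi)^{i_2}\meth^{i_3}\zeta_j\meth^{i_4}\phi_l$; and (iv) the borderline commutator drift $-k\mu\meth^k\ulvarrho$ together with subleading $\meth^i\lambda\cdot\meth^{k-i}\ulvarrho$ pieces.

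Next I would apply Proposition~\ref{SCweightedgronwall} with signature $s_2(\ulvarrho)=1$, so that integration along $\mathcal{N}'_v(u_\infty,u)$ produces the correct weight $a/|u'|^2$; the top-order drift is handled by splitting $\mu=1/u+\Timu$ and absorbing the $1/u$ factor into the Gr\"onwall weight while controlling the $\Timu$ remainder by bootstrap. The curvature source then yields $\underline{\bm\Psi}[\Psi_2]$. The pure connection/vacuum contributions (the $\pi\bar\pi$, $\tau\bar\pi$, and $\ulvarrho\ulomega$ terms of \eqref{thornprimeulvarrho}, together with their commutator cousins) are bounded by the constant $1$ exactly as in \cite{An2022,HilValZha23}, using the same ``Vac'' convention already invoked in Proposition~\ref{L2ulomega}.

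The new content lies in the matter sector $\bar\zeta_4\phi_0-\zeta_4\bar\phi_0$ and $\bar\zeta_1\phi_1-\zeta_1\bar\phi_1$. For the $\zeta_4$ couplings, whose borderline decay was flagged in the main-difficulty section, I would substitute $\zeta_4=\Tizeta_4-\mu\phi_0$ at every derivative order: the product $\Tizeta_4\bar\phi_0$ is under scale-invariant control by the bootstrap on $\Tizeta_4$ and $\phi_0$, while the leftover $\mu|\phi_0|^2$ uses $\mu\sim 1/u$ together with the $L^\infty_{sc}$ bootstrap for $\phi_0$; both contribute to the constant $1$ after integration in $u'$. For the $\zeta_1$ couplings, Leibniz together with the scale-invariant H\"older inequality \eqref{L2holder} gives a linear contribution $\underline{\bm\phi}[\phi_1]$ from the top-order derivative landing on $\phi_1$ with $\zeta_1$ bounded by bootstrap, and a quadratic contribution $\underline{\bm\phi}[\phi_1]^2$ arising when $\meth^j\zeta_1$ must itself be estimated through its own ingoing transport equation, which carries a $\phi_1$ on the right-hand side. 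The main obstacle I foresee is the high-order bookkeeping: ensuring that the substitution $\zeta_4=\Tizeta_4-\mu\phi_0$ commutes correctly through all $\meth$-derivatives without generating new borderline $\meth^i\mu\cdot\meth^{k-i}\phi_0$ terms that escape the $O(1)$ bound, and that the top-order drift $-k\mu\meth^k\ulvarrho$ is absorbed completely by the Gr\"onwall weight—precisely the obstructions motivating the introduction of $\Tizeta_4$ and $\Tizeta_5$ in the main-difficulty section.
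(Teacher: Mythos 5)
Your overall plan---commute $\meth^k$ through \eqref{thornprimeulvarrho}, apply the weighted Gr\"onwall estimate of Proposition~\ref{SCweightedgronwall}, and split the source into curvature, vacuum-connection, and matter pieces---matches the paper in outline. But two parts of the analysis are off. First, you assert $s_2(\ulvarrho)=1$; the paper works explicitly with $s_2(\ulvarrho)=0$ (consistent with $\ulvarrho=\mthorn\log Q$, both factors having weight $0$), and that value is what produces the exponents $\lambda_0=-k$, $\lambda_1=k-1$, $\lambda_1-2s_2(F)=-3$ and hence the Gr\"onwall kernel $a/|u'|^3$. Starting from $s_2(\ulvarrho)=1$ would corrupt this bookkeeping.

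Second, and more seriously, your accounting of which matter term generates $\underline{\bm{\phi}}[\phi_1]^2+\underline{\bm{\phi}}[\phi_1]$ is reversed. The $\zeta_1\phi_1$ couplings are benign: both factors are under bootstrap control in $L^\infty_{sc}$ and $L^2_{sc}$, and after H\"older \eqref{L2holder} and the $1/|u'|$ gain they integrate to an $O(1)$ tail---exactly the paper's ``$+\,1/|u|$''. The nontrivial term is $\zeta_4\bar\phi_0$. The paper does \emph{not} substitute $\zeta_4=\Tizeta_4-\mu\phi_0$ here; it applies H\"older directly to get the product $\phi[\phi_0]\,\zeta[\zeta_4]$, and then converts via the forward-referenced $L^2(\mathcal{S})$ bounds of Propositions~\ref{L2phiA} and~\ref{L2zeta}, namely $\phi[\phi_0]\lesssim\underline{\bm{\phi}}[\phi_1]+1$ and $\zeta[\zeta_4]\lesssim\underline{\bm{\phi}}[\phi_1]+1$; the quadratic term is precisely $(\underline{\bm{\phi}}[\phi_1]+1)^2$. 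Your $\Tizeta_4$-substitution is in fact legitimate---and cleaner than you anticipate, since $\mu$ is real, so the two $\mu|\phi_0|^2$ pieces cancel identically, leaving only $\bar\Tizeta_4\phi_0-\Tizeta_4\bar\phi_0$---but it would yield a bound \emph{without} the $\underline{\bm{\phi}}[\phi_1]^2$ term, so the quadratic contribution you advertise would have no source by your route. Your mechanism for generating it, ``$\meth^j\zeta_1$ estimated through its own ingoing transport equation, which carries a $\phi_1$ on the right-hand side,'' does not reflect any step in this proof: inside the Gr\"onwall integral one simply bounds $\meth^j\zeta_1$ by the bootstrap norms, and no such recursion occurs.
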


\begin{proof}
Make use of 
\begin{align*}
\mthorn'\ulvarrho &= \Psi_2 + \bar\Psi_2
+ \ulvarrho\,\ulomega
+ 2\,\mathrm{i}\,\bar\zeta_4\,\phi_0
- 2\,\mathrm{i}\,\zeta_4\,\bar\phi_0
+ 2\,\mathrm{i}\,\bar\zeta_1\,\phi_1
- 2\,\mathrm{i}\,\zeta_1\,\bar\phi_1 \\[6pt]
&\quad
+ 2\,\pi\,\bar\pi
+ 2\,\pi\,\tau
+ 2\,\bar\pi\,\bar\tau \,.
\end{align*}
Commuting $\mthorn'$ with $\meth^k$, we have that
\begin{align*}
\mthorn'\meth^k\ulvarrho+k\mu\meth^k\ulvarrho=\meth^k\Psi_2
+\sum_{i=0}^k\meth^i\Gamma(\tau,\pi)\meth^{k-i}\pi
+\sum_{i=0}^k\meth^i\ulomega\meth^{k-i}\ulchi
+\sum_{i=0}^k\meth^i\phi_j\meth^{k-i}\zeta_l.
\end{align*}
Denote the righthand side of the above equation by $F$. One has that 
\begin{align*}
&s_2(\ulvarrho)=0, \ \ s_2(\meth^k\ulvarrho)=\frac{k}{2}, \ \ s_2(F)=\frac{k+2}{2}, \\
&\lambda_0=-k, \ \ \lambda_1=-\lambda_0-1=k-1, \\
& \lambda_1-2s_2(F)=-3, \ \ \lambda_1-2s_2(\meth^k\ulvarrho)=-1.
\end{align*}
Now, using Prop. \ref{SCweightedgronwall}, one obtains
\begin{align*}
\frac{1}{|u|}||(a^{\frac{1}{2}}\meth)^k\ulvarrho||_{L^2_{sc}(\mathcal{S}_{u,v})}\lesssim&
\int_{u_{\infty}}^u\frac{a}{|u'|^3}||a^{\frac{k}{2}}F||_{L^2_{sc}(\mathcal{S}_{u',v})} \\
\lesssim&\int_{u_{\infty}}^u\frac{a}{|u'|^3}||a^{\frac{k}{2}}\meth^k\Psi_2||_{L^2_{sc}(\mathcal{S}_{u',v})} 
+\sum_{i=0}^k\int_{u_{\infty}}^u\frac{a}{|u'|^3}||a^{\frac{k}{2}}\meth^i\Gamma
\meth^{k-i}\Gamma||_{L^2_{sc}(\mathcal{S}_{u',v})} \\
&+\sum_{i=0}^k\int_{u_{\infty}}^u\frac{a}{|u'|^3}||a^{\frac{k}{2}}\meth^i\phi_j
\meth^{k-i}\zeta_l||_{L^2_{sc}(\mathcal{S}_{u',v})} \\
=&Vac+I.
\end{align*}
For Vac part we have control $\frac{\underline{\bm\Psi}[\Psi_2]}{|u|}$.
For the integral $I$, we have
\begin{align*}
I\lesssim\int_{u_{\infty}}^u\frac{a}{|u'|^3}\frac{1}{|u'|}a^{\frac{1}{2}}\phi[\phi_0]
\frac{|u|}{a^{\frac{1}{2}}}\zeta[\zeta_4]+\frac{1}{|u|} 
\lesssim\frac{\phi[\phi_0]\zeta[\zeta_4]+1}{|u|}.
\end{align*}
Hence we have
\begin{align*}
||(a^{\frac{1}{2}}\meth)^k\ulvarrho||_{L^2_{sc}(\mathcal{S}_{u,v})}\lesssim&\underline{\bm\Psi}[\Psi_2]
+\phi[\phi_0]\zeta[\zeta_4]+1 \\
\lesssim&\underline{\bm\Psi}[\Psi_2]+\underline{\bm{\phi}}[\phi_1]^2+\underline{\bm{\phi}}[\phi_1]+1.
\end{align*}
Here in the last step we make use of the estimate for $\phi_0$ and $\zeta_4$ which will be 
shown in Prop. \ref{L2phiA} and Prop.\ref{L2zeta}.
\end{proof}

\begin{proposition}
\label{L2tau}
For $0\leq k\leq10$, one has that
\begin{align*}
||(a^{\frac{1}{2}}\mathcal{D})^{k}\tau||_{L^2_{sc}(\mathcal{S}_{u,v})}\lesssim
\underline{\bmzeta}[\zeta_1]\bmphi[\phi_0]+\Psi[\TiPsi_1]+\bmphi[\phi_0]+\underline{\bmzeta}[\zeta_1]+1.
\end{align*}
\end{proposition}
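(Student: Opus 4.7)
The plan is to integrate the outgoing $\mthorn$-transport equation for $\tau$ along $v$, exactly as in Proposition \ref{L2ulomega}, splitting the curvature source via $\Psi_1 = \TiPsi_1 + \Phi_{01}$ to expose the matter bilinear. Schematically,
\begin{align*}
\mthorn\tau = \TiPsi_1 + \Phi_{01} + \Gamma\cdot\Gamma,
\end{align*}
where $\Gamma\cdot\Gamma$ denotes the connection quadratics $(\tau+\bar\pi)\rho+(\bar\tau+\pi)\sigma+\dots$ already present in the vacuum case, and $\Phi_{01} = \mathrm{i}(2\bar\zeta_1\phi_0-\zeta_3\bar\phi_0-\zeta_0\bar\phi_1)$ by \eqref{EDeq2}. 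The only new ingredient relative to \cite{An2022,HilValZha23} is this matter bilinear.

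First I would commute $\meth^k$ with $\mthorn$ using the $H_k$ expansion of Section \ref{Preliminaries}, yielding
\begin{align*}
\mthorn\meth^k\tau = \meth^k\TiPsi_1 + \meth^k\Phi_{01} + (\text{commutator/vacuum terms}),
\end{align*}
with commutator terms treated exactly as the ``Vac'' contribution in Proposition \ref{L2ulomega} and absorbed into the $+1$ under the bootstrap assumption $\bmGamma+\bm\Psi\leq\mathcal{O}$. Integrating by \eqref{SCldirectiongronwall} and noting that the Minkowskian initial data at $v=0$ contributes $\lesssim 1$, a Cauchy--Schwarz in $v'$ turns $\int_0^v\|a^{k/2}\meth^k\TiPsi_1\|_{L^2_{sc}(\mathcal{S}_{u,v'})}\mathrm{d}v'$ into $\bm\Psi[\TiPsi_1]$, as desired.

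The decisive step is controlling the matter piece $\int_0^v\|a^{k/2}\meth^k(\zeta_j\phi_l)\|_{L^2_{sc}(\mathcal{S}_{u,v'})}\mathrm{d}v'$ for $(j,l)\in\{(1,0),(3,0),(0,1)\}$. Distributing derivatives via the Leibniz rule and applying the scale-invariant H\"older inequality \eqref{L2holder}, I would place the lower-order factor (at most $k/2$ derivatives) in $L^\infty_{sc}$ via the Sobolev embedding (bounded by $\mathcal{O}$) and keep the top-order factor in $L^2_{sc}$. For the leading pair $\zeta_1\phi_0$, a further Cauchy--Schwarz in $v'$ separates the two top-order factors onto their natural cones: $\phi_0$ is controlled on the outgoing cone by $\bm\phi[\phi_0]$ and $\zeta_1$ on the ingoing cone by $\underline{\bm\zeta}[\zeta_1]$, producing the cross term $\underline{\bm\zeta}[\zeta_1]\bm\phi[\phi_0]$. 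The auxiliary pairs $\zeta_3\bar\phi_0$ and $\zeta_0\bar\phi_1$, together with all sub-top-order distributions from the Leibniz expansion, produce the linear remainders $\bm\phi[\phi_0]+\underline{\bm\zeta}[\zeta_1]$ plus terms absorbable into $+1$.

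\textbf{Main obstacle.} The delicate point is the signature bookkeeping: with $s_2(\tau)=\tfrac12$, $s_2(\zeta_1)=\tfrac12$, $s_2(\phi_0)=0$, one has $s_2(\zeta_1\phi_0)=\tfrac12=s_2(\tau)$, so the product is exactly scale-critical; a single miscount of a factor $a^{1/2}$ or $|u|^{-1}$ either forces a genuine loss or introduces spurious $\mathcal{O}$-powers. One must also verify that the ``natural cone'' assignment--placing $\phi_0$ on the outgoing and $\zeta_1$ on the ingoing cone--is consistent with the direction of integration in \eqref{SCldirectiongronwall} and with the defining norms of $\bm\phi[\phi_0]$ and $\underline{\bm\zeta}[\zeta_1]$, so that the resulting bilinear is genuinely bounded and not merely schematic. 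Once the signatures and cone assignments are organized, every remaining piece fits the template of Proposition \ref{L2ulomega} and the estimate closes.
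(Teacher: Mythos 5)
Your overall framework is right (commute $\meth^k$ with $\mthorn$ in the outgoing equation $\mthorn\tau=\Gamma\Gamma+\TiPsi_1+\Phi_{01}$, integrate, and reduce to the matter bilinears), but the mechanism you propose for producing the cross term $\underline{\bmzeta}[\zeta_1]\bmphi[\phi_0]$ is wrong in two related ways.

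First, you misidentify the dominant pair. Among $\zeta_1\phi_0$, $\zeta_3\phi_0$, $\zeta_0\phi_1$ it is $\zeta_0\bar\phi_1$ that drives the cross term, not $\zeta_1\phi_0$. Although all three have signature $s_2=\tfrac12$, the bootstrap norms are weighted differently: $\zeta_0$ and $\phi_1$ each carry an extra $a^{\frac12}$ relative to $\zeta_1,\zeta_3,\phi_0$ (the $\frac{1}{a^{1/2}}$ prefactors in the definitions of $\zeta_{i,2}$ and $\phi_{i,2}$). As a result $\zeta_1\phi_0$ and $\zeta_3\phi_0$ give $\lesssim\frac{a^{1/2}}{|u|}\mathcal{O}^2$, absorbable into $+1$ since $|u|\geq a/4$, while $\zeta_0\phi_1$ gives $\lesssim\frac{a}{|u|}\zeta[\zeta_0]\phi[\phi_1]$, which is the only piece that can survive at order one.

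Second, the Cauchy--Schwarz-in-$v'$ you invoke cannot produce $\underline{\bmzeta}[\zeta_1]$. You are integrating a $v'$-transport equation at fixed $u$; a Cauchy--Schwarz in $v'$ yields an outgoing cone norm $\int_0^v\|\cdot\|^2\,\mathrm{d}v'$, never the ingoing cone weight $\int_{u_\infty}^u\frac{a}{|u'|^2}\,\mathrm{d}u'$ needed for $\underline{\bmzeta}[\zeta_1]$. You flag precisely this as the ``main obstacle'' but then assert it resolves; in fact it does not resolve by any reorganization of the $v'$-integral. The paper's actual route is indirect: control the sphere sup norms $\zeta[\zeta_0]$ and $\phi[\phi_1]$ via Propositions \ref{L2zeta} and \ref{L2phiA} (which give $\zeta[\zeta_0]\lesssim\underline{\bmzeta}[\zeta_1]+1$ and $\phi[\phi_1]\lesssim\bmphi[\phi_0]+1$), and substitute these into $\frac{a}{|u|}\zeta[\zeta_0]\phi[\phi_1]$ after applying the scale-invariant H\"older inequality. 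The cross term $\underline{\bmzeta}[\zeta_1]\bmphi[\phi_0]$ is therefore inherited from those earlier $L^2(\mathcal{S})$ estimates, not extracted by a cone-splitting of the bilinear $\zeta_1\phi_0$. Without that step, your argument either produces the wrong cone norm $\bmzeta[\zeta_1]$ or leaves bootstrap constants $\mathcal{O}$ in the bound that it should not.
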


\begin{proof}

We make use of the structure equation \eqref{Thorntau}
\begin{align*}
\mthorn\tau&=(\tau+\bar\pi)\rho+(\bar\tau+\pi)\sigma
+\TiPsi_1+2\mathrm{i}(2\bar\zeta_1\phi_0-\zeta_3\bar\phi_0-\zeta_0\bar\phi_1).
\end{align*}
For $k\leq10$, commuting $\meth^{k}$ with $\mthorn$, we have that 
\begin{align*}
\mthorn\meth^{k}\tau=Vac
+\sum_{i_1+i_2+i_3+i_4=k}\meth^{i_1}\Gamma(\tau,\pi)^{i_2}
\meth^{i_3}\zeta_i
\meth^{i_4}\phi_l.
\end{align*}
Then we have
\begin{align*}
||(a^{\frac{1}{2}}\meth)^{k}\tau||_{L^2_{sc}(\mathcal{S}_{u,v})}\lesssim&
||(a^{\frac{1}{2}}\meth)^{k}\tau||_{L^2_{sc}(\mathcal{S}_{u,0})}+
\int_0^v||a^{\frac{k}{2}}\mthorn\meth^{k}\tau||_{L^2_{sc}(\mathcal{S}_{u,v'})} \\
\leq&\bm\Psi[\TiPsi_1]+1+\frac{a^{\frac{1}{2}}}{|u|}\mathcal{O}^2
+\frac{a}{|u|}\Gamma[\zeta_0]\Gamma[\phi_1]
\leq\bm\Psi[\TiPsi_1]+\zeta[\zeta_0]\phi[\phi_1]+1 \\
\lesssim&\underline{\bmzeta}[\zeta_1]\bmphi[\phi_0]
+\Psi[\TiPsi_1]+\bmphi[\phi_0]+\underline{\bmzeta}[\zeta_1]+1.
\end{align*}

\end{proof}

\begin{proposition}
\label{L2rho}
For $0\leq k\leq10$, one has that 
\begin{align*}
||(a^{\frac{1}{2}}\mathcal{D})^k\rho||_{L^2_{sc}(\mathcal{S}_{u,v})}\lesssim&
\bm\Psi[\Psi_0]^2+\underline{\bm{\phi}}[\phi_1]\underline{\bmzeta}[\zeta_1]+\bm\Psi[\Psi_0]
+\underline{\bmzeta}[\zeta_1]+\underline{\bm{\phi}}[\phi_1]+1.
\end{align*}
\end{proposition}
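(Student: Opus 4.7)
The plan mirrors the treatment of $\tau$ in Proposition \ref{L2tau}: use an outgoing transport equation for $\rho$, commute with $\meth^{k}$, apply the Gr\"onwall inequality \eqref{SCldirectiongronwall} in the $\mthorn$-direction, and reduce the resulting right-hand side to the named bootstrap quantities via the scale-invariant H\"older inequalities \eqref{L2holder}--\eqref{L1holderalt}.

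First I would start from the Raychaudhuri-type structure equation for $\rho$, which, after substituting $\Phi_{00}$ from \eqref{EDeq1}, takes the schematic form
\begin{align*}
\mthorn\rho = \rho^{2}+\sigma\bar\sigma + 2\mathrm{i}\bigl(\bar\zeta_{0}\phi_{0}-\zeta_{0}\bar\phi_{0}\bigr).
\end{align*}
Commuting with $\meth^{k}$ via the formula for $H_{k}$ in Section on Commutators, I obtain
\begin{align*}
\mthorn\meth^{k}\rho
= \text{Vac}
 + \sum_{i_{1}+i_{2}+i_{3}+i_{4}=k}\meth^{i_{1}}\Gamma(\tau,\pi)^{i_{2}}\meth^{i_{3}}\sigma\,\meth^{i_{4}}\sigma
 + \sum_{i_{1}+i_{2}+i_{3}+i_{4}=k}\meth^{i_{1}}\Gamma(\tau,\pi)^{i_{2}}\meth^{i_{3}}\zeta_{0}\,\meth^{i_{4}}\phi_{0},
\end{align*}
where ``Vac'' denotes the terms that arise purely from the vacuum/connection structure and are handled as in \cite{An2022,HilValZha23}, contributing the $\bm\Psi[\Psi_{0}]+1$ baseline.

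Next I would apply the outgoing Gr\"onwall inequality \eqref{SCldirectiongronwall} to obtain
\begin{align*}
||(a^{\frac{1}{2}}\mathcal{D})^{k}\rho||_{L^{2}_{sc}(\mathcal{S}_{u,v})}
\lesssim
||(a^{\frac{1}{2}}\mathcal{D})^{k}\rho||_{L^{2}_{sc}(\mathcal{S}_{u,0})}
+\int_{0}^{v}||a^{\frac{k}{2}}\mthorn\meth^{k}\rho||_{L^{2}_{sc}(\mathcal{S}_{u,v'})}\,\mathrm{d}v'.
\end{align*}
The initial datum is controlled by $\mathcal{I}_{0}\lesssim 1$. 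For the $\sigma\bar\sigma$ contribution I use Proposition \ref{L2lambda}, which gives $\tfrac{1}{a^{\frac{1}{2}}}||(a^{\frac{1}{2}}\mathcal{D})^{k}\sigma||_{L^{2}_{sc}(\mathcal{S})}\lesssim\bm\Psi[\Psi_{0}]+1$ together with the $L^{\infty}$ bound on the lower-order factor, so that after the $L^{2}$--$L^{\infty}$ split and the $v$-integration the quadratic $\sigma$-term yields a bound of the order $\bm\Psi[\Psi_{0}]^{2}+1$. The matter term $\zeta_{0}\phi_{0}$ is handled with a similar $L^{2}$--$L^{\infty}$ split together with the forthcoming Propositions \ref{L2phiA} and \ref{L2zeta} for $\phi_{0}$ and $\zeta_{0}$; since $\phi_{0}$ and $\zeta_{0}$ propagate from the ingoing cone by their transport equations (and thus couple to $\underline{\bm\phi}[\phi_{1}]$ and $\underline{\bmzeta}[\zeta_{1}]$ via \eqref{thornphi1} and the analogous equations for $\zeta_{0}$), this yields the cross term $\underline{\bm\phi}[\phi_{1}]\underline{\bmzeta}[\zeta_{1}]$ and the linear remainders $\underline{\bmzeta}[\zeta_{1}]$ and $\underline{\bm\phi}[\phi_{1}]$.

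The main obstacle, as in the $\tau$ case, is the borderline nature of the $\sigma^{2}$ term: under the scale-invariant H\"older inequality it is barely small, so one has to be careful to place one $\sigma$-factor in $L^{\infty}_{sc}$ (where it costs a factor controlled by $\bmGamma$ but is essentially of order one), and use the $v$-integrated $L^{2}_{sc}(\mathcal{N}_{u})$ norm on the other factor to convert it into $\bm\Psi[\Psi_{0}]^{2}$ rather than losing a logarithm or a power of $a$. Once this bookkeeping is done, all contributions are of the advertised form and the proof concludes by summing the estimates and invoking the bootstrap ordering $\mathcal{I}_{0}\ll\mathcal{O}$, $\mathcal{O}^{20}\leq a^{1/16}$.
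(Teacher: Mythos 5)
Your proposal matches the paper's proof: both start from the structure equation $\mthorn\rho=\rho^2+\sigma\bar\sigma+2\mathrm{i}(\bar\zeta_0\phi_0-\zeta_0\bar\phi_0)$, commute with $\meth^k$, apply the outgoing Gr\"onwall inequality, and split the $\sigma\bar\sigma$ and $\zeta_0\phi_0$ quadratic terms by $L^2$--$L^\infty$ H\"older, then invoke Propositions~\ref{L2lambda}, \ref{L2phiA}, \ref{L2zeta} to land on the stated bound. Your framing of the $\sigma\bar\sigma$ term as a ``borderline obstacle'' mildly overstates the difficulty --- the scale-invariant H\"older produces a prefactor $\tfrac{a}{|u|}\lesssim1$ (since $|u|\geq a/4$ in $\mathbb{D}$), so the term is absorbed routinely rather than requiring any special care to avoid logarithmic loss; genuine borderline behavior in this paper is confined to the $(\zeta_4,\zeta_5)$ pair elsewhere.
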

\begin{proof}
We make use of the structure equation for $\rho$ \eqref{Thornrho}
\begin{align*}
\mthorn\rho&=\rho^2+\sigma\bar\sigma+2\mathrm{i}(\bar\zeta_0\phi_0-\zeta_0\bar\phi_0).
\end{align*}
For $k\leq10$, commuting $\meth^{k}$ with $\mthorn$, we have that 
\begin{align*}
\mthorn\meth^{k}\rho=&Vac
+\sum_{i_1+i_2+i_3+i_4=k}\meth^{i_1}\Gamma(\tau,\pi)^{i_2}
\meth^{i_3}\phi_0\meth^{i_4}\zeta_0.
\end{align*}
Then we have
\begin{align*}
||(a^{\frac{1}{2}}\meth)^{k}\rho||_{L^2_{sc}(\mathcal{S}_{u,v})}\lesssim&
||(a^{\frac{1}{2}}\meth)^{k}\rho||_{L^2_{sc}(\mathcal{S}_{u,0})}+
\int_0^v||a^{\frac{k}{2}}\mthorn\meth^{k}\rho||_{L^2_{sc}(\mathcal{S}_{u,v'})} \\
\lesssim&1+\frac{a}{|u|}\left(\bmGamma(\sigma)_{\infty}\bmGamma(\sigma)_{2}
+\bmGamma(\phi_0)_{\infty}\bmGamma(\zeta_0)_{2}+\bmGamma(\zeta_0)_{\infty}\bmGamma(\phi_0)_{2}\right)+\frac{\mathcal{O}}{|u|}+\frac{a\mathcal{O}}{|u|^2} \\
\lesssim&1+\frac{a}{|u|}(\Gamma[\sigma]^2+\zeta[\zeta_0]\phi[\phi_0]) \\
\lesssim&\frac{a}{|u|}\left(\bm\Psi[\Psi_0]^2+\underline{\bm{\phi}}[\phi_1]\underline{\bmzeta}[\zeta_1]+\bm\Psi[\Psi_0]+\underline{\bmzeta}[\zeta_1]+\underline{\bm{\phi}}[\phi_1]\right)+1.
\end{align*}
The initial value of $\rho$ is $-\frac{1}{|u|}$ which is independent of spherical coordinates, 
hence when $1\leq k\leq10$,
\begin{align*}
||(a^{\frac{1}{2}}\meth)^{k}\rho||_{L^2_{sc}(\mathcal{S}_{u,0})}=0
\end{align*}
and then when $1\leq k\leq10$
\begin{align*}
||(a^{\frac{1}{2}}\meth)^{k}\rho||_{L^2_{sc}(\mathcal{S}_{u,v})}
\lesssim&\frac{a}{|u|}\left(\bm\Psi[\Psi_0]^2+\underline{\bm{\phi}}[\phi_1]\underline{\bmzeta}[\zeta_1]+\bm\Psi[\Psi_0]+\underline{\bmzeta}[\zeta_1]+\underline{\bm{\phi}}[\phi_1]\right).
\end{align*}

\end{proof}

\begin{proposition}
\label{L2mu}
For $0\leq k\leq10$, one has that 
\begin{align*}
\frac{a}{|u|}||(a^{\frac{1}{2}}\mathcal{D})^k\Timu||_{L^2_{sc}(\mathcal{S}_{u,v})}\lesssim&
\bm\Psi[\Psi_2]+1,
\end{align*}
\begin{align*}
\frac{a}{|u|^2}||(a^{\frac{1}{2}}\mathcal{D})^k\mu||_{L^2_{sc}(\mathcal{S}_{u,v})}\lesssim&1.
\end{align*}
\end{proposition}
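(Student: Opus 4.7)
The outgoing transport equation \eqref{Thorntildemu} reduces to
\[
\mthorn\Timu = \meth\pi + \mu\rho + \Psi_2 + \lambda\sigma + \pi\bar\pi,
\]
which carries no explicit matter term: spinor fields enter only through the Weyl component $\Psi_2$ (via the Bianchi identities) and implicitly through the connection bootstrap. Since $\mathcal{N}'_{\star}$ carries Minkowski data, $\mu|_{v=0}=1/u$ exactly, so $\Timu$ together with all of its angular derivatives vanishes on $v=0$. A $v$-Gr\"onwall argument along $\mthorn$ is therefore the natural route, and the prefactor $a/|u|$ in the claimed bound is precisely the scale needed to recast the $v$-integrated quantity in the scale-invariant norm.

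Commuting $\meth^k$ past $\mthorn$ via the standard commutator formula, one reaches schematically
\[
\mthorn\meth^k\Timu = \meth^{k+1}\pi + \meth^k\Psi_2 + \sum_{i_1+i_2+i_3+i_4=k}\meth^{i_1}\Gamma(\tau,\pi)^{i_2}\meth^{i_3}\Gamma\,\meth^{i_4}\Gamma',
\]
with $\Gamma'\in\{\mu,\rho,\lambda,\sigma,\pi,\Timu\}$; the trailing sum absorbs both the quadratic products appearing on the right-hand side of \eqref{Thorntildemu} and the $[\mthorn,\meth^k]$ commutator output. Integrating $\int_0^v$ against the vanishing initial datum, treating the curvature entry by Cauchy--Schwarz in $v'$ as $\int_0^v||a^{k/2}\meth^k\Psi_2||_{L^2_{sc}(\mathcal{S}_{u,v'})}\,dv' \lesssim v^{1/2}\bm\Psi[\Psi_2] \lesssim \bm\Psi[\Psi_2]$, and invoking the scale-invariant H\"older inequality \eqref{L2holder} on every quadratic product (each of which picks up a factor $1/|u|$ and is absorbed into a harmless $1$ by the bootstrap $\mathcal{O}^2/|u|\ll 1$), one obtains $||(a^{1/2}\mathcal{D})^k\Timu||_{L^2_{sc}(\mathcal{S}_{u,v})} \lesssim (|u|/a)(\bm\Psi[\Psi_2]+1)$, which is the first claim after multiplication by $a/|u|$.

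For the $\mu$ bound, decompose $\mu=\Timu+1/u$. At $k=0$, the scalar $1/u$ gives $||1/u||_{L^2(\mathcal{S}_{u,v})}\sim 1$ and the scale factor $(a/|u|^2)\cdot a^{-1}|u|^2 = 1$ delivers the claimed constant $\lesssim 1$; the $\Timu$ contribution is suppressed by an extra $1/|u|$ relative to the first estimate and is therefore $\lesssim \mathcal{O}/|u|\ll 1$. For $k\geq 1$, $\meth^k(1/u)$ is either zero (from the purely angular $\delta$) or a T-weight artefact of size $|\tau|/|u|$ which is scale-invariantly negligible, so the $\Timu$ estimate directly yields the same $\lesssim 1$ conclusion.

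The principal obstacle is the top-order slot $k=10$: there the term $\meth^{k+1}\pi$ demands $\meth^{11}\pi$, one angular derivative beyond the transport bootstrap for $\pi\in\hat\Gamma$, which reaches only order $10$. Outgoing transport alone cannot close this slot, and one must import the elliptic/Hodge estimates of Section \ref{Elliptic}, which recover the extra derivative of $\pi$ from a div--curl system coupling $\pi$ to $\tau$ and to the renormalised Weyl component $\TiPsi_1$. All remaining right-hand-side entries---the connection products, the curvature flux $\meth^k\Psi_2$, and the commutator remainders---are strictly subcritical in the scale-invariant calculus and absorb cleanly into $\bm\Psi[\Psi_2]+1$.
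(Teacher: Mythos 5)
Your proposal takes a genuinely different route from the paper, and the difference matters: the paper estimates $\Timu$ via the \emph{ingoing} transport equation \eqref{Thornprimetildemu},
\[
\mthorn'\tilde\mu+2\mu\tilde\mu=\tilde\mu^2-\ulomega\mu-\lambda\bar\lambda+2\mathrm{i}\bigl(\Tizeta_5\bar\phi_1-\bar\Tizeta_5\phi_1\bigr),
\]
integrating in $u$ from the $u=u_\infty$ data via Prop.~\ref{SCweightedgronwall}, whereas you integrate the \emph{outgoing} equation \eqref{Thorntildemu} in $v$ from the Minkowski slice. The structural advantage of the ingoing equation is that its right-hand side is a pure product $\Gamma\cdot\Gamma + \zeta\cdot\phi$ with no angular derivative of a connection coefficient, so commuting $\meth^k$ with $\mthorn'$ never produces a $(k+1)$-th angular derivative of anything. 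The outgoing equation carries $\meth\pi$ on the right, which is precisely what creates the problem you correctly flag at the end.

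That flagged problem is a genuine gap, not merely an unsightly dependency. At $k=10$ your scheme needs $\meth^{11}\pi$, but at the point in the paper where Prop.~\ref{L2mu} lives (Section~\ref{L2estimate}), the eleventh derivative of $\pi$ is \emph{not} available: Prop.~\ref{11Derpi} is established only in Section~\ref{Elliptic}, and that elliptic argument is built on top of the Section~\ref{L2estimate} bounds --- including, for instance, the control of $\Timu$ and $\mu$ that this very proposition supplies (the definition of $\bmGamma_{11}$ involves weighted $\mathcal{D}^{11}\mu$, and the elliptic analysis of $\pi$ routes through $\tilde\pi=\meth\pi+\Psi_2$, whose transport estimate calls on the non-top connection and curvature $L^2(\mathcal{S})$ bounds). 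Importing the elliptic estimate here therefore inverts the logical order of the argument. The paper avoids this entirely by choosing the ingoing direction: its $F$ consists only of quadratic products, each absorbed by the scale-invariant H\"older inequality, and the curvature factor $\bm\Psi[\Psi_2]$ in the final bound enters \emph{indirectly} through $\Gamma[\ulomega]$ (Prop.~\ref{L2ulomega}) and $\underline{\bmzeta}[\Tizeta_5]$ (Prop.~\ref{L2zeta}), not through a direct $\meth^k\Psi_2$ flux as in your version. To repair your proof you would either need to rewrite it with the ingoing equation, or to restrict to $k\leq 9$ and handle $k=10$ by a separate argument that does not appeal to $\bmGamma_{11}[\pi]$.

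A secondary point: your observation that $\Timu$ and all its angular derivatives vanish on $v=0$ is correct and would simplify the $v$-Gr\"onwall initial term, but the paper's $u$-Gr\"onwall instead uses the nontrivial datum $||\Timu||_{L^2_{sc}(\mathcal{S}_{u_\infty,v})}$, which is controlled by the asymptotic assumption $|\Timu|\lesssim|u_\infty|^{-2}$ on $\mathcal{N}_\star$. Your reduction of the $\mu$-bound to the $\Timu$-bound via $\mu=\Timu+1/u$, and the remark that $\meth^k(1/u)$ only contributes weight artefacts, are both fine and match the paper.
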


\begin{proof}
We make use of the structure equation \eqref{Thornprimetildemu} for $\Timu$ 
and the definition of $\Tizeta_5$ and obtain
\begin{align*}
\mthorn'\tilde\mu+2\mu\tilde\mu=\tilde\mu^2-\ulomega\mu-\lambda\bar\lambda
+2\mathrm{i}(\Tizeta_5\bar\phi_1-\bar\Tizeta_5\phi_1).
\end{align*}
Commuting $\meth^k$ with $\mthorn'$, we find that 
\begin{align*}
\mthorn'\meth^k\Timu+(k+2)\mu\meth^k\Timu=&
\sum_{i=0}^k\meth^i\Gamma(\Timu,\lambda)\meth^{k-i}\Gamma(\Timu,\lambda)
+\sum_{i=0}^k\meth^i\ulomega\meth^{k-i}\mu
+\sum_{i=0}^k\meth^i\Tizeta_5\meth^{k-i}\phi_1
\end{align*}
which leads to the following estimate:
\begin{align*}
\frac{a}{|u|}||(a^{\frac{1}{2}}\meth)^k\Timu||_{L^2_{sc}(\mathcal{S}_{u,v})}\lesssim&
\frac{a}{|u_{\infty}|}||(a^{\frac{1}{2}}\meth)^k\Timu||_{L^2_{sc}(\mathcal{S}_{u_{\infty},v})}
+\int_{u_{\infty}}^u\frac{a^2}{|u'|^3}||a^{\frac{k}{2}}F||_{L^2_{sc}(\mathcal{S}_{u',v})} \\
\leq&Vac+\sum_{i=0}^k\int_{u_{\infty}}^u\frac{a^{2}}{|u'|^3}||a^{\frac{k}{2}}\mathcal{D}^i\Tizeta_5
\mathcal{D}^{k-i}\phi_1||_{L^2_{sc}(\mathcal{S}_{u',v})}\\
\leq&1+\frac{a}{|u|}\Gamma[\ulomega]+\frac{1}{|u|^{\frac{1}{2}}}\underline{\bmzeta}[\Tizeta_5]\phi[\phi_1]
\leq1+\bm\Psi[\Psi_2].
\end{align*}
Note here for $\Tizeta_5$ we can only obtain its tenth derivative on the ingoing lightcone and 
hence we translate the estimate from $\mathcal{S}$ to $\mathcal{N}'_v(u_{\infty},u)$. 
The same strategy is used for $\pi$ in the next proposition. 
With the relation between $\mu$ and $\Timu$ we have
\begin{align*}
\frac{a}{|u|^2}||(a^{\frac{1}{2}}\meth)^k\mu||_{L^2_{sc}(\mathcal{S}_{u,v})}\lesssim1.
\end{align*}

\end{proof}

\begin{proposition}
\label{L2pi}
For $0\leq k\leq10$, one has that 
\begin{align*}
||(a^{\frac{1}{2}}\mathcal{D})^k\pi||_{L^2_{sc}(\mathcal{S}_{u,v})}\lesssim
\underline{\bm\Psi}[\TiPsi_3]+\bm\Psi[\TiPsi_1]+1.
\end{align*}
\end{proposition}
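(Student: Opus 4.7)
The plan is to run an ingoing transport estimate for $\pi$ along $\bmn$, mirroring the structure used for $\Timu$ in Prop.~\ref{L2mu} (and flagged in the remark there that ``the same strategy is used for $\pi$''). Concretely, I would start from the $\mthorn'\pi$ structure equation, which schematically reads
\begin{align*}
\mthorn'\pi+\mu\pi=\TiPsi_3+\Gamma(\mu,\lambda)\,\Gamma(\tau,\pi)+\phi\zeta\,,
\end{align*}
i.e.\ the renormalised curvature $\TiPsi_3$ (rather than $\Psi_3$) emerges naturally once the Einstein equation \eqref{EDeq5} is used to absorb the $\Phi_{21}$ contribution into the Weyl part. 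Commuting $\meth^k$ through $\mthorn'$ using the $G_k$ formula from the Commutators subsection produces the standard lower-order curvature/connection products plus $k\mu\,\meth^k\pi$, which can be transferred to the left-hand side.

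Next I would invoke the weighted Gr\"onwall inequality (Prop.~\ref{SCweightedgronwall}), integrating in $u$ from $u_\infty$ to $u$ at fixed $v$. The signature of $\pi$ is $1/2$, so the weights in that proposition match the $L^2_{sc}(\mathcal{S})$ scaling in exactly the way exhibited in Prop.~\ref{L2mu}. The borderline contribution comes from $\meth^k\TiPsi_3$: since I only control this quantity as an integrated norm along $\mathcal{N}'_v$, I would translate it via Cauchy--Schwarz, using
\begin{align*}
\int_{u_\infty}^u\frac{a}{|u'|^2}\,\|a^{k/2}\meth^k\TiPsi_3\|_{L^2_{sc}(\mathcal{S}_{u',v})}\,\mathrm{d}u'
\lesssim |u|^{1/2}\,\underline{\bm\Psi}[\TiPsi_3]\,,
\end{align*}
after a short bookkeeping with the factors $a/|u'|^2$ coming from the weighted Gr\"onwall. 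The nonlinear contributions of the form $\meth^{i_1}\Gamma\,\meth^{i_2}\Gamma$ and $\meth^{i_1}\phi\,\meth^{i_2}\zeta$ are then absorbed into the constant $1$ using H\"older \eqref{L2holder}, the bootstrap assumption, and the previously established bounds on $\Gamma$, $\phi$, $\zeta$ from Props.~\ref{L2lambda}--\ref{L2mu}; these are cheap because each nonlinear block comes with a small factor $1/|u|$.

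The remaining piece is the initial data term $\|(a^{1/2}\meth)^k\pi\|_{L^2_{sc}(\mathcal{S}_{u_\infty,v})}$. Here $\pi$ on $\mathcal{N}_\star$ is not prescribed directly and is not zero, so I would recover it by running an outgoing transport along $\mathcal{N}_\star$ using $\mthorn\pi$, starting from the Minkowskian value $\pi|_{v=0}=0$ on $\mathcal{N}'_\star$. Because $\mthorn\pi$ carries $\TiPsi_1$ as its leading curvature source (this is exactly the $\pi$-analogue of the $\mthorn\tau$--$\TiPsi_1$ pairing exploited in Prop.~\ref{L2tau}), the initial data piece is controlled by $\bm\Psi[\TiPsi_1]+1$, which accounts for the second curvature norm in the claim.

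The main obstacle, I expect, is the top-order case $k=10$. At this level one cannot close a pointwise bound on $\meth^k\TiPsi_3$ on individual spheres, so the Cauchy--Schwarz translation to the $\mathcal{N}'_v$ norm is essential and must be arranged without losing the $a/|u'|^2$ weight coming out of Prop.~\ref{SCweightedgronwall}; a similar care is needed for the $\phi\meth^k\zeta_5$ type terms, which I would again bound by passing to $\underline{\bmzeta}[\zeta_5]$ on $\mathcal{N}'_v$ rather than on $\mathcal{S}$. Once those two translations are done, every remaining nonlinearity is strictly subcritical in $a$ and $|u|$, and the stated estimate closes.
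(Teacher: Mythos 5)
The ingoing Gr\"onwall scheme you describe --- commuting $\meth^k$ through $\mthorn'\pi$, moving $k\mu\meth^k\pi$ to the left, and converting $\meth^k\TiPsi_3$ to the $\mathcal{N}'_v$-norm $\underline{\bm\Psi}[\TiPsi_3]$ by Cauchy--Schwarz with the $a/|u'|^2$ weight --- is exactly the paper's method, and the treatment of the $\Tizeta_5\bar\phi_0$ source by passing to $\mathcal{N}'_v$ is also in line with the paper. However, your identification of where $\bm\Psi[\TiPsi_1]$ comes from is not right, and the step you propose to obtain it does not exist.

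There is no $\mthorn\pi$ structure equation in the Newman--Penrose/GHP Ricci identities, so the outgoing transport you suggest along $\mathcal{N}_\star$ cannot be set up. The Ricci identities provide $\mthorn'\pi$ (eq.~\eqref{Thornprimepi}) and relations involving $\meth\pi$ and $\meth'\pi$ (through $\mthorn\mu$ in~\eqref{Thornmu} and $\mthorn\lambda$ in~\eqref{Thornlambda}), but not a standalone $\mthorn\pi$. The ``$\pi$-analogue of the $\mthorn\tau$--$\TiPsi_1$ pairing'' you appeal to would be the GHP prime of $\mthorn\tau$, which gives $\mthorn'\tau'$ (i.e.\ precisely $\mthorn'\pi$ with $\TiPsi_3$), not an $\mthorn\tau'$ equation with $\TiPsi_1$. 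The initial value of $\pi$ on $\mathcal{N}_\star$ is controlled directly by the initial data hierarchy/$\mathcal{I}_0$ and contributes only to the constant~$1$.

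The actual source of $\bm\Psi[\TiPsi_1]$ in the proposition is the term $-\mu\bar\tau$ in $\mthorn'\pi$. When you commute $\meth^k$, the term $\mu\,\meth^k\bar\tau$ survives at top order $k=10$, and because $\|\mu\|_{L^\infty_{sc}}\sim |u|^2/a$ exactly cancels the small $1/|u|$ gain from H\"older, this term is \emph{borderline} rather than subcritical: after integration it produces $\Gamma[\tau]$, which must be controlled via Prop.~\ref{L2tau}, whose leading curvature content is precisely $\bm\Psi[\TiPsi_1]$. Your statement that the $\Gamma\Gamma$ blocks are uniformly ``cheap'' and absorbed into the constant therefore misses the one quadratic term that actually drives the final estimate. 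Correcting this --- treating $\mu\meth^k\bar\tau$ separately and invoking Prop.~\ref{L2tau}, then dropping the bogus $\mthorn\pi$ step --- brings your argument in line with the paper.
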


\begin{proof}
We make use of the structure equation \eqref{Thornprimepi} for $\pi$
\begin{align*}
\mthorn'\pi=-(\pi+\bar\tau)\mu-(\bar\pi+\tau)\lambda-\TiPsi_3-
2\mathrm{i}(-\Tizeta_5\bar\phi_0+2\bar\Tizeta_4\phi_1-\zeta_2\bar\phi_1).
\end{align*}
Commuting $\meth^k$ with $\mthorn'$, we find that 
\begin{align*}
\mthorn'\meth^k\pi+(k+1)\mu\meth^k\pi=&-\meth^k\TiPsi_3-\mu\meth^k\bar\tau
+\sum_{i=0}^k\meth^i\Gamma(\tau,\pi)\meth^{k-i}\Gamma(\Timu,\lambda) 
+\sum_{i=0}^k\meth^i\bar\zeta_j\meth^{k-i}\phi_l. 
\end{align*}
We then obtain the following estimate
\begin{align*}
\frac{1}{|u|}||(a^{\frac{1}{2}}\meth)^k\pi||_{L^2_{sc}(\mathcal{S}_{u,v})}\lesssim&
Vac+\sum_{i=0}^k\int_{u_{\infty}}^u\frac{a}{|u'|^3}||a^{\frac{k}{2}}
\meth^i\bar\zeta_j\meth^{k-i}\phi_l||_{L^2_{sc}(\mathcal{S}_{u',v})} \\
\leq&\frac{1+\underline{\bm\Psi}[\TiPsi_3]+\bm\Gamma[\tau]}{|u|}
+\sum_{i=0}^k\int_{u_{\infty}}^u\frac{a}{|u'|^3}||a^{\frac{k}{2}}\mathcal{D}^i\Tizeta_5
\mathcal{D}^{k-i}\phi_0||_{L^2_{sc}(\mathcal{S}_{u',v})} \\
&+\int_{u_{\infty}}^u\frac{a}{|u'|^3}\frac{1}{|u'|}a^{\frac{1}{2}}\mathcal{O}^2\\
\leq&\frac{1+\underline{\bm\Psi}[\TiPsi_3]+\bm\Psi[\TiPsi_1]}{|u|}
+\frac{a^{\frac{1}{2}}\phi[\phi_0]\underline{\bmzeta}[\Tizeta_5]}{|u|^2}+\frac{\mathcal{O}^2}{|u|^{\frac{3}{2}}} \\
\leq&\frac{1+\underline{\bm\Psi}[\TiPsi_3]+\bm\Psi[\TiPsi_1]}{|u|}.
\end{align*}

\end{proof}

\subsection{$L^2(\mathcal{S})$ estimates of $\phi_A$}

\begin{proposition}
\label{L2phiA}
For $0\leq k\leq10$, one has that 
\begin{align*}
\frac{1}{a^{\frac{1}{2}}}||(a^{\frac{1}{2}}\mathcal{D})^k\phi_0||_{L^2_{sc}(\mathcal{S}_{u,v})}
\lesssim&\underline{\bm{\phi}}[\phi_1]+1, \\
\frac{1}{a^{\frac{1}{2}}}||(a^{\frac{1}{2}}\mathcal{D})^k\phi_1||_{L^2_{sc}(\mathcal{S}_{u,v})}
\lesssim&\bmphi[\phi_0]+1.
\end{align*}
\end{proposition}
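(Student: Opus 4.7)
The strategy is a coupled transport--Gr\"onwall argument. The ingoing equation \eqref{thornprimephi0} for $\phi_0$ has principal source $\meth\phi_1$, and the outgoing equation \eqref{thornphi1} for $\phi_1$ has principal source $\meth'\phi_0$, so integrating each along its natural characteristic direction produces the partner spinor at one higher angular derivative, measured in the complementary lightcone norm. The two estimates thereby decouple. All remaining right-hand-side contributions are products of one connection coefficient and a spinor, rendered harmless at $O(1)$ by the scale-invariant H\"older inequality \eqref{L2holder} and the bootstrap.

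For $\phi_1$, I would apply $\meth^k$ to \eqref{thornphi1} and use the $H_k$ commutator formula from the preliminaries, then integrate via \eqref{SCldirectiongronwall}. The initial datum at $v=0$ vanishes identically because $\phi_A\equiv 0$ on the Minkowskian incoming cone $\mathcal{N}'_\star$. The principal term yields
\begin{align*}
\int_0^v a^{k/2}\|\meth^{k+1}\phi_0\|_{L^2_{sc}(\mathcal{S}_{u,v'})}\,dv'
\lesssim a^{-1/2}v^{1/2}\,\|(a^{1/2}\mathcal{D})^{k+1}\phi_0\|_{L^2_{sc}(\mathcal{N}_u(0,v))}
\end{align*}
by Cauchy--Schwarz in $v'$. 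Converting the lightcone norm to $\bmphi_{k+1}[\phi_0]$ and dividing through by $a^{1/2}$ on both sides produces $\bmphi[\phi_0]$ on the right; the derivative order $k+1\leq 11$ is exactly the one at which $\bmphi_{k+1}$ is defined. The commutator/coupling tail is $O(1)$ by bootstrap H\"older.

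For $\phi_0$, I would apply $\meth^k$ to \eqref{thornprimephi0} and use the $G_k$ commutator. The $-\mu\phi_0$ coefficient in the original equation combines with the $-k\mu\meth^k\phi_0$ commutator contribution into a net $-(k+1)\mu\meth^k\phi_0$ on the left, so the resulting transport equation fits Proposition~\ref{SCweightedgronwall} with $\lambda_0 = -(k+1)$, hence $\lambda_1 = k$ and $\lambda_1 - 2s_2(\meth^k\phi_0) = 0$. The initial datum at $u_\infty$ is bounded by $\mathcal{I}_0\lesssim 1$. For the principal term, Cauchy--Schwarz against the ingoing measure $(a/|u'|^2)\,du'$ yields
\begin{align*}
a^{-1/2}\int_{u_\infty}^u \frac{a}{|u'|^2}\,\|(a^{1/2}\mathcal{D})^{k+1}\phi_1\|_{L^2_{sc}(\mathcal{S}_{u',v})}\,du'
\lesssim \frac{1}{|u|^{1/2}}\,\|(a^{1/2}\mathcal{D})^{k+1}\phi_1\|_{L^2_{sc}(\mathcal{N}'_v)},
\end{align*}
and $|u|\gtrsim a$ then gives $O(a^{1/2}\underline\bmphi_{k+1}[\phi_1])$, which cancels the $1/a^{1/2}$ on the LHS of the target to leave $\underline\bmphi[\phi_1]$. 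The remaining terms are $O(1)$ by bootstrap H\"older.

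The main technical obstacle is the scale-invariant bookkeeping: verifying the exact power balance $\lambda_1 - 2s_2(\meth^k\phi_0) = 0$ in Proposition~\ref{SCweightedgronwall}, and checking that the Cauchy--Schwarz prefactor $(a/|u|)^{1/2}$ remains uniformly $O(1)$ thanks to the existence-domain restriction $|u|\geq a/4$; absent this, the coupling to $\underline\bmphi[\phi_1]$ would pick up an unwanted $a^{1/2}$ factor and the estimate would fail to close. The derivative count is also tight: at $k = 10$ one needs $k+1 = 11$ angular derivatives of the partner spinor on the lightcone, exactly the top level at which $\bmphi_{11}$ and $\underline\bmphi_{11}$ are defined, so there is no derivative loss but also no slack.
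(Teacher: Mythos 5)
Your proposal is correct and follows essentially the same route as the paper's own proof: apply $\meth^k$ to the transport equations \eqref{thornphi1} and \eqref{thornprimephi0}, integrate via the Gr\"onwall inequalities (with the $(k+1)\mu$ weight fitting Proposition~\ref{SCweightedgronwall} exactly as you verify), Cauchy--Schwarz the principal $\meth^{k+1}$ term onto the partner's lightcone norm, and discard the commutator/nonlinear tail by scale-invariant H\"older and the bootstrap. The paper leaves the Cauchy--Schwarz bookkeeping and the role of $|u|\geq a/4$ implicit where you spell them out, but the content is identical, including the tight $k+1\leq 11$ derivative count.
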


\begin{proof}
We make use of equation of motion for $\phi_0$:
\begin{align*}
\mthorn'\phi_{0}-\meth\phi_{1}&=(\frac{\ulomega}{2}-\mu)\phi_{0}
-\frac{\bar\tau\phi_{1}}{2}.
\end{align*}
Commuting $\meth^k$ with $\mthorn'$, we find that 
\begin{align*}
\mthorn'\meth^k\phi_0+(k+1)\mu\meth^k\phi_0=&-\meth^{k+1}\phi_1
+\sum_{i=0}^k\meth^i\Gamma(\ulomega,\Timu,\lambda)\meth^{k-i}\phi_0 
+\sum_{i=0}^k\meth^i\tau\meth^{k-i}\phi_1. 
\end{align*}
Denoting the righthand side of the previous equation by $F$, one has that
\begin{align*}
\frac{1}{a^{\frac{1}{2}}}||(a^{\frac{1}{2}}\meth)^k\phi_0||_{L^2_{sc}(\mathcal{S}_{u,v})}\lesssim&
\frac{1}{a^{\frac{1}{2}}}||(a^{\frac{1}{2}}\meth)^k\phi_0||_{L^2_{sc}(\mathcal{S}_{u_{\infty},v})}
+\frac{1}{a^{\frac{1}{2}}}\int_{u_{\infty}}^u\frac{a}{|u'|^2}||a^{\frac{k}{2}}F||_{L^2_{sc}(\mathcal{S}_{u',v})} \\
\lesssim&1+\frac{1}{a^{\frac{1}{2}}}\int_{u_{\infty}}^u\frac{a}{|u'|^2}
||a^{\frac{k}{2}}\mathcal{D}^{k+1}\phi_1||_{L^2_{sc}(\mathcal{S}_{u',v})} \\
&+\sum_{i=0}^{k}\frac{1}{a^{\frac{1}{2}}}\int_{u_{\infty}}^u\frac{a}{|u'|^2}
||a^{\frac{k}{2}}\mathcal{D}^{i}\Gamma(\ulomega,\Timu,\lambda)\mathcal{D}^{k-i}\phi_0||_{L^2_{sc}(\mathcal{S}_{u',v})} \\
&+\sum_{i=0}^{k}\frac{1}{a^{\frac{1}{2}}}\int_{u_{\infty}}^u\frac{a}{|u'|^2}
||a^{\frac{k}{2}}\mathcal{D}^{i}\tau\mathcal{D}^{k-i}\phi_1||_{L^2_{sc}(\mathcal{S}_{u',v})} \\
\leq&1+\underline{\bmphi}[\phi_1]+
\frac{1}{a^{\frac{1}{2}}}\int_{u_{\infty}}^u\frac{a}{|u'|^2}
\frac{1}{|u'|}\frac{|u'|}{a^{\frac{1}{2}}}\mathcal{O}a^{\frac{1}{2}}\mathcal{O} \\
\leq&1+\underline{\bmphi}[\phi_1]+\frac{a^{\frac{1}{2}}}{|u|}\mathcal{O}^2.
\end{align*}

Make use of
\begin{align*}
\mthorn\phi_1=\phi_0(\pi-\frac{\bar\pi}{2})+\phi_1\rho+\meth'\phi_0
\end{align*}
and obtain
\begin{align*}
\mthorn\meth^k\phi_1=\meth^{k+1}\phi_0
+\sum_{i_1+i_2+i_3+i_4=k}\meth^{i_1}\Gamma(\tau,\pi)^{i_2}\meth^{i_3}\Gamma(\pi,\rho)
\meth^{i_4}(\phi_0,\phi_1).
\end{align*}
We then have
\begin{align*}
\frac{1}{a^{\frac{1}{2}}}||(a^{\frac{1}{2}}\meth)^{k}\phi_1||_{L^2_{sc}(\mathcal{S}_{u,v})}\lesssim&
\frac{1}{a^{\frac{1}{2}}}||(a^{\frac{1}{2}}\meth)^{k}\phi_1||_{L^2_{sc}(\mathcal{S}_{u,0})}+
\frac{1}{a^{\frac{1}{2}}}\int_0^v||a^{\frac{k}{2}}\mthorn\meth^{k}\phi_1||_{L^2_{sc}(\mathcal{S}_{u,v'})} \\
\lesssim&1+\frac{1}{a^{\frac{1}{2}}}
\int_0^v||a^{\frac{k}{2}}\mathcal{D}^{k+1}\phi_0||_{L^2_{sc}(\mathcal{S}_{u,v'})}\\
&+\frac{1}{a^{\frac{1}{2}}}\sum_{i_1+i_2+i_3+i_4=k}\int_0^v||a^{\frac{k}{2}}
\mathcal{D}^{i_1}\Gamma^{i_2}\mathcal{D}^{i_3}\Gamma
\mathcal{D}^{i_4}(\phi_0,\phi_1)||_{L^2_{sc}(\mathcal{S}_{u,v'})} \\
\leq&1+\bmphi[\phi_0]+\frac{1}{|u|}a^{\frac{1}{2}}\mathcal{O}^2
\lesssim1+\bmphi[\phi_0].
\end{align*}

\end{proof}

\begin{proposition}
\label{L2zeta}
For $0\leq k\leq10$, one has that 
\begin{align*}
\frac{1}{a^{\frac{1}{2}}}||(a^{\frac{1}{2}}\mathcal{D})^k\zeta_0||_{L^2_{sc}(\mathcal{S}_{u,v})}\lesssim&
\underline{\bmzeta}[\zeta_1]+1, \\
||(a^{\frac{1}{2}}\mathcal{D})^k\zeta_1||_{L^2_{sc}(\mathcal{S}_{u,v})}\lesssim&
1+\underline{\bmzeta}[\zeta_2], \\
||(a^{\frac{1}{2}}\mathcal{D})^k\Tizeta_4||_{L^2_{sc}(\mathcal{S}_{u,v})}\lesssim&
1+\bmzeta[\zeta_3], \quad
\frac{a^{\frac{1}{2}}}{|u|}||(a^{\frac{1}{2}}\mathcal{D})^k\zeta_4||_{L^2_{sc}(\mathcal{S}_{u,v})}\lesssim
1+\underline{\bm{\phi}}[\phi_1], \\
||(a^{\frac{1}{2}}\mathcal{D})^k\zeta_2||_{L^2_{sc}(\mathcal{S}_{u,v})}\lesssim&
\bm\Psi[\Psi_0]\underline{\bm{\phi}}[\phi_1]
+\bm\Psi[\Psi_0]+\underline{\bm{\phi}}[\phi_1]+\bmzeta[\zeta_1]+1, \\
\frac{a^{\frac{1}{2}}}{|u|}||(a^{\frac{1}{2}}\mathcal{D})^k\zeta_5||_{L^2_{sc}(\mathcal{S}_{u,v})}\lesssim&
1+\bmzeta[\zeta_4].
\end{align*}
For $0\leq k\leq9$, one has that 
\begin{align*}
||(a^{\frac{1}{2}}\mathcal{D})^k\zeta_3||_{L^2_{sc}(\mathcal{S}_{u,v})}\lesssim&\bm\Psi[\Psi_0]\bmphi[\phi_0]+\bm\Psi[\Psi_0]+\bmphi[\phi_0]+1, \\
||(a^{\frac{1}{2}}\mathcal{D})^k\Tizeta_5||_{L^2_{sc}(\mathcal{S}_{u,v})}\lesssim&1, \quad
\frac{a^{\frac{1}{2}}}{|u|}||(a^{\frac{1}{2}}\mathcal{D})^k\zeta_5||_{L^2_{sc}(\mathcal{S}_{u,v})}\lesssim
1+\underline{\bm{\phi}}[\phi_0].
\end{align*}
\end{proposition}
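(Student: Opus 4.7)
The plan is to follow exactly the template of Prop. \ref{L2phiA} and the preceding connection-coefficient estimates. For each component of $\zeta_{ABA'}$ (or of its renormalisation $\tilde\zeta_{4,5}$), we select the appropriate transport equation from the system \eqref{EoMzeta} --- belonging either to the no-curvature sub-group \ref{EquationzetaNoCurv} or the curvature sub-group \ref{EquationzetaCurv} --- or, for the renormalised variables, the equations \eqref{thornprimeTizeta4}-\eqref{thornTizeta5} themselves, commute $\meth^{k}$ through using the commutator formulas $H_{k}$ and $G_{k}$, and integrate along the characteristic on which initial data is prescribed: \eqref{SCldirectiongronwall} in $v$ along $\mathcal{N}_{u}$ from the Minkowskian slice $v=0$ for $\mthorn$-equations, and Prop. \ref{SCweightedgronwall} in $u$ along $\mathcal{N}'_{v}$ from $u=u_{\infty}$ for $\mthorn'$-equations. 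Under the bootstrap $\mathcal{O}$ and the smallness $\mathcal{O}^{20}\leq a^{1/16}$, the scale-invariant H\"older inequalities \eqref{L2holder}-\eqref{L1holderalt} guarantee that every trilinear source $\Gamma\zeta$, $\Gamma\phi$ or $\zeta\phi^{2}$ is subordinate by at least a factor $|u|^{-1}$ and contributes only to the constant ``$+1$'' in the stated bounds. The non-trivial right-hand sides therefore come from the principal angular terms and from the curvature sources $\Psi\phi$.

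For $\zeta_{0}$ and $\zeta_{1}$ the $\mthorn'$-equations have principal parts $\meth\zeta_{1}$ and $\meth\zeta_{2}$ respectively. After commuting, integrating in $u$ via Prop. \ref{SCweightedgronwall} with $\lambda_{1}$ matched to the signatures $s_{2}(\zeta_{0})=0$ and $s_{2}(\zeta_{1})=1/2$, and applying Cauchy-Schwarz against the ingoing lightcone, we obtain the dependencies on $\underline{\bmzeta}[\zeta_{1}]$ and $\underline{\bmzeta}[\zeta_{2}]$ at the top order $k=10$ (the fact that $\underline{\bmzeta}_{i}$ is defined up to $i=11$ accommodates the one lost derivative). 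For $\zeta_{2}$ and $\zeta_{3}$ we use the $\mthorn$-equations of curvature type: the principal terms $\meth\zeta_{1}$ (for $\zeta_{2}$) and $\meth'\zeta_{0}$ (for $\zeta_{3}$) produce the $\bmzeta[\zeta_{1}]$-type subordinate contributions, while the curvature source $\Psi_{0}\phi$ contributes, after a Cauchy-Schwarz in $v$, the product $\bm\Psi[\Psi_{0}]\,\underline{\bmphi}[\phi_{1}]$ (for $\zeta_{2}$) and $\bm\Psi[\Psi_{0}]\,\bmphi[\phi_{0}]$ (for $\zeta_{3}$). The restriction $k\leq 9$ for $\zeta_{3}$ comes from a source in its transport equation requiring one angular derivative beyond what is uniformly available for $\bm\Psi[\Psi_{0}]$ on the lightcone (which is defined up to order $10$ only).

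For the renormalised $\tilde\zeta_{4}$ and $\tilde\zeta_{5}$, applying \eqref{thornprimeTizeta4}-\eqref{thornTizeta5} directly is the whole point of the renormalisation (Main difficulty): the borderline $3\mu\zeta_{4}$ and $\mu\zeta_{1}$ couplings that would obstruct scale-invariant closure are replaced by $3\mu\tilde\zeta_{4}$ and by signature-matched sources, so that Prop. \ref{SCweightedgronwall} and \eqref{SCldirectiongronwall} apply without logarithmic loss. The $\zeta_{3}\bar\lambda$ source in \eqref{thornprimeTizeta4} feeds the $\bmzeta[\zeta_{3}]$ dependence of the $\tilde\zeta_{4}$-bound, while the $2\Psi_{2}\phi_{1}$ source in \eqref{thornTizeta5} is absorbed into the data constant; the principal term $\meth'\tilde\zeta_{4}$ in \eqref{thornTizeta5} is what forces the $k\leq 9$ restriction on $\tilde\zeta_{5}$, since the spherical control of $\tilde\zeta_{4}$ goes only up to $k=10$. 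Finally, the $\zeta_{4}$ and $\zeta_{5}$ bounds follow algebraically from $\zeta_{4}=\tilde\zeta_{4}-\mu\phi_{0}$ and $\zeta_{5}=\tilde\zeta_{5}-2\mu\phi_{1}$, combined with \eqref{L2holder} and Props. \ref{L2mu}-\ref{L2phiA}; the anomalous weight $a^{1/2}/|u|$ on the left-hand side is precisely the signature discrepancy between $\tilde\zeta_{j}$ and the $\mu\phi$ correction, and the two alternative $\zeta_{5}$-bounds correspond to estimating this correction through either $\bmzeta[\zeta_{4}]$ at top order $k\leq 10$ or through the sharper $\underline{\bmphi}[\phi_{0}]$ at the next-to-top order $k\leq 9$. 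The main obstacle, as flagged in the Main difficulty section, is precisely the borderline behaviour of $\zeta_{4}$ and $\zeta_{5}$: without the renormalisation a direct estimate through their own transport equations generates logarithmic divergence at each commutator order, and the next-to-top-order truncation of $\tilde\zeta_{5}$ is the price one pays for this device.
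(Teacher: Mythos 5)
Your overall template (commute $\meth^{k}$ through the transport equation, integrate against the appropriate null direction, estimate error terms via the scale-invariant H\"older inequalities) matches the paper, but several of your specific equation choices are wrong in ways that would prevent the stated bounds from closing.

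\smallskip
\emph{The $\Tizeta_4$ estimate.} You propose to estimate $\Tizeta_4$ via the $\mthorn'$-equation \eqref{thornprimeTizeta4} and attribute the $\bmzeta[\zeta_3]$ dependence to the $\zeta_2\bar\lambda$ (you wrote $\zeta_3\bar\lambda$, which does not even appear) source. In fact the paper uses the $\mthorn$-equation \eqref{thornTizeta4}, whose principal term is $\meth'\zeta_3$; integrating in $v$ along $\mathcal{N}_u$, this principal term directly produces $\bmzeta[\zeta_3]$, and since $\bmzeta_i$ is defined for $i\leq 11$ one can take $k\leq 10$. Your route via $\mthorn'\Tizeta_4$ would require $\meth^{k+1}\Tizeta_5$ on $\mathcal{N}'_v$, and since $\underline{\bm\Tizeta}_i$ is only defined for $i\leq 10$ you would be forced to $k\leq 9$ for $\Tizeta_4$ — strictly weaker than the statement, and then the downstream $\zeta_4$ and $\zeta_3$ bounds would also lose a derivative. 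The role of the $(\Tizeta_4,\Tizeta_5)$-pair as a coupled $\mthorn'$/$\mthorn$ Hodge system — which you correctly identify as the Main-difficulty point — is relevant for the \emph{energy} estimates of Prop.\ \ref{EnergyEstimateTizeta45}, not for this $L^2(\mathcal{S})$ proposition.

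\smallskip
\emph{The $\zeta_3$ estimate.} You claim $\zeta_3$ is estimated via an $\mthorn$-equation with principal term $\meth'\zeta_0$, but no such equation exists for $\zeta_3$ in the hierarchy. The paper uses the $\mthorn'$-equation \eqref{thornprimezeta3alt}, $\mthorn'\zeta_3+2\mu\zeta_3 = \meth\Tizeta_4 - \phi_0\meth\mu - \phi_1\mu\sigma + \dots$, integrated in $u$. The restriction $k\leq 9$ then comes from the principal term $\meth^{k+1}\Tizeta_4$: since $\Tizeta_4$ on $\mathcal{S}$ is only just being established up to $k=10$ in the same proposition, one needs $k+1\leq10$. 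Your explanation via unavailability of $\bm\Psi[\Psi_0]$ is plausible-sounding but not the actual obstruction. Also, for $\zeta_2$ the principal term in \eqref{thornzeta2} is $\meth'\zeta_1$, not $\meth\zeta_1$ (minor slip, but it changes which light cone the inductive quantity lives on).

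\smallskip
Aside from these, the algebraic recovery of $\zeta_4$ and $\zeta_5$ from the renormalised variables and the two-tier $\zeta_5$-bound ($k\leq10$ via \eqref{thornzeta5} giving $\bmzeta[\zeta_4]$, $k\leq9$ via the definition of $\Tizeta_5$ giving $\underline{\bm\phi}[\phi_0]$) are described correctly, and the treatment of the subordinate trilinear sources is right in spirit.
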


\begin{proof}
We start with $\zeta_0$ and make use of \eqref{thornprimezeta0}
\begin{align*}
\mthorn'\zeta_0+\mu\zeta_0-\meth\zeta_1=\Gamma\zeta_j+\zeta_j\phi_l,
\end{align*}
and obtain
\begin{align*}
\mthorn'\meth^k\zeta_0+(k+1)\mu\zeta_0=&\meth^{k+1}\zeta_1+
\sum_{i=0}^k\meth^i\Gamma(\ulomega,\Timu,\lambda,\rho,\sigma,\tau)\meth^{k-i}\zeta_{0,1,2,3,4}\\
&+\sum_{i_1+i_2+i_3=k}\meth^{i_1}\phi_{j_1}\meth^{i_2}\phi_{j_2}\meth^{i_3}\zeta_{j_3}.
\end{align*}

Denoting the righthand side of the previous equation by $F$, one has that
\begin{align*}
\frac{1}{a^{\frac{1}{2}}}||(a^{\frac{1}{2}}\meth)^k\zeta_0||_{L^2_{sc}(\mathcal{S}_{u,v})}\lesssim&
\frac{1}{a^{\frac{1}{2}}}||(a^{\frac{1}{2}}\meth)^k\zeta_0||_{L^2_{sc}(\mathcal{S}_{u_{\infty},v})}
+\frac{1}{a^{\frac{1}{2}}}\int_{u_{\infty}}^u\frac{a}{|u'|^2}||a^{\frac{k}{2}}F||_{L^2_{sc}(\mathcal{S}_{u',v})} \\
\lesssim&1+\frac{1}{a^{\frac{1}{2}}}\int_{u_{\infty}}^u\frac{a}{|u'|^2}
||a^{\frac{k}{2}}\mathcal{D}^{k+1}\zeta_1||_{L^2_{sc}(\mathcal{S}_{u',v})}\\
&+\sum_{i=0}^k\frac{1}{a^{\frac{1}{2}}}\int_{u_{\infty}}^u\frac{a}{|u'|^2}
||a^{\frac{k}{2}}\mathcal{D}^{i}\Gamma\mathcal{D}^{k-i}\zeta_{j}||_{L^2_{sc}(\mathcal{S}_{u',v})}\\
&+\sum_{i_1+i_2+i_3=k}\frac{1}{a^{\frac{1}{2}}}\int_{u_{\infty}}^u\frac{a}{|u'|^2}
||a^{\frac{k}{2}}\mathcal{D}^{i_1}\phi_{j_1}\mathcal{D}^{i_2}\phi_{j_2}
\mathcal{D}^{i_3}\zeta_{j_3}||_{L^2_{sc}(\mathcal{S}_{u',v})} \\
\leq&1+\underline{\bmzeta}[\zeta_1]+
\frac{1}{a^{\frac{1}{2}}}\int_{u_{\infty}}^u\frac{a}{|u'|^2}\frac{1}{|u'|}\frac{|u'|}{a^{\frac{1}{2}}}
a^{\frac{1}{2}}\mathcal{O}^2
+\frac{1}{a^{\frac{1}{2}}}\int_{u_{\infty}}^u\frac{a}{|u'|^2}\frac{1}{|u'|^2}
a^{\frac{1}{2}}a^{\frac{1}{2}}a^{\frac{1}{2}}\mathcal{O}^3 \\
\leq&1+\underline{\bmzeta}[\zeta_1]+\frac{a^{\frac{1}{2}}}{|u|}\mathcal{O}^2
+\frac{a^2}{|u|^3}\mathcal{O}^3
\leq1+\underline{\bmzeta}[\zeta_1].
\end{align*}

Similarly for $\zeta_1$ we make use of the ingoing equation \eqref{thornprimezeta1} and obtain 
\begin{align*}
||(a^{\frac{1}{2}}\mathcal{D})^k\zeta_1||_{L^2_{sc}(\mathcal{S}_{u,v})}\lesssim&
1+\underline{\bmzeta}[\zeta_2].
\end{align*}

Next we estimate $\Tizeta_4$ and we make use of \eqref{thornTizeta4}
\begin{align*}
\mthorn\Tizeta_4=\meth'\zeta_3+(\Psi_2+\meth\pi)\phi_0+\phi_0\mu\rho
+\zeta_i\phi_j^2+\Gamma\zeta_i+\Gamma^2\phi_j.
\end{align*}
For term $(\Psi_2+\meth\pi)\phi_0$ and $\phi_0\mu\rho$ we have
\begin{align*}
\int_0^v||\phi_0(a^{\frac{1}{2}}\meth)^i(\Psi_2+\meth\pi)||_{L^2_{sc}(\mathcal{S}_{u,v'})}&\lesssim
\frac{a^{\frac{1}{2}}}{|u|}\int_0^v||(a^{\frac{1}{2}}\mathcal{D})^i\Psi_2||_{L^2_{sc}(\mathcal{S}_{u,v'})}
+\frac{1}{a^{\frac{1}{2}}}\frac{a}{|u|}\int_0^v||a^{\frac{i}{2}}
\mathcal{D}^{i+1}\pi||_{L^2_{sc}(\mathcal{S}_{u,v'})}\\
\lesssim&\frac{a^{\frac{1}{2}}}{|u|}(\bmPsi[\Psi_2]+\Gamma[\pi])+
\frac{1}{a^{\frac{1}{2}}}\bmGamma_{11}[\pi]\lesssim1.
\end{align*}
Here the definition of $\bmGamma_{11}[\pi]$ will be introduced in \ref{EllipticEstGamma}.
\begin{align*}
\int_0^v||(a^{\frac{1}{2}}\meth)^i(\phi_0\mu\rho)||_{L^2_{sc}(\mathcal{S}_{u,v'})}\lesssim
\int_0^v\frac{1}{|u|^2}a^{\frac{1}{2}}\frac{|u|^2}{a}\mathcal{O}^3\lesssim1.
\end{align*}
Then we obtain that for $0\leq k\leq10$ the following holds
\begin{align*}
||(a^{\frac{1}{2}}\meth)^k\Tizeta_4||_{L^2_{sc}(\mathcal{S}_{u,v})}\lesssim&
1+\bmzeta[\zeta_3].
\end{align*}
By definition of $\Tizeta_4$ we have
\begin{align*}
\frac{a^{\frac{1}{2}}}{|u|}||(a^{\frac{1}{2}}\meth)^k\zeta_4||_{L^2_{sc}(\mathcal{S}_{u,v})}\lesssim&
\frac{a^{\frac{1}{2}}}{|u|}||(a^{\frac{1}{2}}\meth)^k\Tizeta_4||_{L^2_{sc}(\mathcal{S}_{u,v})}
+\frac{a^{\frac{1}{2}}}{|u|}||(a^{\frac{1}{2}}\meth)^k(\phi_0\mu)||_{L^2_{sc}(\mathcal{S}_{u,v})}\\
\lesssim&1+\frac{a^{\frac{1}{2}}}{|u|}\frac{1}{|u|}a^{\frac{1}{2}}\phi[\phi_0]\frac{|u|^2}{a}\Gamma[\mu]
\lesssim1+\underline{\bm{\phi}}[\phi_1].
\end{align*}

For $\zeta_2$ we make use of its outgoing equation \eqref{thornzeta2}
\begin{align*}
\mthorn\zeta_2=\meth'\zeta_1+\Gamma(\lambda,\rho,\sigma,\pi)\zeta_{0,1,2,4}
+\zeta_j\phi_l^2.
\end{align*}
Then we can compute 
\begin{align*}
\mthorn\meth^k\zeta_2=\meth^{k+1}\zeta_1+
\sum_{i_1+...+i_4=k}\meth^{i_1}\Gamma(\pi,\tau)^{i_2}\meth^{i_3}\Gamma\meth^{i_4}\zeta_l
+\sum_{i_1+...+i_5=k}\meth^{i_1}\Gamma(\pi,\tau)^{i_2}\meth^{i_3}\zeta_l
\meth^{i_4}\phi_{j_1}\meth^{i_5}\phi_{j_2},
\end{align*}
and obtain
\begin{align*}
||(a^{\frac{1}{2}}\mathcal{D})^k\zeta_2||_{L^2_{sc}(\mathcal{S}_{u,v})}\lesssim&
1+\bmzeta[\zeta_1]+\zeta[\zeta_4]\Gamma[\sigma] \\
\lesssim&\bm\Psi[\Psi_0]\underline{\bm{\phi}}[\phi_1]
+\bm\Psi[\Psi_0]+\underline{\bm{\phi}}[\phi_1]+\bmzeta[\zeta_1]+1.
\end{align*}

For $\zeta_3$ we first estimate its derivative up to 9 by using \eqref{thornprimezeta3alt},
\begin{align*}
\mthorn'\zeta_3+2\mu\zeta_3=\meth\Tizeta_4-\phi_0\meth\mu-\phi_1\mu\sigma
+\zeta_i\phi_j^2+\Gamma\zeta_i+\Gamma^2\phi_j,
\end{align*}
and obtain 
\begin{align*}
||(a^{\frac{1}{2}}\mathcal{D})^{\leq9}\zeta_3||_{L^2_{sc}(\mathcal{S}_{u,v})}\lesssim&1
+\int_{u_{\infty}}^u\frac{a}{|u|^2}||(a^{\frac{1}{2}}\mathcal{D})^{\leq9}F||_{L^2_{sc}(\mathcal{S}_{u',v})}\\
\lesssim&1+\int_{u_{\infty}}^u\frac{a}{|u|^2}\frac{1}{|u|^2}a^{\frac{1}{2}}
\frac{|u|^2}{a}a^{\frac{1}{2}}\phi[\phi_1]\Gamma[\mu]\Gamma[\sigma]+\frac{\mathcal{O}^3}{a^{\frac{1}{2}}}\\
\leq&1+\frac{a}{|u|}\phi[\phi_1]\Gamma[\mu]\Gamma[\sigma] \\
\lesssim&\bm\Psi[\Psi_0]\bmphi[\phi_0]+\bm\Psi[\Psi_0]+\bmphi[\phi_0]+1.
\end{align*}
We leave the analysis of its 10 derivative into the elliptic part; see Remark~\ref{Ellipticzeta3zeta1}.

For $\Tizeta_5$ we make use of \eqref{thornTizeta5},
\begin{align*}
\mthorn\Tizeta_5=\meth'\Tizeta_4+2\Psi_2\phi_1+2\phi_1\meth\pi-\phi_0\meth'\mu
+\zeta_i\phi_j^2+\Gamma\zeta_i+\Gamma^2\phi_j,
\end{align*}
and estimate its derivative up to 9. 
The reason cannot to 10 is we cannot obtain the 11 derivative of $\Tizeta_4$ on the light cone.
Follow the similar analysis we obtain
\begin{align*}
||(a^{\frac{1}{2}}\mathcal{D})^{\leq9}\Tizeta_5||_{L^2_{sc}(\mathcal{S}_{u,v})}\lesssim&
1+\frac{1}{a^{\frac{1}{2}}}\zeta[\Tizeta_4]+\frac{1}{a}\phi[\phi_0]\Gamma[\Timu]
+\frac{1}{a^{\frac{1}{2}}}\mathcal{O}^2\lesssim1.
\end{align*}
With such results we control $\zeta_5$ to 9 derivative by the definition of $\Tizeta_5$:
\begin{align*}
\frac{a^{\frac{1}{2}}}{|u|}||(a^{\frac{1}{2}}\meth)^{\leq9}\zeta_5||_{L^2_{sc}(\mathcal{S}_{u,v})}\lesssim&
\frac{a^{\frac{1}{2}}}{|u|}||(a^{\frac{1}{2}}\meth)^{\leq9}\Tizeta_5||_{L^2_{sc}(\mathcal{S}_{u,v})}
+\frac{a^{\frac{1}{2}}}{|u|}||(a^{\frac{1}{2}}\meth)^{\leq9}(\phi_1\mu)||_{L^2_{sc}(\mathcal{S}_{u,v})}\\
\lesssim&1+\frac{a^{\frac{1}{2}}}{|u|}\frac{1}{|u|}a^{\frac{1}{2}}\phi[\phi_1]\frac{|u|^2}{a}\Gamma[\mu]
\lesssim1+\underline{\bm{\phi}}[\phi_0].
\end{align*}

For the 10 derivative of $\zeta_5$ we make use of \eqref{thornzeta5},
\begin{align*}
\mthorn\zeta_5=\meth'\zeta_4-\zeta_1\mu+\zeta_i\phi_j^2+\Gamma\zeta_i+\Gamma^2\phi_j,
\end{align*}
and obtain
\begin{align*}
\frac{a^{\frac{1}{2}}}{|u|}||(a^{\frac{1}{2}}\meth)^k\zeta_5||_{L^2_{sc}(\mathcal{S}_{u,v})}\lesssim&1+
\frac{1}{|u|}\int_0^v||(a^{\frac{1}{2}}\meth)^{k+1}\zeta_4||_{L^2_{sc}(\mathcal{S}_{u,v})}
+\frac{1}{a^{\frac{1}{2}}}\mathcal{O}^2 \\
\lesssim&1+\bmzeta[\zeta_4].
\end{align*}

\end{proof}

\begin{proposition}
\label{L2Weyl}
For $0\leq k\leq9$, one has that 
\begin{align*}
\frac{1}{a^{\frac{1}{2}}}||(a^{\frac{1}{2}}\mathcal{D})^k\Psi_0||_{L^2_{sc}(\mathcal{S}_{u,v})}\lesssim&1, \\
||(a^{\frac{1}{2}}\mathcal{D})^k\Psi_{1,2,3,4}||_{L^2_{sc}(\mathcal{S}_{u,v})}\lesssim&
\bmPsi[\Psi_0]+\underline{\bmzeta}[\zeta_1]\bmphi[\phi_0]
+\underline{\bmzeta}[\zeta_1]\underline{\bm{\phi}}[\phi_1]
+\underline{\bmzeta}[\zeta_1]+\bmphi[\phi_0]+\underline{\bm{\phi}}[\phi_1]+1
\end{align*}
and
\begin{align*}
\frac{a}{|u|}||(a^{\frac{1}{2}}\mathcal{D})^kK||_{L^2_{sc}(\mathcal{S}_{u,v})}\lesssim&1.
\end{align*}
where $K$ is the Gaussian curvature.
\end{proposition}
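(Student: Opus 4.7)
The analysis parallels Propositions \ref{L2lambda}--\ref{L2zeta}: for each curvature component I would select the Bianchi identity propagating it from its initial data, commute with $\meth^k$ via the $H_k$/$G_k$ formulas from the Commutators subsection, and apply Proposition \ref{SCweightedgronwall} or its outgoing counterpart \eqref{SCldirectiongronwall}. The purely gravitational contributions in the source are treated as in \cite{An2022,HilValZha23} and abbreviated ``Vac''; the matter couplings of schematic form $\zeta\phi\Gamma$, $\phi\,\meth\zeta$, $\Psi\phi^2$, $\zeta^2$, $\zeta\phi^3$ are absorbed using scale-invariant H\"older \eqref{L2holder}--\eqref{L1holderalt} together with Propositions \ref{L2phiA} and \ref{L2zeta}.

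For $\Psi_0$ the natural transport is the ingoing Bianchi identity $\mthorn'\Psi_0-\meth\Psi_1=(\text{lower order})$; the datum on $\mathcal{N}_\star$ is controlled by $\mathcal{I}_0$, and after commuting $\meth^k$ the principal source $\meth^{k+1}\Psi_1$ (which sits at order $k+1\leq 10$) is paid for by the null-cone norm $\underline{\bmPsi}[\Psi_1]$; the matter contributions at that order are subcritical since each $\phi_0$ or $\phi_1$ factor supplies the extra $|u|^{-1}$ gain in \eqref{L2holder}. For $\Psi_1,\Psi_2,\TiPsi_3,\Psi_4$ I would use the outgoing $\mthorn$-Bianchi identities integrated in $v$ from $v=0$, where the Minkowskian data vanishes. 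The top-order source $\meth^{k+1}\Psi_{j}$ with $j$ one step smaller contributes precisely the $\bmPsi[\Psi_0]$ term; the matter couplings, after scale-invariant H\"older, distribute into exactly the products $\underline{\bmzeta}[\zeta_1]\bmphi[\phi_0]$, $\underline{\bmzeta}[\zeta_1]\underline{\bmphi}[\phi_1]$, $\underline{\bmzeta}[\zeta_1]$, $\bmphi[\phi_0]$, $\underline{\bmphi}[\phi_1]$ listed in the statement---for instance $\phi_0\,\meth\zeta_1$ produces $\bmphi[\phi_0]\,\underline{\bmzeta}[\zeta_1]$, while analogous terms involving $\zeta_3$ or $\zeta_4$ are strictly lower order thanks to the extra $|u|$-weights those variables carry.

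For the Gaussian curvature $K$ I would invoke the Gauss equation, which writes $K$ algebraically as $-\tfrac14(\Psi_2+\bar\Psi_2)+\tfrac12\rho\mu-\tfrac12\sigma\bar\lambda+\tfrac12\Phi_{11}$, where the trace matter term is $\Phi_{11}=\mathrm{i}(\bar\zeta_4\phi_0-\zeta_4\bar\phi_0+\bar\zeta_1\phi_1-\zeta_1\bar\phi_1)$. Commuting with $\meth^k$ and applying H\"older in scale-invariant form, every constituent block is already controlled by Propositions \ref{L2lambda}, \ref{L2mu}, \ref{L2rho}, \ref{L2phiA}, \ref{L2zeta} together with the $\Psi_2$ estimate established in the first half of the present proposition; the overall weight $a/|u|$ reflects the scale of the dominant products $\rho\mu$ and $\Psi_2$.

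The chief obstacle is top-order bookkeeping: the angular derivative of a curvature component on the Bianchi right-hand side sits at order $k+1=10$, precisely at the boundary of the $L^2_{sc}(\mathcal{S})$ bootstrap hierarchy, and can only be paid for by the null-cone norms $\bmPsi$/$\underline{\bmPsi}$. This is consistent with the deliberate off-by-one design visible in the bootstrap norms ($\Psi_{i,2}$ up to $i\leq 9$ versus $\bmPsi_i$ up to $i\leq 10$). No circularity with the forthcoming energy estimates arises since the right-hand sides above are expressed purely in null-cone and already-controlled $L^2_{sc}(\mathcal{S})$ quantities; once the energy estimates of Section \ref{EnergyEstimate} close $\bmPsi$ and $\underline{\bmPsi}$ at $\lesssim\mathcal{I}_0+1$, the present bounds follow.
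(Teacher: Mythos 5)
Your treatment of the curvature components $\Psi_0$ and $\Psi_{1,2,3,4}$ tracks the paper's: ingoing transport for $\Psi_0$ (the paper uses the renormalised $\meth\TiPsi_1$ as the principal term, a cosmetic but important choice for the Hodge structure), outgoing transport for the remaining components, and control of the $k+1$ angular derivative by the cone norms. Your singling out of $\phi_0\meth\zeta_1$ differs superficially from the paper's identification of $\mu\zeta_0\phi_i$ as the ``large'' coupling, but both land in the same bound after H\"older and Propositions \ref{L2phiA}, \ref{L2zeta}.

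For the Gaussian curvature, however, your approach fails. You propose to estimate $K$ purely algebraically from $K=\mathcal{K}+\bar{\mathcal{K}}$ with $\mathcal{K}=\Phi_{11}-\Psi_2+\mu\rho-\lambda\sigma$, bounding each constituent separately. This cannot yield the weighted bound $\tfrac{a}{|u|}\|(a^{1/2}\mathcal{D})^kK\|_{L^2_{sc}(\mathcal{S})}\lesssim 1$. The constituents $\Psi_2$ and $\mu\rho$ are individually of unit size in $L^2_{sc}(\mathcal{S})$, so the triangle inequality delivers at best $\tfrac{a}{|u|}\|\mathcal{K}\|_{L^2_{sc}}\lesssim\tfrac{a}{|u|}\cdot\mathcal{O}$, whereas the proposition asserts $\lesssim 1$. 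The point is that $\mathcal{K}$ is pointwise $O(1/|u|^2)$ by the Gauss constraint — a full factor $a/|u|$ smaller than $\Psi_2\sim a/|u|^3$ — thanks to a cancellation between $\Psi_2$ and $\mu\rho-\lambda\sigma+\Phi_{11}$ that the algebraic triangle-inequality estimate destroys. The paper instead assigns $\mathcal{K}$ its own bootstrap norm $\tfrac{a}{|u|}\|(a^{1/2}\mathcal{D})^k\mathcal{K}\|_{L^2_{sc}(\mathcal{S})}\leq\mathcal{O}$ and integrates its \emph{transport} equation $\mthorn'\mathcal{K}=-2\mu\mathcal{K}-\meth\TiPsi_3+\ldots$ from the Minkowskian initial data where $\mathcal{K}=1/|u|^2$, so the small size and the cancellation propagate. (Only for the \emph{derivatives} $\mathcal{D}^kK$ with $k\geq 1$, where the Minkowskian contribution vanishes, does the paper later use the algebraic relation directly — see Remark \ref{AdvancedGauss} — and even then only after the refined next-to-top-order bounds are in hand.) You need to replace the algebraic decomposition with the ingoing transport estimate for $\mathcal{K}$.
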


\begin{proof}
For $\Psi_0$, we make use of \eqref{thornprimePsi0},
\begin{align*}
\mthorn'\Psi_0+\mu\Psi_0=\meth\TiPsi_1+2\mathrm{i}(\phi_0\meth\bar\zeta_1-\bar\phi_0\meth\zeta_3)+\zeta_i^2+\zeta_i\phi_j^3+\Psi_i\phi_j^2+\zeta_i\phi_j\Gamma+\phi_j^2\Gamma^2+\Gamma\Psi_i,
\end{align*}
and obtain
\begin{align*}
\frac{1}{a^{\frac{1}{2}}}||(a^{\frac{1}{2}}\meth)^k\Psi_0||_{L^2_{sc}(\mathcal{S}_{u,v})}\lesssim&
\frac{1}{a^{\frac{1}{2}}}||(a^{\frac{1}{2}}\meth)^k\Psi_0||_{L^2_{sc}(\mathcal{S}_{u_{\infty},v})}
+\frac{1}{a^{\frac{1}{2}}}\int_{u_{\infty}}^u\frac{a}{|u'|^2}||a^{\frac{k}{2}}F||_{L^2_{sc}(\mathcal{S}_{u',v})} \\
\lesssim&1+\frac{1}{|u|^{\frac{1}{2}}}\underline{\bmPsi}[\TiPsi_1]+\frac{a^{\frac{1}{2}}}{|u|}\mathcal{O}^2
\lesssim1.
\end{align*}

For $\Psi_{II}=\Psi_{1,2,3,4}$ we make use of their outgoing equation \eqref{thornPsi1}, \eqref{thornPsi2}, 
\eqref{thornPsi3}, \eqref{thornPsi4}:
\begin{align*}
\mthorn\Psi_{II}=\meth'(\Psi_0,\TiPsi)+\phi_j\meth\zeta_{0,...,4}+
\zeta_i^2+\zeta_i\phi_j^3+\Psi_i\phi_j^2+\zeta_i\phi_j\Gamma+\phi_j^2\Gamma^2+\Gamma\Psi_i.
\end{align*}
The large terms are $\mu\zeta_0\phi_i$ which can be estimated in the following:
\begin{align*}
\int_0^v||a^{\frac{k}{2}}\mu\meth^i\zeta_0\meth^{k-i}\phi_j||_{L^2_{sc}(\mathcal{S}_{u,v})}\lesssim&
\frac{1}{|u|^2}\frac{|u|^2}{a}a^{\frac{1}{2}}\zeta[\zeta_0]a^{\frac{1}{2}}\phi[\phi_j]\\
\lesssim&\underline{\bmzeta}[\zeta_1]\bmphi[\phi_0]
+\underline{\bmzeta}[\zeta_1]\underline{\bm{\phi}}[\phi_1]
+\underline{\bmzeta}[\zeta_1]+\bmphi[\phi_0]+\underline{\bm{\phi}}[\phi_1]+1.
\end{align*}
According to the results in vacuum case we obtain

\begin{align*}
||(a^{\frac{1}{2}}\meth)^k\Psi_{II}||_{L^2_{sc}(\mathcal{S}_{u,v})}\lesssim&
\bmPsi[\Psi_0]+\underline{\bmzeta}[\zeta_1]\bmphi[\phi_0]
+\underline{\bmzeta}[\zeta_1]\underline{\bm{\phi}}[\phi_1]
+\underline{\bmzeta}[\zeta_1]+\bmphi[\phi_0]+\underline{\bm{\phi}}[\phi_1]+1.
\end{align*}

For the analysis of Gaussian curvature, 
one define the complex Gaussian curvature:
\begin{align*}
\mathcal{K}\equiv&\Lambda+\Phi_{11}-\Psi_2+\mu\rho-\lambda\sigma \\
=&\mathrm{i}\bar\zeta_1\phi_1-\mathrm{i}\zeta_1\bar\phi_1
+\mathrm{i}\bar\Tizeta_4\phi_0-\mathrm{i}\Tizeta_4\bar\phi_0
-\Psi_2+\mu\rho-\lambda\sigma
\end{align*}
where the Gaussian curvature is $K=\mathcal{K}+\bar{\mathcal{K}}$. 
The initial data of $\mathcal{K}$ on the ingoing cone $\mathcal{N}'_{\star}$ is $\frac{1}{|u|^2}$, 
we anticipate that Gaussian curvature keep such property, 
hence considering the signature, 
we has that the bootstrap assumption for $\mathcal{K}$ is 
\begin{align*}
\frac{a}{|u|}||(a^{\frac{1}{2}}\mathcal{D})^k\mathcal{K}||_{L^2_{sc}(\mathcal{S}_{u,v})}\leq\mathcal{O}.
\end{align*}
Make use of its' ingoing equation
\begin{align*}
\mthorn'\mathcal{K}=&-2\mu\mathcal{K}-\meth\TiPsi_3
+\mathrm{i}(\phi_1\meth'\bar\zeta_2+\bar\phi_0\meth\zeta_5
-\bar\phi_1\meth\zeta_2-\phi_1\meth\bar\zeta_4)\\
&+\mu\meth'\tau-\lambda\meth\tau+\zeta_i^2+\zeta_i\phi_j^3
+\Psi_i\phi_j^2+\zeta_i\phi_j\Gamma+\phi_j^2\Gamma^2+\Gamma\Psi_i.
\end{align*}
Then we have
\begin{align*}
\mthorn'\meth^k\mathcal{K}+(k+2)\mu\meth^k\mathcal{K}=&-\meth^{k+1}\TiPsi_3
+\sum_{i=0}^k\meth^i\phi_j\meth^{k+1-i}\zeta_l
+\sum_{i=0}^k\meth^i\Gamma(\mu,\lambda)\meth^{k+1-i}\tau \\
&+\sum_{i=0}^k\meth^i\Gamma(\Timu,\lambda)\meth^{k-i}\mathcal{K}
+\meth^k(rest).
\end{align*}
We denote the right hand side by $F$ and
apply the ingoing transport inequality, one has
\begin{align*}
\frac{a}{|u|}||(a^{\frac{1}{2}}\meth)^k\mathcal{K}||_{L^2_{sc}(\mathcal{S}_{u,v})}\lesssim&
\frac{a}{|u_{\infty}|}||(a^{\frac{1}{2}}\meth)^k\mathcal{K}||_{L^2_{sc}(\mathcal{S}_{u_{\infty},v})}
+\int_{u_{\infty}}^u\frac{a^2}{|u'|^3}||a^{\frac{k}{2}}F||_{L^2_{sc}(\mathcal{S}_{u',v})}. 
\end{align*}
Apply the results and the bootstrap assumptions, one has
\begin{align*}
\frac{a}{|u|}||(a^{\frac{1}{2}}\meth)^k\mathcal{K}||_{L^2_{sc}(\mathcal{S}_{u,v})}\lesssim&1.
\end{align*}

\end{proof}

\subsection{Summary of $L^2$}
Under the bootstrap assumption
\begin{align*}
\bmGamma,\bmphi, \bmzeta,\bmPsi\leq\mathcal{O},
\end{align*}
we obtain the following improvement for $\bmGamma$: 
For $0\leq k\leq10$, we have
\begin{align*}
\frac{a^{\frac{1}{2}}}{|u|}||(a^{\frac{1}{2}}\mathcal{D})^k\lambda||_{L^2_{sc}(\mathcal{S}_{u,v})}\lesssim&1, \quad
\frac{1}{a^{\frac{1}{2}}}||(a^{\frac{1}{2}}\mathcal{D})^{k}\sigma||_{L^2_{sc}(\mathcal{S}_{u,v})}\lesssim
\bm\Psi[\Psi_0]+1, \\
||(a^{\frac{1}{2}}\mathcal{D})^{k}\ulomega||_{L^2_{sc}(\mathcal{S}_{u,v})}\lesssim&
\bm\Psi[\Psi_2]+1, \quad
||(a^{\frac{1}{2}}\mathcal{D})^k\pi||_{L^2_{sc}(\mathcal{S}_{u,v})}\lesssim
\underline{\bm\Psi}[\TiPsi_3]+\bm\Psi[\TiPsi_1]+1.\\
||(a^{\frac{1}{2}}\mathcal{D})^{k}\tau||_{L^2_{sc}(\mathcal{S}_{u,v})}\lesssim&
\underline{\bmzeta}[\zeta_1]\bmphi[\phi_0]+\Psi[\TiPsi_1]+\bmphi[\phi_0]+\underline{\bmzeta}[\zeta_1]+1,\\
||(a^{\frac{1}{2}}\mathcal{D})^k\rho||_{L^2_{sc}(\mathcal{S}_{u,v})}\lesssim&
\bm\Psi[\Psi_0]^2+\underline{\bm{\phi}}[\phi_1]\underline{\bmzeta}[\zeta_1]+\bm\Psi[\Psi_0]
+\underline{\bmzeta}[\zeta_1]+\underline{\bm{\phi}}[\phi_1]+1, \\
\frac{a}{|u|}||(a^{\frac{1}{2}}\mathcal{D})^k\Timu||_{L^2_{sc}(\mathcal{S}_{u,v})}\lesssim&
\bm\Psi[\Psi_2]+1, \quad
\frac{a}{|u|^2}||(a^{\frac{1}{2}}\mathcal{D})^k\mu||_{L^2_{sc}(\mathcal{S}_{u,v})}\lesssim1.
\end{align*}

\begin{align*}
\frac{1}{a^{\frac{1}{2}}}||(a^{\frac{1}{2}}\mathcal{D})^k\phi_0||_{L^2_{sc}(\mathcal{S}_{u,v})}\lesssim&
\underline{\bm{\phi}}[\phi_1]+1, \quad
\frac{1}{a^{\frac{1}{2}}}||(a^{\frac{1}{2}}\mathcal{D})^k\phi_1||_{L^2_{sc}(\mathcal{S}_{u,v})}\lesssim\bmphi[\phi_0]+1.
\end{align*}

\begin{align*}
\frac{1}{a^{\frac{1}{2}}}||(a^{\frac{1}{2}}\mathcal{D})^k\zeta_0||_{L^2_{sc}(\mathcal{S}_{u,v})}\lesssim&
\underline{\bmzeta}[\zeta_1]+1, \quad
||(a^{\frac{1}{2}}\mathcal{D})^k\zeta_1||_{L^2_{sc}(\mathcal{S}_{u,v})}\lesssim
\underline{\bmzeta}[\zeta_2]+1, \\
||(a^{\frac{1}{2}}\mathcal{D})^k\Tizeta_4||_{L^2_{sc}(\mathcal{S}_{u,v})}\lesssim&
1+\bmzeta[\zeta_3], \quad
\frac{a^{\frac{1}{2}}}{|u|}||(a^{\frac{1}{2}}\mathcal{D})^k\zeta_4||_{L^2_{sc}(\mathcal{S}_{u,v})}\lesssim
1+\underline{\bm{\phi}}[\phi_1], \\
||(a^{\frac{1}{2}}\mathcal{D})^k\zeta_2||_{L^2_{sc}(\mathcal{S}_{u,v})}\lesssim&
\bm\Psi[\Psi_0]\underline{\bm{\phi}}[\phi_1]
+\bm\Psi[\Psi_0]+\underline{\bm{\phi}}[\phi_1]+\bmzeta[\zeta_1]+1, \\
\frac{a^{\frac{1}{2}}}{|u|}||(a^{\frac{1}{2}}\mathcal{D})^k\zeta_5||_{L^2_{sc}(\mathcal{S}_{u,v})}\lesssim&
1+\bmzeta[\zeta_4].
\end{align*}
For $0\leq k\leq9$, we have 
\begin{align*}
||(a^{\frac{1}{2}}\mathcal{D})^k\zeta_3||_{L^2_{sc}(\mathcal{S}_{u,v})}\lesssim&\bm\Psi[\Psi_0]\bmphi[\phi_0]+\bm\Psi[\Psi_0]+\bmphi[\phi_0]+1, \\
||(a^{\frac{1}{2}}\mathcal{D})^k\Tizeta_5||_{L^2_{sc}(\mathcal{S}_{u,v})}\lesssim&1, \quad
\frac{a^{\frac{1}{2}}}{|u|}||(a^{\frac{1}{2}}\mathcal{D})^k\zeta_5||_{L^2_{sc}(\mathcal{S}_{u,v})}\lesssim
1+\underline{\bm{\phi}}[\phi_0].
\end{align*}

\begin{align*}
\frac{1}{a^{\frac{1}{2}}}||(a^{\frac{1}{2}}\meth)^k\Psi_0||_{L^2_{sc}(\mathcal{S}_{u,v})}\lesssim&1, \\
||(a^{\frac{1}{2}}\meth)^k\Psi_{1,2,3,4}||_{L^2_{sc}(\mathcal{S}_{u,v})}\lesssim&
\bmPsi[\Psi_0]+\underline{\bmzeta}[\zeta_1]\bmphi[\phi_0]
+\underline{\bmzeta}[\zeta_1]\underline{\bm{\phi}}[\phi_1]
+\underline{\bmzeta}[\zeta_1]+\bmphi[\phi_0]+\underline{\bm{\phi}}[\phi_1]+1,
\end{align*}

\begin{align*}
\frac{a}{|u|}||(a^{\frac{1}{2}}\mathcal{D})^kK||_{L^2_{sc}(\mathcal{S}_{u,v})}\lesssim&1.
\end{align*}

\begin{remark}
\label{AdvancedGauss}
Actually, we also anticipate that the derivative of $\mathcal{K}$ has "better" property 
by eliminating $\frac{a}{|u|}$ which comes from the initial data. 
When $1\leq k\leq9$, we can make use of results for $\rho$, $\mu$, $\lambda$, $\sigma$, $\Psi_2$, 
$\phi_j$, $\zeta_1$ and $\Tizeta_4$ obtained. 
For those terms under next-to-leading order derivative, 
we have better control.

For the k derivative of $\zeta_1$ and $\phi_1$ where $0\leq k\leq10$, i.e. the next to leading order derivative, the norm on the light cone can be bounded by its norm on $\mathcal{S}$, we have 
\begin{align*}
\underline{\bmzeta}[\zeta_1]_k=&\frac{1}{a^{\frac{1}{2}}}\int_{u_{\infty}}^u\frac{a}{|u'|^2}
||a^{\frac{k-1}{2}}\mathcal{D}^{k}\zeta_1||_{L^2_{sc}(\mathcal{S}_{u',v})}
\leq\frac{\mathcal{O}}{a}\lesssim1 ,\\
\underline{\bm{\phi}}[\phi_1]_k=&\frac{1}{a^{\frac{1}{2}}}\int_{u_{\infty}}^u\frac{a}{|u'|^2}
||a^{\frac{k-1}{2}}\mathcal{D}^{k}\phi_1||_{L^2_{sc}(\mathcal{S}_{u',v})}
\leq\frac{\mathcal{O}}{a^{\frac{1}{2}}}\lesssim1.
\end{align*}
Same analysis leads to 
\begin{align*}
\underline{\bmzeta}[\zeta_2]_k=&\int_{u_{\infty}}^u\frac{a}{|u'|^2}
||a^{\frac{k-1}{2}}\mathcal{D}^{k}\zeta_2||_{L^2_{sc}(\mathcal{S}_{u',v})}
\leq\frac{\mathcal{O}}{a^{\frac{1}{2}}}\lesssim1 ,\\
\bmzeta[\zeta_1]_k=&\int_{0}^v
||a^{\frac{k-1}{2}}\mathcal{D}^{k}\zeta_1||_{L^2_{sc}(\mathcal{S}_{u',v})}
\leq\frac{\mathcal{O}}{a^{\frac{1}{2}}}\lesssim1 ,\\
\bm{\phi}[\phi_0]_k=&\frac{1}{a^{\frac{1}{2}}}\int_{0}^v
||a^{\frac{k-1}{2}}\mathcal{D}^{k}\phi_0||_{L^2_{sc}(\mathcal{S}_{u',v})}
\leq\frac{\mathcal{O}}{a^{\frac{1}{2}}}\lesssim1.
\end{align*}

such results lead to 
\begin{align*}
||(a^{\frac{1}{2}}\mathcal{D})^{k\leq9}\Psi_{1,2}||_{L^2_{sc}(\mathcal{S}_{u,v})}\lesssim1, \quad
||(a^{\frac{1}{2}}\mathcal{D})^{k\leq9}\zeta_{1}||_{L^2_{sc}(\mathcal{S}_{u,v})}\lesssim1, \quad
\frac{a^{\frac{1}{2}}}{|u|}||(a^{\frac{1}{2}}\mathcal{D})^{k\leq9}\zeta_5||_{L^2_{sc}(\mathcal{S}_{u,v})}\lesssim1.
\end{align*}
Moreover for the derivative of $\Psi_0$ up to 9 derivative we have
\begin{align*}
\bm\Psi[\Psi_0]_k\lesssim1
\end{align*}
which leads to 
\begin{align*}
\Gamma[\sigma]_{0\leq k\leq9}\lesssim&1, \quad
\Gamma[\rho]_0\lesssim1, \Gamma[\rho]_{1\leq k\leq9}\lesssim\frac{a}{|u|}, \\
||(a^{\frac{1}{2}}\mathcal{D})^{0\leq k\leq9}\zeta_{2,3}||_{L^2_{sc}(\mathcal{S}_{u,v})}\lesssim&1,
\end{align*}

Then from the definition of the Gaussian curvature we can calculate directly and obtain: 
\begin{align*}
\sum_{i=1}^7||(a^{\frac{1}{2}}\mathcal{D})^iK||_{L_{sc}^{\infty}(\mathcal{S}_{u,v})}
+\sum_{i=1}^9||(a^{\frac{1}{2}}\mathcal{D})^iK||_{L_{sc}^{2}(\mathcal{S}_{u,v})}\lesssim1.
\end{align*}

\end{remark}

%%%%%%%%%%%%%%%%%%%%%%%%%%%%%%%%%%%%%%%%%%%%%%%%%

\section{Elliptic}
\label{Elliptic}
In this section we estimate the top-derivative, i.e. the 11th derivative of connections 
which are needed in the energy estimate for curvature as well as the matter fields. 
For the purpose of closing the bootstrap argument, we sidestep top-order Weyl curvature estimates. 
The strategy is to construct new quantities made of the divergence or the curl of connections. 
Then the elliptic inequalities lead to that the estimate of higher derivative can be obtained 
from the lower derivative of such quantities. 
By doing so, one can avoid the estimate of 11-derivative of curvature. 
We start with introducing the basic elliptic inequalities in the scale-invariant norms:

\begin{proposition}
\label{EllipticT-weightSC}
Let $f$ denote a quantity with nonzero
T-weight. Then under the 
bootstrap assumption, for $0\leq k\leq 11$, one has that 
\begin{align*}
||(a^{\frac{1}{2}}\mathcal{D})^kf||_{L^2_{sc}(\mathcal{S}_{u,v})}\lesssim\sum_{i\leq k-1}
\left(||(a^{\frac{1}{2}})^{i+1}\mathcal{D}^i\mathscr{D}_f||_{L_{sc}^2(\mathcal{S}_{u,v})}
+||(a^{\frac{1}{2}}\mathcal{D})^if||_{L_{sc}^2(\mathcal{S}_{u,v})}
\right).
\end{align*}
\end{proposition}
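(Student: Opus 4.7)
The plan is to prove the estimate by induction on $k$, using a scale-invariant Hodge/Bochner identity on the topological $2$-sphere $\mathcal{S}_{u,v}$, together with the Gaussian-curvature control $\|(a^{1/2}\mathcal{D})^{i}K\|_{L^2_{sc}}\lesssim 1$ established in Prop.~\ref{L2Weyl} and sharpened in Remark~\ref{AdvancedGauss}. Since $f$ carries a nonzero T-weight, it corresponds to a tensor $T(f)$ of positive rank on $\mathcal{S}_{u,v}$, so its angular derivatives can be recovered from a first-order Hodge operator $\mathscr{D}_f$ (the associated $(\text{div},\text{curl})$ pair) without a kernel obstruction.

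For the base case $k=1$, I would run a Bochner identity: integration by parts on $\mathcal{S}_{u,v}$ gives
\begin{align*}
\|\nablasl T(f)\|_{L^2(\mathcal{S})}^{2} \lesssim \|\mathscr{D}_f\|_{L^2(\mathcal{S})}^{2} + \int_{\mathcal{S}} K\,|T(f)|^{2},
\end{align*}
which, after converting to scale-invariant norms and using $s_2(\mathscr{D}_f)=s_2(f)+\tfrac12$ together with $\|K\|_{L^\infty_{sc}}\lesssim 1$, reproduces the claimed inequality for $k=1$ (the $(a^{1/2})^{i+1}$ weight being precisely the one that absorbs the signature shift of $\mathscr{D}_f$).

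The inductive step is to apply the $k=1$ estimate to $\mathcal{D}^{k-1}f$ and use the schematic commutator identity
\begin{align*}
\mathscr{D}(\mathcal{D}^{k-1}f) = \mathcal{D}^{k-1}\mathscr{D}_f + \sum_{i_1+i_2+i_3=k-2}\mathcal{D}^{i_1}K\cdot\mathcal{D}^{i_2}\Gamma\cdot\mathcal{D}^{i_3}f,
\end{align*}
which follows from iterating the Ricci identity on $\mathcal{S}_{u,v}$ in the same way the $H_k$, $G_k$ commutator expansions were built earlier. Each term on the right is controlled by the scale-invariant Hölder inequality \eqref{L2holder} together with the bootstrap assumption $\bmGamma\leq\mathcal{O}$, the $L^2$ control of $\Gamma$ obtained in Section~\ref{L2estimate}, and the curvature bounds $\|\mathcal{D}^{\leq 9}K\|_{L^2_{sc}}+\|\mathcal{D}^{\leq 7}K\|_{L^\infty_{sc}}\lesssim 1$ from Remark~\ref{AdvancedGauss}. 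Nonlinear products gain a factor $|u|^{-1}$ from the Hölder inequality and are therefore absorbable into the lower-order sum on the right of the stated inequality for $|u|$ sufficiently large.

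The main obstacle is the top order $k=11$: a naïve expansion of $[\mathscr{D},\mathcal{D}^{10}]$ demands $\mathcal{D}^{10}K$, which is one derivative beyond the curvature bound available from Remark~\ref{AdvancedGauss}. To avoid this loss I would pair the top angular derivative with the Hodge operator itself before integrating by parts—writing $\|\mathcal{D}^{k-1}\nablasl T\|_{L^2}^{2}$ as $\langle \mathcal{D}^{k-1}T, \mathcal{D}^{k-1}\mathscr{D}_f^{*}\mathscr{D}_f T\rangle$ up to Gauss-curvature lower-order terms—so that only $\mathcal{D}^{\leq k-2}K$ ever appears in the commutator remainder. The leftover top-order commutator term, being linear in $\mathcal{D}^{k-1}f$, can be absorbed into the left-hand side after choosing $|u|\gg 1$, closing the induction within the available regularity and yielding the inequality uniformly for $k\leq 11$.
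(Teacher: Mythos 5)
The paper does not actually prove Proposition~\ref{EllipticT-weightSC}; it simply cites Section~6.1 of \cite{An202209} (An--Athanasiou) for the argument. So there is no internal proof to compare your proposal against. That said, the Bochner/Hodge route you sketch --- base case by integration by parts on $\mathcal{S}_{u,v}$ against the first-order divergence/curl operator $\mathscr{D}_f$, then induction via the commutator $[\mathscr{D},\mathcal{D}^{k-1}]$ producing Gaussian-curvature error terms, everything rendered scale-invariant through the signature bookkeeping for $\mathscr{D}_f$ --- is indeed the standard mechanism used in that reference and in the antecedent literature (\cite{An2022,HilValZha23}), so the strategy is the right one.

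One point in your last paragraph is off and would send you on an unnecessary detour. Commuting $\mathscr{D}$ through $\mathcal{D}^{k-1}$ produces schematic error terms of the form $\mathcal{D}^{i_1}K\cdot\mathcal{D}^{i_3}f$ with $i_1+i_3=k-2$; at the top order $k=11$ this means at most $\mathcal{D}^{9}K$, not $\mathcal{D}^{10}K$ as you claim. Since Proposition~\ref{L2Weyl} and Remark~\ref{AdvancedGauss} control precisely $\|( a^{1/2}\mathcal{D})^{i}K\|_{L^2_{sc}}$ for $i\le 9$ (and $L^\infty_{sc}$ for $i\le 7$, enough to apply \eqref{L2holder} after a Sobolev step), the na\"ive commutator expansion already closes at $k=11$. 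The ``pair the top derivative with $\mathscr{D}_f^*\mathscr{D}_f$ before integrating by parts'' device you introduce is therefore not needed here and adds unwarranted complication. A secondary cosmetic issue: your schematic commutator includes an extra $\Gamma$ factor; since $\mathcal{D}$ is built from the covariant $\meth,\meth'$, their commutator on $\mathcal{S}_{u,v}$ is governed by $K$ alone (up to T-weight bookkeeping), so the $\Gamma$'s only enter once you pass from $\nablasl^k T(f)$ back to strings of $\meth,\meth'$, which is already handled by the norm conventions. Neither point is a fatal gap, but both should be corrected if the proof is to be written out rather than cited.
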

The proof can be found in Sec 6.1 in Paper \cite{An202209}.

\medskip
When $f$ is a pure scalar, we have the following:
\begin{proposition}
\label{EllipticPureScalar}
Let $f$ denote a quantity with zero T-weight. Then, under the 
bootstrap assumption, for $0\leq k\leq 10$, one has that 
\begin{align*}
||(a^{\frac{1}{2}}\mathcal{D})^{k+1}f||_{L^2_{sc}(\mathcal{S})}\lesssim
||(a^{\frac{1}{2}})^{k+1}\mathcal{D}^{k-1}(\bmDelta f)||_{L^2_{sc}(\mathcal{S})}
+\sum_{i=0}^k||(a^{\frac{1}{2}}\mathcal{D})^if||_{L^2_{sc}(\mathcal{S})},
\end{align*}
where $\bmDelta f\equiv 2\meth\meth'f$.
\end{proposition}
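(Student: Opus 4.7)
The plan is to upgrade Proposition \ref{EllipticT-weightSC} to the zero-T-weight case by first differentiating once. Although $f$ itself has zero T-weight, the two quantities $h_{+}\equiv\meth f$ and $h_{-}\equiv\meth' f$ both carry nonzero T-weight, so Proposition \ref{EllipticT-weightSC} will apply to them. Since any string of $k+1$ angular operators on $f$ can be written as a string of $k$ such operators acting on one of $h_\pm$, we will use
\begin{align*}
||(a^{\frac{1}{2}}\mathcal{D})^{k+1}f||_{L^2_{sc}(\mathcal{S})}\lesssim a^{\frac{1}{2}}\Bigl(||(a^{\frac{1}{2}}\mathcal{D})^k h_+||_{L^2_{sc}(\mathcal{S})}+||(a^{\frac{1}{2}}\mathcal{D})^k h_-||_{L^2_{sc}(\mathcal{S})}\Bigr),
\end{align*}
and then estimate each of the two norms on the right.

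Applying Proposition \ref{EllipticT-weightSC} to $h_\pm$ will produce, for each sign, a sum over $i\le k-1$ of $||(a^{\frac{1}{2}})^{i+1}\mathcal{D}^i\mathscr{D}_{h_\pm}||_{L^2_{sc}}$ together with lower-order $||(a^{\frac{1}{2}}\mathcal{D})^i h_\pm||_{L^2_{sc}}$. We identify the divergence–curl data via $\mathscr{D}_{h_+}\sim\meth' h_+=\meth'\meth f$ (symmetrically for $h_-$); using the definition $\bmDelta f\equiv 2\meth\meth' f$ and the GHP commutator $[\meth',\meth]$ on the spin-weight-zero scalar $f$, which in the gauge \eqref{spinconnection1}-\eqref{spinconnection3} reduces to connection coefficients paired with $\mthorn f,\mthorn' f$ and a Gaussian-curvature piece $K\cdot f$, we will rewrite $\meth'\meth f=\tfrac{1}{2}\bmDelta f+\text{(commutator)}$. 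At the top index $i=k-1$ this yields exactly $||(a^{\frac{1}{2}})^{k+1}\mathcal{D}^{k-1}\bmDelta f||_{L^2_{sc}(\mathcal{S})}$, matching the first term on the right-hand side of the claim; for $i<k-1$ the contribution $(a^{\frac{1}{2}})^{i+1}\mathcal{D}^i\bmDelta f$ has the same strength as $(a^{\frac{1}{2}}\mathcal{D})^{i+2}f$ with $i+2\le k$ and can be absorbed into $\sum_{i=0}^k||(a^{\frac{1}{2}}\mathcal{D})^i f||_{L^2_{sc}(\mathcal{S})}$. The residual $(a^{\frac{1}{2}}\mathcal{D})^i h_\pm$ terms coincide with $(a^{\frac{1}{2}}\mathcal{D})^{i+1}f$ for $i+1\le k$ and lie in the same sum.

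The main obstacle will be the commutator $[\meth',\meth]f$, together with the analogous commutators generated whenever $\meth,\meth'$ are passed through quantities of nonzero T-weight while forming $\mathcal{D}^i\mathscr{D}_{h_\pm}$. These contribute schematic terms $K\cdot\mathcal{D}^{\le k-1}f$ and $\Gamma\cdot\mathcal{D}^{\le k}f$ that we plan to handle by the scale-invariant H\"older inequality \eqref{L2holder}, the bootstrap bound on $\hat\Gamma$, and the Gaussian-curvature estimate $\tfrac{a}{|u|}||(a^{\frac{1}{2}}\mathcal{D})^k K||_{L^2_{sc}(\mathcal{S})}\lesssim 1$ from Proposition \ref{L2Weyl} together with its sharper form in Remark \ref{AdvancedGauss}. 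Each such term will pick up a small $\mathcal{O}/|u|$ prefactor and be absorbed into $\sum_{i=0}^k||(a^{\frac{1}{2}}\mathcal{D})^i f||_{L^2_{sc}(\mathcal{S})}$, closing the estimate without any top-order Weyl-curvature input.
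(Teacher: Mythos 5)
Your reduction is the natural one and it works: since $\meth f$ and $\meth' f$ carry T\nobreakdash-weight $\pm1$, Proposition~\ref{EllipticT-weightSC} applies to each, and your $a^{\frac12}$ bookkeeping correctly compensates for the shift $s_2(\meth^{(\prime)} f)=s_2(f)+\tfrac12$ so that $a^{\frac12}\,\|(a^{\frac12}\mathcal{D})^k h_\pm\|_{L^2_{sc}}$ and $\|(a^{\frac12}\mathcal{D})^{k+1}f\|_{L^2_{sc}}$ indeed carry the same scale-invariant weight. The identification $\mathscr{D}_{h_-}=\meth\meth' f=\tfrac12\bmDelta f$ is exact, the top index $i=k-1$ reproduces the $\bmDelta$-term of the claim, and the lower indices together with the $(a^{\frac12}\mathcal{D})^i h_\pm$ tails land inside $\sum_{i\le k}\|(a^{\frac12}\mathcal{D})^i f\|_{L^2_{sc}}$ exactly as you describe. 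The paper states this proposition without an explicit proof, so there is no in-text argument to compare against, but your route (bootstrap from the nonzero-weight elliptic estimate of Proposition~\ref{EllipticT-weightSC}) is sound and closes at the claimed order.

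One point you over-engineer: for a scalar of \emph{zero} T-weight, in the gauge \eqref{spinconnection2} with $\rho=\bar\rho$, $\mu=\bar\mu$, the GHP commutator $[\meth,\meth']f$ vanishes identically (the $\mthorn$, $\mthorn'$ parts are weighted by $\mu-\bar\mu$ and $\rho-\bar\rho$, and the curvature parts are weighted by the T-weight of $f$, which is $0$). Hence $\meth'\meth f=\meth\meth' f=\tfrac12\bmDelta f$ on the nose — there is no Gaussian-curvature remainder, no $\mthorn f$, $\mthorn' f$ terms, and no need to invoke Proposition~\ref{L2Weyl}, Remark~\ref{AdvancedGauss}, or the bootstrap for the base commutator. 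Consequently $\mathcal{D}^i\mathscr{D}_{h_\pm}=\tfrac12\mathcal{D}^i\bmDelta f$ string by string, and the ``analogous commutators generated while forming $\mathcal{D}^i\mathscr{D}_{h_\pm}$'' need not be revisited either: the norm $|\mathcal{D}^i\cdot|^2$ already sums over all orderings, and any commutator handling for the nonzero-weight intermediates is internal to the cited proof of Proposition~\ref{EllipticT-weightSC}. So the ``main obstacle'' paragraph describes machinery that is available but not needed; removing it only simplifies the argument and does not affect its validity.
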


With the Prop. \ref{EllipticT-weightSC} we have the following remark:
\begin{remark}
\label{Ellipticzeta3zeta1}
From the equation \eqref{zeta3zeta1} which gives the relation between $\meth'\zeta_3$ and $\meth\zeta_1$, 
we have 
\begin{align*}
||(a^{\frac{1}{2}}\mathcal{D})^k\zeta_3||_{L^2_{sc}(\mathcal{S}_{u,v})}\lesssim&\sum_{i\leq k-1}
\left(||(a^{\frac{1}{2}})^{i+1}\mathcal{D}^i\meth'\zeta_3||_{L_{sc}^2(\mathcal{S}_{u,v})}
+||(a^{\frac{1}{2}}\mathcal{D})^i\zeta_3||_{L_{sc}^2(\mathcal{S}_{u,v})} \right) \\
\lesssim&||(a^{\frac{1}{2}}\mathcal{D})^k\zeta_1||_{L^2_{sc}(\mathcal{S}_{u,v})}+
\phi[\phi_{0,1}]\sum_{i=0}^{k-1}||(a^{\frac{1}{2}}\mathcal{D})^i\TiPsi_{1,2}||_{L^2_{sc}(\mathcal{S}_{u,v})} \\
&+\zeta[\zeta_4]\Gamma[\rho]+\zeta[\zeta_2]\Gamma[\sigma]
+\sum_{i=0}^{k-1}||(a^{\frac{1}{2}}\mathcal{D})^i\zeta_{1,3}||_{L^2_{sc}(\mathcal{S}_{u,v})}.
\end{align*}
When $k=10$, apply the estimate in previous we have
\begin{align*}
||(a^{\frac{1}{2}}\mathcal{D})^{10}\zeta_3||_{L^2_{sc}(\mathcal{S}_{u,v})}\lesssim&
||(a^{\frac{1}{2}}\mathcal{D})^{10}\zeta_1||_{L^2_{sc}(\mathcal{S}_{u,v})}+1
\lesssim1+\underline{\bmzeta}[\zeta_2].
\end{align*}
For the norm on the light cone, we have 
\begin{align*}
||a^{\frac{k-1}{2}}\mathcal{D}^{k}
\zeta_3||_{L^2_{sc}(\mathcal{N}_{u}(0,v))}\lesssim&
||a^{\frac{k-1}{2}}\mathcal{D}^{k}
\zeta_1||_{L^2_{sc}(\mathcal{N}_{u}(0,v))}+1, \\
\frac{1}{a^{\frac{1}{2}}}||a^{\frac{k-1}{2}}\mathcal{D}^{k}
\zeta_3||_{L^2_{sc}(\mathcal{N}'_{v}(u_{\infty},u))}\lesssim&
\frac{1}{a^{\frac{1}{2}}}||a^{\frac{k-1}{2}}\mathcal{D}^{k}
\zeta_1||_{L^2_{sc}(\mathcal{N}'_{v}(u_{\infty},u))}+1.
\end{align*}
The above results show the equivalence of the norm between $\zeta_3$ and $\zeta_1$.
\end{remark}

\subsection{Elliptic estimates for 11 derivative}
\label{EllipticEstGamma}
In this section we estimate the 11-derivative of connections. 
Again we focus on terms contain matter field.
The norm of the eleventh derivative quantities of the connection coefficients is defined as 
\begin{align*}
\bm{\Gamma}_{11}(u,v)\equiv&
||a^5\mathcal{D}^{11}\sigma||_{L^2_{sc}(\mathcal{N}_u(0,v))}
+||a^5\mathcal{D}^{11}\rho||_{L^2_{sc}(\mathcal{N}_u(0,v))} \\
&+||a^5\mathcal{D}^{11}\tau||_{L^2_{sc}(\mathcal{N}'_v(u_{\infty},u))}
+||a^5\mathcal{D}^{11}\tau||_{L^2_{sc}(\mathcal{N}_u(0,v))}\\
&+\frac{a}{|u|}||a^5\mathcal{D}^{11}\pi||_{L^2_{sc}(\mathcal{N}_u(0,v))}
+||a^5\mathcal{D}^{11}\ulomega||_{L^2_{sc}(\mathcal{N}'_v(u_{\infty},u))}\\
&+\int_{u_{\infty}}^u\frac{a^{\frac{1}{2}}}{|u|^3}||a^5\mathcal{D}^{11}\lambda||_{L_{sc}^2(\mathcal{S}_{u,v})}
+\int_{u_{\infty}}^u\frac{a^2}{|u|^3}||a^5\mathcal{D}^{11}\mu||_{L_{sc}^2(\mathcal{S}_{u,v})}
\end{align*}

In the following estimates, we make use of the $L^2(\mathcal{S})$ estimates and
the bootstrap assumptions 
\begin{align*}
 \bm{\phi},\bm{\zeta}, \bm{\Psi}, \bm{\Gamma}_{11}\leq\mathcal{O},
\end{align*}
We will show that 
\[
  \bmGamma_{11}\lesssim\bm\Psi^2+\bm\Psi+\bm\phi^2+\bm\phi+\bmzeta^2+\bmzeta+1.
  \]

\begin{proposition}
  \label{11Derrhosigma}
  One has that 
\begin{align*}
||a^5\mathcal{D}^{11}\rho||_{L^2_{sc}(\mathcal{S}_{u,v})}\lesssim&
\bm{\phi}[\phi_0]+\bm{\zeta}[\zeta_0]
+\bm\Psi[\Psi_0]+\bm\Psi[\TiPsi_1]+1,\\
||a^5\mathcal{D}^{11}\rho||_{L^2_{sc}(\mathcal{N}_u(0,v))}\lesssim&
\bm{\phi}[\phi_0]+\bm{\zeta}[\zeta_0]
+\bm\Psi[\Psi_0]+\bm\Psi[\TiPsi_1]+1,
\end{align*}
and
\begin{align*}
||a^5\mathcal{D}^{11}\sigma||_{L^2_{sc}(\mathcal{N}_u(0,v))}
\lesssim\bm{\phi}[\phi_0]+\bm{\zeta}[\zeta_0]
+\bm\Psi[\Psi_0]+\bm\Psi[\TiPsi_1]+1.
\end{align*}
\end{proposition}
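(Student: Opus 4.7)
The strategy is to close an elliptic--transport system for the pair $(\rho,\sigma)$ at the eleventh derivative while bypassing any top-order Weyl curvature. The key observation is that $\mthorn\rho=\rho^{2}+\sigma\bar\sigma+2\mathrm{i}(\bar\zeta_{0}\phi_{0}-\zeta_{0}\bar\phi_{0})$ contains no curvature term at all, whereas $\mthorn\sigma=2\rho\sigma+\Psi_{0}$ would require $\mathcal{D}^{11}\Psi_{0}$, which lies beyond the bootstrap norm $\bm\Psi$ (capped at ten derivatives). Hence $\rho$ will be treated by transport, and $\sigma$ will be recovered at top order through a Codazzi-type relation.

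Concretely, I would first apply the elliptic Prop.~\ref{EllipticT-weightSC} to $\sigma$ via the Codazzi identity $\meth'\sigma-\meth\rho=-\TiPsi_{1}+(\text{connection squared})+(\text{matter})$, obtaining a pointwise-in-$(u,v)$ exchange
\begin{align*}
\|a^{5}\mathcal{D}^{11}\sigma\|_{L^{2}_{sc}(\mathcal{S}_{u,v})}
\lesssim
\|a^{5}\mathcal{D}^{11}\rho\|_{L^{2}_{sc}(\mathcal{S}_{u,v})}
+a^{5}\|\mathcal{D}^{10}\TiPsi_{1}\|_{L^{2}_{sc}(\mathcal{S}_{u,v})}
+(\text{lower}).
\end{align*}
The lower-order remainders involve at most ten derivatives of $\sigma$, of the remaining connections, and of matter; by Prop.~\ref{L2lambda} the $\sigma$-part of these brings in $\bm\Psi[\Psi_{0}]$, while the matter part is controlled by $\bm\phi[\phi_{0}]+\bm\zeta[\zeta_{0}]$.

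Next, commute $\mathcal{D}^{11}$ with $\mthorn$ through the $\rho$-equation \eqref{Thornrho}. On the right-hand side only $\sigma\mathcal{D}^{11}\bar\sigma$ (and its conjugate) and the matter pairs $\phi_{0}\mathcal{D}^{11}\bar\zeta_{0}$, $\zeta_{0}\mathcal{D}^{11}\bar\phi_{0}$, etc., are borderline; everything else falls at order $\leq 10$ and is absorbed by the $L^{2}(\mathcal{S})$ results of Sec.~\ref{L2estimate}. Since the Minkowskian data on $v=0$ give $\rho=-1/u$ without angular dependence, the initial term $\|a^{5}\mathcal{D}^{11}\rho\|_{L^{2}_{sc}(\mathcal{S}_{u,0})}$ vanishes and the transport integrates from zero data. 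Substituting the elliptic bound for $\sigma$ and invoking the scale-invariant H\"older inequality \eqref{L2holder} together with the $L^{\infty}_{sc}$ bootstrap $\|\sigma\|_{L^{\infty}_{sc}}\lesssim a^{1/2}$, the $\sigma\mathcal{D}^{11}\sigma$ contribution carries a coefficient $\lesssim a^{1/2}/|u|\lesssim a^{-1/2}$ in the region $|u|\geq a/4$, absorbable by Gr\"onwall. The matter borderline terms, after H\"older and Cauchy--Schwarz in $v'\in[0,v]$, are controlled by $\bm\phi[\phi_{0}]$ and $\bm\zeta[\zeta_{0}]$; the $a^{5}\|\mathcal{D}^{10}\TiPsi_{1}\|_{L^{2}_{sc}}$ contribution integrates in $v$ to $\bm\Psi[\TiPsi_{1}]$ thanks to $v\leq 1$ and the definition of $\bm\Psi_{10}$. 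Gr\"onwall then yields the sphere estimate, and squaring and integrating in $v\in[0,v]$ (with $v\leq 1$) gives the $\mathcal{N}_{u}(0,v)$ norm of $\mathcal{D}^{11}\rho$. Finally, inserting this $\rho$-bound back into the elliptic exchange and integrating in $v$ produces the same estimate for $\|a^{5}\mathcal{D}^{11}\sigma\|_{L^{2}_{sc}(\mathcal{N}_{u}(0,v))}$.

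\textbf{The main obstacle} is the coupled top-order exchange $\rho\leftrightarrow\sigma$: neither variable closes by transport alone at order $11$, because the $\sigma\bar\sigma$ term in $\mthorn\rho$ feeds back through the Codazzi system while the naive use of $\mthorn\sigma$ would require $\mathcal{D}^{11}\Psi_{0}$. The resolution rests on two facts: the elliptic gain through Codazzi costs only $\bm\Psi[\TiPsi_{1}]$ (one derivative below $\bm\Psi[\Psi_{0}]$), and the coefficient $\|\sigma\|_{L^{\infty}_{sc}}/|u|$ is bootstrap-small in $|u|\gtrsim a$, which makes the Gr\"onwall absorption legitimate. The bulk of the technical work is tracking the $a$-weights through the scale-invariant Codazzi exchange to verify that no logarithmic term survives in the region $\mathbb{D}$.
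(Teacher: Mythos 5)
Your proposal is correct and takes essentially the same route as the paper: transport the curvature-free $\mthorn\rho$ equation (zero angular data on $v=0$), exchange $\mathcal{D}^{11}\sigma$ for $\mathcal{D}^{11}\rho$ via the elliptic/Codazzi identity $\meth\rho-\meth'\sigma=\tau\rho-\bar\tau\sigma-\TiPsi_1$ at the cost of only $\mathcal{D}^{\le 10}\TiPsi_1$, and close the resulting coupled inequality by Gr\"onwall using the bootstrap smallness $a^{1/2}/|u|\lesssim a^{-1/2}$. The only cosmetic difference is the order in which you combine the two ingredients (you substitute the elliptic bound into the $\rho$-transport, the paper substitutes the $\rho$-transport into the elliptic bound), which yields the same self-referencing Gr\"onwall closure.
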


\begin{proof}
For $\rho$ we make use of \eqref{Thornrho}
\begin{align*}
\mthorn\,\rho &= 2\,\mathrm{i}\,\bar\zeta_0\,\phi_0
- 2\,\mathrm{i}\,\zeta_0\,\bar\phi_0
+ \rho^2
+ \sigma\,\bar\sigma,  
\end{align*}
and focus on the matter field terms $\zeta_0\phi_0$ and have
\begin{align*}
\int_0^va^5||\bar\phi_0\meth^{11}\zeta_0||_{L^2_{sc}(\mathcal{S}_{u,v'})}\leq&
\phi[\phi_0]\int_0^v\frac{a^{\frac{1}{2}}}{|u|}
||a^5\mathcal{D}^{11}\zeta_0||_{L^2_{sc}(\mathcal{S}_{u,v'})} \\
\leq&\frac{a}{|u|}\phi[\phi_0]\frac{1}{a^{\frac{1}{2}}}
\left(\int_0^v||a^5\mathcal{D}^{11}\zeta_0||^2_{L^2_{sc}(\mathcal{S}_{u,v'})} \right)^{\frac{1}{2}}\\
\leq&\frac{a}{|u|}\phi[\phi_0]\bmzeta[\zeta_0]
\lesssim\frac{a}{|u|}\bmzeta[\zeta_0],
\end{align*}
and similarly we have
\begin{align*}
\int_0^va^5||\bar\zeta_0\meth^{11}\phi_0||_{L^2_{sc}(\mathcal{S}_{u,v'})}\leq&
\zeta[\zeta_0]\int_0^v\frac{a^{\frac{1}{2}}}{|u|}
||a^5\mathcal{D}^{11}\phi_0||_{L^2_{sc}(\mathcal{S}_{u,v'})} \\
\leq&\frac{a}{|u|}\zeta[\zeta_0]\bm\phi[\phi_0]
\lesssim\frac{a}{|u|}\bm\phi[\phi_0].
\end{align*}
Then we have
\begin{align}
\label{Ellipticrhomid}
||a^5\mathcal{D}^{11}\rho||_{L^2_{sc}(\mathcal{S}_{u,v})}\lesssim
\frac{a^{\frac{1}{2}}}{|u|}\int_0^v
||a^5\mathcal{D}^{11}\sigma||_{L^2_{sc}(\mathcal{S}_{u,v})}+
\frac{a}{|u|}\bm{\phi}[\phi_0]+\frac{a}{|u|}\bmzeta[\zeta_0].
\end{align}
Then make use of $\meth\rho-\meth'\sigma=\tau\rho-\bar\tau\sigma-\TiPsi_1$, 
here $\meth'\sigma=\mathscr{D}_{\sigma}$ and the elliptic inequality we have
\begin{align*}
||a^5\mathcal{D}^{11}\sigma||_{L^2_{sc}(\mathcal{S}_{u,v})}\lesssim&\sum_{i\leq10}
(\frac{1}{a^{\frac{1}{2}}}||(a^{\frac{1}{2}}\mathcal{D})^{i}\sigma||_{L^2_{sc}(\mathcal{S}_{u,v})}
+||(a^{\frac{1}{2}}\mathcal{D})^{i}\mathscr{D}_{\sigma}||_{L^2_{sc}(\mathcal{S}_{u,v})}) \\
\lesssim&||a^5\mathcal{D}^{11}\rho||_{L^2_{sc}(\mathcal{S}_{u,v})}+
\sum_{i=0}^{10}||(a^{\frac{1}{2}}\mathcal{D})^{i}\TiPsi_1||_{L^2_{sc}(\mathcal{S}_{u,v})}
+\bm\Psi[\Psi_0]+1 \\
\lesssim&\frac{a^{\frac{1}{2}}}{|u|}\int_0^v
||a^5\mathcal{D}^{11}\sigma||_{L^2_{sc}(\mathcal{S}_{u,v})}
+\sum_{i=0}^{10}||(a^{\frac{1}{2}}\mathcal{D})^{i}\TiPsi_1||_{L^2_{sc}(\mathcal{S}_{u,v})} \\
&+\bm\Psi[\Psi_0]+\bm{\phi}[\phi_0]+\bmzeta[\zeta_0]+1 \\
\lesssim&\sum_{i=0}^{10}||(a^{\frac{1}{2}}\mathcal{D})^{i}\TiPsi_1||_{L^2_{sc}(\mathcal{S}_{u,v})} 
+\bm\Psi[\Psi_0]+\bm{\phi}[\phi_0]+\bmzeta[\zeta_0]+1.
\end{align*}
Integrating along the light cone one obtains the results.

\end{proof}

\begin{proposition}
\label{11Dertau}
\begin{align*}
||a^5\mathcal{D}^{11}\tau||_{L^2_{sc}(\mathcal{N}_u(0,v))}
\lesssim\bm\Psi[\Psi_0]+\bm\Psi[\Psi_2]
+\underline{\bmzeta}[\zeta_1]+\underline{\bmphi}[\phi_1]+1,
\end{align*}
\begin{align*}
||a^5\mathcal{D}^{11}\tau||_{L^2_{sc}(\mathcal{N}'_v(u_{\infty},u))}
\lesssim\bm\Psi[\Psi_0]+\underline{\bm\Psi}[\Psi_2]
+\underline{\bmzeta}[\zeta_1]+\underline{\bmphi}[\phi_1]+1.
\end{align*}
\end{proposition}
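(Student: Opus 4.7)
The plan is to follow the strategy of Proposition \ref{L2tau} but at the top, $11$th derivative level, handling the outgoing and ingoing cone norms from a single $\meth^{11}$-commuted transport equation. Starting from \eqref{Thorntau}
\begin{align*}
\mthorn\tau = (\tau+\bar\pi)\rho + (\bar\tau+\pi)\sigma + \TiPsi_1 + 2\mathrm{i}\bigl(2\bar\zeta_1\phi_0 - \zeta_3\bar\phi_0 - \zeta_0\bar\phi_1\bigr),
\end{align*}
I would commute $\meth^{11}$ through $\mthorn$ using the $H_k$ formula, obtaining schematically
\begin{align*}
\mthorn\meth^{11}\tau = \meth^{11}\TiPsi_1 + \sum_{i_1+\cdots+i_4=11}\meth^{i_1}\Gamma(\tau,\pi)^{i_2}\meth^{i_3}\Gamma\,\meth^{i_4}\Gamma + \sum_{i_1+\cdots+i_4=11}\meth^{i_1}\Gamma(\tau,\pi)^{i_2}\meth^{i_3}\phi_j\,\meth^{i_4}\zeta_l.
\end{align*}

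For the outgoing norm $||a^5\mathcal{D}^{11}\tau||_{L^2_{sc}(\mathcal{N}_u(0,v))}$, I would integrate the transport along $v$ from the Minkowskian data on $\mathcal{S}_{u,0}$ (where $\tau=0$) and then take $L^2$ in $v$ by Minkowski. The top-order curvature $\meth^{11}\TiPsi_1$ produces $\bm\Psi[\TiPsi_1]$; the top-order $\meth^{11}\rho$, $\meth^{11}\sigma$ pieces appearing through commutators (paired with $\tau,\pi$-size factors) are fed into Proposition \ref{11Derrhosigma}, replacing them by $\bm\phi[\phi_0] + \bmzeta[\zeta_0] + \bm\Psi[\Psi_0] + \bm\Psi[\TiPsi_1] + 1$; the critical matter products $\bar\phi_0\meth^{11}\zeta_{1,3}$ and $\bar\phi_1\meth^{11}\zeta_0$ are controlled by placing each $\phi$ factor in $L^{\infty}_{sc}(\mathcal{S})$ via Sobolev and pairing with the cone norms $\bmzeta_{11}[\zeta_{0,1,3}]$. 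Subcritical and nonlinear remainders are H\"older-small as in the proof of Proposition \ref{L2tau}. Finally $\bm\Psi[\TiPsi_1]$ is rewritten in terms of $\bm\Psi[\Psi_0] + \bm\Psi[\Psi_2]$ using $\TiPsi_1 = \Psi_1-\Phi_{01}$ together with the Bianchi Hodge pair $(\TiPsi_1,\Psi_2)$.

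For the ingoing norm $||a^5\mathcal{D}^{11}\tau||_{L^2_{sc}(\mathcal{N}'_v(u_\infty,u))}$, I would take the pointwise-in-$u'$ $L^2_{sc}(\mathcal{S}_{u',v})$ bound derived above, square it, weight by $a/|u'|^2$, and integrate in $u'\in[u_\infty,u]$. Because $a/|u'|\lesssim 1$ throughout $\mathbb{D}$, the resulting $u'$-integral is dominated by the supremum over $u'$ of the $L^2_{sc}(\mathcal{S})$ bound. The curvature contribution is now pushed onto the ingoing cone via the Bianchi pair $(\TiPsi_1,\Psi_2)$, producing $\underline{\bm\Psi}[\Psi_2]$, while the spinor top-order terms migrate analogously to $\underline{\bmzeta}[\zeta_1]$ and $\underline{\bmphi}[\phi_1]$. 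The $\bm\Psi[\Psi_0]$ contribution persists as an outgoing norm because $\Psi_0$ is not controlled on ingoing cones in the hierarchy.

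The principal technical difficulty is the top-order coupling $\bar\phi_1\meth^{11}\zeta_0$: since $\zeta_0$ has the weakest signature ($s_2=0$) in the $\zeta$-family, a naive product estimate would accumulate a borderline $a^{1/2}/|u|$ factor threatening scale-invariant closure at the top order. The remedy is to keep $\phi_1$ (signature $1/2$) in $L^{\infty}_{sc}$ and apply \eqref{L2holder}; the $|u|^{-1}$ gain there, combined with the decay of $\phi_1$, absorbs the weight coming from $\zeta_0$ and yields closure without logarithmic loss.
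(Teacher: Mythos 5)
Your approach has a genuine gap at the very first step. Commuting $\meth^{11}$ directly with $\mthorn\tau = (\tau+\bar\pi)\rho+(\bar\tau+\pi)\sigma+\TiPsi_1+\cdots$ produces the term $\meth^{11}\TiPsi_1$, and after integration in $v$ this requires control of $\|a^5\mathcal{D}^{11}\TiPsi_1\|_{L^2_{sc}(\mathcal{N}_u(0,v))}$, i.e.\ eleven angular derivatives of curvature on the lightcone. But the bootstrap norms $\bmPsi_i$ and $\underline\bmPsi_i$ are defined only for $0\leq i\leq 10$, so $\bm\Psi[\TiPsi_1]$ does \emph{not} dominate $\meth^{11}\TiPsi_1$ as you assert. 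This is exactly the obstruction that Section~\ref{Elliptic} is designed to circumvent, as the paper states explicitly: ``we sidestep top-order Weyl curvature estimates \ldots By doing so, one can avoid the estimate of 11-derivative of curvature.'' Your closing remark that $\bm\Psi[\TiPsi_1]$ can be rewritten via the Hodge pair as $\bm\Psi[\Psi_0]+\bm\Psi[\Psi_2]$ does not repair this: the Bianchi pair provides energy estimates for at most ten derivatives of the curvature components, and in any case it does not furnish an identity that eliminates an eleventh angular derivative already sitting in the source term of a transport equation.

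The correct device, which your proposal omits entirely, is the renormalised quantity
\begin{align*}
\Titau \equiv \meth'\tau - \Psi_2 .
\end{align*}
Its transport equation \eqref{thornTitau} has no $\meth\TiPsi_1$ or $\meth'\TiPsi_1$ term: when one computes $\mthorn\Titau = \mthorn\meth'\tau - \mthorn\Psi_2$ and commutes, the $\meth'\TiPsi_1$ arising from $\mthorn\tau=\TiPsi_1+\cdots$ cancels against the $\meth'\TiPsi_1$ in the Bianchi identity for $\mthorn\Psi_2$, leaving curvature only at zero commuted order (the $\Psi_0\lambda$ piece, which after applying $\meth^{10}$ is controlled by $\bmPsi[\Psi_0]$) plus quadratic and matter terms. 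Estimating $\|(a^{1/2}\meth)^{10}\Titau\|_{L^2_{sc}(\mathcal{S}_{u,v})}$ from this equation, and then invoking the elliptic estimate of Prop.~\ref{EllipticT-weightSC} with $\mathscr{D}_\tau=\meth'\tau=\Titau+\Psi_2$, recovers $\meth^{11}\tau$ from ten derivatives of $\Titau$ and ten derivatives of $\Psi_2$; this is precisely how $\bm\Psi[\Psi_2]$ (resp.\ $\underline{\bm\Psi}[\Psi_2]$) enters the outgoing (resp.\ ingoing) conclusion. Your treatment of the $\bar\phi_0\meth^{11}\zeta_1$ coupling and of the $\meth^{11}(\rho,\sigma)$ pieces through Prop.~\ref{11Derrhosigma} is in the right spirit and reappears in the paper's argument, but without the $\Titau$ renormalisation the proof does not close.
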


\begin{proof}
Define
\begin{align*}
\tilde{\tau}\equiv\meth'\tau-\Psi_2,
\end{align*}
we can obtain the equation for $\tilde\tau$ \eqref{thornTitau}:
\begin{align*}
\mthorn\tilde{\tau}=2\mathrm{i}(\bar\zeta_0\phi_0\mu-\zeta_0\bar\phi_0\mu
+\phi_0\meth'\bar\zeta_1-\bar\phi_0\meth\zeta_1)+\zeta_i\phi_j^3+\zeta_i^2
+\Psi\phi_j^2+\Gamma\zeta_i\phi_j+\Gamma\meth\Gamma+\Gamma\Psi+\Gamma^3
\end{align*}
where we list the leading terms and top-derivative terms and then obtain
\begin{align*}
||(a^{\frac{1}{2}}\meth)^{10}\tilde{\tau}||_{L^2_{sc}(\mathcal{S}_{u,v})}\lesssim&
\bmPsi[\Psi_0]+\zeta[\zeta_0]\phi[\phi_0]
+\int_0^va^5||\phi_0\meth^{11}\bar\zeta_1||_{L^2_{sc}(\mathcal{S}_{u,v'})}+1+\frac{\mathcal{O}}{a^{\frac{1}{2}}}  \\
\lesssim&\bmPsi[\Psi_0]
+\underline{\bmzeta}[\zeta_1]+\underline{\bmphi}[\phi_1]+\frac{a^{\frac{1}{2}}}{|u|}\Gamma[\phi_0]
\int_0^v||a^5\mathcal{D}^{11}\zeta_1||_{L^2_{sc}(\mathcal{S}_{u,v'})}+1+\frac{\mathcal{O}}{a^{\frac{1}{2}}} \\
\lesssim&\bmPsi[\Psi_0]
+\underline{\bmzeta}[\zeta_1]+\underline{\bmphi}[\phi_1]+1.
\end{align*}

Next we make use of the definition of $\tilde\tau$ and $\mathscr{D}_{\tau}=\meth'\tau$ and the 
elliptic inequality we obtain
\begin{align*}
||a^5\mathcal{D}^{11}\tau||_{L^2_{sc}(\mathcal{N}_u(0,v))}
\lesssim\bm\Psi[\Psi_0]+\bm\Psi[\Psi_2]
+\underline{\bmzeta}[\zeta_1]+\underline{\bmphi}[\phi_1]+1,
\end{align*}
and
\begin{align*}
||a^5\mathcal{D}^{11}\tau||_{L^2_{sc}(\mathcal{N}'_v(u_{\infty},u))}
\lesssim\bm\Psi[\Psi_0]+\underline{\bm\Psi}[\Psi_2]
+\underline{\bmzeta}[\zeta_1]+\underline{\bmphi}[\phi_1]+1.
\end{align*}

\end{proof}

\begin{proposition}
\label{11Derpi}
\begin{align*}
\frac{a}{|u|}||a^5\mathcal{D}^{11}\pi||_{L^2_{sc}(\mathcal{N}_{u}(0,v))}\lesssim&
\bm\Psi[\Psi_0]+\underline{\bm\Psi}[\Psi_2]+\bm\Psi[\Psi_2]
+\underline{\bmzeta}[\zeta_1]+\underline{\bmphi}[\phi_1]+1.
\end{align*}
\end{proposition}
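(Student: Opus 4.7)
The plan is to mirror the strategy of Proposition \ref{11Dertau} for $\tau$: sidestep a top-order Weyl-curvature estimate by introducing a renormalized divergence of $\pi$, control it via an ingoing transport equation whose right-hand side is free of top-order curvature, and then recover $\meth^{11}\pi$ using the elliptic inequality of Proposition \ref{EllipticT-weightSC}. Since the natural structure equation \eqref{Thornprimepi} for $\pi$ is ingoing (driven by $\mthorn'$), I would define
\begin{align*}
\tilde\pi \equiv \meth\pi + \Psi_2,
\end{align*}
with the sign chosen so that, after applying $\meth$ to \eqref{Thornprimepi}, the top-order term $\meth\TiPsi_3$ cancels against the Bianchi-identity source $\mthorn'\Psi_2 \sim -\meth\TiPsi_3 + \ldots$. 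The resulting equation for $\tilde\pi$ should take the schematic form
\begin{align*}
\mthorn'\tilde\pi + c\,\mu\,\tilde\pi = \Gamma\,\meth\Gamma + \Gamma\,\Psi + \phi_j\,\meth\zeta_i + \zeta_i\,\meth\phi_j + \Gamma\,\zeta_i\,\phi_j + (\text{lower order}),
\end{align*}
where the Ricci components produced by the commutator are rewritten via the Einstein field equation \eqref{EFEspinor} as $\phi_A,\zeta_{ABA'}$ bilinears.

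Next, commuting $\meth^{10}$ through this $\mthorn'$-equation with the $G_k$ formula of Section 3.6 and applying the weighted Grönwall inequality of Proposition \ref{SCweightedgronwall} with $\lambda_0,\lambda_1$ matched to the T-weight of $\tilde\pi$, I would obtain the sphere estimate
\begin{align*}
||(a^{1/2}\mathcal{D})^{10}\tilde\pi||_{L^2_{sc}(\mathcal{S}_{u,v})}
\lesssim \bm\Psi[\Psi_0] + \underline{\bm\Psi}[\Psi_2] + \underline{\bmzeta}[\zeta_1] + \underline{\bmphi}[\phi_1] + 1,
\end{align*}
where $\underline{\bm\Psi}[\Psi_2]$ enters from integrating the $\meth^{10}\TiPsi_3$ type sources along $u$ through the ingoing null norm, and the matter factors $\underline{\bmzeta}[\zeta_1]$ and $\underline{\bmphi}[\phi_1]$ absorb the borderline couplings $\phi_j\,\meth^{11}\zeta_i$ and $\zeta_i\,\meth^{11}\phi_j$ exactly as in the proof of Proposition \ref{11Dertau}. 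Writing $\mathscr{D}_\pi = \meth\pi = \tilde\pi - \Psi_2$, Proposition \ref{EllipticT-weightSC} then promotes the $10$-th order control of $\tilde\pi$ to an $11$-th order control of $\pi$ pointwise on $\mathcal{S}_{u,v}$; integrating the resulting bound over $v\in[0,v]$ against the $\tfrac{a}{|u|}$ weight yields the desired $L^2_{sc}(\mathcal{N}_u(0,v))$ norm, the $\meth^{10}\Psi_2$ integral producing the extra $\bm\Psi[\Psi_2]$ term in the conclusion.

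The main obstacle is the derivation of the clean transport equation for $\tilde\pi$ itself. The combination of the Bianchi identity for $\mthorn'\Psi_2$ with the commutator $[\mthorn',\meth]\pi$ must eliminate every top-order occurrence of $\meth\TiPsi_3$ while leaving only Ricci-type couplings that the Einstein field equation \eqref{EFEspinor} converts into manageable $\phi_A,\zeta_{ABA'}$ bilinears; getting the coefficients and signs right here is the key technical input. A subsidiary point, routine but essential, is verifying the $s_2$-weight bookkeeping so that the weighted Grönwall produces exactly the $\tfrac{a}{|u|}$ prefactor recorded in the definition of $\bmGamma_{11}$, rather than a weaker power; beyond these, the treatment of the matter couplings parallels the $\tau$ case and requires no new idea.
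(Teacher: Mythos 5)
Your approach is essentially the same as the paper's: define $\tilde\pi=\meth\pi+\Psi_2$ (same sign) so that $\meth\TiPsi_3$ cancels against the Bianchi source in $\mthorn'\Psi_2$, bound $\meth^{10}\tilde\pi$ on spheres via the weighted ingoing transport estimate, then recover $\meth^{11}\pi$ from the elliptic inequality of Proposition \ref{EllipticT-weightSC} and integrate along $\mathcal{N}_u(0,v)$, with the $\Psi_2$ part of $\tilde\pi$ supplying the extra $\bm\Psi[\Psi_2]$. One point of provenance you get backwards: the top-order matter coupling surviving in \eqref{thornprimeTipi} is $\phi_1\,\meth^{11}\zeta_2$, which the paper bounds by $\tfrac{a^2}{|u|^{5/2}}\,\phi[\phi_1]\,\underline{\bmzeta}[\zeta_2]\lesssim1$ and so is negligible; the terms $\underline{\bmzeta}[\zeta_1]$ and $\underline{\bmphi}[\phi_1]$ in the conclusion actually enter through the vacuum contribution, which yields $\bmGamma_{11}[\tau]+\underline{\bm\Psi}[\Psi_2]+1$ and is then resolved via Proposition \ref{11Dertau} for $\tau$, rather than directly from the spinor source in $\pi$'s own transport equation.
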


\begin{proof}
Define
\begin{align*}
\tilde\pi=\meth\pi+\Psi_2
\end{align*}
we can obtain its equation \eqref{thornprimeTipi}:
\begin{align*}
\mthorn'\tilde\pi=2\mathrm{i}(\bar\phi_1\meth\zeta_2-\phi_1\meth'\bar\zeta_2)
+\zeta_i\phi_j^3+\zeta_i^2
+\Psi\phi_j^2+\Gamma\zeta_i\phi_j+\Gamma\meth\Gamma+\Gamma\Psi+\Gamma^3
\end{align*}
and obtain
\begin{align*}
\frac{a}{|u|}||(a^{\frac{1}{2}}\meth)^{10}\tilde{\pi}||_{L^2_{sc}(\mathcal{S}_{u,v})}\lesssim&
\int_{u_{\infty}}^u\frac{a^2}{|u'|^3}||\phi_1a^5\mathcal{D}^{11}\zeta_2||_{L^2_{sc}(\mathcal{S}_{u',v})}+
\frac{\mathcal{O}^2}{a^{\frac{1}{2}}}+Vac.
\end{align*}

For the integral we have
\begin{align*}
\int_{u_{\infty}}^u\frac{a^2}{|u'|^3}||a^5\bar\phi_1\meth^{11}\zeta_2||_{L^2_{sc}(\mathcal{S}_{u',v})}\lesssim&\phi[\phi_1]
\int_{u_{\infty}}^u\frac{a^{\frac{5}{2}}}{|u'|^4}
||a^5\meth^{11}\zeta_2||_{L^2_{sc}(\mathcal{S}_{u',v})} \\
\leq&\phi[\phi_1]\left(\int_{u_{\infty}}^u\frac{a^{4}}{|u'|^6}\right)^{\frac{1}{2}}
\left(\int_{u_{\infty}}^u\frac{a}{|u'|^2}||a^{5}\mathcal{D}^{11}\zeta_2||^2_{L^2_{sc}(\mathcal{S}_{u',v})}\right)^{\frac{1}{2}}\\
\leq&\frac{a^{2}}{|u|^{\frac{5}{2}}}\phi[\phi_1]\underline{\bmzeta}[\zeta_2]\lesssim1
\end{align*}
and the Vacuum terms give control
$\bmGamma_{11}[\tau]+\underline{\bmPsi}[\Psi_2]+1$. 
Make use of the control for $\tau$ we obtain
\begin{align*}
\frac{a}{|u|}||(a^{\frac{1}{2}}\meth)^{10}\tilde{\pi}||_{L^2_{sc}(\mathcal{S}_{u,v})}\lesssim&
\bm\Psi[\Psi_0]+\underline{\bm\Psi}[\Psi_2]
+\underline{\bmzeta}[\zeta_1]+\underline{\bmphi}[\phi_1]+1,
\end{align*}
which leads to 
\begin{align*}
\frac{a}{|u|}||a^5\mathcal{D}^{11}\pi||_{L^2_{sc}(\mathcal{N}_{u}(0,v))}\lesssim&
\bm\Psi[\Psi_0]+\underline{\bm\Psi}[\Psi_2]+\bm\Psi[\Psi_2]
+\underline{\bmzeta}[\zeta_1]+\underline{\bmphi}[\phi_1]+1.
\end{align*}

\end{proof}

\begin{proposition}
\label{11Derulomega}
\begin{align*}
||a^5\mathcal{D}^{11}\ulomega||_{L^2_{sc}(\mathcal{N}'_v(u_{\infty},u))}\lesssim&
\bmphi[\phi_0]+\underline{\bmzeta}[\zeta_1]+\bmzeta[\zeta_4]
+\underline{\bm\Psi}[\TiPsi_3]+1, \\
||a^5\mathcal{D}^{11}\ulomega||_{L^2_{sc}(\mathcal{N}_u(0,v))}\lesssim&
\bmphi[\phi_0]+\underline{\bmzeta}[\zeta_1]+\bmzeta[\zeta_4]
+\bm\Psi[\TiPsi_3]+1. 
\end{align*}
\end{proposition}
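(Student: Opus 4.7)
The plan is to mirror the strategy of Props.~\ref{11Dertau}--\ref{11Derpi}: renormalize a low-order derivative of $\ulomega$ to absorb the top-order Weyl components appearing in its structure equation, estimate the renormalized quantity through its transport equation, and invoke the elliptic inequality to recover the eleventh derivative of $\ulomega$. The essential difference here is that $\ulomega$ is a real scalar of T-weight zero, so Prop.~\ref{EllipticT-weightSC} does not apply directly; the control must instead be routed through the pure-scalar Prop.~\ref{EllipticPureScalar}, reducing $\|(a^{1/2}\mathcal{D})^{11}\ulomega\|_{L^{2}_{sc}(\mathcal{S})}$ to $\|a^{11/2}\mathcal{D}^{9}\bmDelta\ulomega\|_{L^{2}_{sc}(\mathcal{S})}$, up to lower-derivative pieces already estimated in Prop.~\ref{L2ulomega}.

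Because $\mthorn\ulomega$ contains $\Psi_{2}+\bar\Psi_{2}$ symmetrically, applying $\bmDelta=2\meth\meth'$ to \eqref{Thornulomega} produces $2\bmDelta(\Psi_{2}+\bar\Psi_{2})$ as the top-order curvature. The Bianchi identities for $\TiPsi_{3}$ and $\overline{\TiPsi_{3}}$ (schematically $\mthorn\TiPsi_{3}=\meth'\Psi_{2}+\Gamma\Psi+\phi\meth\zeta+\text{lower}$ and its conjugate) allow these top-order contributions to be rewritten, modulo commutators, as $\mthorn$-derivatives of $\meth\TiPsi_{3}$ and $\meth'\overline{\TiPsi_{3}}$ respectively, motivating the renormalization
\[
\widetilde{\bmDelta\ulomega}\equiv\bmDelta\ulomega-2\meth\TiPsi_{3}-2\meth'\overline{\TiPsi_{3}}.
\]
Its $\mthorn$-transport equation then contains no top-order Weyl components, only connection products, matter sources of schematic form $\phi_{j}\meth\zeta_{l}$, $\Gamma\zeta\phi$, $\zeta^{2}$, $\Psi\phi^{2}$, and $\Gamma^{3}$.

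Applying $\meth^{9}$ to this equation, commuting, and integrating along $v$ from $v=0$, I treat the borderline nonlinearities via the scale-invariant H\"older inequality \eqref{L2holder} exactly as in Props.~\ref{11Dertau}--\ref{11Derpi}. The matter contributions $\Tizeta_{4}\mathcal{D}^{10}\phi_{0}$, $\phi_{0}\mathcal{D}^{10}\Tizeta_{4}$ and $\phi_{1}\mathcal{D}^{10}\zeta_{1}$ produce, respectively, the $\bm\phi[\phi_{0}]$, $\bm\zeta[\zeta_{4}]$ and $\underline{\bm\zeta}[\zeta_{1}]$ norms on the right-hand side. Applying Prop.~\ref{EllipticPureScalar} and using $\bmDelta\ulomega=\widetilde{\bmDelta\ulomega}+2\meth\TiPsi_{3}+2\meth'\overline{\TiPsi_{3}}$ then gives the pointwise bound
\[
\|(a^{1/2}\mathcal{D})^{11}\ulomega\|_{L^{2}_{sc}(\mathcal{S}_{u,v})}\lesssim \bm\phi[\phi_{0}]+\underline{\bm\zeta}[\zeta_{1}]+\bm\zeta[\zeta_{4}]+\|(a^{1/2}\mathcal{D})^{10}\TiPsi_{3}\|_{L^{2}_{sc}(\mathcal{S}_{u,v})}+1.
\]
Integrating this bound in $v$ along $\mathcal{N}_{u}(0,v)$ produces the outgoing estimate with the $\bm\Psi[\TiPsi_{3}]$ contribution; integrating in $u$ with weight $a/|u'|^{2}$ along $\mathcal{N}'_{v}(u_{\infty},u)$ produces the ingoing estimate with $\underline{\bm\Psi}[\TiPsi_{3}]$.

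The main obstacle is the reality of $\ulomega$: both $\Psi_{2}$ and its complex conjugate appear symmetrically in its structure equation, forcing the renormalization to involve both $\TiPsi_{3}$ and $\overline{\TiPsi_{3}}$, which carry opposite T-weights. The quantity $\widetilde{\bmDelta\ulomega}$ is accordingly a sum of two T-weight-consistent pieces that must be book-kept separately, though each is handled by the same transport argument. A secondary point is that the reduction from the T-weight-zero elliptic problem to the scalar Laplacian $\bmDelta\ulomega$ introduces a second-order Hodge operator, so the renormalized transport is closed at the ninth derivative order; this matches the available top-order control and, crucially, avoids generating an uncontrollable $\bm\Psi[\Psi_{2}]$ term that would otherwise appear on the right-hand side.
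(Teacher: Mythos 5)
The proposal is correct in outline but takes a genuinely different route from the paper. The paper's proof does \emph{not} work with the second--order scalar Laplacian $\bmDelta\ulomega$ at all. Instead it introduces the auxiliary potential $\ulomega^{\dag}$ of Eq.~\eqref{EQulomegastar}, which is designed precisely so that $\mthorn(\ulomega+\mathrm{i}\ulomega^{\dag})$ contains only a \emph{single} chirality of $\Psi_2$ (namely $2\bar\Psi_2$, rather than the real combination $\Psi_2+\bar\Psi_2$). One angular derivative then suffices: the first--order renormalization $\tilde\ulomega=\meth(\ulomega+\mathrm{i}\ulomega^{\dag})-2\bar\TiPsi_3$ of Eq.~\eqref{DefinitionTiulomega} has a favorable $\mthorn$--transport because $\meth\bar\Psi_2=\mthorn\bar\TiPsi_3+\text{l.o.t.}$ is the correct Bianchi chirality, and closure happens at the tenth derivative of $\tilde\ulomega$. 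Your approach bypasses $\ulomega^{\dag}$ entirely and works directly with $\bmDelta\ulomega$ plus a second--order renormalization, closing at the ninth derivative and recovering $\mathcal{D}^{11}\ulomega$ through Prop.~\ref{EllipticPureScalar}. Both deal with the central difficulty --- that $\meth\Psi_2$ lands on $\mthorn'\TiPsi_1$ (the wrong chirality for an outgoing transport) --- but in different ways: the paper complexifies $\ulomega$ to cancel the problematic chirality altogether, while you use the freedom to commute $\meth$ past $\meth'$ modulo the Gauss curvature so that both $\meth\meth'\Psi_2$ and $\meth\meth'\bar\Psi_2$ can be steered onto $\mthorn\meth\TiPsi_3$ and $\mthorn\meth'\bar\TiPsi_3$ respectively. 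Your route is arguably more self-contained (no new auxiliary field), at the cost of tracking second-order commutators $[\mthorn,\bmDelta]$ and the Gauss-curvature swap.

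Two bookkeeping inaccuracies worth flagging. First, applying $\bmDelta$ to Eq.~\eqref{Thornulomega} gives $\bmDelta(\Psi_2+\bar\Psi_2)$, not $2\bmDelta(\Psi_2+\bar\Psi_2)$. Second, your attribution of $\bm\phi[\phi_0]$, $\bm\zeta[\zeta_4]$ and $\underline{\bm\zeta}[\zeta_1]$ to the terms $\Tizeta_4\mathcal{D}^{10}\phi_0$, $\phi_0\mathcal{D}^{10}\Tizeta_4$, $\phi_1\mathcal{D}^{10}\zeta_1$ is off in two ways: the source in $\mthorn\ulomega$ involves $\zeta_4,\zeta_1$ (not the renormalized $\Tizeta_4$), and the top-order pieces after $\meth^9\bmDelta$ are $\phi_0\meth^{11}\zeta_4$, $\zeta_4\meth^{11}\phi_0$, $\phi_1\meth^{11}\zeta_1$ (eleven angular derivatives, not ten). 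In the paper's accounting $\bm\zeta[\zeta_4]$ comes from $\bar\phi_0\meth\zeta_4$ while $\bm\phi[\phi_0]$ and $\underline{\bm\zeta}[\zeta_1]$ come from $\zeta_0\bar\phi_1\mu$ via the next-to-top $L^2(\mathcal{S})$ estimates of Props.~\ref{L2phiA} and~\ref{L2zeta}; none of these details affect the final statement, which you reach.

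Finally, the worry you raise about $\TiPsi_3$ and $\bar\TiPsi_3$ carrying opposite T-weights is overstated in your framework: after the $\meth$ and $\meth'$, both $\meth\TiPsi_3$ and $\meth'\bar\TiPsi_3$ have T-weight zero, matching $\bmDelta\ulomega$, so $\widetilde{\bmDelta\ulomega}$ is a single complex T-weight-zero scalar rather than two pieces requiring separate bookkeeping. That worry is exactly what motivates the paper's auxiliary $\ulomega^{\dag}$, which trades the issue for a first-order, definite-T-weight renormalization.
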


\begin{proof}
First we construct an auxiliary function $\ulomega^{\dag}$ with zero
T-weight through the relation 
\begin{align}
\label{EQulomegastar}
\mthorn\ulomega^{\dag}=-i(\bar\Psi_2-\Psi_2),
\end{align}
with trivial initial data on $\mathcal{N}'_{\star}$. Here $\ulomega^{\dag}$ is real.
Then define 
\begin{align}
\tilde\ulomega\equiv\meth(\ulomega+i\ulomega^{\dag})-2\bar\TiPsi_3,
\label{DefinitionTiulomega}
\end{align}
and we can obtain its outgoing equation \eqref{thornTiulomega}:
\begin{align*}
\mthorn\tilde\ulomega=&2\mathrm{i}(\phi_1\meth\bar\zeta_1-3\bar\phi_1\meth\zeta_1
+3\phi_0\meth\bar\zeta_4-\bar\phi_0\meth\zeta_4)+
4\mathrm{i}\zeta_0\bar\phi_1\mu \\
&+\zeta_i\phi_j^3+\zeta_i^2+\Gamma\zeta_i\phi_j+\Gamma\meth\Gamma
+\Gamma\Psi+\Gamma^3+\Gamma\tilde\ulomega
\end{align*}
where we have listed the large terms which can be controlled in the following
\begin{align*}
\int_0^va^5||\meth^{10}(\bar\phi_0\meth\zeta_4)||_{L^2_{sc}(\mathcal{S}_{u,v'})}\leq&
\phi[\phi_0]\int_0^v\frac{a^{\frac{1}{2}}}{|u|}
||a^5\mathcal{D}^{11}\zeta_4||_{L^2_{sc}(\mathcal{S}_{u,v'})} 
+\frac{1}{a^{\frac{1}{2}}}\phi[\phi_0]\zeta[\zeta_4]\\
\lesssim&\bmzeta[\zeta_4]+1,
\end{align*}

\begin{align*}
\int_0^va^5||\meth^{10}(\bar\phi_1\meth\zeta_1)||_{L^2_{sc}(\mathcal{S}_{u,v'})}\leq&
\phi[\phi_1]\int_0^v\frac{a^{\frac{1}{2}}}{|u|}
||a^5\mathcal{D}^{11}\zeta_1||_{L^2_{sc}(\mathcal{S}_{u,v'})}
+\frac{1}{a}\phi[\phi_1]\zeta[\zeta_1] \\
\leq&\frac{a^{\frac{1}{2}}}{|u|}\Gamma[\phi_1]\bmzeta[\zeta_1]\lesssim1,
\end{align*}

\begin{align*}
\int_0^va^5||\meth^{10}(\zeta_0\phi_1\mu)||_{L^2_{sc}(\mathcal{S}_{u,v'})}\leq&
\zeta[\zeta_0]\phi[\phi_1]\lesssim
\bmphi[\phi_0]+\underline{\bmzeta}[\zeta_1]+1.
\end{align*}

Then with the vacuum results one can obtain
\begin{align*}
||(a^{\frac{1}{2}}\meth)^{10}\tilde\ulomega||_{L^2_{sc}(\mathcal{S}_{u,v})}\lesssim&
\bmphi[\phi_0]+\underline{\bmzeta}[\zeta_1]+\bmzeta[\zeta_4]+1.
\end{align*}
Follow the standard procedure one has
\begin{align*}
||a^5\mathcal{D}^{11}\ulomega||_{L^2_{sc}(\mathcal{N}'_v(u_{\infty},u))}\lesssim&
\bmphi[\phi_0]+\underline{\bmzeta}[\zeta_1]+\bmzeta[\zeta_4]
+\underline{\bm\Psi}[\TiPsi_3]+1, \\
||a^5\mathcal{D}^{11}\ulomega||_{L^2_{sc}(\mathcal{N}_u(0,v))}\lesssim&
\bmphi[\phi_0]+\underline{\bmzeta}[\zeta_1]+\bmzeta[\zeta_4]
+\bm\Psi[\TiPsi_3]+1. 
\end{align*}

\end{proof}

\begin{proposition}
\label{11Dermulambda}
We have that 
\begin{align*}
\int_{u_{\infty}}^u\frac{a^2}{|u'|^3}||a^5\mathcal{D}^{11}(\Timu,\mu)||_{L^2_{sc}(\mathcal{S}_{u,v})}
\lesssim\bmphi[\phi_0]+\underline{\bmphi}[\phi_1]
+\underline{\bmzeta}[\zeta_1]+\bmzeta[\zeta_4]
+\underline{\bmzeta}[\zeta_5]+\underline{\bm\Psi}[\TiPsi_3]+1,
\end{align*}
\begin{align*}
\int_{u_{\infty}}^u\frac{a^{\frac{3}{2}}}{|u'|^3}||a^5\mathcal{D}^{11}\lambda||_{L^2_{sc}(\mathcal{S}_{u,v})}
\lesssim1.
\end{align*}
Moreover we have
\begin{align*}
\int_{u_{\infty}}^u\frac{a^3}{|u'|^4}||a^5\mathcal{D}^{11}\Timu||^2_{L^2_{sc}(\mathcal{S}_{u,v})}
\lesssim(\bmphi[\phi_0]+\underline{\bmphi}[\phi_1]
+\underline{\bmzeta}[\zeta_1]+\bmzeta[\zeta_4]
+\underline{\bmzeta}[\zeta_5]+\underline{\bm\Psi}[\TiPsi_3]+1)^2,
\end{align*}
\begin{align*}
\int_{u_{\infty}}^u\frac{a^2}{|u'|^4}||a^5\mathcal{D}^{11}\lambda||^2_{L^2_{sc}(\mathcal{S}_{u,v})}\lesssim&1.
\end{align*}

\end{proposition}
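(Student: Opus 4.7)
The strategy mirrors the treatment of $\tilde\pi$ and $\tilde\ulomega$ in Props.~\ref{11Derpi}--\ref{11Derulomega}: I would introduce auxiliary T-weight quantities whose transport equations along $\mthorn'$ do not contain a top-order curvature derivative, and then convert $\mathcal{D}^{11}$-estimates on $\Timu,\mu,\lambda$ into $\mathcal{D}^{10}$-estimates on those renormalized objects via the elliptic inequality of Prop.~\ref{EllipticT-weightSC}. Concretely, the candidates are
\begin{align*}
\tilde{\tilde\mu}\equiv\meth\Timu+\bar{\TiPsi}_3,\qquad
\tilde\lambda\equiv\meth'\lambda-\TiPsi_3,
\end{align*}
up to lower-order Ricci/matter corrections absorbable by \eqref{EFEspinor}, the signs being fixed so that the Bianchi identity for $\TiPsi_3$ cancels the would-be top-order $\meth\TiPsi_3$ produced when one commutes $\mthorn'$ past an angular derivative.

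Starting from \eqref{Thornprimetildemu} and the analogous structure equation for $\mthorn'\lambda$, together with the relevant Bianchi identity for $\TiPsi_3$ from~\ref{BianchiIdentity}, I would derive the $\mthorn'$ evolution of $\tilde{\tilde\mu}$ and $\tilde\lambda$, which takes the schematic form
\begin{align*}
\mthorn'\tilde{\tilde\mu}+\mu\,\tilde{\tilde\mu}
=\Gamma\cdot\Gamma+\Psi\cdot\Gamma+\phi\,\meth\Tizeta_5+\phi\,\meth\zeta_{2,4}+\text{lower order},
\end{align*}
and similarly for $\tilde\lambda$. Commuting with $\meth^{10}$ produces the familiar $(k+1)\mu$ term on the left and, on the right, connection products $\meth^i\Gamma\,\meth^j\Gamma$ handled exactly as in Props.~\ref{11Derpi}--\ref{11Derulomega}, plus matter contributions of the form $\meth^i\phi_{0,1}\cdot\meth^{11}\{\Tizeta_5,\zeta_4,\zeta_2\}$. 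The latter are estimated by Cauchy--Schwarz against the integral weight $\frac{a}{|u'|^2}$ that defines the norm on $\mathcal{N}'_v$, generating precisely the contributions $\underline{\bmzeta}[\Tizeta_5]$, $\bmzeta[\zeta_4]$, $\underline{\bmphi}[\phi_1]$, $\bmphi[\phi_0]$ in the statement; the H\"older inequality \eqref{L2holder} reduces every other nonlinear product to a $|u|^{-1}$-small quantity absorbed by the trailing $+1$.

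Applying the weighted Gr\"onwall of Prop.~\ref{SCweightedgronwall} with weight parameter tuned so that the $(k+1)\mu$ factor yields the gains $\frac{a^2}{|u'|^3}$ (resp.\ $\frac{a^{3/2}}{|u'|^3}$), I obtain the $L^1$-in-$u$ bounds for $\|a^5\mathcal{D}^{10}\tilde{\tilde\mu}\|_{L^2_{sc}(\mathcal{S})}$ and $\|a^5\mathcal{D}^{10}\tilde\lambda\|_{L^2_{sc}(\mathcal{S})}$. The elliptic inequality of Prop.~\ref{EllipticT-weightSC}, read with $\mathscr{D}_{\Timu}=\tilde{\tilde\mu}-\bar{\TiPsi}_3$ and $\mathscr{D}_{\lambda}=\tilde\lambda+\TiPsi_3$, then produces the stated $\mathcal{D}^{11}$ bounds on $\Timu$ and $\lambda$, the lower-order tails being supplied by Props.~\ref{L2mu}--\ref{L2lambda}; the bound on $\mu$ follows from that on $\Timu$ since $\mu=\Timu+1/u$. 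Squaring before integration in $u$ and invoking Minkowski's inequality gives the two squared-integral variants at the end.

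The main obstacle will be the coupling between $\tilde{\tilde\mu}$ and $\tilde\lambda$ through the Bianchi identity $\mthorn'\TiPsi_3\sim\meth\Timu+\meth'\lambda+\ldots$: neither renormalized quantity satisfies a transport equation whose right-hand side is decoupled from the other at top order, so one must close a $2\times 2$ linear system in $(\tilde{\tilde\mu},\tilde\lambda)$ rather than treating each quantity individually. A secondary technical point is the scale-invariant bookkeeping of the matter tails: as in the introduction, the renormalizations $\zeta_4\mapsto\Tizeta_4$ and $\zeta_5\mapsto\Tizeta_5$ are essential at this derivative level to avoid re-introducing borderline logarithmic losses.
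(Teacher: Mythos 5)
Your proposal takes a genuinely different route from the paper's for $\Timu$, and the difference matters. The paper does \emph{not} introduce any auxiliary quantity like $\tilde{\tilde\mu}$ and uses no elliptic estimate for $\mu$ at all. The key observation, which your proposal misses, is that the $\mthorn'$-transport equation for $\tilde\mu$,
\begin{align*}
\mthorn'\tilde\mu+2\mu\tilde\mu=\tilde\mu^2-\ulomega\mu-\lambda\bar\lambda
+2\mathrm{i}(\zeta_5\bar\phi_1-\bar\zeta_5\phi_1),
\end{align*}
contains \emph{no Weyl curvature term} on its right-hand side — this is the Raychaudhuri/trace equation, unlike the equations for $\tau$, $\pi$, $\ulomega$. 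Therefore no renormalization to avoid $\meth^{11}\Psi$ is required. After commuting $\meth^{11}$ with $\mthorn'$, the only top-order angular derivative that appears is $\meth^{11}\ulomega$ (from the $\ulomega\mu$ and $\ulomega$-commutator terms), and that is already controlled on $\mathcal{N}'_v$ by Prop.~\ref{11Derulomega}; the matter tails $\zeta_5\bar\phi_1$ produce $\underline{\bmzeta}[\zeta_5]$ and $\underline{\bmphi}[\phi_1]$ directly. One then applies Prop.~\ref{SCweightedgronwall} at the sphere level, integrates in $u$, and passes from $\Timu$ to $\mu$ trivially. The $\lambda$ statement follows by the same direct strategy (no coupled system), and the squared-integral variants come from squaring the single-integral bounds and using the bootstrap.

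Two concrete defects in the proposal. First, your renormalizations are built from the wrong operators: $\Timu$ has T-weight zero, so the elliptic inequality would involve $\bmDelta\Timu=2\meth\meth'\Timu$ (Prop.~\ref{EllipticPureScalar}), not $\meth\Timu$; and for $\lambda$ the divergence is $\mathscr{D}_\lambda=\meth\lambda$, not $\meth'\lambda$, and by \eqref{mulambda} it pairs with $\meth'\mu$ and $\tilde\Psi_3$ with a sign opposite to yours. As written, commuting $\mthorn'$ through your $\tilde{\tilde\mu}$ and $\tilde\lambda$ will \emph{not} cancel the curvature terms you intend. Second, your anticipated need to close a coupled $2\times2$ system for $(\tilde{\tilde\mu},\tilde\lambda)$ is a phantom obstacle: because the $\Timu$ estimate is by direct transport it is closed in isolation, and $\lambda$ can then be handled afterwards — the coupling you worry about does not arise in the paper's argument.
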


\begin{proof}
We start with $\mu$ by 
\begin{align*}
\mthorn'\tilde\mu+2\mu\tilde\mu=\tilde\mu^2-\ulomega\mu-\lambda\bar\lambda
+2\mathrm{i}(\zeta_5\bar\phi_1-\bar\zeta_5\phi_1).
\end{align*}
One needs to estimate the following:
\begin{align*}
\int_{u_{\infty}}^u\frac{a^2}{|u'|^3}||a^5\bar\phi_1\meth^{11}\zeta_5||_{L^2_{sc}(\mathcal{S}_{u',v})}\lesssim&\phi[\phi_1]
\int_{u_{\infty}}^u\frac{a^{\frac{5}{2}}}{|u'|^4}||a^5\meth^{11}\zeta_5||_{L^2_{sc}(\mathcal{S}_{u',v})}\\
\leq&\phi[\phi_1]\left(\int_{u_{\infty}}^u\frac{a^{3}}{|u'|^4}\right)^{\frac{1}{2}}
\left(\int_{u_{\infty}}^u\frac{a}{|u'|^4}||(a^{\frac{1}{2}}\mathcal{D})^{11}\zeta_5||^2_{L^2_{sc}(\mathcal{S}_{u',v})}\right)^{\frac{1}{2}}\\
\leq&\frac{a^{\frac{3}{2}}}{|u|^{\frac{3}{2}}}\phi[\phi_1]\underline{\bmzeta}[\zeta_5]
\lesssim\underline{\bmzeta}[\zeta_5],
\end{align*}
\begin{align*}
\int_{u_{\infty}}^u\frac{a^2}{|u'|^3}||a^5\bar\zeta_5\meth^{11}\phi_1||_{L^2_{sc}(\mathcal{S}_{u',v})}\lesssim&\frac{1}{a^{\frac{1}{2}}}\zeta[\zeta_5]
\int_{u_{\infty}}^u\frac{a^2}{|u'|^3}||a^5\meth^{11}\phi_1||_{L^2_{sc}(\mathcal{S}_{u',v})}\\
\leq&\frac{1}{a^{\frac{1}{2}}}\zeta[\zeta_5]\left(\int_{u_{\infty}}^u\frac{a^{3}}{|u'|^4}\right)^{\frac{1}{2}}
\left(\int_{u_{\infty}}^u\frac{a}{|u'|^2}||a^{5}\mathcal{D}^{11}\phi_1||^2_{L^2_{sc}(\mathcal{S}_{u',v})}\right)^{\frac{1}{2}}\\
\leq&\frac{a^{{\frac{3}{2}}}}{|u|^{\frac{3}{2}}}\zeta[\zeta_5]\underline{\bmphi}[\phi_1]
\lesssim\underline{\bmphi}[\phi_1].
\end{align*}
Then we have
\begin{align}
\label{11DerivmuMid01}
\frac{a}{|u|}||a^5\mathcal{D}^{11}\Timu||_{L^2_{sc}(\mathcal{S}_{u,v})}\lesssim&
||a^5\mathcal{D}^{11}\ulomega||_{L^2_{sc}(\mathcal{N}'_v(u_{\infty},u))}+\underline{\bmzeta}[\zeta_5]+\underline{\bmphi}[\phi_1]+1.
\end{align}
Applying the estimate for $\ulomega$ and integrating along the ingoing light cone, we have
\begin{align*}
\int_{u_{\infty}}^u\frac{a^2}{|u'|^3}||a^5\mathcal{D}^{11}(\Timu,\mu)||_{L^2_{sc}(\mathcal{S}_{u,v})}
\lesssim\bmphi[\phi_0]+\underline{\bmphi}[\phi_1]
+\underline{\bmzeta}[\zeta_1]+\bmzeta[\zeta_4]
+\underline{\bmzeta}[\zeta_5]+\underline{\bm\Psi}[\TiPsi_3]+1.
\end{align*}
We also obtain
\begin{align*}
\int_{u_{\infty}}^u\frac{a^{\frac{3}{2}}}{|u'|^3}||a^5\mathcal{D}^{11}\lambda||_{L^2_{sc}(\mathcal{S}_{u,v})}
\lesssim1.
\end{align*}
Moreover we have
\begin{align*}
\int_{u_{\infty}}^u\frac{a^3}{|u'|^4}||a^5\mathcal{D}^{11}\Timu||^2_{L^2_{sc}(\mathcal{S}_{u,v})}
\lesssim(\bmphi[\phi_0]+\underline{\bmphi}[\phi_1]
+\underline{\bmzeta}[\zeta_1]+\bmzeta[\zeta_4]
+\underline{\bmzeta}[\zeta_5]+\underline{\bm\Psi}[\TiPsi_3]+1)^2,
\end{align*}
and
\begin{align*}
\int_{u_{\infty}}^u\frac{a^2}{|u'|^4}||a^5\mathcal{D}^{11}\lambda||^2_{L^2_{sc}(\mathcal{S}_{u,v})}\lesssim&1.
\end{align*}
\end{proof}

\subsection{Summary of elliptic}

Make use of the bootstrap assumption
\begin{align*}
\bm\Gamma,\bmphi,\bmzeta,\bm\Psi\leq\mathcal{O}, \quad
\bm\Gamma_{11}\leq\mathcal{O},
\end{align*}
we obtain
\begin{align*}
||a^5\mathcal{D}^{11}\rho||_{L^2_{sc}(\mathcal{S}_{u,v})}\lesssim&
\bm{\phi}[\phi_0]+\bm{\phi}[\zeta_0]
+\bm\Psi[\Psi_0]+\bm\Psi[\TiPsi_1]+1,\\
||a^5\mathcal{D}^{11}\rho||_{L^2_{sc}(\mathcal{N}_u(0,v))}\lesssim&
\bm{\phi}[\phi_0]+\bm{\phi}[\zeta_0]
+\bm\Psi[\Psi_0]+\bm\Psi[\TiPsi_1]+1,\\
||a^5\mathcal{D}^{11}\sigma||_{L^2_{sc}(\mathcal{N}_u(0,v))}
\lesssim&\bm{\phi}[\phi_0]+\bm{\phi}[\zeta_0]
+\bm\Psi[\Psi_0]+\bm\Psi[\TiPsi_1]+1, \\
||a^5\mathcal{D}^{11}\tau||_{L^2_{sc}(\mathcal{N}_u(0,v))}
\lesssim&\bm\Psi[\Psi_0]+\bm\Psi[\Psi_2]
+\underline{\bmzeta}[\zeta_1]+\underline{\bmphi}[\phi_1]+1,\\
||a^5\mathcal{D}^{11}\tau||_{L^2_{sc}(\mathcal{N}'_v(u_{\infty},u))}
\lesssim&\bm\Psi[\Psi_0]+\underline{\bm\Psi}[\Psi_2]
+\underline{\bmzeta}[\zeta_1]+\underline{\bmphi}[\phi_1]+1, \\
\frac{a}{|u|}||a^5\mathcal{D}^{11}\pi||_{L^2_{sc}(\mathcal{N}_{u}(0,v))}\lesssim&
\bm\Psi[\Psi_0]+\underline{\bm\Psi}[\Psi_2]+\bm\Psi[\Psi_2]
+\underline{\bmzeta}[\zeta_1]+\underline{\bmphi}[\phi_1]+1, \\
||a^5\mathcal{D}^{11}\ulomega||_{L^2_{sc}(\mathcal{N}'_v(u_{\infty},u))}\lesssim&
\bmphi[\phi_0]+\underline{\bmzeta}[\zeta_1]+\bmzeta[\zeta_4]
+\underline{\bm\Psi}[\TiPsi_3]+1, \\
||a^5\mathcal{D}^{11}\ulomega||_{L^2_{sc}(\mathcal{N}_u(0,v))}\lesssim&
\bmphi[\phi_0]+\underline{\bmzeta}[\zeta_1]+\bmzeta[\zeta_4]
+\bm\Psi[\TiPsi_3]+1, \\
\int_{u_{\infty}}^u\frac{a^2}{|u'|^3}||a^5\mathcal{D}^{11}(\Timu,\mu)||_{L^2_{sc}(\mathcal{S}_{u,v})}
\lesssim&\bmphi[\phi_0]+\underline{\bmphi}[\phi_1]
+\underline{\bmzeta}[\zeta_1]+\bmzeta[\zeta_4]
+\underline{\bmzeta}[\zeta_5]+\underline{\bm\Psi}[\TiPsi_3]+1,\\
\int_{u_{\infty}}^u\frac{a^3}{|u'|^4}||a^5\mathcal{D}^{11}\Timu||^2_{L^2_{sc}(\mathcal{S}_{u,v})}
\lesssim&(\bmphi[\phi_0]+\underline{\bmphi}[\phi_1]
+\underline{\bmzeta}[\zeta_1]+\bmzeta[\zeta_4]
+\underline{\bmzeta}[\zeta_5]+\underline{\bm\Psi}[\TiPsi_3]+1)^2, \\
\int_{u_{\infty}}^u\frac{a^{\frac{3}{2}}}{|u'|^3}||a^5\mathcal{D}^{11}\lambda||_{L^2_{sc}(\mathcal{S}_{u,v})}
\lesssim&1,\quad
\int_{u_{\infty}}^u\frac{a^2}{|u'|^4}||a^5\mathcal{D}^{11}\lambda||^2_{L^2_{sc}(\mathcal{S}_{u,v})}\lesssim1
\end{align*}

Hence we have the following improvements:
\[
  \bmGamma_{11}\lesssim\bm\Psi^2+\bm\Psi+\bm\phi^2+\bm\phi+\bmzeta^2+\bmzeta+1.
  \]

\section{Energy Estimate}
\label{EnergyEstimate}
In this section we show the energy estimate 
\begin{align*}
\bm\phi+\bmzeta+\Psi\lesssim\mathcal{I}_0+1
\end{align*}
to close the bootstrap argument. 
To achieve this goal, we make use of the energy inequality in scale-invariant norm. 
We introduce in the following: let
\[
(\Psi_I,\Psi_{II})\in
\{(\Psi_0,\tilde\Psi_1),(\tilde\Psi_1,\Psi_2),(\Psi_2,\tilde\Psi_3),(\tilde\Psi_3,\Psi_4),(\phi_{0},\phi_{1}),(\zeta_{0},\zeta_{1}),(\zeta_{1},\zeta_{2}),(\Tizeta_{4},\Tizeta_{5}),(\zeta_4,\zeta_5)
\}.
\]
the Bianchi identities and the equations for the
scalar field take the schematic form
\begin{align*}
&\mthorn'\Psi_I+\lambda_0\mu\Psi_I-\meth\Psi_{II}=P_0, \\
&\mthorn\Psi_{II}-\meth'\Psi_I=Q_0.
\end{align*}
Applying the derivative $\meth$ and the commutator relations it
follows that
\begin{subequations}
\begin{align}
&\mthorn'\meth^k\Psi_{I}+(\lambda_0+k)\mu\meth^k\Psi_{I}-\meth^{k+1}\Psi_{II}=P_k, \label{kthBianchi1} \\
&\mthorn\meth^k\Psi_{II}-\meth'\meth^k\Psi_{I}=Q_k, \label{kthBianchi2} 
\end{align}
\end{subequations}
where
\begin{align*}
P_k=\meth^{k}P_0+
\sum_{i=0}^k\meth^i\Gamma(\Timu,\lambda)\meth^{k-i}\Psi_{I}
\end{align*}
and 
\begin{align*}
Q_k=&\sum_{i_1+i_2+i_3=k}\meth^{i_1}\Gamma(\pi,\tau)^{i_2}\meth^{i_3}\Psi_{I}
+\sum_{i_1+i_2+i_3=k}\meth^{i_1}\Gamma(\pi,\tau)^{i_2}\meth^{i_3}Q_0 \\
&+\sum_{i_1+i_2+i_3+i_4=k}\meth^{i_1}\Gamma(\tau,\pi)^{i_2}\meth^{i_3}\Gamma(\tau,\pi,\rho,\sigma)\meth^{i_4}\Psi_{II}
+\sum_{i_1+i_2=k-1}\meth^{i_1}K\meth^{i_2}\Psi_{I}.
\end{align*}
\smallskip

we have that
\begin{align*}
&\int_0^v|u|^{2\lambda_0-4s_2(\Psi_{I})-2}||\meth^k\Psi_{I}||^2_{L^2_{sc}(\mathcal{S}_{u,v'})}
+\int_{u_{\infty}}^ua|u'|^{2\lambda_0-4s_2(\Psi_{I})-4}||\meth^k\Psi_{II}||^2_{L^2_{sc}(\mathcal{S}_{u',v})} \\
\lesssim&\int_0^v|u_{\infty}|^{2\lambda_0-4s_2(\Psi_{I})-2}||\meth^k\Psi_{I}||^2_{L^2_{sc}(\mathcal{S}_{u_{\infty},v'})}
+\int_{u_{\infty}}^ua|u'|^{2\lambda_0-4s_2(\Psi_{I})-4}||\meth^k\Psi_{II}||^2_{L^2_{sc}(\mathcal{S}_{u',0})} \\
&+2\int_0^v\int_{u_{\infty}}^ua|u'|^{2\lambda_0-4s_2(\Psi_{I})-4}
||\meth^k\Psi_{I}||_{L^2_{sc}(\mathcal{S}_{u',v'})}
||P_k||_{L^2_{sc}(\mathcal{S}_{u',v'})} \\
&+2\int_0^v\int_{u_{\infty}}^ua|u'|^{2\lambda_0-4s_2(\Psi_{I})-4}
||\meth^k\Psi_{II}||_{L^2_{sc}(\mathcal{S}_{u',v'})}
||Q_k||_{L^2_{sc}(\mathcal{S}_{u',v'})}.
\end{align*}
The proof can be found in~\cite{An2022,HilValZha23}.
The signature of the various terms involved satisfy 
\begin{align*}
s_2(\meth^k\Psi_I)=\frac{k}{2}+s_2(\Psi_I), \quad 
&s_2(\meth^k\Psi_{II})=\frac{k+1}{2}+s_2(\Psi_I),\\
s_2(P_k)=\frac{k+2}{2}+s_2(\Psi_I), \quad 
&s_2(Q_k)=\frac{k+1}{2}+s_2(\Psi_I).
\end{align*}. 
Consequently, for pair except $(\zeta_4,\zeta_5)$, the  
the above inequality takes the form
\begin{align*}
&\int_0^v||\meth^k\Psi_{I}||^2_{L^2_{sc}(\mathcal{S}_{u,v'})}
+\int_{u_{\infty}}^u\frac{a}{|u'|^2}||\meth^k\Psi_{II}||^2_{L^2_{sc}(\mathcal{S}_{u',v})} \\
\lesssim&\int_0^v||\meth^k\Psi_{I}||^2_{L^2_{sc}(\mathcal{S}_{u_{\infty},v'})}
+\int_{u_{\infty}}^u\frac{a}{|u'|^2}||\meth^k\Psi_{II}||^2_{L^2_{sc}(\mathcal{S}_{u',0})} \\
&+2\int_0^v\int_{u_{\infty}}^u\frac{a}{|u'|^2}
||\meth^k\Psi_{I}||_{L^2_{sc}(\mathcal{S}_{u',v'})}
||P_k||_{L^2_{sc}(\mathcal{S}_{u',v'})} \\
&+2\int_0^v\int_{u_{\infty}}^u\frac{a}{|u'|^2}
||\meth^k\Psi_{II}||_{L^2_{sc}(\mathcal{S}_{u',v'})}
||Q_k||_{L^2_{sc}(\mathcal{S}_{u',v'})}.
\end{align*}
 the definitions of the norms on the lightcone ---namely,
\begin{align*}
||\phi||^2_{L^2_{sc}(\mathcal{N}_u(0,v))}&
\equiv\int_0^v||\phi||^2_{L^2_{sc}(\mathcal{S}_{u,v'})}\mathrm{d}v',\\
||\phi||^2_{L^2_{sc}(\mathcal{N}'_v(u_{\infty},u))}&
\equiv\int_{u_{\infty}}^u\frac{a}{|u'|^2}||\phi||^2_{L^2_{sc}(\mathcal{S}_{u',v})}\mathrm{d}u',
\end{align*}
we conclude that 
\begin{align}
&||\meth^k\Psi_{I}||^2_{L^2_{sc}(\mathcal{N}_u(0,v))}
+||\meth^k\Psi_{II}||^2_{L^2_{sc}(\mathcal{N}'_v(u_{\infty},u))} \nonumber\\
\lesssim&||\meth^k\Psi_{I}||^2_{L^2_{sc}(\mathcal{N}_{u_{\infty}}(0,v))}
+||\meth^k\Psi_{II}||^2_{L^2_{sc}(\mathcal{N}'_0(u_{\infty},u))} \nonumber\\
&+2\int_0^v\int_{u_{\infty}}^u\frac{a}{|u'|^2}
||\meth^k\Psi_{I}||_{L^2_{sc}(\mathcal{S}_{u',v'})}
||P_k||_{L^2_{sc}(\mathcal{S}_{u',v'})} \nonumber\\
&+2\int_0^v\int_{u_{\infty}}^u\frac{a}{|u'|^2}
||\meth^k\Psi_{II}||_{L^2_{sc}(\mathcal{S}_{u',v'})}
||Q_k||_{L^2_{sc}(\mathcal{S}_{u',v'})}. \label{EnergyinequalityNormal}
\end{align}

For pair $(\zeta_4,\zeta_5)$, $2\lambda_0-4s_2(\Psi_{I})-4=2\times2-4\times1-4=-4$ hence we have
\begin{align}
&\int_0^v\frac{1}{|u|^2}||\meth^k\zeta_4||^2_{L^2_{sc}(\mathcal{S}_{u,v'})}
+\int_{u_{\infty}}^u\frac{a}{|u'|^4}||\meth^k\zeta_5||^2_{L^2_{sc}(\mathcal{S}_{u',v})} \nonumber\\
\lesssim&\int_0^v\frac{1}{|u_{\infty}|^2}||\meth^k\zeta_4||^2_{L^2_{sc}(\mathcal{S}_{u_{\infty},v'})}
+\int_{u_{\infty}}^u\frac{a}{|u'|^4}||\meth^k\zeta_5||^2_{L^2_{sc}(\mathcal{S}_{u',0})} \nonumber\\
&+2\int_0^v\int_{u_{\infty}}^u\frac{a}{|u'|^4}
||\meth^k\zeta_4||_{L^2_{sc}(\mathcal{S}_{u',v'})}
||P_k||_{L^2_{sc}(\mathcal{S}_{u',v'})} \nonumber\\
&+2\int_0^v\int_{u_{\infty}}^u\frac{a}{|u'|^4}
||\meth^k\zeta_5||_{L^2_{sc}(\mathcal{S}_{u',v'})}
||Q_k||_{L^2_{sc}(\mathcal{S}_{u',v'})}. \label{EnergyinequalityAbNormal}
\end{align}

\subsection{Energy estimate for $\phi_A$ }
We begin with pair $(\phi_0,\phi_1)$:

\begin{proposition}
\label{EnergyEstimatephiA}
For $0\leq k\leq11$,  one has that 
\begin{align*}
\frac{1}{a}||(a^{\frac{1}{2}})^{k-1}\mathcal{D}^k\phi_0||^2_{L^2_{sc}(\mathcal{N}_u(0,v))}
+\frac{1}{a}||(a^{\frac{1}{2}})^{k-1}\mathcal{D}^k\phi_1||^2_{L^2_{sc}(\mathcal{N}'_v(u_{\infty},u))} 
\leq\mathcal{I}^2_0+\frac{1}{a^{\frac{1}{4}}}.
\end{align*}
\end{proposition}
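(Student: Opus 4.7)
The plan is to apply the general energy inequality \eqref{EnergyinequalityNormal} to the pair $(\Psi_I,\Psi_{II})=(\phi_0,\phi_1)$ with $\lambda_0=1$ (reading off from \eqref{thornprimephi0}), so that
\begin{align*}
P_0 = \tfrac{\omega}{2}\phi_0 - \tfrac{\tau}{2}\phi_1, \qquad
Q_0 = \phi_0\bigl(\pi-\tfrac{\bar\tau}{2}\bigr) + \phi_1\rho.
\end{align*}
I would commute $\meth^k$ (for $0\leq k\leq 11$) through \eqref{thornphi1}-\eqref{thornprimephi0} using the $H_k$ and $G_k$ identities recorded in Section~3.6, which yields $P_k$ of schematic form $\meth^{\leq k}\Gamma(\omega,\tilde\mu,\lambda,\tau)\cdot\meth^{\leq k}\phi$ and $Q_k$ of schematic form $\meth^{\leq k}\Gamma(\pi,\tau,\rho,\sigma)\cdot\meth^{\leq k}\phi$ plus the unavoidable Gaussian curvature coupling $\meth^{i_1}K\meth^{i_2}\phi_0$ with $i_1+i_2=k-1$.

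The initial-data contribution is controlled directly by the definition of $\mathcal{I}_0$: since only $\phi_0$ has nontrivial data on $\mathcal{N}_{\star}=\{u=u_\infty\}$ and $\phi_1=0$ on $\mathcal{N}'_{\star}=\{v=0\}$ (Minkowskian data), both boundary terms in \eqref{EnergyinequalityNormal}, multiplied by $1/a$ and summed over $k\leq 11$, are bounded by $\mathcal{I}_0^2$.

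For the bulk, the strategy is standard H\"older in scale-invariant norms \eqref{L2holder}: for each factor of $\meth^{k-i}\phi$ with $k-i\leq 10$ put $\Gamma$ in $L^\infty_{sc}$ and use the $L^2_{sc}(\mathcal{S})$ control on $\phi_j$ from Prop.~\ref{L2phiA}; conversely, when $k=11$ the top derivative must sit on $\phi$ and is absorbed back into the lightcone energy on the left after time-ordered Gr\"onwall. The potentially delicate contributions are (i) the $\mu\meth^{11}\phi_0$ piece which originates from the $\lambda_0\mu\Psi_I$ term and thus has already been absorbed into the exponential weight of \eqref{EnergyinequalityNormal}; (ii) the top-order connection pieces $\meth^{11}\Gamma\cdot\phi$ for $\Gamma\in\{\pi,\tau,\Timu,\lambda,\ulomega\}$, which are controlled via the lightcone norms $\bmGamma_{11}$ produced in Section~\ref{Elliptic} combined with $L^\infty_{sc}$-bounds on $\phi_j$ supplied by the Sobolev embedding; and (iii) the curvature coupling $K\meth^{\leq 10}\phi_0$, which is handled using the advanced estimate $a|u|^{-1}\|(a^{1/2}\mathcal D)^{\leq 9}K\|_{L^2_{sc}(\mathcal S)}\lesssim 1$ from Remark~\ref{AdvancedGauss}.

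The outcome of each product estimate is of the form $\mathcal{O}^{N}\cdot a^{-s}$ with $s>0$ (the saving comes from the explicit $a/|u|^2$ weight in the double integral together with $|u|\geq a/4$), so the bootstrap inequality $\mathcal{O}^{20}\leq a^{1/16}$ gives an overall $O(a^{-1/4})$ after summing over $k\leq 11$. The hard part, in my view, is not any single term but keeping the derivative bookkeeping consistent at $k=11$: the $L^2(\mathcal{S})$ estimates of Section~\ref{L2estimate} only reach order $10$, so every $11$-fold differentiation must either fall on $\phi_0$ or $\phi_1$ (and be reabsorbed into the left-hand flux by Gr\"onwall) or fall on a connection coefficient for which an $L^2$ lightcone bound is available from Section~\ref{Elliptic}; routing the top derivative correctly in each term of $P_{11}$ and $Q_{11}$, and verifying that the curvature-coupling term $K\meth^{10}\phi_0$ does not force a loss, is where the argument must be carried out with care.
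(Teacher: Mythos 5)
Your proposal matches the paper's proof in approach and structure: apply the energy inequality \eqref{EnergyinequalityNormal} to the pair $(\phi_0,\phi_1)$ with $\lambda_0=1$, commute $\meth^k$ for $0\le k\le 11$, split the resulting error integrals by Cauchy--Schwarz, route top-order derivatives either back into the left-hand flux (when they fall on $\phi$) or into the $\bmGamma_{11}$ elliptic estimates of Section~\ref{Elliptic} (when they fall on a connection coefficient), and absorb the rest via the $L^2(\mathcal{S})$ bounds of Section~\ref{L2estimate} and the bootstrap smallness $\mathcal{O}^{20}\le a^{1/16}$. Two caveats. First, there are typographical slips in your $P_0,Q_0$: the paper's equations give $P_0=\tfrac{\ulomega}{2}\phi_0-\tfrac{\bar\tau}{2}\phi_1$ (not $\omega,\tau$) and $Q_0=\phi_0(\pi-\tfrac{\bar\pi}{2})+\phi_1\rho$ (not $\bar\tau$). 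Second and more substantively, in item (iii) the Gauss-curvature coupling at $k=11$ is $\sum_{i_1+i_2=10}\meth^{i_1}K\,\meth^{i_2}\phi_0$, so all ten derivatives can land on $K$; the $\mathcal{S}$-level bounds you cite (Prop.~\ref{L2Weyl} and Remark~\ref{AdvancedGauss}) stop at order $9$ and therefore do not directly cover the $i_1=10$ case. The paper instead collapses the sum onto the lightcone norm $\|(a^{1/2}\mathcal{D})^{k-1}K\|^2_{L^2_{sc}(\mathcal{N}_u(0,v))}$ and controls it (under the bootstrap, via the expression $\mathcal{K}=\mathrm{i}\bar\zeta_1\phi_1-\mathrm{i}\zeta_1\bar\phi_1+\mathrm{i}\bar{\tilde\zeta}_4\phi_0-\mathrm{i}\tilde\zeta_4\bar\phi_0-\Psi_2+\mu\rho-\lambda\sigma$ and the $\Psi_2$, $\tilde\zeta_4$, $\zeta_1$ lightcone estimates), obtaining $H_4\lesssim\frac{a^2}{|u|^3}\phi[\phi_0]^2\|(a^{1/2}\mathcal{D})^{k-1}K\|^2_{L^2_{sc}(\mathcal{N}_u(0,v))}$, which is harmless. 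Once that step is added your argument closes as claimed.
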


\begin{proof}
For $(\phi_0,\phi_1)$ we make use of the equations
\begin{align*}
\mthorn'\phi_{0}-\meth\phi_{1}&=(\frac{\ulomega}{2}-\mu)\phi_{0}
-\frac{\bar\tau\phi_{1}}{2}, \\
\mthorn\phi_1-\meth'\phi_0&=\phi_0(\pi-\frac{\bar\pi}{2})+\phi_1\rho,
\end{align*}
and the energy inequality \eqref{EnergyinequalityNormal}, then we obtain 
\begin{align}
\label{EstimatephiA}
&\frac{1}{a}||(a^{\frac{1}{2}})^{k-1}\meth^k\phi_0||^2_{L^2_{sc}(\mathcal{N}_u(0,v))}
+\frac{1}{a}||(a^{\frac{1}{2}})^{k-1}\meth^k\phi_1||^2_{L^2_{sc}(\mathcal{N}'_v(u_{\infty},u))} \nonumber \\
\lesssim&\frac{1}{a}||(a^{\frac{1}{2}})^{k-1}\mathcal{D}^k\phi_0||^2_{L^2_{sc}(\mathcal{N}_{u_{\infty}}(0,v))}
+\frac{1}{a}||(a^{\frac{1}{2}})^{k-1}\mathcal{D}^k\phi_1||^2_{L^2_{sc}(\mathcal{N}'_0(u_{\infty},u))} 
+\frac{1}{a}(M+N),
\end{align}
where
\begin{align*}
M\equiv&2\int_0^v\int_{u_{\infty}}^u\frac{a}{|u'|^2}
||(a^{\frac{1}{2}})^{k-1}\meth^k\phi_0||_{L^2_{sc}(\mathcal{S}_{u',v'})}
||(a^{\frac{1}{2}})^{k-1}P_k||_{L^2_{sc}(\mathcal{S}_{u',v'})}, \\
N\equiv&2\int_0^v\int_{u_{\infty}}^u\frac{a}{|u'|^2}
||(a^{\frac{1}{2}})^{k-1}\meth^k\phi_1||_{L^2_{sc}(\mathcal{S}_{u',v'})}
||(a^{\frac{1}{2}})^{k-1}Q_k||_{L^2_{sc}(\mathcal{S}_{u',v'})}. \\
\end{align*}
and
\begin{align*}
P_k=&\Gamma(\ulomega,\tau,\lambda)\meth^k\phi_{0,1}
+\phi_{0,1}\meth^{k}\Gamma(\ulomega,\tau,\Timu)
+\sum_{i=1}^{k-1}\meth^i\Gamma(\Timu,\ulomega,\tau,\lambda)\meth^{k-i}(\phi_{0},\phi_1), \\
Q_k=&\Gamma(\rho,\sigma,\pi,\tau)\meth^k\phi_{0,1}
+\phi_{0,1}\meth^{k}\Gamma(\pi,\rho)\\
&+\sum_{i_1+i_2+i_3+i_4=k, i_3+i_4<k}\meth^{i_1}\Gamma(\tau,\pi)^{i_2}\meth^{i_3}
\Gamma(\tau,\pi,\rho,\sigma)\meth^{i_4}(\phi_{0},\phi_{1})
+\sum_{i_1+i_2=k-1}\meth^{i_1}K\meth^{i_2}\phi_0.
\end{align*}

For $M$ we have 
\begin{align*}
M\leq&2\left(\int_0^v\int_{u_{\infty}}^u\frac{a}{|u'|^2}
||(a^{\frac{1}{2}})^{k-1}\meth^k\phi_0||^2_{L^2_{sc}(\mathcal{S}_{u',v'})}\right)^{\frac{1}{2}}
\left(\int_0^v\int_{u_{\infty}}^u\frac{a}{|u'|^2}||(a^{\frac{1}{2}})^{k-1}P_k||^2_{L^2_{sc}(\mathcal{S}_{u',v'})}\right)^{\frac{1}{2}} \\
=&2\left(\int_{u_{\infty}}^u\frac{a}{|u'|^2}
||(a^{\frac{1}{2}})^{k-1}\meth^k\phi_0||^2_{L^2_{sc}(\mathcal{N}_{u'}(0,v))}\right)^{\frac{1}{2}}J^{\frac{1}{2}}
\leq\frac{a^{\frac{1}{2}}}{|u|^{\frac{1}{2}}}a^{\frac{1}{2}}\bm{\phi}[\phi_0]J^{\frac{1}{2}},
\end{align*}
where
\begin{align*}
J\equiv\int_0^v\int_{u_{\infty}}^u\frac{a}{|u'|^2}||(a^{\frac{1}{2}})^{k-1}P_k||^2_{L^2_{sc}(\mathcal{S}_{u',v'})}.
\end{align*}
Then substitute $P_k$, we make use of the estimate in the elliptic part to estimate 
the top derivative of connections  
\begin{align*}
J\leq&\int_0^v\int_{u_{\infty}}^u\frac{a}{|u'|^2}
||\phi_{0,1}(a^{\frac{1}{2}})^{k-1}
\mathcal{D}^k\Gamma(\ulomega,\Timu,\tau,\lambda)||^2_{L^2_{sc}(\mathcal{S}_{u',v'})} \\
&+\int_0^v\int_{u_{\infty}}^u\frac{a}{|u'|^2}
||\Gamma(\ulomega,\lambda,\tau)(a^{\frac{1}{2}})^{k-1}
\mathcal{D}^k\phi_{0,1}||^2_{L^2_{sc}(\mathcal{S}_{u',v'})} \\
&+\sum_{i=1}^{k-1}\int_0^v\int_{u_{\infty}}^u\frac{a}{|u'|^2}
||(a^{\frac{i}{2}}\mathcal{D}^i)\Gamma(\ulomega,\Timu,\tau,\lambda)(a^{\frac{k-1-i}{2}}
\mathcal{D}^{k-i})\phi_{0,1}||^2_{L^2_{sc}(\mathcal{S}_{u',v'})} \\
=&J_1+J_2+J_3.
\end{align*}

Then we have
\begin{align*}
J_1\lesssim&\int_0^v\int_{u_{\infty}}^u\frac{a}{|u'|^2}\frac{a}{|u'|^2}\phi[\phi_{0,1}]^2
||(a^{\frac{1}{2}})^{k-1}
\mathcal{D}^k\Gamma(\ulomega,\Timu,\tau)||^2_{L^2_{sc}(\mathcal{S}_{u',v'})}\\
\lesssim&\frac{\phi[\phi_{0,1}]^2}{a}\int_0^v||(a^{\frac{1}{2}})^{k-1}
\mathcal{D}^k\Gamma(\ulomega,\tau)||^2_{L^2_{sc}(\mathcal{S}_{u',v'})}\\
&+\phi[\phi_{0,1}]^2\int_0^v\int_{u_{\infty}}^u\frac{a^2}{|u'|^4}
||(a^{\frac{1}{2}})^{k-1}\mathcal{D}^k\Timu)||^2_{L^2_{sc}(\mathcal{S}_{u',v'})} \\
&+\phi[\phi_{0,1}]^2\int_0^v\int_{u_{\infty}}^u\frac{a^2}{|u'|^4}
||(a^{\frac{1}{2}})^{k-1}\mathcal{D}^k\lambda)||^2_{L^2_{sc}(\mathcal{S}_{u',v'})} \\
\lesssim&\frac{1}{a}(\bm\phi+\underline{\bm\phi}+\bm\zeta+\underline{\bm\zeta}
+\underline{\bm\Psi}[\TiPsi_3]+1)^2+1,
\end{align*}

\begin{align*}
J_2\lesssim&\int_0^v\int_{u_{\infty}}^u\frac{a}{|u'|^2}\frac{1}{|u'|^2}\left(\frac{|u'|^2}{a}\Gamma[\lambda]^2
||(a^{\frac{1}{2}})^{k-1}\mathcal{D}^k\phi_{0}||^2_{L^2_{sc}(\mathcal{S}_{u',v'})}
+\Gamma[\tau]^2
||(a^{\frac{1}{2}})^{k-1}\mathcal{D}^k\phi_{1}||^2_{L^2_{sc}(\mathcal{S}_{u',v'})}\right) \\
\lesssim&\frac{a}{|u|}\bm{\phi}[\phi_0]^2
+\frac{a}{|u|^2}\Gamma[\tau]^2\underline{\bm\phi}[\phi_1]^2,
\end{align*}

\begin{align*}
J_3\lesssim&\frac{1}{a}\int_0^v\int_{u_{\infty}}^u\frac{a}{|u'|^2}\frac{1}{|u'|^2}
\frac{|u'|^2}{a}\Gamma[\lambda]a\phi[\phi_{0,1}]
\lesssim\frac{1}{|u|}\mathcal{O}^4.
\end{align*}

Then collect the results one obtain
\begin{align*}
J\lesssim&\frac{a}{|u|}\bm{\phi}[\phi_0]^2+1+\frac{\mathcal{O}^4}{a},
\end{align*}
and then 
\begin{align*}
M\lesssim\frac{a^{\frac{1}{2}}}{|u|^{\frac{1}{2}}}a^{\frac{1}{2}}\bm{\phi}[\phi_0]J^{\frac{1}{2}}
\lesssim\frac{a^{\frac{3}{2}}}{|u|}\bm{\phi}[\phi_0]^2
+\frac{a}{|u|^{\frac{1}{2}}}\bm{\phi}[\phi_0]
+\mathcal{O}^2\bm{\phi}[\phi_0].
\end{align*}

For N we have 
\begin{align*}
N\leq&2\left(\int_0^v\int_{u_{\infty}}^u\frac{a}{|u'|^2}
||(a^{\frac{1}{2}})^{k-1}\meth^k\phi_1||^2_{L^2_{sc}(\mathcal{S}_{u',v'})}\right)^{\frac{1}{2}}
\left(\int_0^v\int_{u_{\infty}}^u\frac{a}{|u'|^2}||(a^{\frac{1}{2}})^{k-1}Q_k||^2_{L^2_{sc}(\mathcal{S}_{u',v'})}\right)^{\frac{1}{2}} \\
=&2\left(\int_0^v
||(a^{\frac{1}{2}})^{k-1}\meth^k\phi_1||^2_{L^2_{sc}(\mathcal{N}'_{v'}(u_{\infty},u))}\right)^{\frac{1}{2}}H^{\frac{1}{2}}
\lesssim a^{\frac{1}{2}}\underline{\bm{\phi}}[\phi_1]H^{\frac{1}{2}},
\end{align*}
where 
\begin{align*}
H\equiv\int_0^v\int_{u_{\infty}}^u\frac{a}{|u'|^2}
||(a^{\frac{1}{2}})^{k-1}Q_k||^2_{L^2_{sc}(\mathcal{S}_{u',v'})}.
\end{align*}
Substituting the definition of $Q_k$ into $H$ we obtain
\begin{align*}
H\leq&\int_0^v\int_{u_{\infty}}^u\frac{a}{|u'|^2}
||\phi_{0,1}(a^{\frac{1}{2}})^{k-1}
\mathcal{D}^k\Gamma(\rho,\sigma,\pi)||^2_{L^2_{sc}(\mathcal{S}_{u',v'})} \\
&+\int_0^v\int_{u_{\infty}}^u\frac{a}{|u'|^2}
||\Gamma(\rho,\sigma,\pi,\tau)(a^{\frac{1}{2}})^{k-1}
\mathcal{D}^k\phi_{0,1}||^2_{L^2_{sc}(\mathcal{S}_{u',v'})} \\
&+\sum_{i_1+i_2+i_3+i_4=k, i_3+i_4<k}\int_0^v\int_{u_{\infty}}^u\frac{a}{|u'|^2}
||a^{\frac{k-1}{2}}\meth^{i_1}\Gamma^{i_2}\meth^{i_3}
\Gamma(\sigma,...)\meth^{i_4}\phi_{0,1}||^2_{L^2_{sc}(\mathcal{S}_{u',v'})} \\
&+\sum_{i_1+i_2=k-1}\int_0^v\int_{u_{\infty}}^u\frac{a}{|u'|^2}
||(a^{\frac{1}{2}})^{k-1}\mathcal{D}^{i_1}K\mathcal{D}^{i_2}\phi_0||^2_{L^2_{sc}(\mathcal{S}_{u',v'})} 
=H_1+...+H_4.
\end{align*}
We have

\begin{align*}
H_1\lesssim&\int_0^v\int_{u_{\infty}}^u\frac{a}{|u'|^2}\frac{a}{|u'|^2}\phi[\phi_{0,1}]^2
||(a^{\frac{1}{2}})^{k-1}
\mathcal{D}^k\Gamma(\rho,\sigma,\pi)||^2_{L^2_{sc}(\mathcal{S}_{u',v'})} \\
\lesssim&\frac{\phi[\phi_{0,1}]^2}{a}\bm\Gamma_{11}[\rho,\pi]^2
\lesssim\frac{1}{a}(\bm\phi+\underline{\bm\phi}+\bm\zeta+\underline{\bm\zeta}
+\bm\Psi+\underline{\bm\Psi}+1)^2,
\end{align*}

\begin{align*}
H_2\lesssim&\int_0^v\int_{u_{\infty}}^u\frac{a}{|u'|^2}
(\frac{a}{|u'|^2}\Gamma[\sigma]^2||(a^{\frac{1}{2}})^{k-1}
\mathcal{D}^k\phi_{1}||^2_{L^2_{sc}(\mathcal{S}_{u',v'})} \\
&+\frac{1}{|u'|^2}\Gamma[\pi]^2||(a^{\frac{1}{2}})^{k-1}
\mathcal{D}^k\phi_{0}||^2_{L^2_{sc}(\mathcal{S}_{u',v'})}) \\
\lesssim&\frac{a^2}{|u|^2}(\bm\Psi+1)^2\underline{\bm\phi}[\phi_1]^2
+\frac{a^2}{|u|^3}(\bm\Psi+\underline{\bm\Psi}+1)^2\bm\phi[\phi_0]^2,
\end{align*}

\begin{align*}
H_3\lesssim&\frac{1}{a}\int_0^v\int_{u_{\infty}}^u\frac{a}{|u'|^2}\frac{1}{|u'|^2} a\Gamma[\sigma]^2a\phi[\phi_{0,1}]^2
\lesssim\frac{1}{a}\mathcal{O}^4,
\end{align*}

\begin{align*}
H_4\lesssim&
\sum_{i=0}^{k-1}\int_0^v\int_{u_{\infty}}^u\frac{a}{|u'|^2}\frac{1}{|u'|^2}||(a^{\frac{1}{2}}\mathcal{D})^{i}K||^2
||(a^{\frac{1}{2}}\mathcal{D})^{k-1-i}\phi_0||^2 \\
\leq&\int_{u_{\infty}}^u\frac{a^2}{|u'|^4}\phi[\phi_{0}]^2 
\int_0^v||(a^{\frac{1}{2}}\mathcal{D})^{k-1}K||^2_{L^2_{sc}(\mathcal{S}_{u',v'})}\\
\lesssim&\frac{a^2}{|u|^3}\phi[\phi_{0}]^2
||(a^{\frac{1}{2}}\mathcal{D})^{k-1}K||^2_{L^2_{sc}(\mathcal{N}_{u}(0,v))}.
\end{align*}

Then collect the results we have
\begin{align*}
H\leq&\frac{a^2}{|u|^2}(\bm\Psi+1)^2\underline{\bm\phi}[\phi_1]^2+\frac{\mathcal{O}^4}{a},
\end{align*}
and hence we have
\begin{align*}
N\lesssim a^{\frac{1}{2}}\underline{\bm{\phi}}[\phi_1]H^{\frac{1}{2}}
\lesssim\frac{a^{\frac{3}{2}}}{|u|}(\bm\Psi+1)\bm{\phi}\underline{\bm\phi}+\mathcal{O}^2\bm{\phi}.
\end{align*}

Then we have

\begin{align*}
&\frac{1}{a}||(a^{\frac{1}{2}})^{k-1}\meth^k\phi_0||^2_{L^2_{sc}(\mathcal{N}_u(0,v))}
+\frac{1}{a}||(a^{\frac{1}{2}})^{k-1}\meth^k\phi_1||^2_{L^2_{sc}(\mathcal{N}'_v(u_{\infty},u))} \\
\lesssim&\mathcal{I}_0^2+\frac{1}{a}(\frac{a^{\frac{3}{2}}}{|u|}\bm{\phi}[\phi_0]^2
+\frac{a}{|u|^{\frac{1}{2}}}\bm{\phi}[\phi_0]
+\mathcal{O}^2\bm{\phi}[\phi_0]+\frac{a^{\frac{3}{2}}}{|u|}(\bm\Psi+1)\bm{\phi}\underline{\bm\phi}
+\mathcal{O}^2\bm{\phi}) \\
\lesssim&\mathcal{I}_0^2+\frac{1}{a^{\frac{1}{4}}}.
\end{align*}

\end{proof}

\subsection{Energy estimate for $\zeta_{ABA'}$ }

\begin{proposition}
\label{EnergyEstimatezeta01}
For $0\leq k\leq11$,  one has that 
\begin{align*}
\frac{1}{a}||(a^{\frac{1}{2}})^{k-1}\mathcal{D}^k\zeta_0||^2_{L^2_{sc}(\mathcal{N}_u(0,v))}
+\frac{1}{a}||(a^{\frac{1}{2}})^{k-1}\mathcal{D}^k\zeta_1||^2_{L^2_{sc}(\mathcal{N}'_v(u_{\infty},u))} 
\leq\mathcal{I}^2_0+\frac{1}{a^{\frac{1}{4}}}.
\end{align*}
\end{proposition}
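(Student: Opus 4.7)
The plan is to mirror the proof of Prop.\ \ref{EnergyEstimatephiA} with the pair $(\zeta_0,\zeta_1)$ replacing $(\phi_0,\phi_1)$, since $(\zeta_0,\zeta_1)$ appears in the list of pairs governed by the ``normal'' energy inequality \eqref{EnergyinequalityNormal}. First I would isolate the ingoing equation $\mthorn'\zeta_0+\mu\zeta_0-\meth\zeta_1=P_0$ (e.g.\ \eqref{thornprimezeta0}) and its outgoing companion $\mthorn\zeta_1-\meth'\zeta_0=Q_0$ from appendices \ref{EquationzetaNoCurv}--\ref{EquationzetaCurv}, where $P_0$ and $Q_0$ are of the schematic form $\Psi\phi+\Gamma\zeta+\zeta\phi^2$. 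Commuting with $\meth^k$ for $0\le k\le 11$ and applying \eqref{EnergyinequalityNormal} yields
\[
\frac{1}{a}||(a^{1/2})^{k-1}\mathcal{D}^k\zeta_0||^2_{L^2_{sc}(\mathcal{N}_u(0,v))}
+\frac{1}{a}||(a^{1/2})^{k-1}\mathcal{D}^k\zeta_1||^2_{L^2_{sc}(\mathcal{N}'_v(u_{\infty},u))}
\lesssim \mathcal{I}_0^2+\frac{M+N}{a},
\]
with $M,N$ the usual cross integrals against $P_k,Q_k$. The initial-data term reduces to $\mathcal{I}_0^2$ via the algebraic identity $\zeta_{ABA'}=\nabla_{(A|A'|}\phi_{B)}$, which expresses $\mathcal{D}^k\zeta$ on $\mathcal{N}_{u_\infty}$ in terms of $\mathcal{D}^{k+1}\phi_0$ (available up to $i\le 15$ inside $\mathcal{I}_0$), while the Minkowskian data on $\mathcal{N}'_\star$ contributes nothing.

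Next I would estimate $M$ and $N$ by Cauchy--Schwarz in $(u',v')$, pulling out the $\|\mathcal{D}^k\zeta_{0,1}\|_{L^2_{sc}}$ factors (absorbed into $\bm\zeta[\zeta_0],\bm\zeta[\zeta_1]$ on the left, or bounded by $a^{1/2}$ times the lightcone norm as in the $\phi$ case) and controlling the source integrals of $\|(a^{1/2})^{k-1}P_k\|^2,\|(a^{1/2})^{k-1}Q_k\|^2$ via a three-family decomposition. The connection pieces $\Gamma\cdot\mathcal{D}^k\zeta$ and $\zeta\cdot\mathcal{D}^k\Gamma$ are handled by combining the $L^2(\mathcal{S})$ bounds of Section \ref{L2estimate} with the elliptic estimates of Section \ref{Elliptic} for the top-order derivatives of $(\ulomega,\tau,\pi,\lambda,\mu,\Timu)$. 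The Gaussian-curvature commutator term $\mathcal{D}^i K\cdot\mathcal{D}^{k-1-i}\zeta_{0,1}$ is absorbed using Prop.\ \ref{L2Weyl} together with Remark \ref{AdvancedGauss}. The cubic family $\zeta\phi^2$ and the subcritical mixed products $\mathcal{D}^{i_1}\Gamma^{i_2}\cdot\mathcal{D}^{i_3}\Gamma\cdot\mathcal{D}^{i_4}\zeta$ are absorbed via the scale-invariant H\"older inequalities \eqref{L2holder}--\eqref{L1holderalt} and the bootstrap bound $\mathcal{O}^{20}\le a^{1/16}$, producing contributions of order at most $\mathcal{O}^4/a$.

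The main obstacle is the genuinely new $\Psi\phi$ coupling in $P_k,Q_k$, which has no analogue in the $(\phi_0,\phi_1)$ argument. I would split $\mathcal{D}^k(\Psi\phi)$ into $\phi\cdot\mathcal{D}^k\Psi$ plus strictly lower-order products. For the leading piece, the factor $\phi$ is placed in $L^\infty_{sc}(\mathcal{S})$ using its decay $\sim a^{1/2}/|u|$, while $\mathcal{D}^k\Psi$ lives in the appropriate null-cone norm $\bm\Psi$ or $\underline{\bm\Psi}$ (matching outgoing versus ingoing dictated by the Bianchi partner and the signature bookkeeping). Cauchy--Schwarz against the weight $a/|u'|^2$ then produces a gain of $a^{1/2}/|u|^{1/2}$, so after dividing by $a$ the total $\Psi\phi$ contribution is controlled by $a^{-1/4}(\bm\Psi+1)(\bm\phi+1)\lesssim a^{-1/4}$ under the bootstrap hypothesis. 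Assembling the three families and the initial-data bound yields the claimed estimate $\mathcal{I}_0^2+a^{-1/4}$, which improves the bootstrap on $\bm\zeta[\zeta_0],\underline{\bm\zeta}[\zeta_1]$ at every order $k\le 11$ and thereby propagates the scale-critical control through the $(\zeta_0,\zeta_1)$ sector.
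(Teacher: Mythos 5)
Your high-level structure is right: you correctly pick out the no-curvature transport pair \eqref{thornprimezeta0}--\eqref{thornzeta1} for $(\zeta_0,\zeta_1)$, commute $\meth^k$, invoke \eqref{EnergyinequalityNormal}, reduce the initial-cone terms to $\mathcal{I}_0^2$ via $\zeta_{ABA'}=\nabla_{(A|A'|}\phi_{B)}$, and absorb the $\Gamma\zeta$, $\zeta\phi^2$ and $K\zeta$ families using the $L^2(\mathcal{S})$, elliptic, and Gaussian-curvature estimates together with \eqref{L2holder}--\eqref{L1holderalt}. All of this matches the paper.

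However, your stated ``main obstacle'' is a phantom. The sources $P_0, Q_0$ built from \eqref{thornprimezeta0}--\eqref{thornzeta1} carry no $\Psi\phi$ coupling at all: those two equations belong precisely to the group \ref{EquationzetaNoCurv} chosen because it is free of Weyl curvature. The general schematic form $\Psi\phi+\Gamma\zeta+\zeta\phi^2$ quoted in the introduction is not the schematic form of this particular Hodge pair. Consequently the paper's $P_k, Q_k$ consist only of $\zeta\phi^2$, $\Gamma\zeta$ and the commutator term $\meth^{i_1}K\meth^{i_2}\zeta_0$, and the entire paragraph you devote to handling $\mathcal{D}^k(\Psi\phi)$ addresses a term that does not appear.

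Worse, the cure you propose would not actually close at the relevant order even if such a term were present. You suggest placing $\phi$ in $L^\infty_{sc}$ and $\mathcal{D}^k\Psi$ in the lightcone norms $\bm\Psi$ or $\underline{\bm\Psi}$, but those norms are only defined up to $i\le 10$, whereas the energy estimate for $(\zeta_0,\zeta_1)$ must go up to $k=11$. So $\mathcal{D}^{11}\Psi$ is not available, and a $\phi\,\mathcal{D}^{11}\Psi$ source would produce a genuine loss of derivatives. This is exactly why the splitting of the $\zeta$ system into a curvature-free group and a curvature-containing group is not a cosmetic choice: it is what makes top-order closure of $(\zeta_0,\zeta_1)$ possible, and your write-up does not register this structural point.
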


\begin{proof}
For pair $(\zeta_0,\zeta_1)$, we make use of 
\begin{align*}
\mthorn' \zeta_0- \meth\,\zeta_1 &= 
 \mathrm{i}\,\bar{\zeta}_4\,\phi_0^2
- \mathrm{i}\,\zeta_4\,\phi_0\,\bar{\phi}_0
- \mathrm{i}\,\bar{\zeta}_1\,\phi_0\,\phi_1
+ \mathrm{i}\,\zeta_3\,\bar{\phi}_0\,\phi_1
- \mathrm{i}\,\zeta_1\,\phi_0\,\bar{\phi}_1 \\
&\quad
+ \mathrm{i}\,\zeta_0\,\phi_1\,\bar{\phi}_1
- \zeta_0\,\mu+\frac{3\,\zeta_0\,\ulomega}{2}
+ \zeta_4\,\rho
+ \zeta_2\,\sigma
- \frac{5\,\zeta_1\,\tau}{2}
- \zeta_3\,\bar{\tau}, \\
\mthorn \zeta_1- \meth'\,\zeta_0 &= -\mathrm{i}\,\bar{\zeta}_3\,\phi_0^2
+ 2\,\mathrm{i}\,\zeta_1\,\phi_0\,\bar{\phi}_0
+ \mathrm{i}\,\bar\zeta_0\,\phi_0\,\phi_1
- 2\,\mathrm{i}\,\zeta_0\,\bar{\phi}_0\,\phi_1 
+ \zeta_0\,\pi
+ 2\,\zeta_1\,\rho \\
&\quad
+ \zeta_3\,\bar\sigma
- \frac{3\,\zeta_0\,\bar{\tau}}{2}.
\end{align*}

Applying $\meth^k$ and commuting with $\mthorn$ and $\mthorn'$ we have that
\begin{align*}
&\mthorn'\meth^k\zeta_0+(k+1)\mu\meth^k\zeta_0-\meth^{k+1}\zeta_1=P_k, \\
&\mthorn\meth^k\zeta_1-\meth'\meth^k\zeta_0=Q_k,
\end{align*}
where
\begin{align*}
P_k=&\sum_{i_1+i_2+i_3=k}\meth^{i_1}\phi_j\meth^{i_1}\phi_l\meth^{i_1}\zeta_m
+\sum_{i_1+i_2=k}\meth^{i_1}\Gamma\meth^{i_1}\zeta_m,
\end{align*}
\begin{align*}
Q_k=&\sum_{i_1+...+i_5=k}\meth^{i_1}\Gamma^{i_2}\meth^{i_3}\phi_j\meth^{i_4}\phi_l\meth^{i_5}\zeta_m
+\sum_{i_1+...+i_4=k}\meth^{i_1}\Gamma^{i_2}\meth^{i_3}\Gamma\meth^{i_4}\zeta_m \\
&+\sum_{i_1+i_2=k-1}\meth^{i_1}K\meth^{i_2}\zeta_{0}.
\end{align*}
Then apply the energy inequality \eqref{EnergyinequalityNormal} one has
\begin{align*}
&\frac{1}{a}||(a^{\frac{1}{2}})^{k-1}\meth^k\zeta_0||^2_{L^2_{sc}(\mathcal{N}_u(0,v))}
+\frac{1}{a}||(a^{\frac{1}{2}})^{k-1}\meth^k\zeta_1||^2_{L^2_{sc}(\mathcal{N}'_v(u_{\infty},u))} 
\lesssim\mathcal{I}^2_0+\frac{1}{a}(M+N).
\end{align*}
where
\begin{align*}
M\equiv&2\int_0^v\int_{u_{\infty}}^u\frac{a}{|u'|^2}
||(a^{\frac{1}{2}})^{k-1}\meth^k\zeta_0||_{L^2_{sc}(\mathcal{S}_{u',v'})}
||(a^{\frac{1}{2}})^{k-1}P_k||_{L^2_{sc}(\mathcal{S}_{u',v'})}, \\
N\equiv&2\int_0^v\int_{u_{\infty}}^u\frac{a}{|u'|^2}
||(a^{\frac{1}{2}})^{k-1}\meth^k\zeta_1||_{L^2_{sc}(\mathcal{S}_{u',v'})}
||(a^{\frac{1}{2}})^{k-1}Q_k||_{L^2_{sc}(\mathcal{S}_{u',v'})}. \\
\end{align*}

For $M$ we have 
\begin{align*}
M\leq2\left(\int_{u_{\infty}}^u\frac{a}{|u'|^2}
||(a^{\frac{1}{2}})^{k-1}\meth^k\zeta_0||^2_{L^2_{sc}(\mathcal{N}_{u'}(0,v))}\right)^{\frac{1}{2}}J^{\frac{1}{2}}
\leq\frac{a^{\frac{1}{2}}}{|u|^{\frac{1}{2}}}a^{\frac{1}{2}}\bm{\zeta}[\zeta_0]J^{\frac{1}{2}},
\end{align*}
where 
\begin{align*}
J\equiv\int_0^v\int_{u_{\infty}}^u\frac{a}{|u'|^2}||(a^{\frac{1}{2}})^{k-1}P_k||^2_{L^2_{sc}(\mathcal{S}_{u',v'})}.
\end{align*}

Substitute $P_k$ we have
\begin{align*}
J\lesssim&\int_0^v\int_{u_{\infty}}^u\frac{a}{|u'|^2}
||\phi_0^2(a^{\frac{1}{2}})^{k-1}\meth^k\zeta_4||^2_{L^2_{sc}(\mathcal{S}_{u',v'})}
+\int_0^v\int_{u_{\infty}}^u\frac{a}{|u'|^2}
||\phi_{0,1}^2(a^{\frac{1}{2}})^{k-1}\meth^k\zeta_{0,1,3}||^2_{L^2_{sc}(\mathcal{S}_{u',v'})}\\
&+\int_0^v\int_{u_{\infty}}^u\frac{a}{|u'|^2}
||\phi_0\zeta_4(a^{\frac{1}{2}})^{k-1}\meth^k\phi_0||^2_{L^2_{sc}(\mathcal{S}_{u',v'})}
+\int_0^v\int_{u_{\infty}}^u\frac{a}{|u'|^2}
||\phi_{0,1}\zeta_{0,1,3}(a^{\frac{1}{2}})^{k-1}\meth^k\phi_{0,1}||^2_{L^2_{sc}(\mathcal{S}_{u',v'})} \\
&+\int_0^v\int_{u_{\infty}}^u\frac{a}{|u'|^2}
||\Gamma(\sigma,\lambda,...)(a^{\frac{1}{2}})^{k-1}\meth^k\zeta_l||^2_{L^2_{sc}(\mathcal{S}_{u',v'})}
+\int_0^v\int_{u_{\infty}}^u\frac{a}{|u'|^2}
||\zeta_m(a^{\frac{1}{2}})^{k-1}\meth^k\Gamma||^2_{L^2_{sc}(\mathcal{S}_{u',v'})} \\
&+\sum_{i_1+i_2+i_3=k, i_l<k}\int_0^v\int_{u_{\infty}}^u\frac{a}{|u'|^2}
||(a^{\frac{1}{2}})^{k-1}\meth^{i_1}\phi_j\meth^{i_1}\phi_l
\meth^{i_1}\zeta_m||^2_{L^2_{sc}(\mathcal{S}_{u',v'})} \\
&+\sum_{i_1+i_2=k, i_l<k}\int_0^v\int_{u_{\infty}}^u\frac{a}{|u'|^2}
||(a^{\frac{1}{2}})^{k-1}\meth^{i_1}\Gamma\meth^{i_1}\zeta_m||^2_{L^2_{sc}(\mathcal{S}_{u',v'})} 
=J_1+...+J_8.
\end{align*}

We then have
\begin{align*}
J_1\leq&\int_0^v\int_{u_{\infty}}^u\frac{a}{|u'|^2}\frac{1}{|u'|^4}(a\phi[\phi_{0}]^2)^2
||(a^{\frac{1}{2}})^{k-1}\mathcal{D}^k\zeta_4||^2_{L^2_{sc}(\mathcal{S}_{u',v'})} \\
\lesssim&\int_{u_{\infty}}^u\frac{a^2}{|u'|^4}\frac{a}{|u'|^2}\int_0^v
||(a^{\frac{1}{2}})^{k-1}\mathcal{D}^k\zeta_4||^2_{L^2_{sc}(\mathcal{S}_{u',v'})} 
\leq\frac{a^2}{|u|^3}\bm\zeta[\zeta_4]^2.
\end{align*}
Here the definition of $\bm\zeta[\zeta_4]$ is
\begin{align*}
\bm\zeta[\zeta_4]\equiv
||\frac{1}{|u|}(a^{\frac{1}{2}})^{k}\mathcal{D}^k\zeta_4||_{L^2_{sc}(\mathcal{N}_u(0,v))}
=\left(\frac{a}{|u|^2}\int_0^v||a^{\frac{k-1}{2}}
\mathcal{D}^k\zeta_4||^2_{L^2_{sc}(\mathcal{S}_{u,v'})} \right)^{\frac{1}{2}}.
\end{align*}

\begin{align*}
J_2\leq&\int_0^v\int_{u_{\infty}}^u\frac{a}{|u'|^2}\frac{1}{|u'|^4}(a\phi[\phi_{0,1}]^2)^2
||(a^{\frac{1}{2}})^{k-1}\mathcal{D}^k\zeta_{0,1,3}||^2_{L^2_{sc}(\mathcal{S}_{u',v'})} \\
\leq&\frac{a^3}{|u|^4}\frac{1}{a}\int_{u_{\infty}}^u\frac{a}{|u'|^2}
||(a^{\frac{1}{2}})^{k-1}\mathcal{D}^k\zeta_{1,3}||^2_{L^2_{sc}(\mathcal{S}_{u',v'})}\\
&+\frac{a^4}{|u|^5}\frac{1}{a}\int_0^v
||(a^{\frac{1}{2}})^{k-1}\mathcal{D}^k\zeta_{0}||^2_{L^2_{sc}(\mathcal{S}_{u',v'})}
\leq\frac{a^3}{|u|^4}\underline{\bm\zeta}[\zeta_{1,3}]^2
+\frac{a^4}{|u|^5}\bm\zeta[\zeta_{0}]^2.
\end{align*}

\begin{align*}
J_3\leq&\int_0^v\int_{u_{\infty}}^u\frac{a}{|u'|^2}\frac{1}{|u'|^4}
a\phi[\phi_0]^2\frac{|u'|^2}{a}\zeta[\zeta_4]^2
||(a^{\frac{1}{2}})^{k-1}\mathcal{D}^k\phi_0||^2_{L^2_{sc}(\mathcal{S}_{u',v'})} \\
\lesssim&\int_{u_{\infty}}^u\frac{a^2}{|u'|^4}\phi[\phi_0]^2\zeta[\zeta_4]^2
\frac{1}{a}\int_0^v||(a^{\frac{1}{2}})^{k-1}\mathcal{D}^k\phi_0||^2_{L^2_{sc}(\mathcal{S}_{u',v'})} \\
\leq&\frac{a^2}{|u|^3}\phi[\phi_0]^2\zeta[\zeta_4]^2\bm\phi[\phi_0]^2.
\end{align*}

\begin{align*}
J_4\leq&\int_0^v\int_{u_{\infty}}^u\frac{a}{|u'|^2}\frac{1}{|u'|^4}a\phi[\phi_{0,1}]^2
(\zeta[\zeta_{1,3}]^2,a\zeta[\zeta_{0}]^2)
||(a^{\frac{1}{2}})^{k-1}\mathcal{D}^k\phi_{0,1}||^2_{L^2_{sc}(\mathcal{S}_{u',v'})} \\
\lesssim&\frac{a^4}{|u|^5}\phi[\phi_{0,1}]^2\zeta[\zeta_{0}]^2\bm\phi[\phi_0]^2
+\frac{a^3}{|u|^4}\phi[\phi_{0,1}]^2\zeta[\zeta_{0}]^2\underline{\bm\phi}[\phi_1]^2.
\end{align*}

\begin{align*}
J_5\leq&\int_0^v\int_{u_{\infty}}^u\frac{a}{|u'|^2}(
||\Gamma(\lambda,\ulomega)(a^{\frac{1}{2}})^{k-1}\mathcal{D}^k\zeta_0||^2_{L^2_{sc}(\mathcal{S}_{u',v'})}
+||\sigma(a^{\frac{1}{2}})^{k-1}\mathcal{D}^k\zeta_2||^2_{L^2_{sc}(\mathcal{S}_{u',v'})} \\
&+||\tau(a^{\frac{1}{2}})^{k-1}\mathcal{D}^k\zeta_{1,3}||^2_{L^2_{sc}(\mathcal{S}_{u',v'})}
+||\rho(a^{\frac{1}{2}})^{k-1}\mathcal{D}^k\zeta_4||^2_{L^2_{sc}(\mathcal{S}_{u',v'})}) \\
\leq&\big(\frac{a}{|u|}\Gamma[\lambda]^2+\frac{a^2}{|u|^3}\Gamma[\ulomega]^2\big)\bm\zeta[\zeta_0]^2
+\frac{a}{|u|^2}\Gamma[\sigma]^2\underline{\bm\zeta}[\zeta_2]^2 \\
&+\frac{a}{|u|^2}\Gamma[\tau]^2\underline{\bm\zeta}[\zeta_{1,3}]^2
+\frac{1}{|u|}\Gamma[\rho]^2\bm\zeta[\zeta_4]^2
\lesssim\frac{a}{|u|}\bm\zeta[\zeta_0]^2+\frac{\mathcal{O}^4}{a}.
\end{align*}

\begin{align*}
J_6\leq&\int_0^v\int_{u_{\infty}}^u\frac{a}{|u'|^2}(
||\zeta_0(a^{\frac{1}{2}})^{k-1}\mathcal{D}^k\Gamma(\Timu,\ulomega,\lambda)||^2_{L^2_{sc}(\mathcal{S}_{u',v'})}
+||\zeta_2(a^{\frac{1}{2}})^{k-1}\mathcal{D}^k\sigma||^2_{L^2_{sc}(\mathcal{S}_{u',v'})} \\
&+||\zeta_{1,3}(a^{\frac{1}{2}})^{k-1}\mathcal{D}^k\tau||^2_{L^2_{sc}(\mathcal{S}_{u',v'})}
+||\zeta_4(a^{\frac{1}{2}})^{k-1}\mathcal{D}^k\rho||^2_{L^2_{sc}(\mathcal{S}_{u',v'})}) \\
\lesssim&\frac{a}{|u|^2}\zeta[\zeta_0]^2\underline{\bm\Gamma}_{11}[\ulomega]^2
+\frac{1}{a}\zeta[\zeta_0]^2\underline{\bm\Gamma}_{11}[\Timu]^2
+\zeta[\zeta_0]^2\underline{\bm\Gamma}_{11}[\lambda]^2
+\frac{a}{|u|^3}\zeta[\zeta_2]^2\bm\Gamma_{11}[\sigma]^2 \\
&+\frac{1}{|u|^2}\zeta[\zeta_{1,3}]^2\underline{\bm\Gamma}_{11}[\tau]^2
+\frac{1}{|u|}\zeta[\zeta_4]^2\bm\Gamma_{11}[\rho]^2
\lesssim1+\frac{\mathcal{O}^4}{a}.
\end{align*}

\begin{align*}
J_7\leq&\frac{1}{a}\int_0^v\int_{u_{\infty}}^u\frac{a}{|u'|^2}\frac{1}{|u|^4}
\frac{|u|^2}{a}\zeta[\zeta_4]^2a^2\phi[\phi_{0,1}]^4
\lesssim\frac{1}{|u|^3}\mathcal{O}^6.
\end{align*}

\begin{align*}
J_8\leq&\frac{1}{a}\int_0^v\int_{u_{\infty}}^u\frac{a}{|u'|^2}\frac{1}{|u|^2}
\frac{|u'|^2}{a}\mathcal{O}^4\lesssim\frac{1}{a|u|}\mathcal{O}^4.
\end{align*}

Collect the results one has
\begin{align*}
J\lesssim\frac{a}{|u|}\bm\zeta[\zeta_0]^2+1+\frac{\mathcal{O}^6}{a},
\end{align*}
and then 
\begin{align*}
M\leq\frac{a^{\frac{1}{2}}}{|u|^{\frac{1}{2}}}a^{\frac{1}{2}}\bm{\zeta}[\zeta_0]J^{\frac{1}{2}}
\lesssim\frac{a^{\frac{3}{2}}}{|u|}\bm{\zeta}[\zeta_0]^2
+\frac{a}{|u|^{\frac{1}{2}}}\bm{\zeta}[\zeta_0]
+\mathcal{O}^3\bm{\zeta}[\zeta_0].
\end{align*}

For $N$ We have
\begin{align*}
N\leq2\left(\int_0^v
||(a^{\frac{1}{2}})^{k-1}\meth^k\zeta_1||^2_{L^2_{sc}
(\mathcal{N}'_{v'}(u_{\infty},u))}\right)^{\frac{1}{2}}H^{\frac{1}{2}}
\lesssim a^{\frac{1}{2}}\underline{\bm\zeta}[\zeta_1]H^{\frac{1}{2}},
\end{align*}
where 
\begin{align*}
H\equiv\int_0^v\int_{u_{\infty}}^u\frac{a}{|u'|^2}||(a^{\frac{1}{2}})^{k-1}Q_k||^2_{L^2_{sc}(\mathcal{S}_{u',v'})}.
\end{align*}
Substitute $Q_k$ we have
\begin{align*}
H\lesssim&\int_0^v\int_{u_{\infty}}^u\frac{a}{|u'|^2}
||\phi_{0,1}^2(a^{\frac{1}{2}})^{k-1}\meth^k\zeta_{0,1,3}||^2_{L^2_{sc}(\mathcal{S}_{u',v'})}
+\int_0^v\int_{u_{\infty}}^u\frac{a}{|u'|^2}
||\phi_{0,1}\zeta_{0,1,3}(a^{\frac{1}{2}})^{k-1}\meth^k\phi_{0,1}||^2_{L^2_{sc}(\mathcal{S}_{u',v'})} \\
&+\int_0^v\int_{u_{\infty}}^u\frac{a}{|u'|^2}
||\Gamma(\sigma,...)(a^{\frac{1}{2}})^{k-1}\meth^k\zeta_l||^2_{L^2_{sc}(\mathcal{S}_{u',v'})}
+\int_0^v\int_{u_{\infty}}^u\frac{a}{|u'|^2}
||\zeta_m(a^{\frac{1}{2}})^{k-1}\meth^k\Gamma||^2_{L^2_{sc}(\mathcal{S}_{u',v'})} \\
&+\sum_{i_1+...+i_5=k, i_l<k}\int_0^v\int_{u_{\infty}}^u\frac{a}{|u'|^2}
||(a^{\frac{1}{2}})^{k-1}\meth^{i_1}\Gamma^{i_2}\meth^{i_3}\phi_j\meth^{i_4}\phi_l
\meth^{i_5}\zeta_m||^2_{L^2_{sc}(\mathcal{S}_{u',v'})} \\
&+\sum_{i_1+...+i_4=k, i_l<k}\int_0^v\int_{u_{\infty}}^u\frac{a}{|u'|^2}
||(a^{\frac{1}{2}})^{k-1}\meth^{i_1}\Gamma^{i_2}
\meth^{i_3}\Gamma\meth^{i_4}\zeta_m||^2_{L^2_{sc}(\mathcal{S}_{u',v'})} \\
&+\sum_{i_1+i_2=k-1}\int_0^v\int_{u_{\infty}}^u\frac{a}{|u'|^2}
||(a^{\frac{1}{2}})^{k-1}\meth^{i_1}K\meth^{i_2}\zeta_{0}||^2_{L^2_{sc}(\mathcal{S}_{u',v'})}
=H_1+...+H_7.
\end{align*}

From the analysis in $J$ we have
\begin{align*}
H_1\lesssim&\frac{a^3}{|u|^4}\underline{\bm\zeta}[\zeta_{1,3}]^2
+\frac{a^4}{|u|^5}\bm\zeta[\zeta_{0}]^2,
\end{align*}

\begin{align*}
H_2\lesssim&\frac{a^4}{|u|^5}\phi[\phi_{0,1}]^2\zeta[\zeta_{0}]^2\bm\phi[\phi_0]^2
+\frac{a^3}{|u|^4}\phi[\phi_{0,1}]^2\zeta[\zeta_{0}]^2\underline{\bm\phi}[\phi_0]^2,
\end{align*}

\begin{align*}
H_3\leq&\int_0^v\int_{u_{\infty}}^u\frac{a}{|u'|^2}(
||\Gamma(\tau,\pi)(a^{\frac{1}{2}})^{k-1}\mathcal{D}^k\zeta_0||^2_{L^2_{sc}(\mathcal{S}_{u',v'})}
+||\Gamma(\rho,\sigma)(a^{\frac{1}{2}})^{k-1}\mathcal{D}^k\zeta_{1,3}||^2_{L^2_{sc}(\mathcal{S}_{u',v'})} ) \\
\lesssim&\frac{a^2}{|u|^3}\Gamma[\tau,\pi]^2\bm\zeta[\zeta_0]^2
+\frac{a^2}{|u|^2}\Gamma[\sigma]^2\underline{\bm\zeta}[\zeta_{1,3}]^2,
\end{align*}

\begin{align*}
H_4\leq&\int_0^v\int_{u_{\infty}}^u\frac{a}{|u'|^2}(
||\zeta_0(a^{\frac{1}{2}})^{k-1}\meth^k\Gamma(\tau,\pi)||^2_{L^2_{sc}(\mathcal{S}_{u',v'})}
+||\zeta_{1,3}(a^{\frac{1}{2}})^{k-1}\meth^k\Gamma(\rho,\sigma)||^2_{L^2_{sc}(\mathcal{S}_{u',v'})}) \\
\lesssim&\frac{a}{|u|^2}\zeta[\zeta_0]^2\underline{\bm\Gamma}_{11}[\tau,\pi]^2
+\frac{a}{|u|^3}\zeta[\zeta_{1,3}]^2\bm\Gamma_{11}[\sigma]^2,
\end{align*}

\begin{align*}
H_5\leq&\frac{1}{a}\int_0^v\int_{u_{\infty}}^u\frac{a}{|u'|^2}
\frac{1}{|u'|^4}a\zeta[\zeta_0]^2a^2\phi[\phi_{0,1}]^4
\lesssim\frac{a^3}{|u|^5}\zeta[\zeta_0]^2\phi[\phi_{0,1}]^4,
\end{align*}

\begin{align*}
H_6\leq&\frac{1}{a}\int_0^v\int_{u_{\infty}}^u\frac{a}{|u'|^2}
\frac{1}{|u'|^2}a\mathcal{O}^4\lesssim\frac{a}{|u|^3}\mathcal{O}^4,
\end{align*}

\begin{align*}
H_7\lesssim&
\sum_{i=0}^{k-1}\int_0^v\int_{u_{\infty}}^u\frac{a}{|u'|^2}\frac{1}{|u'|^2}||(a^{\frac{1}{2}}\mathcal{D})^{i}K||^2
||(a^{\frac{1}{2}}\mathcal{D})^{k-1-i}\zeta_0||^2 \\
\lesssim&\int_0^v\frac{a^2}{|u|^4}\zeta[\zeta_{0}]^2
\int_0^v||(a^{\frac{1}{2}}\mathcal{D})^{k-1}K||^2_{L^2_{sc}(\mathcal{S}_{u',v'})}
\lesssim\frac{a^2}{|u|^3}\zeta[\zeta_{0}]^2
||(a^{\frac{1}{2}}\mathcal{D})^{k-1}K||^2_{L^2_{sc}(\mathcal{N}_{u}(0,v))}.
\end{align*}

Collect the results one has
\begin{align*}
H\lesssim\frac{a^2}{|u|^2}\Gamma[\sigma]^2\underline{\bm\zeta}[\zeta_{1,3}]^2+\frac{\mathcal{O}^6}{a},
\end{align*}
and then
\begin{align*}
N\lesssim a^{\frac{1}{2}}\underline{\bm\zeta}[\zeta_1](\frac{a}{|u|}\underline{\bm\zeta}[\zeta_{1}]
+\frac{\mathcal{O}^3}{a^{\frac{1}{2}}}).
\end{align*}
Combine with $M$ one finally obtain

\begin{align*}
&\frac{1}{a}||(a^{\frac{1}{2}})^{k-1}\meth^k\zeta_0||^2_{L^2_{sc}(\mathcal{N}_u(0,v))}
+\frac{1}{a}||(a^{\frac{1}{2}})^{k-1}\meth^k\zeta_1||^2_{L^2_{sc}(\mathcal{N}'_v(u_{\infty},u))} 
\lesssim\mathcal{I}^2_0+\frac{1}{a^{\frac{1}{4}}}.
\end{align*}

\end{proof}

\begin{proposition}
\label{EnergyEstimatezeta12}
For $0\leq k\leq11$,  one has that 
\begin{align*}
||(a^{\frac{1}{2}})^{k-1}\mathcal{D}^k\zeta_1||^2_{L^2_{sc}(\mathcal{N}_u(0,v))}
+||(a^{\frac{1}{2}})^{k-1}\mathcal{D}^k\zeta_2||^2_{L^2_{sc}(\mathcal{N}'_v(u_{\infty},u))} 
\leq\mathcal{I}^2_0+1.
\end{align*}
\end{proposition}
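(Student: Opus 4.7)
The plan is to mirror Proposition \ref{EnergyEstimatezeta01} with $(\zeta_1,\zeta_2)$ in place of $(\zeta_0,\zeta_1)$. Since the signatures $s_2(\zeta_1)=1/2$, $s_2(\zeta_2)=1$ put this pair into the ``normal'' bracket, I would apply the energy inequality \eqref{EnergyinequalityNormal}. The evolution pair comes from appendix \ref{EquationzetaNoCurv}: both $\mthorn'\zeta_1-\meth\zeta_2$ and $\mthorn\zeta_2-\meth'\zeta_1$ are of the form (connection)$\cdot\zeta+$(matter)$\cdot\zeta$, since $(\zeta_1,\zeta_2)$ sits in the no-curvature group identified after \eqref{EoMzeta}.

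First I would commute $\meth^k$ with $\mthorn$, $\mthorn'$ using the formulas in Section \ref{Preliminaries} to obtain
\begin{align*}
\mthorn'\meth^k\zeta_1+(\lambda_0+k)\mu\meth^k\zeta_1-\meth^{k+1}\zeta_2 &= P_k,\\
\mthorn\meth^k\zeta_2-\meth'\meth^k\zeta_1 &= Q_k,
\end{align*}
with $P_k, Q_k$ of the same schematic shape as in Prop. \ref{EnergyEstimatezeta01}: sums of $\phi\phi\meth^k\zeta$, $\phi\zeta\meth^k\phi$, $\Gamma\meth^k\zeta$, $\zeta\meth^k\Gamma$, together with a Gauss-curvature piece $\meth^iK\meth^{k-1-i}\zeta_1$. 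Feeding this into \eqref{EnergyinequalityNormal} reduces the claim to controlling two bulk integrals $M$ (with $\meth^k\zeta_1$) and $N$ (with $\meth^k\zeta_2$). Cauchy--Schwarz then gives $M\lesssim (a/|u|)^{1/2}\bm\zeta[\zeta_1]\,J^{1/2}$ and $N\lesssim a^{1/2}\underline{\bm\zeta}[\zeta_2]\,H^{1/2}$, where $J$, $H$ are space-time $L^2_{sc}$-norms of $P_k$, $Q_k$. The weight on the target norms here is $(a^{1/2})^{k-1}\mathcal{D}^k$ without an extra $1/a$ factor (unlike Prop. \ref{EnergyEstimatezeta01}, where the $\bm\zeta[\zeta_0]$-normalization carried one), which is precisely why the final bound is $\mathcal{I}_0^2+1$ rather than $\mathcal{I}_0^2+a^{-1/4}$.

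The main obstacle will be the simultaneous appearance of $\meth^k\zeta_3$ and $\meth^k\zeta_4$ inside $P_k, Q_k$: the transport equations for $\zeta_1,\zeta_2$ couple naturally to $\zeta_3$ through $\bar\sigma,\bar\tau$ and to $\zeta_4$ through $\rho$. For the top-order $\zeta_3$ terms I would invoke Remark \ref{Ellipticzeta3zeta1}, which trades $\|\meth^k\zeta_3\|_{L^2_{sc}}$ for $\|\meth^k\zeta_1\|_{L^2_{sc}}$ up to lower-order Weyl/connection pieces, thereby keeping the estimate inside the closure of the $(\zeta_0,\zeta_1)$ and $(\zeta_1,\zeta_2)$ energy chains; the resulting self-coupling is handled by a Gr\"onwall step on the combined energy. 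For the $\zeta_4$ terms, the critical $1/|u|$ weight built into $\bm\zeta[\zeta_4]$ (see its definition in Section \ref{Preliminaries}) combines with the $a/|u'|^2$ factor from \eqref{EnergyinequalityNormal} to yield an $a^2/|u|^3$ decay, so after Cauchy--Schwarz the $\zeta_4$ contribution is $O(a^{-1/2})$ and absorbed into the ``$+1$''.

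For everything else, term-by-term estimates follow the $(\zeta_0,\zeta_1)$ template: matter quadratics are contracted via $L^\infty_{sc}$-bounds on $\phi_j,\zeta_j$, connection--$\zeta$ products use the $L^2_{sc}$-estimates of Section \ref{L2estimate}, top-order $\meth^k\Gamma$ is absorbed by $\bm\Gamma_{11}$ from Section \ref{Elliptic}, and the Gauss-curvature piece uses Remark \ref{AdvancedGauss}. Assembling all contributions and invoking the bootstrap scaling $\mathcal{O}^{20}\leq a^{1/16}$ to absorb the $\mathcal{O}^j/a^{j/4}$ remainders delivers the initial-data contribution $\mathcal{I}_0^2$ plus a negligible error, i.e. the bound $\mathcal{I}_0^2+1$.
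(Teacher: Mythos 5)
Your overall framework (apply the energy inequality \eqref{EnergyinequalityNormal} to the pair $(\zeta_1,\zeta_2)$, commute $\meth^k$, split into $M$, $N$ and estimate $J$, $H$) matches the paper. However, there is a genuine gap in the way you dispose of the $\zeta_4$ coupling, which is in fact the critical point of this proposition.

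You claim that the $1/|u|$ weight inside $\bm\zeta[\zeta_4]$, combined with the $a/|u'|^2$ factor from the energy inequality, yields $a^2/|u|^3$ decay so that the $\zeta_4$ contribution is $O(a^{-1/2})$ and can be absorbed into the ``$+1$''. This is not what the weight counting gives. The critical term comes from $\zeta_4\bar\sigma$ in the $\mthorn\zeta_2$ equation. Following the paper's $\tilde H_1$ step, the $L^2_{sc}$-H\"older inequality gives a $1/|u'|^2$ gain; combining with $a/|u'|^2$ and integrating over $v'$ converts $\int_0^v\|(a^{1/2})^{k-1}\meth^k\zeta_4\|^2$ into $(|u'|^2/a)\bm\zeta[\zeta_4]^2$, leaving $\int_{u_\infty}^u(a/|u'|^2)\Gamma[\sigma]^2\bm\zeta[\zeta_4]^2\,du'\lesssim (a/|u|)\Gamma[\sigma]^2\bm\zeta[\zeta_4]^2$. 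Since $|u|\leq a/4$ on the working domain, $a/|u|$ is merely $O(1)$, not small, and the contribution is of order $\bm\zeta[\zeta_4]^2$, which under the bootstrap hypothesis alone is only $\lesssim\mathcal{O}^2$. Your proposed $O(a^{-1/2})$ smallness does not hold, so the bound $\mathcal{I}_0^2+1$ does not follow without further input.

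The missing ingredient is exactly the one the paper invokes at the end of its proof: one must first establish Proposition \ref{EnergyEstimatezeta45} for the pair $(\zeta_4,\zeta_5)$ (with its $a/|u'|^4$-weighted energy identity \eqref{EnergyinequalityAbNormal}), which yields $\bm\zeta[\zeta_4]^2\lesssim\mathcal{I}_0^2 + a^{-1/4}$. Only after substituting this scale-critical bound for $\bm\zeta[\zeta_4]^2$ does the $H$-integral close to $\mathcal{I}_0^2+1$. Your proof therefore has the logical dependency reversed: Prop.~\ref{EnergyEstimatezeta12} depends on Prop.~\ref{EnergyEstimatezeta45}, not the other way around, and this must be made explicit. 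A secondary inaccuracy: the $\zeta_3$ terms in the $(\zeta_1,\zeta_2)$ system arise as cubic matter products $\bar\zeta_3\phi_0\phi_1$ rather than through $\bar\sigma,\bar\tau$ couplings, and are handled directly as $O(\mathcal{O}^6/a)$ via Remark \ref{zetaphi2}; invoking Remark \ref{Ellipticzeta3zeta1} is unnecessary here.
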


\begin{proof}
For pair $(\zeta_1,\zeta_2)$ we make use of 
\begin{align*}
\mthorn'\zeta_1- \meth\,\zeta_2 &= 
- \mathrm{i}\,\zeta_5\,\phi_0\,\bar{\phi}_0
+ \mathrm{i}\,\bar{\zeta}_4\,\phi_0\,\phi_1
+ \mathrm{i}\,\zeta_4\,\bar{\phi}_0\,\phi_1
- \mathrm{i}\,\bar{\zeta}_1\,\phi_1^2 
- \mathrm{i}\,\zeta_2\,\phi_0\,\bar{\phi}_1\\
&\quad
+ \mathrm{i}\,\zeta_1\,\phi_1\,\bar{\phi}_1
- 2\,\zeta_1\,\mu+\frac{\zeta_1\,\ulomega}{2}
+ \zeta_5\,\rho
- \frac{\zeta_2\,\tau}{2}
- \zeta_4\,\bar{\tau}, \\
\mthorn\,\zeta_2- \meth'\,\zeta_1 &= 2\,\mathrm{i}\,\zeta_2\,\phi_0\,\bar{\phi}_0
- \mathrm{i}\,\bar{\zeta}_3\,\phi_0\,\phi_1
- 2\,\mathrm{i}\,\zeta_1\,\bar{\phi}_0\,\phi_1
+ \mathrm{i}\,\bar{\zeta}_0\,\phi_1^2
- \zeta_0\,\lambda \\
&\quad
+ 2\,\zeta_1\,\pi
+ \zeta_2\,\rho
+ \zeta_4\,\bar{\sigma}
- \frac{\zeta_1\,\tau}{2} \,.
\end{align*}

Applying $\meth^k$ and commuting with $\mthorn$ and $\mthorn'$ we have that
\begin{align*}
&\mthorn'\meth^k\zeta_1+(k+2)\mu\meth^k\zeta_1-\meth^{k+1}\zeta_2=P_k, \\
&\mthorn\meth^k\zeta_2-\meth'\meth^k\zeta_1=Q_k,
\end{align*}
where
\begin{align*}
P_k=&\sum_{i_1+i_2+i_3=k}\meth^{i_1}\phi_j\meth^{i_1}\phi_l\meth^{i_1}\zeta_m
+\sum_{i_1+i_2=k}\meth^{i_1}\Gamma\meth^{i_1}\zeta_m,
\end{align*}
\begin{align*}
Q_k=&\sum_{i_1+...+i_5=k}\meth^{i_1}\Gamma^{i_2}\meth^{i_3}\phi_j\meth^{i_4}\phi_l\meth^{i_5}\zeta_m
+\sum_{i_1+...+i_4=k}\meth^{i_1}\Gamma^{i_2}\meth^{i_3}\Gamma\meth^{i_4}\zeta_m \\
&+\sum_{i_1+i_2=k-1}\meth^{i_1}K\meth^{i_2}\zeta_{0}.
\end{align*}

Then apply the energy inequality one has
\begin{align*}
&||(a^{\frac{1}{2}})^{k-1}\meth^k\zeta_1||^2_{L^2_{sc}(\mathcal{N}_u(0,v))}
+||(a^{\frac{1}{2}})^{k-1}\meth^k\zeta_2||^2_{L^2_{sc}(\mathcal{N}'_v(u_{\infty},u))} 
\lesssim\mathcal{I}^2_0+(M+N),
\end{align*}
where
\begin{align*}
M\equiv&2\int_0^v\int_{u_{\infty}}^u\frac{a}{|u'|^2}
||(a^{\frac{1}{2}})^{k-1}\meth^k\zeta_1||_{L^2_{sc}(\mathcal{S}_{u',v'})}
||(a^{\frac{1}{2}})^{k-1}P_k||_{L^2_{sc}(\mathcal{S}_{u',v'})}, \\
N\equiv&2\int_0^v\int_{u_{\infty}}^u\frac{a}{|u'|^2}
||(a^{\frac{1}{2}})^{k-1}\meth^k\zeta_2||_{L^2_{sc}(\mathcal{S}_{u',v'})}
||(a^{\frac{1}{2}})^{k-1}Q_k||_{L^2_{sc}(\mathcal{S}_{u',v'})}. \\
\end{align*}

For $M$ one need estimate

\begin{align*}
J\equiv\int_0^v\int_{u_{\infty}}^u\frac{a}{|u'|^2}||(a^{\frac{1}{2}})^{k-1}P_k||^2_{L^2_{sc}(\mathcal{S}_{u',v'})}.
\end{align*}
Compared with the analysis in pair $(\zeta_0,\zeta_1)$, we focus on the different terms. 
For terms $\zeta_{5}\phi_0\bar\phi_0$ and $\zeta_2\phi_0\bar\phi_1$ which lead to the following:
\begin{align*}
J_1\equiv&\int_0^v\int_{u_{\infty}}^u\frac{a}{|u|^2}
||\phi_{0}^2(a^{\frac{1}{2}})^{k-1}\meth^k\zeta_{5}||^2_{L^2_{sc}(\mathcal{S}_{u',v'})}
+\int_0^v\int_{u_{\infty}}^u\frac{a}{|u'|^2}
||\phi_{0}\zeta_{5}(a^{\frac{1}{2}})^{k-1}\meth^k\phi_{0}||^2_{L^2_{sc}(\mathcal{S}_{u',v'})} \\
&+\int_0^v\int_{u_{\infty}}^u\frac{a}{|u|^2}
||\phi_{0,1}^2(a^{\frac{1}{2}})^{k-1}\meth^k\zeta_{2}||^2_{L^2_{sc}(\mathcal{S}_{u',v'})}
+\int_0^v\int_{u_{\infty}}^u\frac{a}{|u'|^2}
||\phi_{0,1}\zeta_{2}(a^{\frac{1}{2}})^{k-1}\meth^k\phi_{0,1}||^2_{L^2_{sc}(\mathcal{S}_{u',v'})} \\
&+\sum_{i_1+i_2+i_3=k, i_l<k}\int_0^v\int_{u_{\infty}}^u\frac{a}{|u'|^2}
||(a^{\frac{1}{2}})^{k-1}\meth^{i_1}\phi_j\meth^{i_1}\phi_l
\meth^{i_1}\zeta_{2,5}||^2_{L^2_{sc}(\mathcal{S}_{u',v'})}
=J_{11}+...+J_{15}.
\end{align*}

One has the control
\begin{align*}
J_{11}\lesssim&\int_0^v\int_{u_{\infty}}^u\frac{a}{|u|^2}\frac{a^2}{|u|^4}\phi[\phi_{0}]^4
||(a^{\frac{1}{2}})^{k-1}\meth^k\zeta_{5}||^2_{L^2_{sc}(\mathcal{S}_{u',v'})}\\
\lesssim&\frac{a}{|u|^2}\phi[\phi_{0}]^4\underline{\bm\zeta}[\zeta_5]^2.
\end{align*}
Here we make use of the definition
\begin{align*}
\underline{\bm\zeta}[\zeta_5]\equiv
||\frac{1}{|u'|}(a^{\frac{1}{2}})^{k}\mathcal{D}^k\zeta_5||_{L^2_{sc}(\mathcal{N}'_v(u_{\infty},u))}\equiv\left(\int_{u_{\infty}}^u\frac{a^2}{|u'|^4}
||a^{\frac{k-1}{2}}\mathcal{D}^k\zeta_5||^2_{L^2_{sc}(\mathcal{S}_{u,v'})} \right)^{\frac{1}{2}}.
\end{align*}

$J_{12}$ is similar to $J_3$ in the analysis of pair $(\zeta_0,\zeta_1)$ and hence one has
\begin{align*}
J_{12}\lesssim\frac{a^2}{|u|^3}\phi[\phi_0]^2\zeta[\zeta_5]^2\bm\phi[\phi_0]^2.
\end{align*}

\begin{align*}
J_{13}\lesssim&\int_0^v\int_{u_{\infty}}^u\frac{a}{|u|^2}\frac{a^2}{|u|^4}\phi[\phi_{0,1}]^4
||(a^{\frac{1}{2}})^{k-1}\meth^k\zeta_{2}||^2_{L^2_{sc}(\mathcal{S}_{u',v'})}\\
\lesssim&\frac{a^2}{|u|^4}\phi[\phi_{0,1}]^4\underline{\bm\zeta}[\zeta_2]^2.
\end{align*}

$J_{14}$ is similar to $J_4$ in the analysis of pair $(\zeta_0,\zeta_1)$ and hence one has
\begin{align*}
J_{14}\lesssim\frac{a^3}{|u|^5}\phi[\phi_{0,1}]^2\zeta[\zeta_{2}]^2\bm\phi[\phi_0]^2
+\frac{a^2}{|u|^4}\phi[\phi_{0,1}]^2\zeta[\zeta_{2}]^2\underline{\bm\phi}[\phi_1]^2.
\end{align*}

$J_{15}$ is similar to $J_7$ in the analysis of pair $(\zeta_0,\zeta_1)$ and hence one has
\begin{align*}
J_{15}\leq&\frac{1}{a}\int_0^v\int_{u_{\infty}}^u\frac{a}{|u'|^2}\frac{1}{|u|^4}
\frac{|u|^2}{a}\zeta[\zeta_4]^2a^2\phi[\phi_{0,1}]^4
\lesssim\frac{1}{|u|^3}\mathcal{O}^6.
\end{align*}

For terms $\Gamma\zeta_l$ one need estimate
\begin{align*}
J_2\equiv&\int_0^v\int_{u_{\infty}}^u\frac{a}{|u'|^2}
||\Gamma(a^{\frac{1}{2}})^{k-1}\meth^k\zeta_l||^2_{L^2_{sc}(\mathcal{S}_{u',v'})}
+\int_0^v\int_{u_{\infty}}^u\frac{a}{|u'|^2}
||\zeta_m(a^{\frac{1}{2}})^{k-1}\meth^k\Gamma||^2_{L^2_{sc}(\mathcal{S}_{u',v'})} \\
&+\sum_{i_1+i_2=k, i_l<k}\int_0^v\int_{u_{\infty}}^u\frac{a}{|u'|^2}
||(a^{\frac{1}{2}})^{k-1}\meth^{i_1}\Gamma\meth^{i_1}\zeta_m||^2_{L^2_{sc}(\mathcal{S}_{u',v'})} 
=J_{21}+J_{22}+J_{23}.
\end{align*}
We have
\begin{align*}
J_{21}\lesssim&\int_0^v\int_{u_{\infty}}^u\frac{a}{|u'|^2}
(||\Gamma(\lambda,\ulomega)(a^{\frac{1}{2}})^{k-1}\meth^k\zeta_1||^2_{L^2_{sc}(\mathcal{S}_{u',v'})}
+||\tau(a^{\frac{1}{2}})^{k-1}\meth^k\zeta_2||^2_{L^2_{sc}(\mathcal{S}_{u',v'})} \\
&+||\tau(a^{\frac{1}{2}})^{k-1}\meth^k\zeta_4||^2_{L^2_{sc}(\mathcal{S}_{u',v'})}
+||\rho(a^{\frac{1}{2}})^{k-1}\meth^k\zeta_5||^2_{L^2_{sc}(\mathcal{S}_{u',v'})}) \\
\lesssim&\frac{1}{|u|}\Gamma[\lambda]^2\bm\zeta[\zeta_1]^2
+\frac{1}{|u|^2}\Gamma[\tau]^2\underline{\bm\zeta}[\zeta_2]^2
+\frac{1}{|u|}\Gamma[\tau]^2\bm\zeta[\zeta_4]^2
+\frac{1}{a}\Gamma[\rho]^2\underline{\bm\zeta}[\zeta_5]^2.
\end{align*}

\begin{align*}
J_{22}\leq&\int_0^v\int_{u_{\infty}}^u\frac{a}{|u'|^2}(
||\zeta_1(a^{\frac{1}{2}})^{k-1}\mathcal{D}^k\Gamma(\Timu,\ulomega,\lambda)||^2_{L^2_{sc}(\mathcal{S}_{u',v'})}
+||\zeta_2(a^{\frac{1}{2}})^{k-1}\mathcal{D}^k\tau||^2_{L^2_{sc}(\mathcal{S}_{u',v'})} \\
&+||\zeta_4(a^{\frac{1}{2}})^{k-1}\mathcal{D}^k\tau||^2_{L^2_{sc}(\mathcal{S}_{u',v'})}
+||\zeta_5(a^{\frac{1}{2}})^{k-1}\mathcal{D}^k\rho||^2_{L^2_{sc}(\mathcal{S}_{u',v'})}) \\
\lesssim&\frac{1}{a^2}\zeta[\zeta_1]^2\underline{\bm\Gamma}_{11}[\ulomega,\Timu]
+\frac{1}{a}\zeta[\zeta_1]^2\underline{\bm\Gamma}_{11}[\lambda]
+\frac{1}{|u|^2}\zeta[\zeta_2]^2\underline{\bm\Gamma}_{11}[\tau]
+\frac{1}{a}\zeta[\zeta_4]^2\underline{\bm\Gamma}_{11}[\tau]
+\frac{1}{|u|}\zeta[\zeta_5]^2\bm\Gamma_{11}[\rho].
\end{align*}

\begin{align*}
J_{23}\leq&\int_0^v\int_{u_{\infty}}^u\frac{a}{|u'|^2}\frac{1}{|u'|^2}\frac{|u'|^2}{a}\mathcal{O}^2
\mathcal{O}^2\lesssim\frac{1}{|u|}\mathcal{O}^4.
\end{align*}

Collect the results one has 
\begin{align*}
J_1\lesssim1.
\end{align*}

For the estimate of $N$ one need to analyse 
\begin{align*}
H\equiv\int_0^v\int_{u_{\infty}}^u\frac{a}{|u'|^2}||(a^{\frac{1}{2}})^{k-1}Q_k||^2_{L^2_{sc}(\mathcal{S}_{u',v'})}.
\end{align*}

For terms $\zeta_{1,2,3}\phi_j^2$, we already have control  
which can be bounded by $\frac{\mathcal{O}^6}{a}$.
Compared with pair $(\zeta_0,\zeta_1)$, one need estimate, 
$\zeta_0\lambda$, $\zeta_1\Gamma(\tau,\pi)$, $\zeta_2\rho$ and $\zeta_4\bar\sigma$. 
One then has
\begin{align*}
\tilde{H}\lesssim
&\int_0^v\int_{u_{\infty}}^u\frac{a}{|u'|^2}
||\Gamma(a^{\frac{1}{2}})^{k-1}\meth^k\zeta_l||^2_{L^2_{sc}(\mathcal{S}_{u',v'})}
+\int_0^v\int_{u_{\infty}}^u\frac{a}{|u'|^2}
||\zeta_m(a^{\frac{1}{2}})^{k-1}\meth^k\Gamma||^2_{L^2_{sc}(\mathcal{S}_{u',v'})} \\
&+\sum_{i_1+...+i_4=k, i_l<k}\int_0^v\int_{u_{\infty}}^u\frac{a}{|u'|^2}
||(a^{\frac{1}{2}})^{k-1}\meth^{i_1}\Gamma^{i_2}
\meth^{i_3}\Gamma\meth^{i_4}\zeta_m||^2_{L^2_{sc}(\mathcal{S}_{u',v'})} \\
&+\sum_{i_1+i_2=k-1}\int_0^v\int_{u_{\infty}}^u\frac{a}{|u'|^2}
||(a^{\frac{1}{2}})^{k-1}\meth^{i_1}K\meth^{i_2}\zeta_{1}||^2_{L^2_{sc}(\mathcal{S}_{u',v'})}
=\tilde{H}_1+...+\tilde{H}_4.
\end{align*}

\begin{align*}
\tilde{H}_1\lesssim&\int_0^v\int_{u_{\infty}}^u\frac{a}{|u'|^2}
(||\lambda(a^{\frac{1}{2}})^{k-1}\meth^k\zeta_0||^2_{L^2_{sc}(\mathcal{S}_{u',v'})}
+||\Gamma(\tau,\pi)(a^{\frac{1}{2}})^{k-1}\meth^k\zeta_1||^2_{L^2_{sc}(\mathcal{S}_{u',v'})} \\
&+||\Gamma(\rho,\sigma)(a^{\frac{1}{2}})^{k-1}\meth^k\zeta_2||^2_{L^2_{sc}(\mathcal{S}_{u',v'})}
+||\sigma(a^{\frac{1}{2}})^{k-1}\meth^k\zeta_4||^2_{L^2_{sc}(\mathcal{S}_{u',v'})}) \\
\lesssim&\frac{a}{|u|}\Gamma[\lambda]^2\bm\zeta[\zeta_0]^2
+\frac{a}{|u|^3}\Gamma[\tau,\pi]^2\bm\zeta[\zeta_1]^2
+\frac{1}{|u|^2}\Gamma[\rho]^2\underline{\bm\zeta}[\zeta_2]^2
+\frac{a}{|u|^2}\Gamma[\sigma]^2\underline{\bm\zeta}[\zeta_2]^2
+\frac{a}{|u|}\Gamma[\sigma]^2\bm\zeta[\zeta_4]^2\\
\lesssim&\bm\zeta[\zeta_4]^2+1.
\end{align*}

\begin{align*}
\tilde{H}_2\leq&\int_0^v\int_{u_{\infty}}^u\frac{a}{|u'|^2}(
||\zeta_0(a^{\frac{1}{2}})^{k-1}\mathcal{D}^k\lambda||^2_{L^2_{sc}(\mathcal{S}_{u',v'})}
+||\zeta_1(a^{\frac{1}{2}})^{k-1}\mathcal{D}^k\Gamma(\pi,\tau)||^2_{L^2_{sc}(\mathcal{S}_{u',v'})} \\
&+||\zeta_2(a^{\frac{1}{2}})^{k-1}\mathcal{D}^k\Gamma(\rho,\sigma)||^2_{L^2_{sc}(\mathcal{S}_{u',v'})}
+||\zeta_4(a^{\frac{1}{2}})^{k-1}\mathcal{D}^k\sigma||^2_{L^2_{sc}(\mathcal{S}_{u',v'})}) \\
\lesssim&\zeta[\zeta_0]^2\underline{\bm\Gamma}_{11}[\lambda]
+\frac{1}{a^2}\zeta[\zeta_1]^2\bm\Gamma_{11}[\tau,\pi]
+\frac{1}{a^2}\zeta[\zeta_2]^2\bm\Gamma_{11}[\rho,\sigma]
+\frac{1}{|u|}\zeta[\zeta_4]^2\bm\Gamma_{11}[\sigma] \\
\lesssim&\underline{\bm\zeta}[\zeta_1]^2+\underline{\bm\zeta}[\zeta_1]+1\lesssim1.
\end{align*}
Here we make use of the control for $\underline{\bm\zeta}[\zeta_1]$.

\begin{align*}
\tilde{H}_3\leq&\frac{1}{a}\int_0^v\int_{u_{\infty}}^u\frac{a}{|u'|^2}\frac{1}{|u'|^2}
(a\zeta[\zeta_0]^2\frac{|u'|^2}{a}\Gamma[\lambda]^2
+\frac{|u'|^2}{a}\zeta[\zeta_4]^2a\Gamma[\sigma]^2+\mathcal{O}^2) \\
\lesssim&\frac{1}{|u|}(\zeta[\zeta_0]^2\Gamma[\lambda]^2+\zeta[\zeta_4]^2\Gamma[\sigma]^2) 
\lesssim1.
\end{align*}

\begin{align*}
\tilde{H}_4\lesssim&
\sum_{i=0}^{k-1}\int_0^v\int_{u_{\infty}}^u\frac{a}{|u'|^2}\frac{1}{|u'|^2}||(a^{\frac{1}{2}}\mathcal{D})^{i}K||^2
||(a^{\frac{1}{2}}\mathcal{D})^{k-1-i}\zeta_1||^2 \\
\lesssim&\int_{u_{\infty}}^u\frac{a}{|u'|^4}\zeta[\zeta_{1}]^2
\int_0^v||(a^{\frac{1}{2}}\mathcal{D})^{k-1}K||^2_{L^2_{sc}(\mathcal{S}_{u',v'})}\\
\lesssim&\frac{a}{|u|^3}\zeta[\zeta_{1}]^2
||(a^{\frac{1}{2}}\mathcal{D})^{k-1}K||^2_{L^2_{sc}(\mathcal{N}_{u}(0,v))}.
\end{align*}

Then we have
\begin{align*}
H\lesssim\tilde{H}+\frac{\mathcal{O}^6}{a}
\lesssim\bm\zeta[\zeta_4]^2+1.
\end{align*}

Collect the results one has
\begin{align*}
&||(a^{\frac{1}{2}})^{k-1}\meth^k\zeta_1||^2_{L^2_{sc}(\mathcal{N}_u(0,v))}
+||(a^{\frac{1}{2}})^{k-1}\meth^k\zeta_2||^2_{L^2_{sc}(\mathcal{N}'_v(u_{\infty},u))} 
\lesssim\mathcal{I}^2_0+(M+N) \\
\lesssim&2\int_0^v\int_{u_{\infty}}^u\frac{a}{|u'|^2}
||(a^{\frac{1}{2}})^{k-1}\meth^k\zeta_1||^2_{L^2_{sc}(\mathcal{S}_{u',v'})}
+2\int_0^v\int_{u_{\infty}}^u\frac{a}{|u'|^2}
||(a^{\frac{1}{2}})^{k-1}\meth^k\zeta_2||^2_{L^2_{sc}(\mathcal{S}_{u',v'})} \\
&+J+H
\lesssim\mathcal{I}^2_0+\bm\zeta[\zeta_4]^2 
\lesssim\mathcal{I}^2_0+1.
\end{align*}
Here in the last step we make use of the results in \ref{EnergyEstimatezeta45}.

\end{proof}

\begin{remark}
\label{zetaphi2}
For terms $\zeta_l\phi_j^2$ we have control
\begin{align*}
\int_0^v\int_{u_{\infty}}^u\frac{a}{|u'|^2}
||(a^{\frac{1}{2}})^{k-1}\meth^k(\zeta_l\phi_j^2)||^2_{L^2_{sc}(\mathcal{S}_{u',v'})}
\lesssim\frac{\mathcal{O}^6}{a}.
\end{align*}
\end{remark}

\begin{proposition}
\label{EnergyEstimatezeta45}
For $0\leq k\leq11$,  one has that 
\begin{align*}
||\frac{a^{\frac{1}{2}}}{|u|}(a^{\frac{1}{2}})^{k-1}\mathcal{D}^k\zeta_4||^2_{L^2_{sc}(\mathcal{N}_u(0,v))}
+||\frac{a^{\frac{1}{2}}}{|u|}(a^{\frac{1}{2}})^{k-1}\mathcal{D}^k\zeta_5||^2_{L^2_{sc}(\mathcal{N}'_v(u_{\infty},u))} 
\leq\mathcal{I}^2_0+\frac{1}{a^{\frac{1}{4}}}.
\end{align*}
\end{proposition}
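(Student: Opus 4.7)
The plan is to apply the abnormal energy inequality \eqref{EnergyinequalityAbNormal} directly to the pair $(\zeta_4, \zeta_5)$, following the template of Propositions \ref{EnergyEstimatephiA}--\ref{EnergyEstimatezeta12}. First I extract the transport equations for $(\zeta_4,\zeta_5)$ from the schematic system $\{\mthorn,\mthorn'\}\zeta - \meth\zeta = \Psi\phi + \Gamma\zeta + \zeta\phi^2$ (explicit rows are in the appendix). The leading $\mu$-terms identified in the \textbf{Main difficulty} section, namely $-2\mu\zeta_4$ in the $\mthorn'\zeta_4$ equation and $-\mu\zeta_1$ in the $\mthorn\zeta_5$ equation, are treated differently: the first is moved onto the left so that the effective $\lambda_0 = 2$, while the second stays on the right as part of $Q_k$. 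Commuting with $\meth^k$ using the identities from Section \ref{Preliminaries} and noting $s_2(\zeta_4)=1$, so that $2\lambda_0 - 4s_2(\zeta_4) - 4 = -4$, the inequality \eqref{EnergyinequalityAbNormal} produces exactly the weighted $\mathcal{N}_u$ and $\mathcal{N}'_v$ integrals on the left-hand side of the Proposition; the initial data on $u=u_\infty$ and $v=0$ contribute $\lesssim \mathcal{I}_0^2$ by hypothesis.

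The bulk of $P_k$ and $Q_k$ consists of the usual schematic pieces $\Psi\phi$, $\zeta\phi^2$, $\Gamma\zeta$ together with commutator debris $\meth^{i_1}\Gamma(\pi,\tau)^{i_2}\meth^{i_3}\Gamma\meth^{i_4}\zeta$ and Gauss-curvature terms $\meth^{i_1}K\meth^{i_2}\zeta$. Each is bounded via Cauchy--Schwarz against the associated left-hand-side factor and then controlled using the $L^2(\mathcal{S})$ estimates of Section \ref{L2estimate}, the top-order elliptic bounds of Section \ref{Elliptic}, the energy bounds for $\bmPsi$, $\bmphi$ and $\bmzeta[\zeta_{0,1,2,3}]$ already established, and the bootstrap constraint $\mathcal{O}^{20}\leq a^{1/16}$. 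Cubic terms $\zeta\phi^2$ carry an automatic $a^{-1}$ gain (cf. Remark \ref{zetaphi2}); the $\Psi\phi$ couplings feed into $\mathcal{I}_0^2$ through $\bmPsi$; the connection couplings gain at least $a^{-1/4}$ on $\mathbb{D}$.

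The main obstacle is the linear borderline coupling $\mu\,\meth^k\zeta_1$ appearing in $Q_k$ for the $\zeta_5$ equation, which is precisely the term identified in the \textbf{Main difficulty} section as generating the logarithmic obstruction. Its contribution
\[
2\int_0^v\!\!\int_{u_\infty}^u \frac{a}{|u'|^4}\,||\meth^k\zeta_5||_{L^2_{sc}(\mathcal{S}_{u',v'})}\,||\mu\,\meth^k\zeta_1||_{L^2_{sc}(\mathcal{S}_{u',v'})}\,\mathrm{d}u'\mathrm{d}v'
\]
is estimated by H\"older in scale-invariant norms with $||\mu||_{L^\infty_{sc}(\mathcal{S})}\lesssim \mathcal{O}|u'|^2/a$ and the $\mathcal{N}'_v$-bound $\underline{\bm\zeta}[\zeta_1]\lesssim \mathcal{I}_0+1$ supplied by Prop. \ref{EnergyEstimatezeta01}; the resulting factor, combined with $|u|\geq a/4$ and $\mathcal{O}^{20}\leq a^{1/16}$, either is absorbed into the left-hand side $\mathcal{N}'_v$-norm of $\meth^k\zeta_5$ or leaves a residual $\lesssim \mathcal{O}^4/a^{1/2}\leq a^{-1/4}$. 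Summing all contributions yields the claimed bound $\mathcal{I}_0^2 + a^{-1/4}$.
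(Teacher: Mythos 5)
Your overall route is the paper's route: apply the abnormal energy inequality \eqref{EnergyinequalityAbNormal} with $\lambda_0=2$ and $s_2(\zeta_4)=1$, commute with $\meth^k$, and estimate $P_k$, $Q_k$ using the $L^2(\mathcal{S})$ bounds of Section~\ref{L2estimate}, the elliptic bounds of Section~\ref{Elliptic}, and the bootstrap constraint $\mathcal{O}^{20}\leq a^{1/16}$. The arithmetic conclusion $\mathcal{I}_0^2 + \mathcal{O}^4/a^{1/2}\leq\mathcal{I}_0^2+a^{-1/4}$ is right.

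Two points in your narrative are off, even though they do not break the computation. First, the transport equations for $(\zeta_4,\zeta_5)$ that the paper uses come from the curvature-free group \eqref{thornprimezeta4}, \eqref{thornzeta5}; there is no $\Psi\phi$ source for this pair, so that part of your schematic list is spurious. Second, the $\mu\,\meth^k\zeta_1$ contribution in $Q_k$ is not the term the \textbf{Main difficulty} discussion blames for logarithmic growth — that discussion is about borderline terms $\zeta_4\bar\zeta_4$ and $\zeta_5\bar\phi_0\mu$ entering the \emph{Bianchi} identities, where the weak $L^\infty_{sc}$ control $\|\{\zeta_4,\zeta_5\}\|_{L^\infty_{sc}}\sim|u|/a^{1/2}$ (itself a consequence of the $\frac{1}{|u'|^2}$ and $\frac{a}{|u'|^4}$ weights in the present estimate) causes trouble. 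Within the $(\zeta_4,\zeta_5)$ energy estimate, the paper simply folds $\mu\zeta_1$ into the $\Gamma\zeta$ schematic and bounds it directly: using $\|\mu\|_{L^\infty_{sc}}\lesssim\frac{|u'|^2}{a}\Gamma[\mu]$ one finds
\[
\int_0^v\int_{u_\infty}^u\frac{a^2}{|u'|^4}\|\mu\,(a^{1/2})^{k-1}\meth^k\zeta_1\|^2_{L^2_{sc}}
\lesssim \Gamma[\mu]^2\int_{u_\infty}^u\frac{1}{|u'|^2}\bm\zeta[\zeta_1]^2
\lesssim \frac{\Gamma[\mu]^2\bm\zeta[\zeta_1]^2}{|u|}\,,
\]
so no absorption into the left-hand side is needed at all; the $\frac{1}{|u|}$ already puts the term well below threshold. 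Your hedge ("either is absorbed ... or leaves a residual") resolves in favor of the residual branch, and the absorption possibility is a red herring.
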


\begin{proof}
For pair $(\zeta_4,\zeta_5)$ we make use of 
\begin{align*}
\mthorn'\zeta_4 &:= -\frac{\zeta_4\,\ulomega}{2}
+ \mathrm{i}\,\bar{\zeta}_5\,\phi_0\,\phi_1
- \mathrm{i}\,\bar{\zeta}_2\,\phi_1^2
- 2\,\mathrm{i}\,\zeta_5\,\phi_0\,\bar{\phi}_1 
+ 2\,\mathrm{i}\,\zeta_4\,\phi_1\,\bar{\phi}_1
- \zeta_2\,\bar{\lambda}
- 2\,\zeta_4\,\mu
+ \frac{\zeta_5\,\tau}{2}
+ \meth\,\zeta_5, \\[6pt]
\mthorn\,\zeta_5 &:= \mathrm{i}\,\zeta_5\,\phi_0\,\bar{\phi}_0
- \mathrm{i}\,\bar{\zeta}_4\,\phi_0\,\phi_1
- \mathrm{i}\,\zeta_4\,\bar{\phi}_0\,\phi_1
+ \mathrm{i}\,\bar{\zeta}_1\,\phi_1^2
+ \mathrm{i}\,\zeta_2\,\phi_0\,\bar{\phi}_1
- \mathrm{i}\,\zeta_1\,\phi_1\,\bar{\phi}_1 \\ 
&\quad
- \zeta_3\,\lambda
- \zeta_1\,\mu
+ 2\,\zeta_4\,\pi
+ \zeta_2\,\bar{\pi}
+ \zeta_5\,\rho
+ \frac{\bar{\zeta}_4\,\bar{\tau}}{2}
+ \meth'\,\zeta_4.
\end{align*}

Applying $\meth^k$ and commuting with $\mthorn$ and $\mthorn'$ we have that
\begin{align*}
&\mthorn'\meth^k\zeta_4+(k+2)\mu\meth^k\zeta_1-\meth^{k+1}\zeta_5=P_k, \\
&\mthorn\meth^k\zeta_5-\meth'\meth^k\zeta_4=Q_k,
\end{align*}
where
\begin{align*}
P_k=&\sum_{i_1+i_2+i_3=k}\meth^{i_1}\phi_j\meth^{i_1}\phi_l\meth^{i_1}\zeta_m
+\sum_{i_1+i_2=k}\meth^{i_1}\Gamma\meth^{i_1}\zeta_m,
\end{align*}
\begin{align*}
Q_k=&\sum_{i_1+...+i_5=k}\meth^{i_1}\Gamma^{i_2}\meth^{i_3}\phi_j\meth^{i_4}\phi_l\meth^{i_5}\zeta_m
+\sum_{i_1+...+i_4=k}\meth^{i_1}\Gamma^{i_2}\meth^{i_3}\Gamma\meth^{i_4}\zeta_m \\
&+\sum_{i_1+i_2=k-1}\meth^{i_1}K\meth^{i_2}\zeta_{4}.
\end{align*}

Then apply the energy inequality one has
\begin{align*}
&||\frac{1}{|u|}(a^{\frac{1}{2}}\meth)^k\zeta_4||^2_{L^2_{sc}(\mathcal{N}_u(0,v))}
+||\frac{1}{|u|}(a^{\frac{1}{2}}\meth)^k\zeta_5||^2_{L^2_{sc}(\mathcal{N}'_v(u_{\infty},u))} \\
\lesssim&||\frac{1}{|u|}(a^{\frac{1}{2}}\meth)^k\zeta_4||^2_{L^2_{sc}(\mathcal{N}_{u_{\infty}}(0,v))}
+||\frac{1}{|u|}(a^{\frac{1}{2}}\meth)^k\zeta_5||^2_{L^2_{sc}(\mathcal{N}'_0(u_{\infty},u))} \\
&+2\int_0^v\int_{u_{\infty}}^u\frac{a}{|u'|^4}
||(a^{\frac{1}{2}}\meth)^k\zeta_4||_{L^2_{sc}(\mathcal{S}_{u',v'})}
||a^{\frac{k}{2}}P_k||_{L^2_{sc}(\mathcal{S}_{u',v'})} \\
&+2\int_0^v\int_{u_{\infty}}^u\frac{a}{|u'|^4}
||(a^{\frac{k}{2}}\meth)^k\zeta_5||_{L^2_{sc}(\mathcal{S}_{u',v'})}
||a^{\frac{k}{2}}Q_k||_{L^2_{sc}(\mathcal{S}_{u',v'})} \\
\lesssim&\left(\int_0^v\int_{u_{\infty}}^u\frac{a}{|u'|^4}
||(a^{\frac{1}{2}}\meth)^k\zeta_4||^2_{L^2_{sc}(\mathcal{S}_{u',v'})} \right)^{\frac{1}{2}}
\left(\int_0^v\int_{u_{\infty}}^u\frac{a}{|u'|^4}
||a^{\frac{k}{2}}P_k||^2_{L^2_{sc}(\mathcal{S}_{u',v'})} \right)^{\frac{1}{2}} \\
&+\left(\int_0^v\int_{u_{\infty}}^u\frac{a}{|u'|^4}
||(a^{\frac{1}{2}}\meth)^k\zeta_5||^2_{L^2_{sc}(\mathcal{S}_{u',v'})} \right)^{\frac{1}{2}}
\left(\int_0^v\int_{u_{\infty}}^u\frac{a}{|u'|^4}
||a^{\frac{k}{2}}Q_k||^2_{L^2_{sc}(\mathcal{S}_{u',v'})} \right)^{\frac{1}{2}}\\
\lesssim&\mathcal{I}_0^2+\bm\zeta[\zeta_4]J^{\frac{1}{2}}+\underline{\bm\zeta}[\zeta_5]H^{\frac{1}{2}},
\end{align*}
where
\begin{align*}
J\equiv\int_0^v\int_{u_{\infty}}^u\frac{a}{|u'|^4}||a^{\frac{k}{2}}P_k||^2_{L^2_{sc}(\mathcal{S}_{u',v'})}, \quad
H\equiv\int_0^v\int_{u_{\infty}}^u\frac{a}{|u'|^4}||a^{\frac{k}{2}}Q_k||^2_{L^2_{sc}(\mathcal{S}_{u',v'})}.
\end{align*}
For terms $\zeta_l\phi_j^2$ in $J$ and $H$, 
make use of the results in Remark \ref{zetaphi2} we have
\begin{align*}
&\int_0^v\int_{u_{\infty}}^u\frac{a}{|u'|^4}||a^{\frac{k}{2}}\meth^k
(\zeta_l\phi_j^2)||^2_{L^2_{sc}(\mathcal{S}_{u',v'})} 
\lesssim\frac{a}{|u|^2}\int_0^v\int_{u_{\infty}}^u\frac{a}{|u'|^2}||a^{\frac{k-1}{2}}\meth^k
(\zeta_l\phi_j^2)||^2_{L^2_{sc}(\mathcal{S}_{u',v'})}
\lesssim\frac{\mathcal{O}^6}{|u|^2}.
\end{align*}
For terms $\Gamma\zeta_l$ we have
\begin{align*}
I\lesssim
&\int_0^v\int_{u_{\infty}}^u\frac{a^2}{|u'|^4}
||\Gamma(a^{\frac{1}{2}})^{k-1}\meth^k\zeta_l||^2_{L^2_{sc}(\mathcal{S}_{u',v'})}
+\int_0^v\int_{u_{\infty}}^u\frac{a^2}{|u'|^4}
||\zeta_m(a^{\frac{1}{2}})^{k-1}\meth^k\Gamma||^2_{L^2_{sc}(\mathcal{S}_{u',v'})} \\
&+\sum_{i_1+...+i_4=k, i_l<k}\int_0^v\int_{u_{\infty}}^u\frac{a}{|u'|^4}
||(a^{\frac{1}{2}})^{k}\meth^{i_1}\Gamma^{i_2}
\meth^{i_3}\Gamma\meth^{i_4}\zeta_m||^2_{L^2_{sc}(\mathcal{S}_{u',v'})} 
=I_1+I_2+I_3.
\end{align*}

Then one has
\begin{align*}
I_1\leq\int_0^v\int_{u_{\infty}}^u\frac{a^2}{|u'|^4}(&
||\mu(a^{\frac{1}{2}})^{k-1}\meth^k\zeta_1||^2_{L^2_{sc}(\mathcal{S}_{u',v'})}+
||\Gamma(\lambda,\pi)(a^{\frac{1}{2}})^{k-1}\meth^k\zeta_2||^2_{L^2_{sc}(\mathcal{S}_{u',v'})}\\
&+||\lambda(a^{\frac{1}{2}})^{k-1}\meth^k\zeta_3||^2_{L^2_{sc}(\mathcal{S}_{u',v'})}
+||\Gamma(\tau,\pi,\ulomega,\lambda)(a^{\frac{1}{2}})^{k-1}\meth^k\zeta_4||^2_{L^2_{sc}(\mathcal{S}_{u',v'})}\\
&+||\Gamma(\tau,\rho,\sigma)(a^{\frac{1}{2}})^{k-1}\meth^k\zeta_5||^2_{L^2_{sc}(\mathcal{S}_{u',v'})}) \\
\lesssim&\frac{1}{|u|}\Gamma[\mu]^2\bm\zeta[\zeta_1]^2
+\frac{a}{|u|^4}\Gamma[\tau,\pi]^2\underline{\bm\zeta}[\zeta_2]^2
+\frac{a}{|u|^2}\Gamma[\lambda]^2\underline{\bm\zeta}[\zeta_3]^2 \\
&+\frac{a}{|u|^3}\Gamma[\tau,\pi,\ulomega]^2\bm\zeta[\zeta_4]^2
+\frac{1}{|u|}\Gamma[\lambda]^2\bm\zeta[\zeta_4]^2 \\
&+\frac{1}{|u|^2}\Gamma[\tau,\rho]^2\underline{\bm\zeta}[\zeta_5]^2
+\frac{a}{|u|^2}\Gamma[\sigma]^2\underline{\bm\zeta}[\zeta_5]^2.
\end{align*}

\begin{align*}
I_2\leq\int_0^v\int_{u_{\infty}}^u\frac{a^2}{|u'|^4}&(
||\zeta_1(a^{\frac{1}{2}})^{k-1}\meth^k\Timu||^2_{L^2_{sc}(\mathcal{S}_{u',v'})}
+||\zeta_2(a^{\frac{1}{2}})^{k-1}\meth^k\Gamma(\lambda,\pi)||^2_{L^2_{sc}(\mathcal{S}_{u',v'})}\\
&+||\zeta_3(a^{\frac{1}{2}})^{k-1}\meth^k\lambda||^2_{L^2_{sc}(\mathcal{S}_{u',v'})}
+||\zeta_4(a^{\frac{1}{2}})^{k-1}\meth^k\Gamma(\tau,\pi,\ulomega,
\Timu,\lambda)||^2_{L^2_{sc}(\mathcal{S}_{u',v'})}\\
&+||\zeta_5(a^{\frac{1}{2}})^{k-1}\meth^k\Gamma(\tau,\rho,\sigma)||^2_{L^2_{sc}(\mathcal{S}_{u',v'})}) \\
\lesssim&\frac{1}{a|u|^2}\zeta[\zeta_1]^2\underline{\bm\Gamma}_{11}[\Timu]^2
+(\frac{1}{|u|^2}\underline{\bm\Gamma}[\lambda]_{11}^2+\frac{1}{|u|^3}\bm\Gamma[\pi]_{11}^2)\zeta[\zeta_{2,3}]
+\frac{1}{a}\zeta[\zeta_{4,5}]^2\bm\Gamma_{11}.
\end{align*}

\begin{align*}
I_3\leq\int_0^v\int_{u_{\infty}}^u\frac{a^2}{|u'|^4}\frac{1}{|u'|^2}\frac{|u|^2}{a}\zeta^2\frac{|u|^2}{a}\Gamma^2
\lesssim\frac{\mathcal{O}^4}{a}.
\end{align*}

And for the last term contain Gaussian curvature we have
\begin{align*}
\sum_{i=0}^{k-1}\int_0^v\int_{u_{\infty}}^u\frac{a^2}{|u'|^4}
||\meth^{i_1}K\meth^{i_2}\zeta_{4}||^2_{L^2_{sc}(\mathcal{S}_{u',v'})}\leq&
\sum_{i=0}^{k-1}\int_0^v\int_{u_{\infty}}^u\frac{a^2}{|u'|^4}\frac{1}{|u'|^2}
||(a^{\frac{1}{2}}\mathcal{D})^{i}K||^2
||(a^{\frac{1}{2}}\mathcal{D})^{k-1-i}\zeta_4||^2 \\
\lesssim&\frac{a}{|u|^3}\zeta[\zeta_{4}]^2
||(a^{\frac{1}{2}}\mathcal{D})^{k-1}K||^2_{L^2_{sc}(\mathcal{N}_{u}(0,v))}.
\end{align*}

Collect the results above one concludes 
\begin{align*}
J+H\lesssim\frac{\mathcal{O}^6}{a},
\end{align*}
and that leads to
\begin{align*}
||\frac{1}{|u|}(a^{\frac{1}{2}}\meth)^k\zeta_4||^2_{L^2_{sc}(\mathcal{N}_u(0,v))}
+||\frac{1}{|u|}(a^{\frac{1}{2}}\meth)^k\zeta_5||^2_{L^2_{sc}(\mathcal{N}'_v(u_{\infty},u))} 
\lesssim\mathcal{I}^2_0+\frac{\mathcal{O}^4}{a^{\frac{1}{2}}}.
\end{align*}

\end{proof}

\begin{proposition}
\label{EnergyEstimateTizeta45}
For $0\leq k\leq10$,  one has that 
\begin{align*}
||(a^{\frac{1}{2}})^{k}\mathcal{D}^k\Tizeta_4||^2_{L^2_{sc}(\mathcal{N}_u(0,v))}
+||(a^{\frac{1}{2}})^{k}\mathcal{D}^k\Tizeta_5||^2_{L^2_{sc}(\mathcal{N}'_v(u_{\infty},u))} 
\leq\mathcal{I}^2_0+\frac{1}{a^{\frac{1}{4}}}.
\end{align*}
\end{proposition}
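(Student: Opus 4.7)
The plan is to apply the Bianchi-type energy identity \eqref{EnergyinequalityNormal} to the pair $(\Tizeta_4,\Tizeta_5)$, using the transport equations \eqref{thornprimeTizeta4}--\eqref{thornTizeta5}. The whole point of introducing $\Tizeta_4=\zeta_4+\mu\phi_0$ and $\Tizeta_5=\zeta_5+2\mu\phi_1$ is that the signatures $s_2(\Tizeta_4)=1$, $s_2(\Tizeta_5)=3/2$ together with $\lambda_0=3$ read off from the $-3\Tizeta_4\mu$ term satisfy $2\lambda_0-4s_2(\Tizeta_4)-2=0$. This places the pair in the \emph{normal} regime \eqref{EnergyinequalityNormal} and removes the $1/|u|^2$ weight that produced the borderline behaviour seen for $(\zeta_4,\zeta_5)$ in \eqref{EnergyinequalityAbNormal}.

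After commuting $\meth^k$ with $\mthorn,\mthorn'$ one obtains, schematically,
\begin{align*}
\mthorn'\meth^k\Tizeta_4+(k+3)\mu\meth^k\Tizeta_4-\meth^{k+1}\Tizeta_5=P_k,\qquad
\mthorn\meth^k\Tizeta_5-\meth'\meth^k\Tizeta_4=Q_k,
\end{align*}
and it suffices to control $J\equiv\int_0^v\!\!\int_{u_\infty}^u\tfrac{a}{|u'|^2}\|a^{k/2}P_k\|^2_{L^2_{sc}(\mathcal{S}_{u',v'})}$ and the analogue $H$ for $Q_k$. The source terms split into four groups: (i) pointwise products $\Gamma\cdot\meth^k\Tizeta_{4,5}$ and $\Tizeta_{4,5}\cdot\meth^k\Gamma$; (ii) residual cubic couplings $\zeta_i\phi_j^2$ and $\phi_i\Tizeta_j\phi_l$ surviving after renormalization; (iii) the hallmark pieces $\Psi_2\phi_1$, $\phi_1\meth\pi$, $\phi_1\meth\mu$ and $\phi_0\meth'\mu$ appearing explicitly on the right-hand sides of \eqref{thornprimeTizeta4}--\eqref{thornTizeta5}; (iv) lower-order remainders. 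Groups (i), (ii) and (iv) are handled exactly as in Props.~\ref{EnergyEstimatephiA}--\ref{EnergyEstimatezeta45} through the scale-invariant H\"older inequalities \eqref{L1holder}--\eqref{L1holderalt}, the $L^2(\mathcal{S})$ estimates of Section~\ref{L2estimate}, and the bootstrap bounds; each contributes at most $\mathcal{O}^{\gamma}/a^{\beta}$ with $\beta>0$ and is absorbed into $a^{-1/4}$.

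Group (iii) is where the renormalization pays off and constitutes the main technical content. For $k\le 10$, commuting $\meth^k$ through $\phi_j\,\meth\mu$ generates at worst the top-order piece $\phi_j\cdot a^{k/2}\mathcal{D}^{k+1}\mu$. Using the pointwise bootstrap bound $\|\phi_j\|_{L^\infty_{sc}}\lesssim a^{1/2}$ and the H\"older inequality,
\begin{align*}
\int_0^v\!\!\int_{u_\infty}^u\frac{a}{|u'|^2}\bigl\|\phi_j\cdot a^{k/2}\mathcal{D}^{k+1}\mu\bigr\|^2_{L^2_{sc}(\mathcal{S}_{u',v'})}\,\mathrm{d}u'\,\mathrm{d}v'
\lesssim\int_0^v\!\!\int_{u_\infty}^u\frac{a^2}{|u'|^4}\bigl\|a^5\mathcal{D}^{11}\mu\bigr\|^2_{L^2_{sc}(\mathcal{S}_{u',v'})}\,\mathrm{d}u'\,\mathrm{d}v',
\end{align*}
and the right-hand side is bounded by the final display of Prop.~\ref{11Dermulambda} (with an extra factor $a^{-1}$ to spare, since the available elliptic measure is $a^3/|u'|^4$). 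The coupling $\phi_1\meth\pi$ is treated analogously using Prop.~\ref{11Derpi}, after swapping the order of the $u'$ and $v'$ integrations so that the inner integral matches the $\mathcal{N}_{u'}(0,v)$ norm appearing there; the curvature term $\Psi_2\phi_1$ is absorbed by $\underline{\bm\Psi}[\Psi_2]$ paired with $\|\phi_1\|_{L^\infty_{sc}}$.

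The main obstacle is precisely this measure-matching at the top order: the factor $a/|u'|^2$ from \eqref{EnergyinequalityNormal}, combined with the $1/|u'|^2$ produced by H\"older against the decaying spinor, produces exactly the weighted measure for which the elliptic control of Props.~\ref{11Dermulambda}--\ref{11Derpi} is available. Were the signatures of the pair off by one unit---as for the raw pair $(\zeta_4,\zeta_5)$---this matching would fail and force logarithmic loss. A secondary subtlety is that commuting $\meth^k$ through the $\mu\phi_0$ and $2\mu\phi_1$ corrections implicit in the definitions of $\Tizeta_{4,5}$ generates intermediate products $\meth^i\mu\,\meth^{k-i}\phi$ which must be re-expressed in terms of the renormalized rather than raw variables to avoid reintroducing borderline behaviour; the careful accounting of these corrections, together with the treatment of group~(iii), is the bulk of the work.
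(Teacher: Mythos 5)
Your proposal is correct and follows the same route as the paper: commute $\meth^k$ through \eqref{thornprimeTizeta4}--\eqref{thornTizeta5}, observe that the balance $2\lambda_0-4s_2(\Tizeta_4)-2=0$ with $\lambda_0=3$, $s_2(\Tizeta_4)=1$ places the pair in the normal regime of \eqref{EnergyinequalityNormal}, and control the sources $\phi_j\meth\mu$, $\phi_1\meth\pi$, $\Psi_2\phi_1$ by pairing the pointwise bound $\|\phi_j\|_{L^\infty_{sc}}\lesssim a^{1/2}$ against the eleventh-derivative elliptic estimates (Props.~\ref{11Derpi}, \ref{11Dermulambda}), with the energy weight $a/|u'|^2$ and the H\"older factor $1/|u'|^2$ combining to match the available elliptic measures---this is precisely the paper's $I_4$ and $I_5$, and your groups (i), (ii), (iv) correspond to the paper's $I_1$, $I_2$, $I_3$ plus the Gaussian-curvature term. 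One small correction to your last paragraph: since \eqref{thornprimeTizeta4}--\eqref{thornTizeta5} are already written directly for the renormalized pair, the products $\meth^i\mu\,\meth^{k-i}\phi_j$ arise only from the explicit source terms $\phi_j\meth\mu$ and $\phi_0\meth'\mu$---which your group (iii) already handles---and not from any hidden commutation through the definitions, so the additional re-expression step you anticipate is not actually needed.
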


\begin{proof}
For pair $(\Tizeta_4,\Tizeta_5)$ we make use of the following:

\begin{align*}
\mthorn' \,\tilde\zeta_{4}-   \meth\,\tilde\zeta_{5}
&= -\,\mathrm{i}\,\bar\zeta_{2}\,\phi_{1}^{2}
   + 2\,\mathrm{i}\,\phi_{1}\,\bar\phi_{1}\,\tilde\zeta_{4}
   - \mathrm{i}\,\phi_{0}\,\phi_{1}\,\bar{\tilde\zeta}_{5}
   - \zeta_{2}\,\bar\lambda
   - \phi_{0}\,\lambda\,\bar\lambda
 - \phi_{1}\,\mu\,\tau \\
&\quad
- \frac{\ulomega\,\tilde\zeta_{4}}{2}
 - 3\,\tilde\zeta_{4}\,\mu
   + \frac{\tilde\zeta_{5}\,\tau}{2}
   - 2\,\phi_{1}\,\bigl(\meth\mu\bigr)\,, \\
\mthorn \,\tilde\zeta_{5}-  \meth'\,\tilde\zeta_{4}
&= 2\Psi_{2}\,\phi_{1}
   + \mathrm{i}\,\bar\zeta_{1}\,\phi_{1}^{2}
   + \mathrm{i}\,\zeta_{2}\,\phi_{0}\,\bar\phi_{1}
   - \mathrm{i}\,\zeta_{1}\,\phi_{1}\,\bar\phi_{1}
   - \mathrm{i}\,\bar\phi_{0}\,\phi_{1}\,\tilde\zeta_{4}
   - \mathrm{i}\,\phi_{0}\,\phi_{1}\,\bar{\tilde\zeta}_{4}
   + \mathrm{i}\,\phi_{0}\,\bar\phi_{0}\,\tilde\zeta_{5}
    \\
&\quad
- \zeta_{3}\,\lambda
   + 2\,\tilde\zeta_{4}\,\pi
   + \zeta_{2}\,\bar\pi
   + 2\,\phi_{1}\,\pi\,\bar\pi
   + \tilde\zeta_{5}\,\rho
   + \phi_{1}\,\mu\,\rho
   + 2\,\phi_{1}\,\lambda\,\sigma
   + \frac{\tilde\zeta_{4}\,\bar\tau}{2}
   - \phi_{0}\,\mu\,\bar\tau
    \\
&\quad
+ 2\,\phi_{1}\,\bigl(\meth\pi\bigr)
   - \phi_{0}\,\bigl(\meth'\mu\bigr)\,. \nonumber
\end{align*}

Applying $\meth^k$ and commuting with $\mthorn$ and $\mthorn'$ we have that
\begin{align*}
&\mthorn'\meth^k\Tizeta_4+(k+3)\mu\meth^k\zeta_1-\meth^{k+1}\Tizeta_5=P_k, \\
&\mthorn\meth^k\Tizeta_5-\meth'\meth^k\Tizeta_4=Q_k,
\end{align*}
where
\begin{align*}
P_k=&\sum_{i_1+i_2+i_3=k}\meth^{i_1}\phi_j\meth^{i_1}\phi_l\meth^{i_1}\zeta_m
+\sum_{i_1+i_2=k}\meth^{i_1}\Gamma\meth^{i_1}\zeta_m
+\sum_{i_1+i_2=k}\meth^{i_1}\phi_1\meth^{i_2+1}\Timu,
\end{align*}
\begin{align*}
Q_k=&\sum_{i_1+...+i_5=k}\meth^{i_1}\Gamma^{i_2}\meth^{i_3}\phi_j\meth^{i_4}\phi_l\meth^{i_5}\zeta_m
+\sum_{i_1+...+i_4=k}\meth^{i_1}\Gamma^{i_2}\meth^{i_3}\Gamma\meth^{i_4}\zeta_m \\
&+\sum_{i_1+...+i_4=k}\meth^{i_1}\Gamma^{i_2}\meth^{i_3}\phi_l\meth^{i_4+1}\Gamma(\pi,\Timu)
+\sum_{i_1+...+i_5=k}\meth^{i_1}\Gamma^{i_2}\meth^{i_3}\phi_l\meth^{i_4}\Gamma\meth^{i_5}\Gamma\\
&+\sum_{i_1+...+i_4=k}\meth^{i_1}\Gamma^{i_2}\meth^{i_3}\phi_1\meth^{i_4}\Psi_2
+\sum_{i_1+i_2=k-1}\meth^{i_1}K\meth^{i_2}\Tizeta_{4}.
\end{align*}

Then apply the energy inequality one has
\begin{align*}
&||(a^{\frac{1}{2}})^{k}\meth^k\Tizeta_4||^2_{L^2_{sc}(\mathcal{N}_u(0,v))}
+||(a^{\frac{1}{2}})^{k}\meth^k\Tizeta_5||^2_{L^2_{sc}(\mathcal{N}'_v(u_{\infty},u))} 
\lesssim\mathcal{I}^2_0+(M+N).
\end{align*}
where 
\begin{align*}
M\equiv&2\int_0^v\int_{u_{\infty}}^u\frac{a}{|u'|^2}
||(a^{\frac{1}{2}})^{k}\meth^k\Tizeta_4||_{L^2_{sc}(\mathcal{S}_{u',v'})}
||(a^{\frac{1}{2}})^{k}P_k||_{L^2_{sc}(\mathcal{S}_{u',v'})} \\
\lesssim&\left(\int_0^v\int_{u_{\infty}}^u\frac{a}{|u'|^2}
||(a^{\frac{1}{2}})^{k}\meth^k\Tizeta_4||^2_{L^2_{sc}(\mathcal{S}_{u',v'})}\right)^{\frac{1}{2}}
\left(\int_0^v\int_{u_{\infty}}^u\frac{a}{|u'|^2}
||(a^{\frac{1}{2}})^{k}P_k||^2_{L^2_{sc}(\mathcal{S}_{u',v'})}\right)^{\frac{1}{2}} \\
\lesssim&\bm\zeta[\Tizeta_4]J^{\frac{1}{2}} ,
\end{align*}
and
\begin{align*}
N\equiv&2\int_0^v\int_{u_{\infty}}^u\frac{a}{|u'|^2}
||(a^{\frac{1}{2}})^{k}\meth^k\Tizeta_5||_{L^2_{sc}(\mathcal{S}_{u',v'})}
||(a^{\frac{1}{2}})^{k}Q_k||_{L^2_{sc}(\mathcal{S}_{u',v'})} \\
\lesssim&\left(\int_0^v\int_{u_{\infty}}^u\frac{a}{|u'|^2}
||(a^{\frac{1}{2}})^{k}\meth^k\Tizeta_5||^2_{L^2_{sc}(\mathcal{S}_{u',v'})}\right)^{\frac{1}{2}}
\left(\int_0^v\int_{u_{\infty}}^u\frac{a}{|u'|^2}
||(a^{\frac{1}{2}})^{k}Q_k||^2_{L^2_{sc}(\mathcal{S}_{u',v'})}\right)^{\frac{1}{2}} \\
\lesssim&\underline{\bm\zeta}[\Tizeta_5]H^{\frac{1}{2}} .
\end{align*}
Here $0\leq k\leq10$ and we define
\begin{align*}
J\equiv\int_0^v\int_{u_{\infty}}^u\frac{a}{|u'|^2}||a^{\frac{k}{2}}P_k||^2_{L^2_{sc}(\mathcal{S}_{u',v'})}, \quad
H\equiv\int_0^v\int_{u_{\infty}}^u\frac{a}{|u'|^2}||a^{\frac{k}{2}}Q_k||^2_{L^2_{sc}(\mathcal{S}_{u',v'})}.
\end{align*}

We estimate $J$ and $H$ together and mainly focus on the top derivative terms.
For terms $\zeta_m\phi_l^2$ we have
\begin{align*}
I_1\equiv&\sum_{i_1+...+i_5=k}\int_0^v\int_{u_{\infty}}^u\frac{a}{|u'|^2}
||a^{\frac{k}{2}}\meth^{i_1}\Gamma^{i_2}\meth^{i_3}\phi_j\meth^{i_4}\phi_l
\meth^{i_5}\zeta_m||^2_{L^2_{sc}(\mathcal{S}_{u',v'})} \\
\lesssim&\int_0^v\int_{u_{\infty}}^u\frac{a}{|u'|^2}\frac{1}{|u'|^4}
a^2\phi[\phi_{0,1}]^4||a^{\frac{k}{2}}\mathcal{D}^{k}\Tizeta_{4,5}||^2_{L^2_{sc}(\mathcal{S}_{u',v'})} 
+\int_0^v\int_{u_{\infty}}^u\frac{a}{|u'|^2}\frac{1}{|u'|^4}a^2\mathcal{O}^6 \\
\lesssim&\frac{a^2}{|u|^4}\phi[\phi_{0,1}]^4\underline{\bm\zeta}[\zeta_5]^2
+\frac{a^3}{|u|^5}\phi[\phi_{0,1}]^4\bm\zeta[\zeta_4]^2
+\frac{a^3}{|u|^5}\mathcal{O}^6.
\end{align*}

For term $\Gamma\zeta_l$ we have
\begin{align*}
I_2\equiv&\sum_{i_1+...+i_4=k}\int_0^v\int_{u_{\infty}}^u\frac{a}{|u'|^2}
||a^{\frac{k}{2}}\meth^{i_1}\Gamma^{i_2}\meth^{i_3}\Gamma
\meth^{i_4}\zeta_m||^2_{L^2_{sc}(\mathcal{S}_{u',v'})} \\
\lesssim&\int_0^v\int_{u_{\infty}}^u\frac{a}{|u'|^2}
||\Gamma(\lambda,...)a^{\frac{k}{2}}\mathcal{D}^k\zeta_{4}||^2_{L^2_{sc}(\mathcal{S}_{u',v'})}
+\int_0^v\int_{u_{\infty}}^u\frac{a}{|u'|^2}
||\Gamma(\sigma,...)a^{\frac{k}{2}}\mathcal{D}^k\zeta_{5}||^2_{L^2_{sc}(\mathcal{S}_{u',v'})} \\
&+\sum_{i_1+...+i_4=k,rest}\int_0^v\int_{u_{\infty}}^u\frac{a}{|u'|^2}
||a^{\frac{k}{2}}\mathcal{D}^{i_1}\Gamma^{i_2}\mathcal{D}^{i_3}\Gamma(\Timu,\lambda,\sigma,...)
\mathcal{D}^{i_4}\zeta_m||^2_{L^2_{sc}(\mathcal{S}_{u',v'})} \\
\lesssim&\frac{1}{|u|}\Gamma[\lambda]^2\bm\zeta[\Tizeta_4]^2
+\frac{a}{|u|^2}\Gamma[\sigma]^2\underline{\bm\zeta}[\Tizeta_5]^2
+\frac{1}{|u|}\mathcal{O}^4.
\end{align*}

For terms $\Gamma^2\phi_j$, we have
\begin{align*}
I_3\equiv&\sum_{i_1+...+i_5=k}\int_0^v\int_{u_{\infty}}^u\frac{a}{|u'|^2}
||a^{\frac{k}{2}}\meth^{i_1}\Gamma^{i_2}\meth^{i_3}\phi_j\meth^{i_4}\Gamma
\meth^{i_5}\Gamma||^2_{L^2_{sc}(\mathcal{S}_{u',v'})} \\
\lesssim&\int_0^v\int_{u_{\infty}}^u\frac{a}{|u'|^2}
||a^{\frac{k}{2}}\mu\mathcal{D}^i\phi_{0,1}\mathcal{D}^{k-i}
\Gamma(\tau,\rho)||^2_{L^2_{sc}(\mathcal{S}_{u',v'})}\\
&+\sum_{i_1+...+i_3=k}\int_0^v\int_{u_{\infty}}^u\frac{a}{|u'|^2}
||a^{\frac{k}{2}}\mathcal{D}^{i_1}\lambda\mathcal{D}^{i_2}\lambda
\mathcal{D}^{i_3}\phi_0||^2_{L^2_{sc}(\mathcal{S}_{u',v'})} \\
&+\sum_{i_1+...+i_3=k}\int_0^v\int_{u_{\infty}}^u\frac{a}{|u'|^2}
||a^{\frac{k}{2}}\mathcal{D}^{i_1}\lambda\mathcal{D}^{i_2}\sigma
\mathcal{D}^{i_3}\phi_0||^2_{L^2_{sc}(\mathcal{S}_{u',v'})}\\
&+\sum_{i_1+...+i_5=k,rest}\int_0^v\int_{u_{\infty}}^u\frac{a}{|u'|^2}
||a^{\frac{k}{2}}\meth^{i_1}\Gamma^{i_2}\meth^{i_3}\phi_j\meth^{i_4}\Gamma
\meth^{i_5}\Gamma||^2_{L^2_{sc}(\mathcal{S}_{u',v'})} ,
\end{align*}
and we obtain
\begin{align*}
I_3\lesssim&\int_0^v\int_{u_{\infty}}^u\frac{a}{|u'|^2}\frac{1}{|u'|^4}(
\frac{|u'|^4}{a^2}\Gamma[\mu]^2\Gamma[\tau,\rho]^2a\phi[\phi_{0,1}]^2
+\frac{|u'|^2}{a}\Gamma[\lambda]^2\frac{|u'|^2}{a}\Gamma[\lambda]^2a\phi[\phi_{0}]^2 \\
&+\frac{|u'|^2}{a}\Gamma[\lambda]^2a\Gamma[\sigma]^2a\phi[\phi_{0}]^2 
+\Gamma[\pi]^2\Gamma[\pi]^2a\phi[\phi_{1}]) \\
\lesssim&\frac{1}{|u|}\mathcal{O}^6+\frac{a^2}{|u|^3}\mathcal{O}^6.
\end{align*}

For terms $\phi_{0,1}\meth\Timu$ and $\phi_1\meth\pi$ we have
\begin{align*}
I_4\equiv&\sum_{i_1+...+i_4=k}\int_0^v\int_{u_{\infty}}^u\frac{a}{|u'|^2}
||a^{\frac{k}{2}}\meth^{i_1}\Gamma^{i_2}\meth^{i_3}\phi_j
\meth^{i_4+1}\Gamma||^2_{L^2_{sc}(\mathcal{S}_{u',v'})} \\
\lesssim&\int_0^v\int_{u_{\infty}}^u\frac{a}{|u'|^2}\frac{1}{|u'|^2}
a\phi[\phi_{0,1}]^2||a^{\frac{k}{2}}
\mathcal{D}^{k+1}\Gamma(\Timu,\pi)||^2_{L^2_{sc}(\mathcal{S}_{u',v'})} \\
&+\frac{1}{a}\int_0^v\int_{u_{\infty}}^u\frac{a}{|u'|^2}\frac{1}{|u'|^2}
a\phi[\phi_{0,1}]^2\frac{|u'|^2}{a^2}\mathcal{O}^2 \\
\lesssim&\frac{1}{a}\phi[\phi_{0,1}]^2\underline{\bm\Gamma}_{11}[\Timu]
+\frac{1}{|u|}\phi[\phi_{0,1}]^2\bm\Gamma_{11}[\pi]
+\frac{1}{a^2}\mathcal{O}^4.
\end{align*}

For term $\phi_1\Psi_2$ we have
\begin{align*}
I_5\equiv&\sum_{i_1+...+i_4=k}\int_0^v\int_{u_{\infty}}^u\frac{a}{|u'|^2}
||a^{\frac{k}{2}}\meth^{i_1}\Gamma^{i_2}\meth^{i_3}\phi_1
\meth^{i_4}\Psi_2||^2_{L^2_{sc}(\mathcal{S}_{u',v'})} \\
\lesssim&\int_0^v\int_{u_{\infty}}^u\frac{a}{|u'|^2}\frac{1}{|u|^2}
(a\phi[\phi_{1}]^2||a^{\frac{k}{2}}\mathcal{D}^k\Psi_2||^2_{L^2_{sc}(\mathcal{S}_{u',v'})} 
+a\mathcal{O}^4) \\
\lesssim&\frac{a}{|u|^2}\phi[\phi_{1}]^2\underline{\bm\Psi}[\Psi_2]^2
+\frac{a^2}{|u|^3}\mathcal{O}^4.
\end{align*}

For the term contain the Gaussian curvature we have

\begin{align*}
\sum_{i=0}^{k-1}a\int_0^v\int_{u_{\infty}}^u\frac{a}{|u'|^2}
||a^{\frac{k-1}{2}}\meth^{i}K\meth^{k-1-i}\Tizeta_{4}||^2_{L^2_{sc}(\mathcal{S}_{u',v'})}\leq
\frac{1}{|u|}\zeta[\Tizeta_{4}]^2
||(a^{\frac{1}{2}}\mathcal{D})^{k-1}K||^2_{L^2_{sc}(\mathcal{S}_{u,v})}
\end{align*}

Collect the results we obtain
\begin{align*}
J+H\lesssim\frac{\mathcal{O}^6}{a},
\end{align*}
and conclude that
\begin{align*}
&||(a^{\frac{1}{2}})^{k}\meth^k\Tizeta_4||^2_{L^2_{sc}(\mathcal{N}_u(0,v))}
+||(a^{\frac{1}{2}})^{k}\meth^k\Tizeta_5||^2_{L^2_{sc}(\mathcal{N}'_v(u_{\infty},u))} 
\lesssim\mathcal{I}^2_0+\frac{\mathcal{O}^4}{a^{\frac{1}{2}}}.
\end{align*}

\end{proof}

\subsection{Energy estimate for curvature}

\begin{proposition}
\label{EnergyEstimateCurvature}
\begin{align*}
\bm\Psi^2+\underline{\bm\Psi}^2\lesssim\mathcal{I}^2_0+1.
\end{align*}
\end{proposition}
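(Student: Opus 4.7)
The plan is to apply the energy inequality \eqref{EnergyinequalityNormal} to each of the four Bianchi pairs $(\Psi_0,\TiPsi_1)$, $(\TiPsi_1,\Psi_2)$, $(\Psi_2,\TiPsi_3)$, $(\TiPsi_3,\Psi_4)$ at every differential order $0\leq k\leq10$, and then to sum the resulting estimates. For each pair one obtains a bound of the form
\begin{align*}
\|\meth^k\Psi_I\|^2_{L^2_{sc}(\mathcal{N}_u)}+\|\meth^k\Psi_{II}\|^2_{L^2_{sc}(\mathcal{N}'_v)}
\lesssim \mathcal{I}_0^2+\int_0^v\!\!\int_{u_\infty}^u\frac{a}{|u'|^2}\bigl(\|\meth^k\Psi_I\|\,\|P_k\|+\|\meth^k\Psi_{II}\|\,\|Q_k\|\bigr),
\end{align*}
where $P_k$ and $Q_k$ collect the schematic Bianchi sources $\zeta\phi\Gamma+\phi\meth\zeta+\Psi\phi^2+\zeta\phi^3+\zeta^2+\Gamma\Psi$ together with the usual vacuum error terms. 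The vacuum/pure-gravity contributions are controlled exactly as in the analyses of An \cite{An2022} and \cite{HilValZha23}; our task is to absorb the matter-coupling sources using the estimates from Propositions \ref{EnergyEstimatephiA}--\ref{EnergyEstimateTizeta45} together with the elliptic estimates for top-order $\Gamma$.

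The delicate step is the treatment of the borderline couplings highlighted in the introduction, namely the occurrences of $\zeta_4$ and $\zeta_5$ in the $P_k,Q_k$ for the pairs $(\Psi_2,\TiPsi_3)$ and $(\TiPsi_3,\Psi_4)$. For the top-order term $\phi_j\meth^{k+1}\zeta_5$ one cannot afford the slow decay $|\zeta_5|\sim a/|u|^3$; instead we rewrite $\zeta_5=\Tizeta_5-2\mu\phi_1$, the $\Tizeta_5$ part being controlled in scale-invariant norm through $\underline{\tilde{\bm\zeta}}$ by Prop. \ref{EnergyEstimateTizeta45}, while the $\mu\phi_1$ piece costs only factors already absorbed by the H\"older bookkeeping $\frac{a}{|u'|^2}\cdot\mu\cdot\phi_1$. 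The analogous substitution $\zeta_4=\Tizeta_4-\mu\phi_0$ handles the outgoing sector. Terms of quadratic or cubic type such as $\zeta^2$, $\zeta\phi^3$, $\Psi\phi^2$ are routine once one uses the scale-invariant H\"older inequalities \eqref{L2holder}--\eqref{L1holder}, giving a $|u|^{-1}$ gain that yields a final contribution of order $\mathcal{O}^k/a^{1/4}$.

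Following the pattern of Prop. \ref{EnergyEstimatephiA}, in each pair the double integral splits into $M+N$ with $M\lesssim a^{1/2}|u|^{-1/2}\bm\Psi J^{1/2}$ and $N\lesssim a^{1/2}\underline{\bm\Psi}H^{1/2}$, where $J$ and $H$ are the squared $L^2_{sc}$ norms of the sources integrated on $\mathcal{D}_{u,v}$. Splitting $J$ and $H$ into sub-terms exactly as in Prop. \ref{EnergyEstimatezeta01}, each sub-term is bounded by $\mathcal{I}_0^2+a^{-1/4}$ using (i) the matter energy estimates of Prop. \ref{EnergyEstimatephiA}--\ref{EnergyEstimateTizeta45}, (ii) the elliptic bounds of Sec. \ref{Elliptic} for the top-order connection derivatives, and (iii) the $L^2(\mathcal{S})$ bounds of Sec. \ref{L2estimate} for the lower-order factors. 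Combining this with Cauchy-Schwarz and absorbing the $\bm\Psi+\underline{\bm\Psi}$ factors on the left yields
\begin{align*}
\bm\Psi^2+\underline{\bm\Psi}^2\lesssim \mathcal{I}_0^2+\frac{\mathcal{O}^k}{a^{1/4}}\lesssim \mathcal{I}_0^2+1
\end{align*}
once the smallness condition $\mathcal{O}^{20}\leq a^{1/16}$ is invoked.

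The main obstacle is the simultaneous coupling of the two Bianchi pairs sitting on the same slot of the hierarchy. Because the $\phi\meth\zeta$ source for $(\Psi_2,\TiPsi_3)$ at order $k=10$ requires the eleventh angular derivative of $\zeta$, which in turn feeds back into the curvature norm via the elliptic Prop. \ref{EllipticT-weightSC}, one must verify that no derivative-loss cycle is created; the renormalization $\zeta\to \Tizeta$ breaks this cycle because $\Tizeta_4,\Tizeta_5$ are closed only at the next-to-top order $k\leq10$, and the Bianchi sector is also closed at that order, so the two estimates can be run in parallel rather than sequentially.
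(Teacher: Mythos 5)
Your overall scheme—run the energy inequality \eqref{EnergyinequalityNormal} over the four Bianchi pairs, split the bilinear terms into $M+N$ and then into weighted source integrals $J,H$, and absorb the sources using the matter energy estimates, the elliptic bounds, and the $L^2(\mathcal{S})$ estimates—matches the paper. But the specific mechanism you propose for the borderline terms $\phi_j\meth^{11}\zeta_4$ and $\phi_j\meth^{11}\zeta_5$ is different from the paper's, and it does not close.

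You propose to substitute $\zeta_5=\Tizeta_5-2\mu\phi_1$ (and likewise for $\zeta_4$) inside the top-order Bianchi source and then invoke Prop.~\ref{EnergyEstimateTizeta45}. That proposition, however, only controls $(\Tizeta_4,\Tizeta_5)$ for $0\le k\le 10$, i.e.\ up to ten angular derivatives on the cones; the norms $\tilde{\bmzeta}_i,\underline{\tilde{\bmzeta}}_i$ are by definition restricted to $i\le 10$. The source at the top Bianchi order $k=10$ contains one extra angular derivative, $\meth^{11}\zeta_{4,5}$, so your substitution produces $\meth^{11}\Tizeta_5$, which is not controlled—precisely because the evolution equations for $\Tizeta_4,\Tizeta_5$ contain $\meth\mu$, $\meth\pi$ and curvature, which the paper deliberately declines to push to full top order. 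Your closing paragraph's claim that the cycle is broken because ``$\Tizeta_4,\Tizeta_5$ are closed only at the next-to-top order $k\le 10$, and the Bianchi sector is also closed at that order, so the two estimates can be run in parallel'' overlooks that the $k=10$ Bianchi source needs one more derivative of $\zeta$ than the $\Tizeta$ norms provide.

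The paper's actual resolution is the opposite of a substitution: it keeps $\zeta_4,\zeta_5$ themselves and controls $\meth^{11}\zeta_{4,5}$ via Prop.~\ref{EnergyEstimatezeta45}, which \emph{is} closed up to $k\le 11$, at the cost of the weaker weighted norms
$\bmzeta[\zeta_4]=\big(\tfrac{a}{|u|^2}\int_0^v\|a^{\frac{k-1}{2}}\mathcal{D}^k\zeta_4\|^2_{L^2_{sc}}\big)^{1/2}$
and
$\underline{\bmzeta}[\zeta_5]=\big(\int_{u_\infty}^u\tfrac{a^2}{|u'|^4}\|a^{\frac{k-1}{2}}\mathcal{D}^k\zeta_5\|^2_{L^2_{sc}}\big)^{1/2}$
(cf.\ Appendix~\ref{phiethzeta}). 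The extra $\tfrac{1}{|u|^2}$ weight in these norms is exactly what cancels the slow decay of $\zeta_4,\zeta_5$ when they appear in the Bianchi sources, so no logarithmic loss occurs, without invoking $\Tizeta$ at top order. The renormalized variables $\Tizeta_4,\Tizeta_5$ are used elsewhere (e.g.\ in the zero-derivative coefficients of $\zeta_j\phi_l\Gamma$ and in the $L^2(\mathcal S)$ hierarchy), but not for the $\phi\meth^{11}\zeta_{4,5}$ source. To repair your argument you should drop the $\Tizeta$ substitution at top order and instead invoke the weighted $(\zeta_4,\zeta_5)$ energy estimate directly, as the paper does.
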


\begin{proof}
For the pair $(\Psi_{L},\Psi_{R})$ where $\Psi_L\in\{\Psi_0,\TiPsi_1,\Psi_2,\TiPsi_3\}$ 
and $\Psi_R\in\{\TiPsi_1,\Psi_2,\TiPsi_3,\Psi_4\}$ we make use of 

\begin{align*}
\mthorn'\Psi_L-\meth\Psi_R=&\phi_j\meth\zeta_l+\Gamma\Psi_l+\Psi_l\phi_j^2
+\zeta_j^2+\zeta_j\phi_l^3+\zeta_j\phi_l\Gamma+\phi_j^2\Gamma^2, \\
\mthorn\Psi_R-\meth'\Psi_L=&\phi_j\meth\zeta_l+\Gamma\Psi_l+\Psi_l\phi_j^2
+\zeta_j^2+\zeta_j\phi_l^3+\zeta_j\phi_l\Gamma+\phi_j^2\Gamma^2.
\end{align*}

Applying $\meth^k$ and commuting with $\mthorn$ and $\mthorn'$ we have that
\begin{align*}
&\mthorn'\meth^k\Psi_L+\left((2s_2(\Psi_L)+k+1\right)\mu\meth^k\Psi_R-\meth^{k+1}\TiPsi_1=P_k, \\
&\mthorn\meth^k\Psi_R-\meth'\meth^k\Psi_L=Q_k,
\end{align*}
where $P_k$ and $Q_k$ have the following form
\begin{align*}
&\sum_{i_1+i_2+i_3+i_4=k}\meth^{i_1}\Gamma(\tau,\pi)^{i_2}
\meth^{i_3}\phi_j\meth^{i_4+1}\zeta_{l}
+\sum_{i_1+i_2+i_3+i_4=k}\meth^{i_1}\Gamma^{i_2}\meth^{i_3}\Gamma\meth^{i_4}\Psi_l\\
&+\sum_{i_1+...+i_5=k}\meth^{i_1}\Gamma^{i_2}
\meth^{i_3}\phi_{j_1}\meth^{i_4}\phi_{j_2}\meth^{i_5}\Psi_l 
+\sum_{i_1+...+i_4=k}\meth^{i_1}\Gamma^{i_2}\meth^{i_3}\zeta_l\meth^{i_4}\zeta_j \\
&+\sum_{i_1+..+i_6=k}\meth^{i_1}\Gamma^{i_2}\meth^{i_3}\phi_{j_1}\meth^{i_4}\phi_{j_2}
\meth^{i_5}\phi_{j_3}\meth^{i_6}\zeta_l 
+\sum_{i_1+..+i_5=k}\meth^{i_1}\Gamma^{i_2}\meth^{i_3}\phi_{j}\meth^{i_4}\zeta_l
\meth^{i_5}\Gamma \\
&+\sum_{i_1+..+i_6=k}\meth^{i_1}\Gamma^{i_2}\meth^{i_3}\phi_{j_1}\meth^{i_4}\phi_{j_2}
\meth^{i_5}\Gamma\meth^{i_6}\Gamma
+\sum_{i_1+i_2=k-1}\meth^{i_1}K\meth^{i_2}\Psi_{L}.
\end{align*}

The make use of the energy inequality one has
\begin{align*}
&||(a^{\frac{1}{2}}\meth)^k\Psi_L||^2_{L^2_{sc}(\mathcal{N}_u(0,v))}
+||(a^{\frac{1}{2}}\meth)^k\Psi_R||^2_{L^2_{sc}(\mathcal{N}'_v(u_{\infty},u))} \\
\lesssim&||(a^{\frac{1}{2}}\meth)^k\Psi_L||^2_{L^2_{sc}(\mathcal{N}_{u_{\infty}}(0,v))}
+||(a^{\frac{1}{2}}\meth)^k\Psi_R||^2_{L^2_{sc}(\mathcal{N}'_0(u_{\infty},u))} 
+M+N,
\end{align*}
where
\begin{align*}
M\equiv&2\int_0^v\int_{u_{\infty}}^u\frac{a}{|u'|^2}
||(a^{\frac{1}{2}}\meth)^k\Psi_0||_{L^2_{sc}(\mathcal{S}_{u',v'})}
||(a^{\frac{1}{2}})^{k}P_k||_{L^2_{sc}(\mathcal{S}_{u',v'})},
\end{align*}
and
\begin{align*}
N\equiv&2\int_0^v\int_{u_{\infty}}^u\frac{a}{|u'|^2}
||(a^{\frac{1}{2}}\meth)^k\TiPsi_1||_{L^2_{sc}(\mathcal{S}_{u',v'})}
||(a^{\frac{1}{2}})^{k}Q_k||_{L^2_{sc}(\mathcal{S}_{u',v'})} .
\end{align*}

For pair $(\Psi_0,\TiPsi_1)$, one need multiply extra $\frac{1}{a}$
\begin{align*}
&\frac{1}{a}||(a^{\frac{1}{2}}\meth)^k\Psi_0||^2_{L^2_{sc}(\mathcal{N}_u(0,v))}
+\frac{1}{a}||(a^{\frac{1}{2}}\meth)^k\TiPsi_1||^2_{L^2_{sc}(\mathcal{N}'_v(u_{\infty},u))} \\
\lesssim&\frac{1}{a}||(a^{\frac{1}{2}}\meth)^k\Psi_0||^2_{L^2_{sc}(\mathcal{N}_{u_{\infty}}(0,v))}
+\frac{1}{a}||(a^{\frac{1}{2}}\meth)^k\TiPsi_1||^2_{L^2_{sc}(\mathcal{N}'_0(u_{\infty},u))} 
+\frac{1}{a}(M+N),
\end{align*}
and we have
\begin{align*}
M\leq&\frac{a}{|u|^{\frac{1}{2}}}\bm\Psi[\Psi_0]\left(\int_0^v\int_{u_{\infty}}^u\frac{a}{|u'|^2}
||(a^{\frac{1}{2}})^{k}P_k||^2_{L^2_{sc}(\mathcal{S}_{u',v'})}\right)^{\frac{1}{2}}
=\frac{a}{|u|^{\frac{1}{2}}}\bm\Psi[\Psi_0]J^{\frac{1}{2}} ,
\end{align*}
and
\begin{align*}
N\leq&a^{\frac{1}{2}}\underline{\bm\Psi}[\TiPsi_1]\left(\int_0^v\int_{u_{\infty}}^u\frac{a}{|u'|^2}
||(a^{\frac{1}{2}})^{k}Q_k||^2_{L^2_{sc}(\mathcal{S}_{u',v'})}\right)^{\frac{1}{2}}
=a^{\frac{1}{2}}\underline{\bm\Psi}[\TiPsi_1]H^{\frac{1}{2}}.
\end{align*}

For the rest pairs, i.e. $(\Psi'_L,\Psi'_R)$ which denote pair $(\TiPsi_1,\Psi_2)$, $(\Psi_2,\TiPsi_3)$ 
and $(\TiPsi_3,\Psi_4)$ one has
\begin{align*}
&||(a^{\frac{1}{2}}\meth)^k\Psi'_L||^2_{L^2_{sc}(\mathcal{N}_u(0,v))}
+||(a^{\frac{1}{2}}\meth)^k\Psi'_R||^2_{L^2_{sc}(\mathcal{N}'_v(u_{\infty},u))} \\
\lesssim&\int_{u_{\infty}}^u\frac{a}{|u'|^2}
||(a^{\frac{1}{2}}\meth)^k\Psi'_{L}||^2_{L^2_{sc}(\mathcal{N}_{u'}(0,v))}
+\int_0^v||(a^{\frac{1}{2}}\meth)^k\Psi'_{R}||^2_{L^2_{sc}(\mathcal{N}_{v'}(u_{\infty},u))}+J+H
+\mathcal{I}^2_0.
\end{align*}

We estimate $J$ and $H$ together. 

For terms $\Gamma\Psi$ , 
because the general properties of connections and curvatures are the same 
to that shown in, hence one can make use of the results in the paper directly and obtain the control by
$\frac{\mathcal{O}^4}{a}+1$.

For term $K\Psi_L$ one has
\begin{align*}
&\sum_{i=0}^{k-1}a\int_0^v\int_{u_{\infty}}^u\frac{a}{|u'|^2}
||a^{\frac{k-1}{2}}\meth^{i}K\meth^{k-1-i}\Psi_L||^2_{L^2_{sc}(\mathcal{S}_{u',v'})} \\
\lesssim&(\frac{a^2}{|u|^3}\Psi[\TiPsi_1,\Psi_2,\Psi_3]^2+\frac{a^3}{|u|^3}\Psi[\Psi_0]^2)
||(a^{\frac{1}{2}}\mathcal{D})^{k-1}K||^2_{L^2_{sc}(\mathcal{N}_{u}(0,v))}.
\end{align*}

For terms $\phi_j\meth\zeta_l$, one has
\begin{align*}
&\sum_{i_1+...+i_4=k}\int_0^v\int_{u_{\infty}}^u\frac{a}{|u'|^2}
||\meth^{i_1}\Gamma^{i_2}\meth^{i_3}\phi_j\meth^{i_4+1}\zeta_l||^2_{L^2_{sc}(\mathcal{S}_{u,v'})} \\
\leq&\frac{1}{a}\int_0^v\int_{u_{\infty}}^u\frac{a}{|u'|^2}\frac{1}{|u'|^2}a\phi[\phi_j]^2
\frac{|u'|^2}{a}\zeta[\zeta_{4,5}]^2
\lesssim1+\frac{1}{a}\phi[\phi_j]^2\zeta[\zeta_{4,5}]^2.
\end{align*}
Details can be found in \ref{phiethzeta}.

For terms $\phi_j^2\Psi_l$ one has
\begin{align*}
&\sum_{i_1+...+i_5=k}\int_0^v\int_{u_{\infty}}^u\frac{a}{|u'|^2}
||\meth^{i_1}\Gamma^{i_2}\meth^{i_3}\phi_{j_1}\meth^{i_4}\phi_{j_2}
\meth^{i_5}\Psi_l||^2_{L^2_{sc}(\mathcal{S}_{u,v'})} \\
\leq&\int_{u_{\infty}}^u\frac{a}{|u'|^2}\frac{a^2}{|u'|^4}\phi[\phi_j]^4a\Psi[\Psi_0]^2
\leq\frac{a^4}{|u|^5}\mathcal{O}^6.
\end{align*}
Details can be found in \ref{phi2Psi}. 

Next, appendix \ref{zeta2} shows that 
terms $\zeta_l^2$ can be bounded by $\frac{a^2}{|u|^3}\zeta[\zeta_j]^4$.
Terms $\zeta_j\phi_l^3$ can be bounded by $1/a$ which are shown in \ref{zetaphi3}.

For terms $\zeta_j\phi_l\Gamma$, one has
\begin{align*}
&\sum_{i_1+...+i_5=k}\int_0^v\int_{u_{\infty}}^u\frac{a}{|u'|^2}
||\meth^{i_1}\Gamma^{i_2}\meth^{i_3}\zeta_{l}
\meth^{i_4}\phi_{l}\meth^{i_5}\Gamma||^2_{L^2_{sc}(\mathcal{S}_{u,v'})} 
\leq\frac{a}{|u|}\zeta[\zeta_0]^2\phi[\phi_j]^2\Gamma[\mu]^2+\frac{1}{a}\mathcal{O}^6.
\end{align*}
See \ref{zetaphiGamma}.

For terms $\phi_j^2\mu\Gamma$, one has
\begin{align*}
&\sum_{i_1+...+i_6=k}\int_0^v\int_{u_{\infty}}^u\frac{a}{|u'|^2}
||\meth^{i_1}\Gamma^{i_2}\meth^{i_3}\phi_{l_1}\meth^{i_4}\phi_{l_2}
\meth^{i_5}\mu\meth^{i_6}\Gamma||^2_{L^2_{sc}(\mathcal{S}_{u,v'})}
\lesssim\frac{1}{|u|}\mathcal{O}^8.
\end{align*}
See \ref{phi2muGamma}.

Now collect the results above one can conclude that
\begin{align*}
\bm\Psi^2+\underline{\bm\Psi}^2\lesssim\mathcal{I}^2_0+1.
\end{align*}

With the results above one can estimate the top derivative of Gaussian curvature $K=\mathcal{K}+\bar{\mathcal{K}}$
where 
\begin{align*}
\mathcal{K}=\mathrm{i}\bar\zeta_1\phi_1-\mathrm{i}\zeta_1\bar\phi_1
+\mathrm{i}\bar\Tizeta_4\phi_0-\mathrm{i}\Tizeta_4\bar\phi_0
-\Psi_2+\mu\rho-\lambda\sigma.
\end{align*}
We have
\begin{align*}
||(a^{\frac{1}{2}}\mathcal{D})^k\mathcal{K}||_{L^2_{sc}(\mathcal{S}_{u,v'})}\lesssim&
\sum_{i=0}^k||a^{\frac{1}{2}}\mathcal{D}^i\zeta_1\mathcal{D}^{k-i}\phi_1||_{L^2_{sc}(\mathcal{S}_{u,v'})}
+\sum_{i=0}^k||a^{\frac{1}{2}}\mathcal{D}^i\Tizeta_4\mathcal{D}^{k-i}\phi_0||_{L^2_{sc}(\mathcal{S}_{u,v'})}\\
&+\sum_{i=0}^k||a^{\frac{1}{2}}\mathcal{D}^i\mu\mathcal{D}^{k-i}\rho||_{L^2_{sc}(\mathcal{S}_{u,v'})}
+\sum_{i=0}^k||a^{\frac{1}{2}}\mathcal{D}^i\lambda\mathcal{D}^{k-i}\sigma||_{L^2_{sc}(\mathcal{S}_{u,v'})} \\
&+||(a^{\frac{1}{2}}\mathcal{D})^k\Psi_2||_{L^2_{sc}(\mathcal{S}_{u,v'})} \\
\lesssim&\frac{1}{|u|}\zeta[\zeta_1]a^{\frac{1}{2}}\phi[\phi_1]
+\frac{1}{|u|}\zeta[\Tizeta_4]a^{\frac{1}{2}}\phi[\phi_0]
+\frac{1}{|u|}a^{\frac{1}{2}}\phi[\phi_1]||(a^{\frac{1}{2}}\mathcal{D})^k\Tizeta_4||_{L^2_{sc}(\mathcal{S}_{u,v'})}\\
&+\frac{1}{|u|}\frac{|u|^2}{a}\Gamma[\mu]\frac{a}{|u|}\Gamma[\rho]
+\frac{1}{|u|}\frac{|u|}{a}\Gamma[\Timu]\Gamma[\rho]
+\frac{1}{|u|}\frac{|u|}{a^{\frac{1}{2}}}\Gamma[\lambda]a^{\frac{1}{2}}\Gamma[\sigma]\\
&+||(a^{\frac{1}{2}}\mathcal{D})^k\Psi_2||_{L^2_{sc}(\mathcal{S}_{u,v'})} \\
\lesssim&||(a^{\frac{1}{2}}\mathcal{D})^k\Psi_2||_{L^2_{sc}(\mathcal{S}_{u,v'})}
+\frac{a^{\frac{1}{2}}}{|u|}||(a^{\frac{1}{2}}\mathcal{D})^k\Tizeta_4||_{L^2_{sc}(\mathcal{S}_{u,v'})}+1. 
\end{align*}
Then, integrating along the light cone one has
\begin{align*}
||(a^{\frac{1}{2}}\mathcal{D})^k\mathcal{K}||_{L^2_{sc}(\mathcal{N}_{u}(0,v))}\lesssim&
||(a^{\frac{1}{2}}\mathcal{D})^k\Psi_2||_{L^2_{sc}(\mathcal{N}_{u}(0,v))}+1
\lesssim1, \\
||(a^{\frac{1}{2}}\mathcal{D})^k\mathcal{K}||_{L^2_{sc}(\mathcal{N}_{v}(u_{\infty},u))}\lesssim&
||(a^{\frac{1}{2}}\mathcal{D})^k\Psi_2||_{L^2_{sc}(\mathcal{N}_{v}(u_{\infty},u))}+1
\lesssim1.
\end{align*}
Hence one obtains the control of the Gaussian curvature.

\end{proof}

Then we obtain the main existence result:

\begin{theorem}
\label{ExistenceResult}
\textbf{\em (Existence result)}
Given $\mathcal{I}$, there exists a sufficiently large
$a_0=a_0(\mathcal{I})$, such that for $0\leq a_0 \leq a$ and initial
data 
\begin{align*}
\mathcal{I}_0\equiv\sum_{j=0}^1\sum_{i=0}^{15}\frac{1}{a^{\frac{1}{2}}}
||\mthorn^j(|u_{\infty}|\mathcal{D})^i(\sigma,\phi_0)||_{L^{2}(\mathcal{S}_{u_{\infty},v})}\leq\mathcal{I},
\end{align*}
along the outgoing initial null hypersurface $u=u_{\infty}$ and 
Minkowskian initial data along the ingoing initial null hypersurface $v=0$, 
then the Einstein-Weyl spinor system has a solution in the causal diamond
\begin{align*}
\mathbb{D}=\big\{(u,v)|u_{\infty}\leq u\leq -a/4, \ \ 0\leq v\leq 1\big\}.
\end{align*}
\end{theorem}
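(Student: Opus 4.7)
The plan is to assemble the bootstrap continuity argument already outlined in the strategy subsection. First I would fix a large auxiliary constant $\mathcal{O}$ satisfying $\mathcal{I}_0 \ll \mathcal{O}$ and $\mathcal{O}^{20} \leq a^{1/16}$, then define
\begin{align*}
\bm{I} \equiv \{u \in [u_{\infty}, -a/4] : \bmGamma + \bm\Psi + \bm\phi + \bm\zeta \leq \mathcal{O} \text{ for every } v \in [0,1]\}.
\end{align*}
Invoking the local well-posedness for the characteristic initial value problem of the Einstein--Weyl spinor system from \cite{PengXiaoning2501}, together with the initial bound $\bmGamma_0 + \bm\Psi_0 + \bm\phi_0 + \bm\zeta_0 \lesssim \mathcal{I}_0$, yields a short-time interval $[u_{\infty}, u_{\infty} + \varepsilon] \subseteq \bm{I}$, so $\bm{I}$ is nonempty. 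Closedness of $\bm{I}$ follows from continuity of the norms in $u$.

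Next I would chain the improvements in a carefully ordered hierarchy: the energy estimates of Section \ref{EnergyEstimate} (Props. \ref{EnergyEstimatephiA}--\ref{EnergyEstimateCurvature}) give $\bm\phi + \bm\zeta + \bm\Psi \lesssim \mathcal{I}_0 + 1$; feeding these into the $L^2(\mathcal{S})$ estimates of Section \ref{L2estimate} yields $\bmGamma \lesssim (\bm\Psi + \bm\phi + \bm\zeta)^2 + \mathcal{I}_0 + 1 \lesssim \mathcal{I}_0^2 + \mathcal{I}_0 + 1$ at non-top order; finally the elliptic estimates of Section \ref{Elliptic} close the eleventh-derivative connection norms $\bmGamma_{11}$ by extracting them from the divergence/curl structures $\widetilde{\tau}, \widetilde{\pi}, \widetilde{\ulomega}$, sidestepping top-order Weyl curvature. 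Because $(\mathcal{I}_0)^2 + \mathcal{I}_0 + 1 \ll \mathcal{O}$ under the smallness assumption on $a$, these improvements are \emph{strict}, so openness of $\bm{I}$ follows from the local existence theorem applied to a last slice. Hence $\bm{I} = [u_{\infty}, -a/4]$ and the uniform bound \eqref{Conclusion} holds throughout $\mathbb{D}$.

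The main technical obstacle is to verify that at each step the quadratic coupling between matter and geometry does not overwhelm the scale-invariant hierarchy. In particular, in the Bianchi energy identities the borderline terms from $\zeta_4, \zeta_5$ are tamed only through the renormalized pair $(\Tizeta_4, \Tizeta_5)$ introduced in \eqref{thornprimeTizeta4}--\eqref{thornTizeta5}; Prop. \ref{EnergyEstimateTizeta45} then provides genuinely subcritical control for these quantities at next-to-top order, which is exactly what is needed to absorb the worst Bianchi contributions at top order without derivative loss. The reading of the signatures $s_2$ together with the absorption inequality \eqref{EnergyinequalityAbNormal} for pair $(\zeta_4, \zeta_5)$ and \eqref{EnergyinequalityNormal} for the remaining pairs ensures that every error term appears with a small factor $a^{-1/4}$ or smaller after the H\"older inequalities \eqref{L2holder}--\eqref{L1holderalt} are applied.

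Finally, uniqueness in $\mathbb{D}$ follows from standard characteristic uniqueness arguments for quasilinear hyperbolic systems applied to the frame, connection, spinor and curvature variables; this step is essentially unchanged from the vacuum treatment in~\cite{An2022,HilValZha23} since the spinor--curvature couplings are estimated in the same scale-invariant framework. Putting openness, closedness, non-emptiness, and uniqueness together produces the unique solution in the stated causal diamond $\mathbb{D} = \{(u,v) \mid u_{\infty} \leq u \leq -a/4, \, 0 \leq v \leq 1\}$, completing the proof of Theorem \ref{ExistenceResult}.
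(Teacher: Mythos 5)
Your proposal correctly reproduces the paper's own argument: Theorem \ref{ExistenceResult} has no separate proof in the text but is the culmination of the bootstrap continuity scheme set out in Section \ref{Preliminaries} and filled in by the $L^{2}(\mathcal{S})$, elliptic, and energy estimates of Sections \ref{L2estimate}--\ref{EnergyEstimate}, exactly as you assemble them. One slip: the absorption of the $a^{-1/4}$ error terms and the strict improvement $\bmGamma+\bm\Psi+\bm\phi+\bmzeta\lesssim\mathcal{I}_0^2+\mathcal{I}_0+1\ll\mathcal{O}$ rely on $a$ being sufficiently \emph{large}, not small; this is a wording error only and does not affect the logic.
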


\section{Trapped surface formation}
\label{TrappedSurface}
In this section, we prove the following 
\begin{theorem}
[\textbf{\em Trapped surface formation}]
Given $\mathcal{I}$, there exists a sufficiently large
$a=a(\mathcal{I})$ such that if $0\leq a_0 \leq a$, and the initial
data satisfies
\begin{align*}
\mathcal{I}_0\equiv\sum_{j=0}^1\sum_{i=0}^{15}\frac{1}{a^{\frac{1}{2}}}
||\mthorn^j(|u_{\infty}|\mathcal{D})^i(\sigma,\phi_0)||_{L^{2}(\mathcal{S}_{u_{\infty},v})}\leq\mathcal{I}
\end{align*}
along the outgoing initial null hypersurface $u=u_{\infty}$ and is 
Minkowskian initial data along ingoing initial null hypersurface $v=0$, 
with 
\begin{align*}
\int_0^1\left( |u_{\infty}|^2(\sigma\bar\sigma+2\mathrm{i}\phi_0\mthorn\bar\phi_0
-2\mathrm{i}\bar\phi_0\mthorn\phi_0)|\right)(u_{\infty},v') \mathrm{d}v'\geq a,
\end{align*}
uniformly for any point on the outgoing initial null hypersurface
$u=u_{\infty}$, then
we have that $\mathcal{S}_{-a/4,1}$ is a trapped surface.
\end{theorem}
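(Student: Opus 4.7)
Once the existence theorem has closed the bootstrap throughout $\mathbb{D}$, trapping of $\mathcal{S}_{-a/4,1}$ reduces to verifying the two sign conditions $\rho(-a/4,1)>0$ and $\mu(-a/4,1)<0$. The ingoing half is immediate: from Prop.~\ref{L2mu} and Sobolev the pointwise estimate $|\Timu|\lesssim|u|^{-2}$ holds, so $\mu(-a/4,1)=-4/a+O(a^{-2})<0$. All the substance therefore lies in flipping the sign of $\rho$, for which the natural tool is the Raychaudhuri structure equation
\begin{align*}
\mthorn\rho=\rho^{2}+\sigma\bar\sigma+\Phi_{00},\qquad \Phi_{00}=2\mathrm{i}\bigl(\bar\zeta_{0}\phi_{0}-\zeta_{0}\bar\phi_{0}\bigr),
\end{align*}
where $\Phi_{00}$ is the Weyl-spinor null-null Ricci component determined by~\eqref{EDeq1}; at $v=0$, using the Minkowskian data and the definition~\eqref{Definitionzeta} of $\zeta_{0}$, this $\Phi_{00}$ coincides with the expression $2\mathrm{i}(\phi_{0}\mthorn\bar\phi_{0}-\bar\phi_{0}\mthorn\phi_{0})$ appearing in the hypothesis.

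The plan is to integrate this identity along the outgoing cone $u=-a/4$ between $v=0$ and $v=1$, using $\rho(-a/4,0)=-4/a$, the bootstrap-controlled $Q=1+O(a^{-1})$, and $\rho^{2}=O(a^{-2})$, to obtain
\begin{align*}
\rho(-a/4,1)\ge -\frac{4}{a}+\int_{0}^{1}\bigl(\sigma\bar\sigma+\Phi_{00}\bigr)(-a/4,v)\mathrm{d}v-O(a^{-2}).
\end{align*}
The hypothesis fixes the flux only at $u=u_{\infty}$, so the crux of the argument is a \emph{flux-propagation} estimate: for each $v\in[0,1]$ the scale-invariant combination $|u|^{2}(\sigma\bar\sigma+\Phi_{00})(u,v)$ should remain close to its initial value, and in particular its $v$-integral should stay $\gtrsim a$, uniformly in $u\in[u_{\infty},-a/4]$. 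Granting this, the hypothesis transfers to $\int_{0}^{1}(\sigma\bar\sigma+\Phi_{00})(-a/4,v)\mathrm{d}v\gtrsim 16/a$, and substitution yields $\rho(-a/4,1)\ge 12/a-O(a^{-2})>0$ for all sufficiently large $a$.

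To establish the propagation I would differentiate $|u|^{2}(\sigma\bar\sigma+\Phi_{00})$ along $\bmn$ using the ingoing structure equation for $\sigma$, the $\mthorn'\phi_{0}$ equation~\eqref{thornprimephi0}, and the $\zeta_{0}$ evolution. The explicit $|u|^{-1}$ contribution arising from $\partial_{u}|u|^{2}$ cancels against the Minkowskian pieces $\mu+\bar\mu=2/u$ in these transport equations, leaving only residual cross-terms that factor through the renormalized matter variables $\Tizeta_{4},\Tizeta_{5}$ and the curvature---each controlled in scale-invariant norm by Sections~\ref{L2estimate}--\ref{EnergyEstimate}. Collecting these contributions one expects
\begin{align*}
\Bigl|\mthorn'\int_{0}^{1}|u|^{2}\bigl(\sigma\bar\sigma+\Phi_{00}\bigr)(u,v)\mathrm{d}v\Bigr|\lesssim\frac{\mathcal{I}_{0}^{2}+1}{|u|^{2}},
\end{align*}
whose $u$-integral from $u_{\infty}$ down to $-a/4$ is $O(1/|u_{\infty}|)=o(1)$, hence negligible against the initial flux~$a$.

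The main obstacle is precisely that $\Phi_{00}$ is \emph{not} pointwise non-negative---only the combination $\sigma\bar\sigma+\Phi_{00}$ enjoys the hypothesized positivity, and only in the $v$-integrated sense. One must therefore propagate the combined flux as a single object, carefully tracking the shear--spinor exchange terms generated by the Bianchi identities coupled to the $\zeta_{ABA'}$ system. The renormalized variables $\Tizeta_{4},\Tizeta_{5}$ introduced in Sec.~\ref{GeometricSetup} were designed precisely to absorb such cross-couplings at the critical scale, so the bookkeeping closes and the trapped-surface criterion follows.
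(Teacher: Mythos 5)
Your proposal follows essentially the same route as the paper: reduce trapping to the sign conditions on $\rho$ and $\mu$, handle $\mu$ via the $\Timu$ bound, and for $\rho$ propagate the $|u|^{2}$-weighted flux $Q^{-1}(\sigma\bar\sigma+\Phi_{00})$ along the ingoing direction (the paper's quantity $F\equiv|u|^{2}Q^{-1}(\sigma\bar\sigma+2\mathrm{i}\bar\zeta_{0}\phi_{0}-2\mathrm{i}\zeta_{0}\bar\phi_{0})$, with $\partial_{u}F=\mthorn'F-C^{\mathcal{A}}\partial_{\mathcal{A}}F$) before integrating the Raychaudhuri equation at $u=-a/4$. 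Two quantitative slips worth flagging, neither of which breaks the logic: the propagation error is not $O(1)/|u|^{2}$ but $O(a^{3/2})/|u|^{2}$ (the $\mthorn'F$ part contributes $a/|u|^{2}$ and the transverse term $C^{\mathcal{A}}\partial_{\mathcal{A}}F$ contributes $a^{3/2}/|u|^{2}$, which actually dominates), and the resulting $u$-integral is controlled by the endpoint $u=-a/4$, giving $O(a^{1/2})$ rather than $O(1/|u_{\infty}|)$ — both are still $o(a)$, so the flux survives the transport and the sign of $\rho$ flips as required.
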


\begin{proof}
We start with the frame coefficient. 
Since the properties are the same as in the Einstein–Scalar case, 
we state the results directly and refer to \cite{An2022,HilValZha23}.
\begin{align*}
|Q|-1\lesssim\frac{\mathcal{O}}{|u|}, \quad
|P^{\mathcal{A}}(u,v)|\lesssim|P^{\mathcal{A}}(u,0)|+\frac{a^{\frac{1}{2}}}{|u|}\mathcal{O}, \quad
|C^{\mathcal{A}}|\lesssim\frac{a^{\frac{1}{2}}}{|u|^2}\mathcal{O}.
\end{align*}
for a point $(x^2,x^3)\in\mathcal{S}$ in a coordinate area. 
From the equation \eqref{Thornrho}, 
\begin{align*}
\mthorn\,\rho &= 2\,\mathrm{i}\,\bar\zeta_0\,\phi_0
- 2\,\mathrm{i}\,\zeta_0\,\bar\phi_0
+ \rho^2
+ \sigma\,\bar\sigma,
\end{align*}
one defines 
\begin{align*}
F\equiv |u|^2Q^{-1}(\sigma\bar\sigma+2\mathrm{i}\bar\zeta_0\phi_0-2\mathrm{i}\zeta_0\bar\phi_0).
\end{align*}
Then from $\bmn=\partial_u+C^{\mathcal{A}}\partial_{\mathcal{A}}$, one has
\begin{align*}
\partial_uF=\mthorn'F-C^{\mathcal{A}}\partial_{\mathcal{A}}F\equiv I_1+I_2.
\end{align*}
From equation $\mthorn'\sigma$, $\mthorn'\phi_0$, $\mthorn'\zeta_0$ as well as $\mthorn'Q$, one can 
compute that 
\begin{align*}
\mthorn'F=|u|^2Q^{-1}M
\end{align*}
where the expression of M is given in terms of \ref{ExpressionofM}. 
Using the existence theorem \ref{ExistenceResult} we calculate and obtain
\begin{align*}
|M|\lesssim\frac{a}{|u|^4},
\end{align*}
and hence
\begin{align*}
|I_1|\lesssim\frac{a}{|u|^2}.
\end{align*}

For the analysis of $I_2$, because the property of $\phi_0$ and $\zeta_0$ is the same with that of $\varphi_0$, 
hence we obtain
\begin{align*}
|I_2|\leq\frac{a^{\frac{3}{2}}}{|u|^2}
\end{align*}
which leads to 
\[
  |\partial_uF|\leq\frac{a^{\frac{3}{2}}}{|u|^2}\ll\frac{a^{\frac{7}{4}}}{|u|^2}.
\]

Then assume that the initial data satisfy
\begin{align*}
\int_0^1\left( |u_{\infty}|^2(\sigma\bar\sigma+2\mathrm{i}\bar\zeta_0\phi_0
-2\mathrm{i}\zeta_0\bar\phi_0)|\right)(u_{\infty},v') \mathrm{d}v'\geq a.
\end{align*}
Using the above analysis of $F$, for sufficiently large a, we have 
\begin{align*}
  & \int_0^1\left( |u|^2Q^{-1}(\sigma\bar\sigma+2\mathrm{i}\bar\zeta_0\phi_0
-2\mathrm{i}\zeta_0\bar\phi_0)|\right)(u,v',x^2,x^3) \mathrm{d}v'\\
\geq&\int_0^1\left( |u_{\infty}|^2(\sigma\bar\sigma+2\mathrm{i}\bar\zeta_0\phi_0
-2\mathrm{i}\zeta_0\bar\phi_0)|\right)(u_{\infty},v',x^2,x^3) \mathrm{d}v'-\frac{a^{\frac{7}{4}}}{|u|} \\
\geq&a-\frac{a^{\frac{7}{4}}}{|u|}
\geq a-\frac{4a^{\frac{7}{4}}}{a}
\geq \frac{a}{2}.
\end{align*}
When $u=-\frac{a}{4}$,
\begin{align*}
\int_0^1\left(Q^{-1}(\sigma\bar\sigma+2\mathrm{i}\bar\zeta_0\phi_0
-2\mathrm{i}\zeta_0\bar\phi_0)\right)|_{(-\frac{a}{4},v',x^2,x^3)} \mathrm{d}v'\geq \frac{8}{a}.
\end{align*}

With the results above, now we move to the calculation of expansion.
For the outgoing expansion defined as 
$\bm\theta_l\equiv\boldsymbol{\sigmasl}^{ab}\nabla_al_b=-\rho-\bar\rho=-2\rho$, 
initially on $\mathcal{N}_{u_{\infty}}$, 
one makes use of $Q=1$, and the initial data for $\phi_0$, $\zeta_0$ 
and $\rho(u,0,x^2,x^3)=-\frac{1}{|u|}$ and has 
\begin{align*}
\rho(u_{\infty},v,x^2,x^3)
&=\rho(u_{\infty},0,x^2,x^3)+\int_0^1\frac{\partial\rho}{\partial v} \mathrm{d}v 
=-\frac{1}{|u_{\infty}|}+\int_0^v\mthorn\rho \mathrm{d}v' \\
&= -\frac{1}{|u_{\infty}|}+\int_0^v\left(\sigma\bar\sigma+2\,\mathrm{i}\,\bar\zeta_0\,\phi_0
- 2\,\mathrm{i}\,\zeta_0\,\bar\phi_0\right)|_{(-\frac{a}{4},v',x^2,x^3)} \mathrm{d}v'  \\
  &\lesssim-\frac{1}{|u_{\infty}|}+\frac{a}{|u_{\infty}|^2}<0,
\end{align*}
which means the outgoing expansion is positive along $\mathcal{N}_{u_{\infty}}$.
Then on $\mathcal{N}_{-a/4}$ we have that
\begin{align*}
\rho(-\frac{a}{4},1,x^2,x^3)=&\rho(-\frac{a}{4},0,x^2,x^3)+\int_0^1\frac{\partial\rho}{\partial v} \mathrm{d}v 
=-\frac{4}{a}+\int_0^1Q^{-1}\mthorn\rho \mathrm{d}v' \\
\geq& -\frac{4}{a}+\int_0^1Q^{-1}(\sigma\bar\sigma+2\,\mathrm{i}\,\bar\zeta_0\,\phi_0
- 2\,\mathrm{i}\,\zeta_0\,\bar\phi_0)|_{(-\frac{a}{4},v',x^2,x^3)} \mathrm{d}v'  \\
\geq&-\frac{4}{a}+\frac{8}{a}=\frac{4}{a}>0.
\end{align*}
That leads to $\bm\theta_l=-2\rho<0$, i.e. for any~$(x^2,x^3)$, 
the outgoing expansion on  $\mathcal{N}_{-a/4}$ is negative.
For the ingoing expansion~$\theta_{\bmn}$ which is defined by
$\theta_{\bmn}\equiv\boldsymbol{\sigmasl}^{ab}\nabla_an_b=\mu+\bar\mu=2\mu$,
we have the estimate 
\begin{align*}
||\Timu||_{L^{\infty}(\mathcal{S}_{u,v})}=
||\mu+\frac{1}{|u|}||_{L^{\infty}(\mathcal{S}_{u,v})}\leq\frac{1}{|u|^2},
\end{align*}
which leads to
\begin{align*}
\mu(-\frac{a}{4},v,x^2,x^3)<0, \quad
\mu(u_{\infty},v,x^2,x^3)<0
\end{align*}
for any~$(x^2,x^3)$. This means that the ingoing expansion is always negative.
Then we conclude that the 2-sphere $\mathcal{S}_{-\frac{a}{4},1}$ is a trapped surface. 
In summary, we have derived the dynamical process that governs the emergence of trapped surfaces 
in the Einstein–Weyl model, thereby providing a description of their formation from absence to existence.
\end{proof}

\begin{remark}[{Low-angular momentum argument}]
\label{low-angular momentum argument}
Make use of the existence result one has that 
\begin{align*}
|\phi_0|\lesssim\frac{a^{\frac{1}{2}}}{|u|}, \quad
|\phi_1|\lesssim\frac{a}{|u|^2}, \quad
|\zeta_0|\lesssim\frac{a^{\frac{1}{2}}}{|u|}, \quad
|\zeta_1,\zeta_3|\lesssim\frac{a^{\frac{1}{2}}}{|u|^2},
\end{align*}
then one can calculate that 
\begin{align*}
\Phi_{00} = 2\mathrm{i}\bigl(\bar\zeta_0\phi_0-\zeta_0\bar\phi_0\bigr)
\lesssim\frac{a}{|u|^2},\quad
\Phi_{01} = \mathrm{i}\bigl(2\bar\zeta_1\phi_0-\zeta_3\bar\phi_0-\zeta_0\bar\phi_1\bigr)
\lesssim\frac{a^{\frac{3}{2}}}{|u|^3},
\end{align*}
hence 
\begin{align*}
|\Phi_{01}|\sim\frac{1}{a^{\frac{1}{2}}}|\Phi_{00}|
\end{align*}
where $a$ is a universal large number. 
Together with the initial data assumption, we have shown that such relation keeps and 
physically, this means that compared with the energy density, 
angular momentum density is small physically. 
\end{remark}

\appendix

\section{T-weight equations}
\subsection{Definition of $\zeta_{ABA'}$}
\label{DefinitionEqzeta}
\begin{subequations}
\begin{align}
\mthorn\,\phi_{0}
&= \zeta_{0}, \\[6pt]
\mthorn'\,\phi_{1}
&= \zeta_{5}
  - \tfrac{\ulomega\,\phi_{1}}{2}, \\[6pt]
\meth\,\phi_{0}
&= \zeta_{3}
  - \phi_{1}\,\sigma
  + \tfrac{\phi_{0}\,\tau}{2}, \\[6pt]
\meth\,\phi_{1}
&= \zeta_{4}
  + \phi_{0}\,\mu
  - \tfrac{\phi_{1}\,\tau}{2}, \\[6pt]
\meth'\,\phi_{0}
&= \zeta_{1}
  - \phi_{1}\,\rho
  + \tfrac{\phi_{0}\,\bar\tau}{2}, \\[6pt]
\meth'\,\phi_{1}
&= \zeta_{2}
  + \phi_{0}\,\lambda
  - \tfrac{\phi_{1}\,\bar\tau}{2}, 
\end{align}
\end{subequations}

\subsection{Equations for $\zeta_{ABA'}$}
\label{Equationzeta}

\subsubsection{Transport equations of $\zeta_{ABA'}$ without curvature}
\label{EquationzetaNoCurv}

\begin{subequations}
\begin{align}
\mthorn' \zeta_0- \meth\,\zeta_1 &= 
 \mathrm{i}\,\bar{\zeta}_4\,\phi_0^2
- \mathrm{i}\,\zeta_4\,\phi_0\,\bar{\phi}_0
- \mathrm{i}\,\bar{\zeta}_1\,\phi_0\,\phi_1
+ \mathrm{i}\,\zeta_3\,\bar{\phi}_0\,\phi_1
- \mathrm{i}\,\zeta_1\,\phi_0\,\bar{\phi}_1 \nonumber\\
&\quad
+ \mathrm{i}\,\zeta_0\,\phi_1\,\bar{\phi}_1
- \zeta_0\,\mu+\frac{3\,\zeta_0\,\ulomega}{2}
+ \zeta_4\,\rho
+ \zeta_2\,\sigma
- \frac{5\,\zeta_1\,\tau}{2}
- \zeta_3\,\bar{\tau}, \label{thornprimezeta0}\\
\mthorn \zeta_1- \meth'\,\zeta_0 &= -\mathrm{i}\,\bar{\zeta}_3\,\phi_0^2
+ 2\,\mathrm{i}\,\zeta_1\,\phi_0\,\bar{\phi}_0
+ \mathrm{i}\,\bar\zeta_0\,\phi_0\,\phi_1
- 2\,\mathrm{i}\,\zeta_0\,\bar{\phi}_0\,\phi_1 
+ \zeta_0\,\pi
+ 2\,\zeta_1\,\rho \nonumber\\
&\quad
+ \zeta_3\,\bar\sigma
- \frac{3\,\zeta_0\,\bar{\tau}}{2} \label{thornzeta1} \\
\mthorn'\zeta_1- \meth\,\zeta_2 &= 
- \mathrm{i}\,\zeta_5\,\phi_0\,\bar{\phi}_0
+ \mathrm{i}\,\bar{\zeta}_4\,\phi_0\,\phi_1
+ \mathrm{i}\,\zeta_4\,\bar{\phi}_0\,\phi_1
- \mathrm{i}\,\bar{\zeta}_1\,\phi_1^2 
- \mathrm{i}\,\zeta_2\,\phi_0\,\bar{\phi}_1 \nonumber\\
&\quad
+ \mathrm{i}\,\zeta_1\,\phi_1\,\bar{\phi}_1
- 2\,\zeta_1\,\mu+\frac{\zeta_1\,\ulomega}{2}
+ \zeta_5\,\rho
- \frac{\zeta_2\,\tau}{2}
- \zeta_4\,\bar{\tau} \label{thornprimezeta1}\\
\mthorn\,\zeta_2- \meth'\,\zeta_1 &= 2\,\mathrm{i}\,\zeta_2\,\phi_0\,\bar{\phi}_0
- \mathrm{i}\,\bar{\zeta}_3\,\phi_0\,\phi_1
- 2\,\mathrm{i}\,\zeta_1\,\bar{\phi}_0\,\phi_1
+ \mathrm{i}\,\bar{\zeta}_0\,\phi_1^2
- \zeta_0\,\lambda \nonumber\\
&\quad
+ 2\,\zeta_1\,\pi
+ \zeta_2\,\rho
+ \zeta_4\,\bar{\sigma}
- \frac{\zeta_1\,\tau}{2} \label{thornzeta2} \\
\mthorn'\,\zeta_3 - \meth\,\zeta_4&= \frac{\zeta_3\,\ulomega}{2}
+ \mathrm{i}\,\bar{\zeta}_5\,\phi_0^2
- \mathrm{i}\,\bar{\zeta}_2\,\phi_0\,\phi_1
- 2\,\mathrm{i}\,\zeta_4\,\phi_0\,\bar{\phi}_1
+ 2\,\mathrm{i}\,\zeta_3\,\phi_1\,\bar{\phi}_1 \nonumber\\
&\quad
- \zeta_1\,\bar{\lambda}
- \zeta_3\,\mu
+ \zeta_5\,\sigma
- \frac{3\,\zeta_4\,\tau}{2} ,\label{thornprimezeta3}\\
\mthorn\,\zeta_4- \meth'\,\zeta_3 &= -\mathrm{i}\,\bar\zeta_4\,\phi_0^2
+ \mathrm{i}\,\zeta_4\,\phi_0\,\bar\phi_0
+ \mathrm{i}\,\bar\zeta_1\,\phi_0\,\phi_1
- \mathrm{i}\,\zeta_3\,\bar\phi_0\,\phi_1
+ \mathrm{i}\,\zeta_1\,\phi_0\,\bar\phi_1 \nonumber\\
&\quad
- \mathrm{i}\,\zeta_0\,\phi_1\,\bar\phi_1
- \zeta_0\,\mu
+ \zeta_3\,\pi
+ \zeta_1\,\bar\pi
+ 2\,\zeta_4\,\rho
- \frac{\zeta_3\,\bar\tau}{2} \label{thornzeta4}\\
\mthorn'\zeta_4- \meth\,\zeta_5 &=
 \mathrm{i}\,\bar{\zeta}_5\,\phi_0\,\phi_1
- \mathrm{i}\,\bar{\zeta}_2\,\phi_1^2
- 2\,\mathrm{i}\,\zeta_5\,\phi_0\,\bar{\phi}_1 
+ 2\,\mathrm{i}\,\zeta_4\,\phi_1\,\bar{\phi}_1
- \zeta_2\,\bar{\lambda}
- 2\,\zeta_4\,\mu \nonumber\\
&\quad-\frac{\zeta_4\,\ulomega}{2}
+ \frac{\zeta_5\,\tau}{2}, \label{thornprimezeta4}\\
\mthorn\,\zeta_5- \meth'\,\zeta_4 &= \mathrm{i}\,\zeta_5\,\phi_0\,\bar{\phi}_0
- \mathrm{i}\,\bar{\zeta}_4\,\phi_0\,\phi_1
- \mathrm{i}\,\zeta_4\,\bar{\phi}_0\,\phi_1
+ \mathrm{i}\,\bar{\zeta}_1\,\phi_1^2
+ \mathrm{i}\,\zeta_2\,\phi_0\,\bar{\phi}_1
- \mathrm{i}\,\zeta_1\,\phi_1\,\bar{\phi}_1 \nonumber\\ 
&\quad
- \zeta_3\,\lambda
- \zeta_1\,\mu
+ 2\,\zeta_4\,\pi
+ \zeta_2\,\bar{\pi}
+ \zeta_5\,\rho
+ \frac{\bar{\zeta}_4\,\bar{\tau}}{2}.\label{thornzeta5}
\end{align}
\end{subequations}

\subsubsection{Equations contain curvature}
\label{EquationzetaCurv}

\begin{subequations}
\begin{align}
\meth'\,\zeta_3 &= -\Psi_2\,\phi_0
+ \mathrm{i}\,\bar\zeta_4\,\phi_0^2
- \mathrm{i}\,\zeta_4\,\phi_0\,\bar\phi_0
+ \tilde\Psi_1\,\phi_1
+ \mathrm{i}\bar\zeta_1\,\phi_0\,\phi_1
- \mathrm{i}\,\zeta_1\,\phi_0\,\bar\phi_1 \nonumber\\
&\quad
- \zeta_4\,\rho
+ \zeta_2\,\sigma
- \frac{\zeta_1\,\tau}{2}
+ \frac{\zeta_3\,\bar\tau}{2}
+ \meth\,\zeta_1, \label{zeta3zeta1}\\[6pt]
\meth'\,\zeta_4 &= -\tilde\Psi_3\,\phi_0
+ \Psi_2\,\phi_1
- \mathrm{i}\,\bar\zeta_4\,\phi_0\,\phi_1
+ \mathrm{i}\,\zeta_4\,\bar\phi_0\,\phi_1
- \mathrm{i}\bar\zeta_1\,\phi_1^2
+ \mathrm{i}\,\zeta_1\,\phi_1\,\bar\phi_1\nonumber \\
&\quad
+ \zeta_3\,\lambda
- \zeta_1\,\mu
+ \frac{\zeta_2\,\tau}{2}
- \frac{\zeta_4\,\bar\tau}{2}
+ \meth\,\zeta_2 \,. \label{zeta4zeta2} \\
\mthorn'\,\zeta_2 &= -\frac{\zeta_2\,\omega}{2}
+ \Psi_4\,\phi_0
- \tilde\Psi_3\,\phi_1
+ \mathrm{i}\,\zeta_5\,\bar\phi_0\,\phi_1
- 2\,\mathrm{i}\,\bar\zeta_4\,\phi_1^2
+ \mathrm{i}\,\zeta_2\,\phi_1\,\bar\phi_1 \nonumber\\
&\quad
- 2\,\zeta_4\,\lambda
- \zeta_2\,\mu
+ \frac{\zeta_5\,\bar\tau}{2}
+ \meth'\,\zeta_5, \label{thornprimezeta2zeta5}\\
\mthorn\,\zeta_5 &= -\tilde\Psi_3\,\phi_0
+ \mathrm{i}\,\zeta_5\,\phi_0\,\bar\phi_0
+ \Psi_2\,\phi_1
- 2\,\mathrm{i}\,\bar\zeta_4\,\phi_0\,\phi_1
+ \mathrm{i}\,\zeta_2\,\phi_0\,\bar\phi_1 \nonumber\\
&\quad
- 2\zeta_1\,\mu
+ 2\,\zeta_4\,\pi
+ \zeta_2\,\bar\pi
+ \zeta_5\,\rho
+ \frac{\zeta_2\,\tau}{2}
+ \meth\,\zeta_2. \label{thornzeta5zeta2}
\end{align}
\end{subequations}

\subsubsection{Equations with renormalized $\Tizeta_4$, $\Tizeta_5$}

\begin{subequations}
\begin{align}
\meth'\zeta_{3}
=  &\meth\,\zeta_{1}-\Psi_{2}\,\phi_{0}
+ \TiPsi_{1}\,\phi_{1}+ \phi_{0}\,\mu\,\rho 
- \frac{\zeta_{1}\,\tau}{2}
+ \frac{\zeta_{3}\,\bar\tau}{2} \nonumber\\
&+ \mathrm{i}\,\bar\zeta_{1}\,\phi_{0}\,\phi_{1}
- \mathrm{i}\,\zeta_{1}\,\phi_{0}\,\bar\phi_{1}
- \mathrm{i}\,\phi_{0}\,\bar\phi_{0}\,\tilde\zeta_{4}
+ \mathrm{i}\,\phi_{0}^{2}\,\bar{\tilde\zeta}_{4}
- \tilde\zeta_{4}\,\rho
+ \zeta_{2}\,\sigma, \label{zeta3zeta1Renorm}\\
 \meth'\,\tilde\zeta_{4}
=& \meth\,\zeta_{2}-\,\tilde\Psi_{3}\,\phi_{0}
+ \Psi_{2}\,\phi_{1}+ \phi_{0}\,\bigl(\meth'\mu\bigr)
+ \frac{\phi_{0}\,\mu\,\tau}{2}
 -\phi_{1}\,\mu\,\rho
+ \frac{\phi_{0}\,\mu\,\tau}{2} \nonumber\\
&- \mathrm{i}\,\bar\zeta_{1}\,\phi_{1}^{2}
+ \mathrm{i}\,\zeta_{1}\,\phi_{1}\,\bar\phi_{1}
+ \mathrm{i}\,\bar\phi_{0}\,\phi_{1}\,\tilde\zeta_{4}
- \mathrm{i}\,\phi_{0}\,\phi_{1}\,\tilde\zeta_{4}
+ \zeta_{3}\,\lambda
+ \frac{\zeta_{2}\,\tau}{2}
- \frac{\tilde\zeta_{4}\,\bar\tau}{2}\label{zeta4zeta2Renorm}
\end{align}
\end{subequations}

\begin{subequations}
\begin{align}
\mthorn\,\tilde\zeta_{4}
&=\Psi_{2}\,\phi_{0}
+ \mathrm{i}\,\tilde\zeta_{1}\,\phi_{0}\,\phi_{1}
- \mathrm{i}\,\zeta_{3}\,\phi_{0}\,\bar\phi_{1}
+ \mathrm{i}\,\zeta_{1}\,\phi_{0}\,\bar\phi_{1}
- \mathrm{i}\,\zeta_{0}\,\phi_{1}\,\bar\phi_{1}
+ \mathrm{i}\,\phi_{0}\,\bar\phi_{0}\,\tilde\zeta_{4}
- \mathrm{i}\,\phi_{0}^{2}\,\bar{\tilde\zeta}_{4} \nonumber\\
&\quad
+ \zeta_{3}\,\pi
+ \zeta_{1}\,\bar\pi
+ \phi_{0}\,\pi\,\bar\pi
+ 2\,\tilde\zeta_{4}\,\rho
- \phi_{0}\,\mu\,\rho
+ \phi_{0}\,\lambda\,\sigma
- \frac{\zeta_{3}\,\bar\tau}{2}
+ \phi_{0}\,\bigl(\meth\,\pi\bigr)
+ \meth'\,\zeta_{3}, \label{thornTizeta4}\\
\mthorn'\,\zeta_{3}
&= \frac{\zeta_{3}\,\ulomega}{2}
- \mathrm{i}\,\bar\zeta_{2}\,\phi_{0}\,\phi_{1}
+ 2\,\mathrm{i}\,\zeta_{3}\,\phi_{1}\,\bar\phi_{1}
- 2\,\mathrm{i}\,\phi_{0}\,\bar\phi_{1}\,\tilde\zeta_{4}
+ \mathrm{i}\,\phi_{0}^{2}\,\bar{\tilde\zeta}_{5}
- \zeta_{1}\,\bar\lambda
- 2\,\zeta_{3}\,\mu
+ \Tizeta_{5}\,\sigma \nonumber\\
&\quad
- \phi_{1}\,\mu\,\sigma
- \frac{3\,\tilde\zeta_{4}\,\bar\tau}{2}
+ \phi_{0}\,\mu\,\tau
+ \meth\,\tilde\zeta_{4}
- \phi_{0}\,\bigl(\meth\,\mu\bigr)\,. \label{thornprimezeta3alt}
\end{align}
\end{subequations}

\subsection{Structure equation}
\label{StructureEq}

\begin{subequations}
\begin{align}
\mthorn\,\tau &= \tilde\Psi_1
+ 4\,\mathrm{i}\,\bar\zeta_1\,\phi_0
- 2\,\mathrm{i}\,\zeta_3\,\bar\phi_0
- 2\,\mathrm{i}\,\zeta_0\,\bar\phi_1
+ \bar\pi\,\rho
+ \pi\,\sigma
+ \rho\,\tau
+ \sigma\,\bar\tau, \label{Thorntau}\\[6pt]
\mthorn'\,\pi &= -\tilde\Psi_3
+ 2\,\mathrm{i}\,\zeta_5\,\bar\phi_0
- 4\,\mathrm{i}\,\bar\zeta_4\,\phi_1
+ 2\,\mathrm{i}\,\zeta_2\,\bar\phi_1
- \mu\,\pi
- \lambda\,\bar\pi
- \lambda\,\tau
- \mu\,\bar\tau, \label{Thornprimepi}\\[6pt]
\mthorn\,\ulomega &= \Psi_2
+ \bar\Psi_2
+ 2\,\mathrm{i}\,\bar\zeta_4\,\phi_0
- 2\,\mathrm{i}\,\zeta_4\,\bar\phi_0
+ 2\,\mathrm{i}\,\bar\zeta_1\,\phi_1
- 2\,\mathrm{i}\,\zeta_1\,\bar\phi_1
+ 2\,\pi\,\tau
+ 2\,\bar\pi\,\bar\tau
+ 2\,\tau\,\bar\tau, \label{Thornulomega}\\[6pt]
\mthorn'\,\tau &= -\bar{\tilde\Psi}_3
- 2\,\mathrm{i}\,\bar\zeta_5\,\phi_0
- 2\,\mathrm{i}\,\bar\zeta_2\,\phi_1
+ 4\,\mathrm{i}\,\zeta_4\,\bar\phi_1
- 2\,\mu\,\tau
- 2\,\bar\lambda\,\bar\tau
+ \meth\,\ulomega, \label{Thornprimetau}\\
\mthorn'\,\mu &= -2\,\mathrm{i}\,\bar\zeta_5\,\phi_1
+ 2\,\mathrm{i}\,\zeta_5\,\bar\phi_1
- \lambda\,\bar\lambda
- \ulomega\,\mu
- \mu^2,  \label{Thornprimemu}\\[6pt]
\mthorn\,\mu &= \Psi_2
+ \pi\,\bar\pi
+ \mu\,\rho
+ \lambda\,\sigma
+ \meth\,\pi,  \label{Thornmu}\\[6pt]
\mthorn'\,\rho &= -\Psi_2
+ \ulomega\,\rho
- \mu\,\rho
- \lambda\,\sigma
- \tau\,\bar\tau
+ \meth'\,\tau,  \label{Thornprimerho}\\[6pt]
\mthorn\,\rho &= 2\,\mathrm{i}\,\bar\zeta_0\,\phi_0
- 2\,\mathrm{i}\,\zeta_0\,\bar\phi_0
+ \rho^2
+ \sigma\,\bar\sigma,  \label{Thornrho}\\[6pt]
\mthorn'\,\sigma &= -2\,\mathrm{i}\,\bar\zeta_2\,\phi_0
+ 2\,\mathrm{i}\,\zeta_3\,\bar\phi_1
- \bar\lambda\,\rho
+ \ulomega\,\sigma
- \mu\,\sigma
- \tau^2
+ \meth\,\tau,  \label{Thornprimesigma}\\[6pt]
\mthorn\,\sigma &= \Psi_0
+ 2\,\rho\,\sigma, \label{Thornsigma}\\[6pt]
\mthorn'\,\lambda &= -\Psi_4
- \ulomega\,\lambda
- 2\,\lambda\,\mu,  \label{Thornprimelambda}\\[6pt]
\mthorn\,\lambda &= -2\,\mathrm{i}\,\zeta_2\,\bar\phi_0
+ 2\,\mathrm{i}\,\bar\zeta_3\,\phi_1
+ \pi^2
+ \lambda\,\rho
+ \mu\,\sigma
+ \meth'\,\pi,  \label{Thornlambda}\\[6pt]
\meth'\,\mu &= \tilde\Psi_3
+ \lambda\,\tau
- \mu\,\bar\tau
+ \meth\,\lambda,  \label{mulambda}\\[6pt]
\meth'\,\sigma &= \tilde\Psi_1
- \rho\,\tau
+ \sigma\,\bar\tau
+ \meth\,\rho. \label{rhosigma}
\end{align}
\end{subequations}

\subsection{Bianchi identity}
\label{BianchiIdentity}

\begin{align}
\mthorn' \Psi_{0}
&= -4\,\mathrm{i}\,\zeta_{0}\,\bar\zeta_{2}
+ 4\,\mathrm{i}\,\bar\zeta_{1}\,\zeta_{3}
+ 2\,\Psi_{0}\,\ulomega
- 4\,\bar\zeta_{2}\,\phi_{0}^{2}\,\bar\phi_{0}
- 2\,\mathrm{i}\,\tilde\Psi_{1}\,\phi_{0}\,\bar\phi_{1}
+ 8\,\bar\zeta_{1}\,\phi_{0}^{2}\,\bar\phi_{1}
\notag\\
&\quad
+ 2\,\mathrm{i}\,\Psi_{0}\,\phi_{1}\,\bar\phi_{1}
- 4\,\zeta_{0}\,\phi_{0}\,\bar\phi_{1}^{2}
- \Psi_{0}\,\mu
+ 2\,\mathrm{i}\,\zeta_{3}\,\bar\phi_{1}\,\rho
+ 3\,\Psi_{2}\,\sigma
- 2\,\mathrm{i}\,\bar\zeta_{1}\,\phi_{1}\,\sigma
\notag\\
&\quad
+ 2\,\mathrm{i}\,\zeta_{1}\,\bar\phi_{1}\,\sigma
- 2\,\mathrm{i}\,\bar\phi_{0}\,\tilde\zeta_{4}\,\sigma
+ 2\,\mathrm{i}\,\phi_{0}\,\bar\phi_{0}\,\mu\,\sigma
- 5\,\tilde\Psi_{1}\,\tau
- 9\,\mathrm{i}\,\bar\zeta_{1}\,\phi_{0}\,\tau
\notag\\
&\quad
+ 5\,\mathrm{i}\,\zeta_{3}\,\bar\phi_{0}\,\tau
+ 4\,\mathrm{i}\,\zeta_{0}\,\bar\phi_{1}\,\tau
+ 2\,\mathrm{i}\,\phi_{0}\,\bigl(\meth\,\bar\zeta_{1}\bigr)
- 2\,\mathrm{i}\,\bar\phi_{0}\,\bigl(\meth\,\zeta_{3}\bigr)
+ \meth\,\tilde\Psi_{1}\,. \label{thornprimePsi0}
\end{align}

\begin{align}
\mthorn'\,\tilde\Psi_{1}
&= -2\,\mathrm{i}\,\zeta_{1}\,\bar\zeta_{2}
+ \tilde\Psi_{1}\,\ulomega
- 2\,\mathrm{i}\,\Psi_{2}\,\phi_{0}\,\bar\phi_{1}
+ 2\,\mathrm{i}\,\tilde\Psi_{1}\,\phi_{1}\,\bar\phi_{1}
- 2\,\tilde\zeta_{1}\,\phi_{0}\,\phi_{1}\bar\phi_{1}
+ 2\,\zeta_{1}\,\phi_{0}\,\bar\phi_{1}^{2}
\notag\\
&\quad
+ 2\,\mathrm{i}\,\zeta_{3}\,\tilde\zeta_{4}
- 2\,\phi_{0}^{2}\,\bar\phi_{1}\,\bar{\tilde\zeta}_{4}
- 2\,\tilde\Psi_{1}\,\mu
+ 2\,\mathrm{i}\,\bar\phi_{1}\,\tilde\zeta_{4}\,\rho
- 2\,\mathrm{i}\,\phi_{0}\,\bar{\tilde\zeta}_{5}\,\rho
+ 2\,\mathrm{i}\,\phi_{0}\,\bar\phi_{1}\,\mu\,\rho
\notag\\
&\quad
+ 2\,\tilde\Psi_{3}\,\sigma
+ 4\,\mathrm{i}\,\phi_{1}\,\bar{\tilde\zeta}_{4}\,\sigma
- 2\,\mathrm{i}\,\bar\phi_{0}\,\tilde\zeta_{5}\,\sigma
- 3\,\Psi_{2}\,\tau
+ 2\,\mathrm{i}\,\bar\zeta_{1}\,\phi_{1}\,\tau
- 3\,\mathrm{i}\,\zeta_{1}\,\bar\phi_{1}\,\tau
\notag\\
&\quad
- 2\,\mathrm{i}\,\bar\phi_{0}\,\tilde\zeta_{4}\,\tau
+ 2\,\mathrm{i}\,\phi_{0}\,\bar{\tilde\zeta}_{4}\,\tau
+ \mathrm{i}\,\tilde\zeta_{2}\,\phi_{0}\,\bar\tau
- 2\,\mathrm{i}\,\zeta_{3}\,\bar\phi_{1}\,\tau
+ 2\,\mathrm{i}\,\bar\phi_{1}\,\bigl(\meth\,\zeta_{1}\bigr)
\notag\\
&\quad + 2\,\phi_{0}\,\bar\phi_{0}\,\bar\phi_{1}\,\tilde\zeta_{4}
+ \meth\,\Psi_{2}
- 2\,\mathrm{i}\,\phi_{0}\,\bigl(\meth'\,\bar\zeta_{2}\bigr). \label{thornprimePsi1}
\end{align}

\begin{align}
\mthorn'\,\Psi_{2}
&= -2\,\mathrm{i}\,\zeta_{2}\,\bar\zeta_{2}
  -2\,\bar\zeta_{2}\,\bar\phi_{0}\,\phi_{1}^{2}
  -2\,\mathrm{i}\,\tilde\Psi_{3}\,\phi_{0}\,\bar\phi_{1}
  +2\,\mathrm{i}\,\Psi_{2}\,\phi_{1}\,\bar\phi_{1}
  +2\,\bar\zeta_{1}\,\phi_{1}^{2}\,\bar\phi_{1}
  -2\,\zeta_{2}\,\phi_{0}\,\bar\phi_{1}^{2}
\nonumber\\
&\quad
  +2\,\bar\phi_{0}\,\phi_{1}\,\bar\phi_{1}\,\tilde\zeta_{4}
  +6\,\phi_{0}\,\phi_{1}\,\bar\phi_{1}\,\tilde\zeta_{4}
  +4\,\mathrm{i}\,\tilde\zeta_{4}\bar{\tilde\zeta}_{4}
  -2\,\phi_{0}\,\bar\phi_{0}\,\bar\phi_{1}\,\tilde\zeta_{5}
  -2\,\mathrm{i}\,\zeta_{1}\,\bar{\tilde\zeta}_{5}
  -2\,\phi_{0}\,\bar\phi_{0}\,\phi_{1}\,\bar{\tilde\zeta}_{5}
\nonumber\\
&\quad
  -3\,\Psi_{2}\,\mu
  +2\,\mathrm{i}\,\zeta_{1}\,\bar\phi_{1}\,\mu
  -2\,\mathrm{i}\,\bar\phi_{0}\,\tilde\zeta_{4}\,\mu
  -2\,\mathrm{i}\,\phi_{0}\,\bar{\tilde\zeta}_{4}\,\mu
  +2\,\mathrm{i}\,\bar\phi_{1}\,\tilde\zeta_{5}\,\rho
  -4\,\mathrm{i}\,\phi_{1}\,\bar\phi_{1}\,\mu\,\rho
\nonumber\\
&\quad
  + \Psi_{4}\,\sigma
  - \tilde\Psi_{3}\,\tau
  +2\,\mathrm{i}\,\zeta_{2}\,\bar\phi_{1}\,\tau
  -3\,\mathrm{i}\,\phi_{1}\,\bar{\tilde\zeta}_{4}\,\tau
  -\mathrm{i}\,\bar\phi_{0}\,\tilde\zeta_{5}\,\tau
  +5\,\mathrm{i}\,\bar\phi_{0}\,\phi_{1}\,\mu\,\tau
\nonumber\\
&\quad
  +2\,\mathrm{i}\,\phi_{1}\,\bigl(\meth\,\bar\zeta_{4}\bigr)
  -2\,\mathrm{i}\,\bar\phi_{0}\,\bigl(\meth\,\zeta_{5}\bigr)
  + \meth\,\tilde\Psi_{3}\,. \label{thornprimePsi2}
\end{align}

\begin{align}
\mthorn' \,\tilde\Psi_{3}
&= -\,\tilde\Psi_{3}\,\ulomega
+ 2\,\mathrm{i}\,\bar{\tilde\zeta}_{4}\,\tilde\zeta_{5}
- 2\,\mathrm{i}\,\zeta_{2}\,\bar{\tilde\zeta}_{5}
+ 2\,\mathrm{i}\,\bar\zeta_{2}\,\phi_{1}\,\lambda
- 4\,\mathrm{i}\,\bar\phi_{1}\,\tilde\zeta_{4}\,\lambda
- 4\,\tilde\Psi_{3}\,\mu
+ 6\,\mathrm{i}\,\zeta_{2}\,\bar\phi_{1}\,\mu
\nonumber\\
&\quad
- 8\,\mathrm{i}\,\phi_{1}\,\bar{\tilde\zeta}_{4}\,\mu
+ 2\,\mathrm{i}\,\bar\phi_{0}\,\tilde\zeta_{5}\,\mu
+ 4\,\mathrm{i}\,\phi_{0}\,\bar\phi_{1}\,\lambda\,\mu
+ \Psi_{4}\,\tau
+ \mathrm{i}\,\bar\phi_{1}\,\tilde\zeta_{5}\,\tau
- \mathrm{i}\,\phi_{1}\,\bar{\tilde\zeta}_{5}\,\tau
\nonumber\\
&\quad
+ \meth\,\Psi_{4}
+ 2\,\mathrm{i}\,\bar\phi_{1}\,\bigl(\meth'\,\zeta_{5}\bigr)
- 2\,\mathrm{i}\,\phi_{1}\,\bigl(\meth'\,\bar{\zeta}_{5}\bigr)\,. \label{thornprimePsi3}
\end{align}

\begin{align}
\mthorn\,\tilde\Psi_{1}
&= 2\,\mathrm{i}\,\zeta_{0}\,\bar\zeta_{1}
- 2\,\mathrm{i}\,\bar\zeta_{0}\,\zeta_{3}
+ \Psi_{0}\,\pi
- 2\,\mathrm{i}\,\bar\zeta_{0}\,\phi_{0}\,\pi
+ 2\,\mathrm{i}\,\zeta_{0}\,\bar\phi_{0}\,\pi
+ 4\,\tilde\Psi_{1}\,\rho
+ 4\,\mathrm{i}\,\bar\zeta_{1}\,\phi_{0}\,\rho
\nonumber\\
&\quad
- 2\,\mathrm{i}\,\zeta_{3}\,\bar\phi_{0}\,\rho
- 4\,\mathrm{i}\,\zeta_{0}\,\bar\phi_{1}\,\rho
- 2\,\mathrm{i}\,\bar\zeta_{3}\,\phi_{0}\,\sigma
+ 4\,\mathrm{i}\,\zeta_{1}\,\bar\phi_{0}\,\sigma
+ 3\,\mathrm{i}\,\bar\zeta_{0}\,\phi_{0}\,\tau
\nonumber\\
&\quad
- 3\,\mathrm{i}\,\zeta_{0}\,\bar\phi_{0}\,\tau
- 2\,\Psi_{0}\,\bar\tau
+ 2\,\mathrm{i}\,\bar\phi_{0}\,\bigl(\meth\,\zeta_{0}\bigr)
- 2\,\mathrm{i}\,\phi_{0}\,\bigl(\meth\,\bar{\zeta}_{0}\bigr)
+ \meth'\,\Psi_{0}\,. \label{thornPsi1}
\end{align}

\begin{align}
\mthorn\,\Psi_{2}
&= 4\,\mathrm{i}\,\zeta_{1}\,\bar\zeta_{1}
- 2\,\mathrm{i}\,\zeta_{3}\,\bar\zeta_{3}
+ 2\,\mathrm{i}\,\Psi_{2}\,\phi_{0}\,\bar\phi_{0}
- 2\,\mathrm{i}\,\tilde\Psi_{1}\,\bar\phi_{0}\,\phi_{1}
+ 6\,\bar\zeta_{1}\,\phi_{0}\,\bar\phi_{0}\,\phi_{1}
- 2\,\zeta_{3}\,\bar\phi_{0}^{2}\,\phi_{1}
- 2\,\bar\zeta_{3}\,\phi_{0}^{2}\,\bar\phi_{1}
\nonumber\\
&\quad
+ 2\,\zeta_{1}\,\phi_{0}\,\bar\phi_{0}\,\bar\phi_{1}
- 2\,\bar\zeta_{0}\,\phi_{0}\,\phi_{1}\,\bar\phi_{1}
+ 2\,\tilde\Psi_{1}\,\pi
+ 4\,\mathrm{i}\,\zeta_{1}\,\phi_{0}\,\pi
- 2\,\mathrm{i}\,\zeta_{3}\,\bar\phi_{0}\,\pi
\nonumber\\
&\quad
- 2\,\mathrm{i}\,\zeta_{0}\,\bar\phi_{1}\,\pi
+ 3\,\Psi_{2}\,\rho
- 2\,\mathrm{i}\,\bar\zeta_{1}\,\phi_{1}\,\rho
- 2\,\mathrm{i}\,\zeta_{1}\,\bar\phi_{1}\,\rho
+ 2\,\mathrm{i}\,\bar\phi_{0}\,\tilde\zeta_{4}\,\rho
-\Psi_0\lambda
\nonumber\\
&\quad
- 2\,\mathrm{i}\,\phi_{0}\,\bar\phi_{0}\,\mu\,\rho
- \tilde\Psi_{1}\,\tau
- \mathrm{i}\,\bar\zeta_{1}\,\phi_{0}\,\tau
+ 3\,\mathrm{i}\,\zeta_{0}\,\bar\phi_{1}\,\tau
+2\mathrm{i}\bar\zeta_0\phi_0\mu
-2\mathrm{i}\zeta_0\bar\phi_0\mu
- 2\,\mathrm{i}\,\bar\phi_{1}\,\bigl(\meth'\,\zeta_{0}\bigr)
\nonumber\\
&\quad
+ 2\,\mathrm{i}\,\phi_{0}\,\bigl(\meth'\,\bar\zeta_{1}\bigr)
+ \meth'\,\tilde\Psi_{1}\,-2\zeta_0\bar\phi_0\phi_1\bar\phi_1
-2\mathrm{i}\bar\zeta_0\Tizeta_4
-2\phi_0\bar\phi_0^2\Tizeta_4+2\phi_0^2\bar\phi_0\bar\Tizeta_4. \label{thornPsi2}
\end{align}

\begin{align}
\mthorn\,\tilde\Psi_{3}
&= 2\,\mathrm{i}\,\bar\zeta_{1}\,\zeta_{2}
- 2\,\mathrm{i}\,\bar\zeta_{3}\,\tilde\zeta_{4}
- 2\,\tilde\Psi_{1}\,\lambda
- 4\,\mathrm{i}\,\bar\zeta_{1}\,\phi_{0}\,\lambda
+ 2\,\mathrm{i}\,\zeta_{3}\,\bar\phi_{0}\,\lambda
+ 2\,\mathrm{i}\,\zeta_{0}\,\bar\phi_{1}\,\lambda
\nonumber\\
&\quad
+ 2\,\mathrm{i}\,\bar\zeta_{3}\,\phi_{0}\,\mu
- 4\,\mathrm{i}\,\zeta_{1}\,\bar\phi_{0}\,\mu
+ 2\,\mathrm{i}\,\bar\zeta_{0}\,\phi_{1}\,\mu
+ 3\,\Psi_{2}\,\pi
- 2\,\mathrm{i}\,\bar\zeta_{1}\,\phi_{1}\,\pi
+ 2\,\mathrm{i}\,\zeta_{1}\,\bar\phi_{1}\,\pi
\nonumber\\
&\quad
+ 2\,\mathrm{i}\,\bar\phi_{0}\,\tilde\zeta_{4}\,\pi
- 2\,\mathrm{i}\,\phi_{0}\,\bar{\tilde\zeta}_{4}\,\pi
+ 2\,\mathrm{i}\,\zeta_{2}\,\bar\phi_{0}\,\bar\pi
- 2\,\mathrm{i}\,\zeta_{3}\,\phi_{1}\,\bar\pi
+ 2\,\tilde\Psi_{3}\,\rho
- 2\,\mathrm{i}\,\zeta_{2}\,\bar\phi_{1}\,\rho
\nonumber\\
&\quad
+ \mathrm{i}\,\zeta_{2}\,\bar\phi_{0}\,\tau
+ \mathrm{i}\,\bar\zeta_{3}\,\phi_{1}\,\tau
+ 2\,\mathrm{i}\,\bar\phi_{0}\,\bigl(\meth\,\zeta_{2}\bigr)
- 2\,\mathrm{i}\,\phi_{1}\,\bigl(\meth\,\bar\zeta_{3}\bigr)
+ \meth'\,\Psi_{2}\,. \label{thornPsi3}
\end{align}

\begin{align*}
\mthorn\,\Psi_{4}
&= 2\,\mathrm{i}\,\Psi_{4}\,\phi_{0}\,\bar\phi_{0}
  - 2\,\mathrm{i}\,\tilde\Psi_{3}\,\bar\phi_{0}\,\phi_{1}
  - 4\,\bar\zeta_{3}\,\phi_{1}^{2}\,\bar\phi_{1}
  + 4\,\mathrm{i}\,\zeta_{2}\,\bar{\tilde\zeta}_{4}
  + 8\,\bar\phi_{0}\,\phi_{1}^{2}\,\bar{\tilde\zeta}_{4}
  - 4\,\mathrm{i}\,\bar\zeta_{3}\,\tilde\zeta_{5}
  - 4\,\bar\phi_{0}^{2}\,\phi_{1}\,\tilde\zeta_{5}
  - 3\,\Psi_{2}\,\lambda
\\
&\quad
  + 2\,\mathrm{i}\,\zeta_{1}\,\bar\phi_{1}\,\lambda
  - 2\,\mathrm{i}\,\bar\phi_{0}\,\tilde\zeta_{4}\,\lambda
  + 2\,\mathrm{i}\,\phi_{0}\,\bar{\tilde\zeta}_{4}\,\lambda
  - 6\,\mathrm{i}\,\zeta_{2}\,\bar\phi_{0}\,\mu
  + 8\,\mathrm{i}\,\bar\zeta_{3}\,\phi_{1}\,\mu
\\
&\quad
  + 4\,\tilde\Psi_{3}\,\pi
  - 4\,\mathrm{i}\,\zeta_{2}\,\bar\phi_{1}\,\pi
  + 8\,\mathrm{i}\,\phi_{1}\,\bar{\tilde\zeta}_{4}\,\pi
  - 4\,\mathrm{i}\,\bar\phi_{0}\,\tilde\zeta_{5}\,\pi
  + \Psi_{4}\,\rho
  + \tilde\Psi_{3}\,\tau
\\
&\quad
  - \mathrm{i}\,\zeta_{2}\,\bar\phi_{1}\,\tau
  + \mathrm{i}\,\phi_{1}\,\bar{\tilde\zeta}_{4}\,\tau
  - \mathrm{i}\,\bar\phi_{0}\,\phi_{1}\,\mu\,\tau
  - 2\,\mathrm{i}\,\bar\phi_{1}\,\bigl(\meth'\,\zeta_{2}\bigr)
  + 2\,\mathrm{i}\,\phi_{1}\,\bigl(\meth'\,\bar\zeta_{4}\bigr)
  + \meth'\,\tilde\Psi_{3}\,. \label{thornPsi4}
\end{align*}

\begin{align}
\mthorn\,\mathcal{K}
&= -4\,\mathrm{i}\,\zeta_{1}\,\bar\zeta_{1}
  +2\,\mathrm{i}\,\zeta_{3}\,\bar\zeta_{3}
  +\mathrm{i}\,\mathcal{K}\,\phi_{0}\,\bar\phi_{0}
  +\mathrm{i}\,\bar{\mathcal{K}}\,\phi_{0}\,\bar\phi_{0}
  +\mathrm{i}\,\tilde\Psi_{1}\,\bar\phi_{0}\,\phi_{1}
  +\mathrm{i}\,\bar\TiPsi_{1}\,\phi_{0}\,\bar\phi_{1}
  +\mathrm{i}\,\bar\zeta_{0}\,\tilde\zeta_{4}
  +\mathrm{i}\,\zeta_{0}\,\bar{\tilde\zeta}_{4}
\nonumber\\
&\quad
  -2\,\tilde\Psi_{1}\,\pi
  -2\,\mathrm{i}\,\bar\zeta_{1}\,\phi_{0}\,\pi
  +\mathrm{i}\,\zeta_{3}\,\bar\phi_{0}\,\pi
  +\mathrm{i}\,\zeta_{0}\,\bar\phi_{1}\,\pi
  +\mathrm{i}\,\bar\zeta_{3}\,\phi_{0}\,\bar\pi
  -2\,\mathrm{i}\,\zeta_{1}\,\bar\phi_{0}\,\bar\pi
  +\mathrm{i}\,\zeta_{0}\,\phi_{1}\,\bar\pi
  +2\,\mathcal{K}\,\rho
\nonumber\\
&\quad
  +2\,\mathrm{i}\,\bar\zeta_{1}\,\phi_{1}\,\rho
  +2\,\mathrm{i}\,\zeta_{1}\,\bar\phi_{1}\,\rho
  -\mathrm{i}\,\bar\phi_{0}\,\tilde\zeta_{4}\,\rho
  -\mathrm{i}\,\phi_{0}\,\bar{\tilde\zeta}_{4}\,\rho
  +\pi\bar\pi\,\rho
  +\mathrm{i}\,\zeta_{2}\,\bar\phi_{0}\,\sigma
  -\mathrm{i}\,\zeta_{3}\,\phi_{1}\,\sigma
\nonumber\\
&\quad
  +\mathrm{i}\,\phi_{0}\,\bar\phi_{0}\,\lambda\,\sigma
  -\pi^{2}\,\sigma
  +\mathrm{i}\,\zeta_{2}\,\phi_{0}\,\bar\sigma
  -\mathrm{i}\,\zeta_{3}\,\phi_{1}\,\bar\sigma
  +\mathrm{i}\,\phi_{0}\,\bar\phi_{0}\,\bar\lambda\,\bar\sigma
+\tfrac{1}{2}\,\mathrm{i}\,\zeta_{1}\,\bar\phi_{0}\,\tau
  -\tfrac{3}{2}\,\mathrm{i}\,\bar\zeta_{0}\,\phi_{1}\,\tau
\nonumber\\
&\quad
  +\tfrac{3}{2}\,\mathrm{i}\,\zeta_{0}\,\bar\phi_{1}\,\bar\tau
  +\mathrm{i}\,\phi_{1}\,\bigl(\meth\,\bar\zeta_{0}\bigr)
  +\TiPsi_{1}\,\tau
  +\tfrac{1}{2}\,\mathrm{i}\,\bar\zeta_{1}\,\phi_{0}\,\bar\tau
  -\tfrac{3}{2}\,\mathrm{i}\,\zeta_{0}\,\bar\phi_{1}\,\tau
\nonumber\\
&\quad
  -\mathrm{i}\,\bar\phi_{0}\,\bigl(\meth\,\zeta_{1}\bigr)
  +\rho\,\bigl(\meth\,\pi\bigr)
  +\mathrm{i}\,\bar\phi_{1}\,\bigl(\meth'\,\zeta_{0}\bigr)
  -\mathrm{i}\,\phi_{0}\,\bigl(\meth'\,\bar\zeta_{1}\bigr)
  -\meth'\,\Psi_{1}
  -\sigma\,\bigl(\meth'\,\pi\bigr).
\end{align}

\begin{align*}
\mthorn'\tilde\Psi_1 &:= -3\,\mathrm{i}\,\zeta_1\,\bar\zeta_2
+ \tilde\Psi_1\,\ulomega
+ 2\,\mathrm{i}\,\mathcal{K}\,\phi_0\,\bar\phi_1
+ 2\,\mathrm{i}\,\tilde\Psi_1\,\phi_1\,\bar\phi_1
+ 3\,\mathrm{i}\,\zeta_3\,\tilde\zeta_4
- \mathrm{i}\,\zeta_1\,\bar\phi_0\,\bar\lambda
- 2\,\tilde\Psi_1\,\mu \\
&\quad
+ \mathrm{i}\,\bar\zeta_1\,\phi_0\,\mu
- \mathrm{i}\,\zeta_3\,\bar\phi_0\,\mu
+ 3\,\mathrm{i}\,\bar\phi_1\,\tilde\zeta_4\,\rho
- 2\,\mathrm{i}\,\phi_0\,\bar{\tilde\zeta}_5\,\rho
- \mathrm{i}\,\phi_0\,\bar\phi_1\,\mu\,\rho
+ 2\,\tilde\Psi_3\,\sigma
+ 3\,\mathrm{i}\,\phi_1\,\bar\tilde\zeta_4\,\sigma \\
&\quad
- 2\,\mathrm{i}\,\bar\phi_0\,\tilde\zeta_5\,\sigma
+ 2\,\mathrm{i}\,\phi_0\,\bar\phi_1\,\lambda\,\sigma
+ \mathrm{i}\,\bar\phi_0\,\phi_1\,\mu\,\sigma
+ 3\,\mathcal{K}\,\tau
- \tfrac{3}{2}\,\mathrm{i}\,\bar\zeta_1\,\phi_1\,\tau
+ \tfrac{1}{2}\,\mathrm{i}\,\zeta_1\,\bar\phi_1\,\tau
+ \tfrac{1}{2}\,\mathrm{i}\,\bar\phi_0\,\tilde\zeta_4\,\tau \\
&\quad
- \tfrac{1}{2}\,\mathrm{i}\,\phi_0\,\bar{\tilde\zeta}_4\,\tau
- 3\,\mu\,\rho\,\tau
+ 3\,\lambda\,\sigma\,\tau 
+ \mathrm{i}\,\bar\zeta_2\,\phi_0\,\tau
- 2\,\mathrm{i}\,\zeta_3\,\bar\phi_1\,\bar\tau
- \meth\,\mathcal{K}
+ \mathrm{i}\,\bar\phi_1\,(\meth\,\zeta_1)
+ \mathrm{i}\,\phi_1\,(\meth\,\bar\zeta_1) \\[6pt]
&\quad
- \mathrm{i}\,\bar\phi_0\,(\meth\,\zeta_4)
+ \mathrm{i}\,\phi_0\,(\meth\,\bar\zeta_4)
- \sigma\,(\meth\,\lambda)
+ \rho\,(\meth\,\mu)
+ \mu\,(\meth\,\rho)
- \lambda\,(\meth\,\sigma) 
- 2\,\mathrm{i}\,\phi_0\,(\meth'\,\bar\zeta_2)\,.
\end{align*}

\begin{align*}
\mthorn'\mathcal{K} &= 2\,\mathrm{i}\,\zeta_2\,\bar\zeta_2
+ 2\,\mathrm{i}\,\tilde\Psi_3\,\phi_0\,\bar\phi_1
+ 2\,\mathrm{i}\,\mathcal{K}\,\phi_1\,\bar\phi_1
- 4\,\mathrm{i}\,\Tizeta_4\,\bar{\tilde\zeta}_4 
+ \mathrm{i}\,\bar\zeta_1\,\tilde\zeta_5
+ \mathrm{i}\,\zeta_1\,\bar{\tilde\zeta}_5
+ \mathrm{i}\,\bar\zeta_2\,\phi_0\,\lambda
- 2\,\mathrm{i}\,\zeta_3\,\bar\phi_1\,\lambda \\ 
&\quad
+ \mathrm{i}\,\zeta_2\,\bar\phi_0\,\bar\lambda
- 2\,\mathcal{K}\,\mu
- 2\,\mathrm{i}\,\bar\zeta_1\,\phi_1\,\mu
+ 2\,\mathrm{i}\,\bar\phi_0\,\tilde\zeta_4\,\mu
+ 2\,\mathrm{i}\,\phi_0\,\bar{\tilde\zeta}_4\,\mu
- \mathrm{i}\,\bar\phi_1\,\tilde\zeta_5\,\rho
- \mathrm{i}\,\phi_1\,\bar{\tilde\zeta}_5\,\rho \\ 
&\quad
+ 2\,\mathrm{i}\,\phi_1\,\bar\phi_1\,\mu\,\rho
+ 2\,\mathrm{i}\,\phi_1\,\bar\phi_1\,\lambda\,\sigma
+ \tilde\Psi_3\,\tau
- \tfrac{3}{2}\,\mathrm{i}\,\zeta_2\,\bar\phi_1\,\tau
+ \mathrm{i}\,\phi_1\,\bar{\tilde\zeta}_4\,\tau
+ \tfrac{1}{2}\,\mathrm{i}\,\bar\phi_0\,\tilde\zeta_5\,\tau
- 2\,\mathrm{i}\,\bar\phi_0\,\phi_1\,\mu\,\tau\\ 
&\quad
+ \lambda\,\tau^2
- \tfrac{1}{2}\,\mathrm{i}\,\bar\zeta_2\,\phi_1\,\bar\tau
+ 2\,\mathrm{i}\,\bar\phi_1\,\tilde\zeta_4\,\bar\tau
+ \tfrac{1}{2}\,\mathrm{i}\,\phi_0\,\bar{\tilde\zeta}_5\,\bar\tau
- 3\,\mathrm{i}\,\phi_0\,\bar\phi_1\,\mu\,\bar\tau 
- \mu\,\tau\,\bar\tau
- \mathrm{i}\,\bar\phi_1\,(\meth\,\zeta_2)\\ 
&\quad
- 2\,\mathrm{i}\,\phi_1\,(\meth\,\bar\zeta_4)
+ \mathrm{i}\,\bar\phi_0\,(\meth\,\zeta_5)
- \meth\,\tilde\Psi_3
- \lambda\,(\meth\,\tau)
+ \mathrm{i}\,\phi_1\,(\meth'\,\bar\zeta_2)
+ \mathrm{i}\,\phi_0\,(\meth'\,\bar\zeta_5)
+ \mu\,(\meth'\,\tau)\,.
\end{align*}

\begin{align*}
\mthorn\,\tilde\Psi_3 &:= 3\,\mathrm{i}\,\bar\zeta_1\,\zeta_2
- 3\,\mathrm{i}\,\bar\zeta_3\,\tilde\zeta_4
- 2\,\tilde\Psi_1\,\lambda
- 3\,\mathrm{i}\,\bar\zeta_1\,\phi_0\,\lambda
+ 2\,\mathrm{i}\,\zeta_3\,\bar\phi_0\,\lambda
+ 2\,\mathrm{i}\,\zeta_0\,\bar\phi_1\,\lambda
+ 3\,\mathrm{i}\,\bar\zeta_3\,\phi_0\,\mu \\ 
&\quad
- 5\,\mathrm{i}\,\zeta_1\,\bar\phi_0\,\mu
+ 2\,\mathrm{i}\,\bar\zeta_0\,\phi_1\,\mu
- 3\,\mathcal{K}\,\pi
+ \mathrm{i}\,\bar\zeta_1\,\phi_1\,\pi
- \mathrm{i}\,\zeta_1\,\bar\phi_1\,\pi
- \mathrm{i}\,\bar\phi_0\,\tilde\zeta_4\,\pi
+ \mathrm{i}\,\phi_0\,\bar{\tilde\zeta}_4\,\pi \\ 
&\quad
+ 2\,\mathrm{i}\,\zeta_2\,\bar\phi_0\,\bar\pi
- 2\,\mathrm{i}\,\bar\zeta_3\,\phi_1\,\bar\pi
+ 2\,\tilde\Psi_3\,\rho
- 2\,\mathrm{i}\,\zeta_2\,\bar\phi_1\,\rho
- \mathrm{i}\,\phi_1\,\bar{\tilde\zeta}_4\,\rho
+ \mathrm{i}\,\bar\phi_0\,\phi_1\,\mu\,\rho
+ 3\,\mu\,\pi\,\rho \\ 
&\quad
- 3\,\lambda\,\pi\,\sigma
+ \mathrm{i}\,\bar\phi_1\,\tilde\zeta_4\,\bar\sigma
- \mathrm{i}\,\phi_0\,\bar\phi_1\,\mu\,\bar\sigma
+ \mathrm{i}\,\zeta_2\,\bar\phi_0\,\tau
+ \mathrm{i}\,\bar\zeta_3\,\phi_1\,\tau
- \tfrac{1}{2}\,\mathrm{i}\,\bar\zeta_1\,\phi_1\,\bar\tau
+ \tfrac{1}{2}\,\mathrm{i}\,\zeta_1\,\bar\phi_1\,\bar\tau \\ 
&\quad
- \tfrac{1}{2}\,\mathrm{i}\,\bar\phi_0\,\tilde\zeta_4\,\bar\tau
+ \tfrac{1}{2}\,\mathrm{i}\,\phi_0\,\bar{\tilde\zeta}_4\,\bar\tau 
+ 2\,\mathrm{i}\,\bar\phi_0\,(\meth\,\zeta_2)
- 2\,\mathrm{i}\,\phi_1\,(\meth\,\tilde\zeta_3)
- \meth'\,\mathcal{K}
- \mathrm{i}\,\bar\phi_1\,(\meth'\,\zeta_1)
+ \mathrm{i}\,\phi_1\,(\meth'\,\bar\zeta_1) \\ 
&\quad
- \mathrm{i}\,\bar\phi_0\,(\meth'\,\zeta_4)
+ \mathrm{i}\,\phi_0\,(\meth'\,\bar\zeta_4)
- \sigma\,(\meth'\,\lambda)
+ \rho\,(\meth'\,\mu)
+ \mu\,(\meth'\,\rho)
- \lambda\,(\meth'\,\sigma).
\end{align*}

\subsection{Advanced Structure}

\begin{align}
\mthorn\,\tilde\tau
&= -2\,\mathrm{i}\,\Psi_{2}\,\phi_{0}\,\bar\phi_{0}
+2\,\mathrm{i}\,\tilde\Psi_{1}\,\bar\phi_{0}\,\phi_{1}
-6\,\bar\zeta_{1}\,\phi_{0}\,\bar\phi_{0}\,\phi_{1}
+2\,\zeta_{3}\,\bar\phi_{0}^{2}\,\phi_{1}
+2\,\bar\zeta_{3}\,\phi_{0}^{2}\,\bar\phi_{1}
-2\,\zeta_{1}\,\phi_{0}\,\bar\phi_{0}\,\bar\phi_{1}
\nonumber\\
&\quad
+2\,\bar\zeta_{0}\,\phi_{0}\,\phi_{1}\,\bar\phi_{1}
+2\,\zeta_{0}\,\bar\phi_{0}\,\phi_{1}\,\bar\phi_{1}
+2\,\mathrm{i}\,\bar\zeta_{0}\,\tilde\zeta_{4}
+2\,\phi_{0}\,\bar\phi_{0}^{2}\,\tilde\zeta_{4}
-2\,\mathrm{i}\,\zeta_{0}\,\bar{\tilde\zeta}_{4}
-2\,\phi_{0}^{2}\,\bar\phi_{0}\,\bar{\tilde\zeta}_{4}
+\Psi_{0}\,\lambda
\nonumber\\
&\quad
-2\,\mathrm{i}\,\bar\zeta_{0}\,\phi_{0}\,\mu
+2\,\mathrm{i}\,\zeta_{0}\,\bar\phi_{0}\,\mu
-\tilde\Psi_{1}\,\pi
-\Psi_{2}\,\rho
-2\,\mathrm{i}\,\bar\zeta_{1}\,\phi_{1}\,\rho
+2\,\mathrm{i}\,\zeta_{1}\,\bar\phi_{1}\,\rho
\nonumber\\
&\quad
-2\,\mathrm{i}\,\bar\phi_{0}\,\tilde\zeta_{4}\,\rho
+2\,\mathrm{i}\,\phi_{0}\,\bar\phi_{0}\,\mu\,\rho
+\pi\bar\pi\,\rho
+\pi^{2}\,\sigma
+2\,\mathrm{i}\,\zeta_{3}\,\bar\phi_{1}\,\sigma
+\pi\,\bar\sigma\,\tau
- \sigma\,\tau^{2}
\nonumber\\
&\quad
+2\,\rho\,\tilde\tau
- \mathrm{i}\,\bar\zeta_{1}\,\phi_{0}\,\bar\tau
+ \mathrm{i}\,\zeta_{3}\,\bar\phi_{0}\,\bar\tau
- \bar\pi\,\rho\,\bar\tau
- \sigma\,\bar\tau^{2}
+ \tau\,\bigl(\meth\,\bar\sigma\bigr)
+ \bar\sigma\,\bigl(\meth\,\bar\tau\bigr)
\nonumber\\
&\quad
+2\,\mathrm{i}\,\phi_{0}\,\bigl(\meth'\,\bar\zeta_{1}\bigr)
-2\,\mathrm{i}\,\bar\phi_{0}\,\bigl(\meth'\,\zeta_{3}\bigr)
+ \sigma\,\bigl(\meth'\,\pi\bigr)
+ \rho\,\bigl(\meth'\,\bar\pi\bigr)
+ \bar\pi\,\bigl(\meth'\,\rho\bigr)
+ \pi\,\bigl(\meth'\,\sigma\bigr)
\nonumber\\
&\quad
+ \bar\tau\,\bigl(\meth'\,\sigma\bigr)
+ \sigma\,\bigl(\meth'\,\bar\tau\bigr). \label{thornTitau}
\end{align}

\begin{align}
\mthorn'\,\tilde\pi &:= 2\,\mathrm{i}\,\bar{\tilde\Psi}_3\,\bar\phi_0\,\phi_1
- 2\,\bar\zeta_2\,\bar\phi_0\,\phi_1^2
- 2\,\mathrm{i}\,\tilde\Psi_3\,\phi_0\,\bar\phi_1
+ 2\,\mathrm{i}\,\Psi_2\,\phi_1\,\bar\phi_1
- 2\,\mathrm{i}\,\bar\Psi_2\,\phi_1\,\bar\phi_1   \nonumber\\
&\quad
- 2\,\zeta_2\,\phi_0\,\bar\phi_1^2
+ 4\,\bar\phi_0\,\phi_1\bar\phi_1\,\tilde\zeta_4
+ 4\,\phi_0\,\phi_1\bar\phi_1\,\bar{\tilde{\zeta}}_4
+ 2\,\mathrm{i}\,\bar\zeta_1\,\tilde\zeta_5
- 2\,\phi_0\,\bar\phi_0\,\bar\phi_1\,\tilde\zeta_5   \nonumber\\
&\quad
-2\mathrm{i}\zeta_1\bar{\tilde\zeta}_5
-2\phi_0\bar\phi_0\phi_1\bar{\tilde\zeta}_5
+ 2\,\mathrm{i}\,\zeta_2\,\bar\phi_0\,\bar\lambda
- 2\,\mathrm{i}\,\bar\zeta_3\,\phi_1\,\bar\lambda
- \Psi_2\,\mu
- 2\,\mathrm{i}\,\bar\zeta_1\,\phi_1\,\mu  \nonumber \\
&\quad
+2\,\mathrm{i}\,\zeta_1\,\bar\phi_1\,\mu 
+ 2\,\mathrm{i}\,\bar\phi_0\,\tilde\zeta_4\,\mu
- 2\,\mathrm{i}\,\phi_0\,\bar{\tilde\zeta}_4\,\mu
- \bar\lambda\,\bar\pi^2
- 2\,\mu\,\bar\pi
- \tilde\Psi_3\,\bar\pi
+ 2\,\mathrm{i}\,\zeta_2\,\bar\phi_1\,\bar\pi   \nonumber\\
&\quad
- 4\,\mathrm{i}\,\phi_1\,\bar{\tilde\zeta}_4\,\bar\pi
+ 2\,\mathrm{i}\,\bar\phi_0\,\tilde\zeta_5\,\bar\pi
- \lambda\,\bar\pi^2
+ \Psi_4\,\sigma
- \mathrm{i}\,\zeta_2\,\bar\phi_1\,\tau   \nonumber\\
&\quad
+ 4\,\mathrm{i}\,\phi_1\,\bar{\tilde\zeta}_4\,\tau
- 2\,\mathrm{i}\,\bar\phi_0\,\tilde\zeta_5\,\tau
+ \lambda\,\tau^2
- \mathrm{i}\,\bar\zeta_2\,\phi_1\,\bar\tau
+ \bar\lambda\,\pi\,\bar\tau
- \mu\,\bar\pi\,\bar\tau
+ \mu\,\tau\,\bar\tau   \nonumber\\
& \quad
+2\,\mathrm{i}\,\bar\phi_{1}\,\bigl(\meth\,\zeta_{2}\bigr)
   -2\,\mathrm{i}\,\phi_{1}\,\bigl(\meth'\,\bar\zeta_{2}\bigr)
   -\bar\pi\,\bigl(\meth\,\lambda\bigr)
   -\tau\,\bigl(\meth\,\lambda\bigr)
   \nonumber\\
&\quad
   -\bar\tau\,\bigl(\meth\,\mu\bigr)
   -\lambda\,\bigl(\meth\,\bar\pi\bigr)
   -\lambda\,\bigl(\meth\,\tau\bigr)
   -\mu\,\bigl(\meth\,\bar\tau\bigr)
   -\pi\,\bigl(\meth'\,\bar\lambda\bigr)
   -\bar\lambda\,\bigl(\meth'\,\pi\bigr). \label{thornprimeTipi}
\end{align}

\begin{align}
\mthorn \, \tilde{\ulomega}
&= 2\,\mathrm{i}\,\zeta_{1}\,\bar{\zeta}_{2}
 + 4\,\mathrm{i}\,\bar{\tilde\Psi}_{3}\,\phi_{0}\,\bar{\phi}_{0}
 + 4\,\mathrm{i}\,\Psi_{2}\,\phi_{0}\,\bar{\phi}_{1}
 - 4\,\mathrm{i}\,\bar{\Psi}_{2}\,\phi_{0}\,\bar{\phi}_{1} 
  - 4\,\mathrm{i}\,\tilde{\Psi}_{1}\,\phi_{1}\,\bar{\phi}_{1}
  -2\mathrm{i}\zeta_3\bar{\tilde\zeta}_4 \nonumber\\
&\quad
 + 4\,\bar{\tilde\Psi}_{1}\,\bar{\lambda}
 - 10\,\mathrm{i}\,\zeta_{1}\,\bar{\phi}_{0}\,\bar{\lambda}
 + 4\,\mathrm{i}\,\bar{\zeta}_{0}\,\phi_{1}\,\bar{\lambda}
  - 2\,\mathrm{i}\,\bar\zeta_{1}\,\phi_{0}\,\mu
 + 2\,\mathrm{i}\,\zeta_{3}\,\bar{\phi}_{0}\,\mu
 + 4\,\mathrm{i}\,\zeta_{0}\,\bar{\phi}_{1}\,\mu \nonumber\\
&\quad
 + 4\,\mathrm{i}\,\bar{\zeta}_{2}\,\phi_{0}\,\pi 
 - 4\,\mathrm{i}\,\zeta_{3}\,\bar{\phi}_{1}\,\pi
 - 4\,\bar{\Psi}_{2}\,\bar\pi
 + 6\,\mathrm{i}\,\bar\zeta_{1}\,\phi_{1}\,\bar\pi
 - 6\,\mathrm{i}\,\zeta_{1}\,\bar{\phi}_{1}\,\bar\pi
 - 6\,\mathrm{i}\,\bar{\phi}_{0}\,\tilde{\zeta}_{4}\,\pi \nonumber\\
&\quad
+ 6\,\mathrm{i}\,\phi_{0}\,\bar{\tilde{\zeta}}_{4}\,\pi
- 2\,\bar{\tilde{\Psi}}_{3}\,\rho
+ \tilde{\ulomega}\,\rho
- 4\,\mathrm{i}\,\bar{\zeta}_{2}\,\phi_{1}\,\rho
+ 6\,\mathrm{i}\,\bar{\phi}_{1}\,\tilde{\zeta}_{4}\,\rho 
- 6\,\mathrm{i}\,\phi_{0}\,\bar{\phi}_{1}\,\mu\,\rho
- 2\,\tilde{\Psi}_{3}\,\sigma \nonumber\\
&\quad
- \bar{\tilde\ulomega}\,\sigma
- 4\,\mathrm{i}\,\zeta_{2}\,\bar{\phi}_{1}\,\sigma
- 2\,\mathrm{i}\,\phi_{1}\,\bar{\tilde{\zeta}}_{4}\,\sigma
+ 2\,\mathrm{i}\,\bar{\phi}_{0}\,\phi_{1}\,\mu\,\sigma
- 2\,\bar{\Psi}_{2}\,\tau
- 3\,\mathrm{i}\,\bar{\zeta}_{1}\,\phi_{1}\,\tau
+ 5\,\mathrm{i}\,\zeta_{1}\,\bar{\phi}_{1}\,\tau \nonumber\\
&\quad
+ \mathrm{i}\,\bar{\phi}_{0}\,\tilde{\zeta}_{4}\,\tau
+ \mathrm{i}\,\phi_{0}\,\bar{\tilde{\zeta}}_{4}\,\tau
- 2\,\mathrm{i}\,\phi_{0}\,\bar{\phi}_{0}\,\mu\,\tau
+ 2\,\pi\,\bar{\pi}\,\tau
- 2\,\pi\,\tau^{2}
+ 2\,\bar{\pi}^{2}\,\bar{\tau}
- 2\,\tau^{2}\,\bar{\tau} \nonumber\\
&\quad
- 6\,\mathrm{i}\,\bar{\phi}_{1}\,(\meth \zeta_{1})
+ 2\,\mathrm{i}\,\phi_{1}\,(\meth \bar{\zeta}_{1})
- 2\,\mathrm{i}\,\bar{\phi}_{0}\,(\meth \zeta_{4})
+ 6\,\mathrm{i}\,\phi_{0}\,(\meth \bar{\zeta}_{4}) \nonumber\\
&\quad
+ 2\,\tau\,(\meth \pi)
+ 2\,\bar{\tau}\,(\meth \bar{\pi})
+ 2\,\pi\,(\meth \tau)
+ 2\,\bar{\pi}\,(\meth \bar{\tau})
+ 2\,\tau\,(\meth \bar{\tau})
+ 2\,\bar\tau\,(\meth\tau)
+ 2\,\sigma\,(\meth' \,\ulomega). \label{thornTiulomega}
\end{align}

\subsection{Equations of M in Sec. \ref{TrappedSurface}}
\label{ExpressionofM}

\begin{align*}
M&=
 4\,\mathrm{i}\,\overline{\zeta}_{0}\,\ulomega\,\phi_{0}
- 4\,\mathrm{i}\,\zeta_{0}\,\ulomega\,\overline{\phi}_{0}
- 4\,\overline{\zeta}_{1}\,\phi_{0}\,\overline{\phi}_{0}\,\phi_{1}
+ 2\,\zeta_{3}\,\overline{\phi}_{0}^{2}\,\phi_{1}
+ 2\,\overline{\zeta}_{3}\,\phi_{0}^{2}\,\overline{\phi}_{1}
- 4\,\zeta_{1}\,\phi_{0}\,\overline{\phi}_{0}\,\overline{\phi}_{1} \\[2pt]
&\quad
+ 2\,\overline{\zeta}_{0}\,\phi_{0}\,\phi_{1}\,\overline{\phi}_{1}
+ 2\,\zeta_{0}\,\overline{\phi}_{0}\,\phi_{1}\,\overline{\phi}_{1}
+ 2\,\mathrm{i}\,\overline{\zeta}_{0}\,\tilde{\zeta}_{4}
-2\,\mathrm{i}\,\zeta_{0}\,\bar{\tilde\zeta}_{4}
- 4\,\mathrm{i}\,\overline{\zeta}_{0}\,\phi_{0}\,\Timu
+ 4\,\mathrm{i}\,\zeta_{0}\,\overline{\phi}_{0}\,\Timu \\[2pt]
&\quad
- 2\,\mathrm{i}\,\overline{\phi}_{0}\,\tilde{\zeta}_{4}\,\rho
+ 2\,\mathrm{i}\,\phi_{0}\,\overline{\tilde{\zeta}}_{4}\,\rho
- 2\,\mathrm{i}\,\overline{\zeta}_{3}\,\phi_{1}\,\sigma
- \lambda\,\rho\,\sigma
+ 2\,\mathrm{i}\,\zeta_{3}\,\overline{\phi}_{1}\,\overline{\sigma}
- \overline{\lambda}\,\rho\,\overline{\sigma}
+ 2\,\ulomega\,\sigma\,\overline{\sigma} \\[2pt]
&\quad
- 2\,\Timu\,\sigma\,\overline{\sigma}
- 2\,\mathrm{i}\,\overline{\zeta}_{3}\,\phi_{0}\,\tau
+ 5\,\mathrm{i}\,\zeta_{1}\,\overline{\phi}_{0}\,\tau
- 2\,\mathrm{i}\,\overline{\zeta}_{0}\,\phi_{1}\,\tau
- \overline{\sigma}\,\tau^{2}
- 5\,\mathrm{i}\,\overline{\zeta}_{1}\,\phi_{0}\,\bar\tau
+ 2\,\mathrm{i}\,\zeta_{3}\,\overline{\phi}_{0}\,\tau \\[2pt]
&\quad
+ 2\,\mathrm{i}\,\zeta_{0}\,\overline{\phi}_{1}\,\bar\tau
-\sigma\bar\tau^2
- 2\,\mathrm{i}\,\overline{\phi}_{0}\,(\meth\,\zeta_{1})
+ \overline{\sigma}\,(\meth\,\overline{\tau})
+ 2\,\mathrm{i}\,\phi_{0}\,(\meth'\,\overline{\zeta}_{1})
+ \sigma\,(\meth'\,\overline{\tau}).
\end{align*}

\section{Energy estimate support for Bianchi identity}
In this section we show details of analysis of matter fields in Prop. \ref{EnergyEstimateCurvature}.
\subsection{Term $\phi_j\meth\zeta_l$}
\label{phiethzeta}
For the top derivative terms we have
\begin{align*}
\int_0^v\int_{u_{\infty}}^u\frac{a}{|u'|^2}||a^5\phi_j\meth^{11}\zeta_0||^2_{L^2_{sc}(\mathcal{S}_{u',v'})}\leq&
\phi[\phi_j]^2\int_0^v\int_{u_{\infty}}^u
\frac{a^2}{|u'|^4}||a^5\mathcal{D}^{11}\zeta_0||^2_{L^2_{sc}(\mathcal{S}_{u',v'})} \\
\leq&\phi[\phi_j]^2\bmzeta[\zeta_0]^2
\int_{u_{\infty}}^u\frac{a^3}{|u'|^4}
\leq\frac{a^3}{|u|^3}\phi[\phi_j]^2\bmzeta[\zeta_0]^2
\end{align*}
Here 
\begin{align*}
\bmzeta[\zeta_0]^2\equiv\frac{1}{a}\int_0^v||a^{\frac{k-1}{2}}\mathcal{D}^{k}\zeta_0||^2_{L^2_{sc}(\mathcal{S}_{u,v'})}
\end{align*}

\begin{align*}
\int_0^v\int_{u_{\infty}}^u\frac{a}{|u'|^2}||a^5\phi_j\meth^{11}\zeta_{1,3}||^2_{L^2_{sc}(\mathcal{S}_{u',v'})}\leq&
\phi[\phi_j]^2\int_0^v\int_{u_{\infty}}^u
\frac{a^2}{|u'|^4}||a^5\mathcal{D}^{11}\zeta_{1,3}||^2_{L^2_{sc}(\mathcal{S}_{u',v'})} \\
\leq&\frac{a^2}{|u|^2}\phi[\phi_j]^2
\frac{1}{a}\int_{u_{\infty}}^u\frac{a}{|u'|^2}||a^5\mathcal{D}^{11}\zeta_{1,3}||^2_{L^2_{sc}(\mathcal{S}_{u',v'})}\\
\leq&\frac{a^2}{|u|^2}\phi[\phi_j]^2\underline{\bmzeta}[\zeta_{1,3}]^2
\end{align*}

\begin{align*}
\int_0^v\int_{u_{\infty}}^u\frac{a}{|u'|^2}||a^5\phi_j\meth^{11}\zeta_{2}||^2_{L^2_{sc}(\mathcal{S}_{u',v'})}\leq&
\phi[\phi_j]^2\int_0^v\int_{u_{\infty}}^u
\frac{a^2}{|u'|^4}||a^5\mathcal{D}^{11}\zeta_{2}||^2_{L^2_{sc}(\mathcal{S}_{u',v'})} \\
\leq&\frac{a}{|u|^2}\phi[\phi_j]^2
\int_{u_{\infty}}^u\frac{a}{|u'|^2}||a^5\mathcal{D}^{11}\zeta_{2}||^2_{L^2_{sc}(\mathcal{S}_{u',v'})}\\
\leq&\frac{a}{|u|^2}\phi[\phi_j]^2\underline{\bmzeta}[\zeta_{2}]^2
\end{align*}

\begin{align*}
\int_0^v\int_{u_{\infty}}^u\frac{a}{|u'|^2}||a^5\phi_j\meth^{11}\zeta_4||^2_{L^2_{sc}(\mathcal{S}_{u',v'})}\leq&\phi[\phi_i]^2\int_0^v\int_{u_{\infty}}^u
\frac{a^2}{|u'|^4}||a^5\mathcal{D}^{11}\zeta_4||^2_{L^2_{sc}(\mathcal{S}_{u',v'})} \\
=&\phi[\phi_j]^2
\int_{u_{\infty}}^u\frac{a}{|u'|^2}\frac{1}{|u'|^2}
\int_0^v||a^{\frac{11}{2}}\mathcal{D}^{11}\zeta_4||^2_{L^2_{sc}(\mathcal{S}_{u',v'})} \\
\leq&\frac{a}{|u|}\phi[\phi_j]^2\bmzeta[\zeta_4]^2,
\end{align*}
Here
\begin{align*}
\bm\zeta[\zeta_4]^2\equiv
\frac{a}{|u|^2}\int_0^v||a^{\frac{k-1}{2}}
\mathcal{D}^k\zeta_4||^2_{L^2_{sc}(\mathcal{S}_{u,v'})}.
\end{align*}

\begin{align*}
\int_0^v\int_{u_{\infty}}^u\frac{a}{|u'|^2}||a^5\phi_j\meth^{11}\zeta_5||^2_{L^2_{sc}(\mathcal{S}_{u',v'})}\leq&
\phi[\phi_j]^2\int_0^v\int_{u_{\infty}}^u
\frac{a^2}{|u'|^4}||a^5\mathcal{D}^{11}\zeta_5||^2_{L^2_{sc}(\mathcal{S}_{u',v'})} \\
=&\phi[\phi_j]^2
\int_0^v\int_{u_{\infty}}^u
\frac{a}{|u'|^4}||a^{\frac{11}{2}}\mathcal{D}^{11}\zeta_5||^2_{L^2_{sc}(\mathcal{S}_{u',v'})} \\
\leq&\phi[\phi_j]^2\underline{\bmzeta}[\zeta_5]^2,
\end{align*}
Here
\begin{align*}
\underline{\bm\zeta}[\zeta_5]^2\equiv
\int_{u_{\infty}}^u\frac{a^2}{|u'|^4}
||a^{\frac{k-1}{2}}\mathcal{D}^k\zeta_5||^2_{L^2_{sc}(\mathcal{S}_{u,v'})} 
\end{align*}

For the non-top-derivative terms
\begin{align*}
&\sum_{i_1+...+i_4=k, i_4<k}\int_0^v\int_{u_{\infty}}^u\frac{a}{|u'|^2}
||\meth^{i_1}\Gamma^{i_2}\meth^{i_3}\phi_j\meth^{i_4+1}\zeta_l||^2_{L^2_{sc}(\mathcal{S}_{u,v'})} \\
\leq&\frac{1}{a}\int_0^v\int_{u_{\infty}}^u\frac{a}{|u'|^2}\frac{1}{|u'|^2}a\phi[\phi_j]^2
\frac{|u'|^2}{a}\zeta[\zeta_{4,5}]^2
\lesssim\frac{1}{a}\phi[\phi_j]^2\zeta[\zeta_{4,5}]^2.
\end{align*}

\subsection{Term $\phi_j^2\Psi_l$}
\label{phi2Psi}
The top derivative of curvature
\begin{align*}
&\int_0^v\int_{u_{\infty}}^u\frac{a}{|u'|^2}||\phi_j^2
(a^{\frac{1}{2}}\mathcal{D})^k\Psi_l||^2_{L^2_{sc}(\mathcal{S}_{u',v'})}\leq
\int_0^v\int_{u_{\infty}}^u\frac{a}{|u'|^2}\frac{1}{|u'|^4}a\phi[\phi_j]^2a\phi[\phi_l]^2
||(a^{\frac{1}{2}}\mathcal{D})^k\Psi_l||^2_{L^2_{sc}(\mathcal{S}_{u',v'})} \\
\leq&\int_{u_{\infty}}^u\frac{a^4}{|u'|^6}\phi[\phi_j]^4
\frac{1}{a}\int_0^v||(a^{\frac{1}{2}}\mathcal{D})^k\Psi_0||^2_{L^2_{sc}(\mathcal{S}_{u',v'})}
+\int_0^v\frac{a^2}{|u|^4}\phi[\phi_j]^4\int_{u_{\infty}}^u\frac{a}{|u'|^2}
||(a^{\frac{1}{2}}\mathcal{D})^k\Psi_{1,2,3,4}||^2_{L^2_{sc}(\mathcal{S}_{u',v'})} \\
\leq&\frac{a^4}{|u|^5}\phi[\phi_j]^4\bm\Psi[\Psi_0]^2
+\frac{a^2}{|u|^4}\phi[\phi_j]^4\underline{\bm\Psi}[\Psi_{1,2,3,4}]^2
\end{align*}

For the non-top-derivative terms
\begin{align*}
&\sum_{i_1+...+i_5=k, i_5<k}\int_0^v\int_{u_{\infty}}^u\frac{a}{|u'|^2}
||\meth^{i_1}\Gamma^{i_2}\meth^{i_3}\phi_{j_1}\meth^{i_4}\phi_{j_2}
\meth^{i_5}\Psi_l||^2_{L^2_{sc}(\mathcal{S}_{u,v'})} \\
\leq&\int_{u_{\infty}}^u\frac{a}{|u'|^2}\frac{a^2}{|u'|^4}\phi[\phi_j]^4a\Psi[\Psi_0]^2
\leq\frac{a^4}{|u|^5}\phi[\phi_j]^4\Psi[\Psi_0]^2.
\end{align*}

\subsection{Term $\zeta_l^2$}
\label{zeta2}
The large terms are $\zeta_0\zeta_l$ where $l\neq0$. We have
\begin{align*}
&\sum_{i_1+...+i_4=k}\int_0^v\int_{u_{\infty}}^u\frac{a}{|u'|^2}
||\meth^{i_1}\Gamma^{i_2}\meth^{i_3}\zeta_{j_1}
\meth^{i_4}\zeta_{j_2}||^2_{L^2_{sc}(\mathcal{S}_{u,v'})} \\
\leq&\int_{u_{\infty}}^u\frac{a}{|u'|^2}\frac{a}{|u'|^2}\zeta[\zeta_0]^2\zeta[\zeta_j]^2
\leq\frac{a^2}{|u|^3}\zeta[\zeta_j]^4.
\end{align*}

\subsection{Term $\zeta_j\phi_l^3$}
\label{zetaphi3}
The large terms are $\zeta_0\phi_l^3$. We have
\begin{align*}
&\sum_{i_1+...+i_6=k}\int_0^v\int_{u_{\infty}}^u\frac{a}{|u'|^2}
||\meth^{i_1}\Gamma^{i_2}\meth^{i_3}\zeta_{j}
\meth^{i_4}\phi_{l_1}\meth^{i_5}\phi_{l_2}\meth^{i_6}\phi_{l_3}||^2_{L^2_{sc}(\mathcal{S}_{u,v'})} \\
\leq&\int_{u_{\infty}}^u\frac{a}{|u'|^2}\frac{1}{|u'|^6}a^4\mathcal{O}^8
\leq\frac{a^5}{|u|^7}\mathcal{O}^8.
\end{align*}

\subsection{Term $\zeta_j\phi_l\Gamma$}
\label{zetaphiGamma}
The large terms are $\zeta_0\phi_1\mu$ in eq $\mthorn\TiPsi_3$, 
$\zeta_0\phi_0\mu$ in eq $\mthorn\Psi_2$, we have
\begin{align*}
&\sum_{i_1+...+i_5=k}\int_0^v\int_{u_{\infty}}^u\frac{a}{|u'|^2}
||\meth^{i_1}\Gamma^{i_2}\meth^{i_3}\zeta_{l}
\meth^{i_4}\phi_{l}\meth^{i_5}\Gamma||^2_{L^2_{sc}(\mathcal{S}_{u,v'})} \\
\leq&\int_{u_{\infty}}^u\frac{a}{|u'|^2}\frac{1}{|u'|^4}
a\zeta[\zeta_0]^2a\phi[\phi_j]^2\frac{|u'|^4}{a^2}\Gamma[\mu]^2 
\leq\frac{a}{|u|}\zeta[\zeta_0]^2\phi[\phi_j]^2\Gamma[\mu]^2.
\end{align*}
Moreover, note here $\zeta_j$ does contain $\Tizeta_{4,5}$. 
For terms contain $\Tizeta_{4,5}$, the large terms are $\Tizeta_{4,5}\phi_j\mu$ in 
$\mthorn'\TiPsi_3$ and $\mthorn'\Psi_2$. 
We have control
\begin{align*}
&\sum_{i_1+...+i_5=k}\int_0^v\int_{u_{\infty}}^u\frac{a}{|u'|^2}
||\meth^{i_1}\Gamma^{i_2}\meth^{i_3}\Tizeta_{4,5}
\meth^{i_4}\phi_{l}\meth^{i_5}\Gamma||^2_{L^2_{sc}(\mathcal{S}_{u,v'})} \\
\leq&\int_0^v\int_{u_{\infty}}^u\frac{a}{|u'|^2}\frac{1}{|u'|^4}a\phi[\phi_j]^2\frac{|u'|^4}{a^2}\Gamma[\mu]^2
||(a^{\frac{1}{2}}\mathcal{D})^k\Tizeta_{4,5}||^2_{L^2_{sc}(\mathcal{S}_{u,v'})}\\
&+\int_0^v\int_{u_{\infty}}^u\frac{a}{|u'|^2}\frac{1}{|u'|^4}a\phi[\phi_j]^2\frac{|u'|^4}{a^2}\Gamma[\mu]^2
\zeta[\Tizeta_{4,5}]^2\\
\leq&\frac{1}{|u|}\phi[\phi_j]^2\Gamma[\mu]^2\bm\zeta[\Tizeta_4]^2
+\frac{1}{a}\phi[\phi_j]^2\Gamma[\mu]^2\underline{\bm\zeta}[\Tizeta_5]^2
+\frac{1}{a}\phi[\phi_j]^2\Gamma[\mu]^2\zeta[\Tizeta_{4,5}]^2.
\end{align*}

\subsection{Term $\phi_j^2\mu\Gamma(\lambda,\sigma,\rho,\tau)$}
\label{phi2muGamma}
The large terms are $\phi_j^2\mu\lambda$ and we have control
\begin{align*}
&\sum_{i_1+...+i_6=k}\int_0^v\int_{u_{\infty}}^u\frac{a}{|u'|^2}
||\meth^{i_1}\Gamma^{i_2}\meth^{i_3}\phi_{l_1}\meth^{i_4}\phi_{l_2}
\meth^{i_5}\mu\meth^{i_6}\Gamma||^2_{L^2_{sc}(\mathcal{S}_{u,v'})} \\
\leq&\int_{u_{\infty}}^u\frac{a}{|u'|^2}\frac{1}{|u'|^6}a^2\phi[\phi_j]^4
\frac{|u'|^4}{a^2}\Gamma[\mu]^2\frac{|u|^2}{a}\Gamma[\lambda]^2
\lesssim\frac{1}{|u|}\mathcal{O}^8.
\end{align*}

%%%%%%%%%%%%%%%%%%%%%%%%%%%%%%%%%%%%%%%%%%%%%%%%%%%%%%%%%%%%%%%%%%%%%%%%%%%%%%%


\begin{thebibliography}{10}
%%%%%%%%%%%%%%%%%%%%%%%%%%%%%%%%%%%%%%%%%%%%%%%%%%%%%%%%%%%%%%%%%%%%%%%%%%%%%%%

\bibitem{Chr00}
D.~Christodoulou,
\newblock {\em The Formation of Black Holes in General Relativity},
\newblock Eur. Math. Soc. Z\"urich, 2000.


\bibitem{KlaRod09}
S.~Klainerman and I.~Rodnianski, 
\newblock {\em On the Formation of Trapped Surfaces},
\newblock  Acta Math. 208 (2012), no. 2, 211–333.

\bibitem{PYu2011}
P.~Yu, 
\newblock {\em Dynamical formation of black holes due to the condensation of matter field},
\newblock arXiv: 1105.5898.

\bibitem{An2012}
X.~An, 
\newblock {\em Formation of trapped surfaces from past null infinity}, 
preprint (2012),  \newblock {\em arXiv:1207.5271}.

\bibitem{KlainermanLukRodnianski14}
S.~Klainerman, J.~Luk and I.~Rodnianski, 
\newblock {\em A fully anisotropic mechanism for formation of trapped surface in vacuum}, 
 \newblock {\em Invent. Math.}, 198 (1), 1–26, 2014.

\bibitem{LiLiu2017}
J.~Li, J.~Liu, 
\newblock {\em Instability of spherical naked singularities of a scalar field under gravitational perturbations},
\newblock {\em arXiv: 1710.02422 }.

\bibitem{AnLuk2017}
X.~An, J.~Luk, 
\newblock {\em Trapped surfaces in vacuum arising dynamically from mild incoming radiation},
\newblock {\em Advances in Theoretical and Mathematical Physics } 21 (2017), 1-120.

\bibitem{An2022}
X.~An, 
\newblock {\em A Scale-Critical Trapped Surface Formation Criterion: A New Proof via Signature for Decay Rates},
 \newblock {\em Ann. PDE} 8, 3 (2022), 89 pages.
 
 \bibitem{An202209}
X.~An, N. ~Athanasiou,
\newblock {\em A scale-critical trapped surface formation criterion for the Einstein-Maxwell system},
 \newblock {\em Journal de Mathematiques Pures et Appliquees} Vol. 167 (2022), 294- 409.
 
 \bibitem{AthanasiouPuskarYau25}
N.~Athanasiou, M.~Puskar and Yau Shing-Tung,
 \newblock  Formation of trapped surfaces in the Einstein-Yang-Mills system,
\newblock  {\em Journal de Mathématiques Pures et Appliquées}, 2025.103661.

\bibitem{PenRin86}
R.~Penrose and W.~Rindler.
\newblock {\em Spinors and space-time. {V}olume 2. {S}pinor and twistor methods
  in space-time geometry}.
\newblock Cambridge University Press, 1986.

\bibitem{Ste91}
J.~Stewart.
\newblock {\em Advanced general relativity}.
\newblock Cambridge University Press, 1991.

\bibitem{Jose14} 
J. M. Mart\'{\i}n-Garc\'{\i}a, 
\newblock http://www.xact.es, 2014.

\bibitem{CFEBook}
J.~A. {Valiente Kroon}.
\newblock {\em Conformal Methods in General Relativity}.
\newblock Cambridge University Press, 2016.

 
 \bibitem{HilValZha23}
P.~Zhao, J.~A. {Valiente Kroon}, and D.~Hilditch.
\newblock  Trapped surface formation for the Einstein-Scalar system.
\newblock  {\em Adv.\ Theor.\ Math.\ Phys.} Volume 27, Number 3, 623–781, (2023).

\bibitem{DaweiJinbo25}
D.~Shen, J.~Wan
\newblock Formation of trapped surfaces for the Einstein-Maxwell-charged scalar field system.
\newblock arXiv:2504.19976 .

\bibitem{ZhangChenLiuLuoTianWang22}
C.-Y.~Zhang, Q.~Chen, Y.~Liu, W.-K.~Luo, Y.~Tian, and B.~Wang.
\newblock Dynamical transitions in scalarization and descalarization through black hole accretion.
\newblock {\em Phys. Rev. D}, 106:L061501, 2022.

\bibitem{NingChenTianWuZhang24}
Z.~Ning, Q.~Chen, Y.~Tian, X.~Wu, and H.~Zhang.
\newblock Spontaneous deformation of an {AdS} spherical black hole.
\newblock {\em Phys. Rev. D}, 109:064082, 2024.

\bibitem{FiguerasFranca20}
P.~Figueras and T.~Fran\c{c}a.
\newblock Gravitational collapse in cubic Horndeski theories.
\newblock {\em Class. Quant. Grav.}, 37(22):225009, 2020.

\bibitem{BantilanFiguerasRossi21}
H.~Bantilan, P.~Figueras, and L.~Rossi.
\newblock Cauchy evolution of asymptotically global AdS spacetimes with no symmetries.
\newblock {\em Phys. Rev. D}, 103:086006, 2021.


\bibitem{Chen-SpinorStability}
X.~Chen.
\newblock Global stability of Minkowski spacetime for a spin-1/2 field.
\newblock {\em Adv.\ Theor.\ Math.\ Phys.}  29 (2025), 485--556.
\newblock arXiv:2201.08280.

\bibitem{FSY-EDM}
F.~Finster, J.~Smoller, and S.-T.~Yau.
\newblock Non-existence of black hole solutions for a spherically symmetric, static Einstein--Dirac--Maxwell system.
\newblock {\em Commun.\ Math.\ Phys.} {\bf 205} (1999), 249--262.

\bibitem{YaohuaXiao18}
Y.~Wang and X.~Zhang
\newblock Nonexistence of time-periodic solutions of the Dirac equation in non-extreme Kerr-Newman-AdS spacetime.
\newblock {\em Sci China Math. } Vol. 61 No. 1: 73-82, 2018.

\bibitem{HequnXiao24}
H.~Zhang and X.~Zhang
\newblock Nonexistence of Majorana fermions in Kerr-Newman type spacetimes with nontrivial charge.
\newblock  {\em Chinese Physics C.} Vol. 48, No. 11 (2024) 115104.

\bibitem{VC-DiracCritical}
J.~F.~Ventrella and M.~W.~Choptuik.
\newblock Critical phenomena in the Einstein--massless--Dirac system.
\newblock {\em Phys.\ Rev.\ D} {\bf 68} (2003), 044020.

\bibitem{PengXiaoning2501}
P.~Zhao and X.~Ning Wu.
\newblock {\em On the local existence for the characteristic initial value problem for the Einstein-Dirac system},
\newblock arXiv: 2509.04167.

 
%%%%%%%%%%%%%%%%%%%%%%%%%%%%%%%%%%%%%%%%%%%%%%%%%%%%%%%%%%%%%%%%%%%%%%%%%%%%%%%
\end{thebibliography}
\end{document}